\documentclass[12pt, oneside]{book}

\usepackage[letterpaper, margin=1in, headheight=14pt, footskip=0.5in]{geometry}
\usepackage{setspace}
\usepackage{tocloft}
\usepackage{titlesec}
\usepackage{chngcntr}
\usepackage{tikz}
\usepackage{tabularx}

\usepackage[T1]{fontenc}
\usepackage[utf8]{inputenc}
\usepackage{lmodern}
\usepackage{amsmath, amssymb, amsthm, mathtools}
\usepackage{dsfont}

\usepackage{graphicx}
\usepackage{subcaption} 
\usepackage{xcolor}
\usepackage{booktabs}
\usepackage{multirow}
\usepackage{enumitem}

\setlist{noitemsep}          
\setlist{nosep}              

\graphicspath{{Figures/}{./Figures/}} 

\usepackage[numbers, square, comma, sort&compress]{natbib}
\usepackage{algorithm,algorithmic}

\newtheorem{lemma}{Lemma}

\newtheorem{proposition}{Proposition}
\theoremstyle{definition}

\newtheorem*{assumption*}{\assumptionnumber}
\providecommand{\assumptionnumber}{}


\DeclareMathOperator*{\argmin}{argmin}

\DeclareMathOperator*{\minimize}{minimize}
\DeclareMathOperator*{\maximize}{maximize}

\DeclareMathOperator{\Cor}{Cor}
\DeclareMathOperator{\Cov}{Cov}
\DeclareMathOperator{\Var}{Var}

\DeclareMathOperator{\row}{row}
\DeclareMathOperator{\nul}{null}

\DeclareMathOperator{\sign}{sign}

\DeclareMathOperator{\diag}{diag}

\DeclareMathOperator{\bias}{Bias}

\DeclareMathOperator{\st}{subject\,\,to}

\def\hbeta{\hat{\beta}}
\def\htheta{\hat{\theta}}

\newcommand{\E}{\mathbb{E}}
\newcommand{\Pprob}{\mathbb{P}} 
\newcommand{\R}{\mathbb{R}}

\newcommand{\T}{\mathsf{T}}
\newcommand{\given}{\, \vert \,}
\renewcommand{\hat}{\widehat}
\newcommand{\US}{U.S.}

\def\nth{^{\textnormal{th}}}

\makeatletter
\newcommand*\rel@kern[1]{\kern#1\dimexpr\macc@kerna}
\newcommand*\widebar[1]{%
  \begingroup
  \def\mathaccent##1##2{%
    \rel@kern{0.8}%
    \overline{\rel@kern{-0.8}\macc@nucleus\rel@kern{0.2}}%
    \rel@kern{-0.2}%
  }%
  \macc@depth\@ne
  \let\math@bgroup\@empty \let\math@egroup\macc@set@skewchar
  \mathsurround\z@ \frozen@everymath{\mathgroup\macc@group\relax}%
  \macc@set@skewchar\relax
  \let\mathaccentV\macc@nested@a
  \macc@nested@a\relax111{#1}%
  \endgroup
}
\makeatother
\mathtoolsset{showonlyrefs}

\counterwithout{chapter}{part} 
\counterwithin{figure}{chapter}
\counterwithin{table}{chapter}
\counterwithin{algorithm}{chapter}

\titleformat{\chapter}[display]
  {\normalfont\large\bfseries}{\chaptertitlename\ \thechapter}{12pt}{\Large}

\usepackage{hyperref}
\hypersetup{
  colorlinks=true, linkcolor=black, citecolor=blue, urlcolor=blue,
  pdftitle={Estimating Epidemic Rate Parameters: Adaptivity, Bias, and Convolution},
  pdfauthor={Jeremy Goldwasser},
  pdflang={en-US}
}
\usepackage[capitalise, noabbrev]{cleveref}
\usepackage{bookmark}

\newcommand{\thesistitle}{Estimating Epidemic Rate Parameters: Adaptivity, Bias, and Convolution}
\newcommand{\theauthor}{Jeremy Goldwasser}
\newcommand{\department}{Statistics}
\newcommand{\gradterm}{Summer 2026}
\newcommand{\gradyear}{2026}

\newcommand{\chairname}{Professor Ryan Tibshirani}
\newcommand{\committeememberone}{Assistant Professor Ryan Giordano}
\newcommand{\committeemembertwo}{Associate Professor Will Fithian}
\newcommand{\committeememberthree}{Adjunct Professor Chris Paciorek}
\crefname{appendix}{Appendix}{Appendices}

\usepackage{mdframed}

\newmdenv[
  topline=false,
  bottomline=false,
  rightline=false,
  leftline=true,
  linewidth=2pt,
  linecolor=black,
  leftmargin=40pt,
  rightmargin=40pt,
  innerleftmargin=15pt,
  innerrightmargin=0pt,
  innertopmargin=5pt,
  innerbottommargin=5pt,
  skipabove=\baselineskip,
  skipbelow=\baselineskip
]{leftbar}

\begin{document}

\pagenumbering{roman}
\pagestyle{plain} 

\begin{titlepage}
\thispagestyle{empty} 
\begin{center}
\vspace*{1in}
{\large \thesistitle} \par
\vspace{0.5in}
by \par
\theauthor \par
\vspace{0.5in}
A dissertation submitted in partial satisfaction of the\\
requirements for the degree of\\
Doctor of Philosophy\\
in\\
\department\\
in the\\
Graduate Division\\
of the\\
University of California, Berkeley \par
\vspace{0.5in}
Committee in charge:\\
\chairname, Chair\\
\committeememberone\\
\committeemembertwo\\
\committeememberthree \par
\vspace{0.5in}
\gradterm
\end{center}
\end{titlepage}

\newpage
\thispagestyle{empty} 
\vspace*{\fill}
\begin{center}
\thesistitle\\[12pt]
\theauthor\ \copyright\ \gradyear\\
All rights reserved.
\end{center}
\vspace*{\fill}

\newpage
\pagenumbering{arabic}
\begin{center}
{\large Abstract}\\[12pt]
\thesistitle\\[12pt]
by\\[6pt]
\theauthor\\[6pt]
Doctor of Philosophy in \department\\[6pt]
University of California, Berkeley\\[6pt]
\chairname, Chair
\end{center}
\noindent
Metrics like the case-fatality rate and reproduction number are key descriptors of epidemics from the COVID-19 pandemic to the seasonal flu. 
In retrospect, these quantities enrich our understanding of infectious disease outbreaks; in real-time, they are absolutely critical to informing public health response. 
Thus, an important question in epidemiology is how best to estimate such metrics, especially in real-time. 
This question is complicated by practical considerations like data availability, as well as the fact that the metrics themselves may change as the epidemic unfolds. 

\vspace{\baselineskip}
\noindent We focus on two classes of time-varying metrics, the first of which we call ``severity rates.'' 
These encompass the case-fatality rate, hospitalization-fatality rate, and any metric tracking the rate at which a primary event develops into a more severe secondary event. 
Our first contribution is to identify weaknesses of the standard methods for estimating severity rates in real-time.
Strikingly, our empirical and mathematical analyses reveal significant bias for the most common method, a lagged ratio between primary and secondary counts.
We also show that a related ratio estimator is biased, making connections to $R_t$ estimation.

\vspace{\baselineskip}
\noindent Drawing on these insights, we propose methodology that estimates the entire time series of severity rates. 
Our approach maximizes an approximate likelihood function that we derived based on the convolutional nature of the problem. 
This likelihood is regularized by a trend filtering penalty that produces piecewise polynomial fits, and it flexibly adapts to local changes in the underlying signal.
We further regularize tail predictions in order to better suit our method for real-time estimation.
Experiments on real and semi-synthetic COVID-19 data demonstrate consistent performance improvements over the biased ratio baselines.

\vspace{\baselineskip}
\noindent The second metric we study is the reproduction number ($R_t$), which measures the average number of secondary transmissions from a single infection at time $t$.
We show that two competing technical definitions --- ``instantaneous'' and ``mechanistic'' $R_t$ --- are equivalent under homogeneous mixing, a standard assumption in compartmental models like SEIR.
We also derive the generation interval implied by SEIR, showing that this seemingly assumption-light input encodes specific compartmental structure.

\vspace{\baselineskip}
\noindent We introduce a new deconvolution method for $R_t$ estimation.
Instantaneous $R_t$ is a backward-looking quantity that avoids the forward-looking bias inherent to severity rate estimation, making it well-suited for real-time inference.
Typically, it is estimated from data such as case reports or hospitalizations, which are observed days after transmissions occur.
Many widely-used methods ignore this delay and are thus biased, especially around interventions.
More sophisticated tools account for latent infections, but are slow and rely on Bayesian priors.
Our frequentist approach, ConvRt, adapts ideas from the severity rate work --- convolutional modeling, polynomial smoothing, and tail regularization.
It predicts similarly to leading methods in a fraction of the computational cost on COVID-19 and influenza data.
Experiments on a range of synthetic benchmarks show favorable performance in point estimation and uncertainty quantification.

\newpage
\pagenumbering{roman}
\tableofcontents

\newpage
\phantomsection
\addcontentsline{toc}{chapter}{\listfigurename}
\listoffigures

\phantomsection
\addcontentsline{toc}{chapter}{\listtablename}
\listoftables


\newpage
\phantomsection
\addcontentsline{toc}{chapter}{Acknowledgements}
\chapter*{Acknowledgements}
\hspace{\parindent} This thesis would not have been possible without numerous individuals who supported me at every step of the way. First and foremost, I owe a deep debt of gratitude to my advisor, Ryan Tibshirani. Through his technical mentorship, Ryan has exemplified what it means to be statistician: to be rigorous, collaborative, and above all curious. And, just as importantly, he has demonstrated true leadership: thoughtful, decisive, hardworking, patient, and unfailingly kind. As I enter the working world --- again, thanks to his generosity --- I attribute my professional confidence to the time and care he invested in developing my abilities. It has been an extraordinary opportunity, and a real pleasure, to work with Ryan from the moment he came to Berkeley in 2022.

I am also extremely grateful for the support of my co-advisor, Giles Hooker. Despite our long-distance advisor-ship, Giles has been anything but distant throughout my PhD years. I have relished growing from his research vision, good humor, and attention to detail, including relearning Python while I was off at a summer internship. The papers we wrote together are not in this thesis, but I view them with pride and profound gratitude for his mentorship.

Several researchers outside Berkeley have helped me immensely on this journey. Alyssa Bilinski has been a tremendous support, particularly during our close collaboration this final semester. My excitement towards our rapid-fire project owes not only to her intellectual leadership, but also her incredible warmth and openness. Daniel McDonald has been an absolute workhorse; I have enjoyed all our conversations spanning statistics and classical music. I’m lucky to have worked with Patrick Kimes at Genentech, guiding me through an unforgettable summer internship. My research career began under the tutelage of Dragomir Radev, who tragically passed away in 2023. Drago is remembered as an AI pioneer, larger-than-life personality, and pillar of the Yale community.

Members of the Statistics Department have supported me in many ways throughout my PhD. I would like to thank Professors Ryan Giordano, Will Fithian, and Chris Paciorek for serving on my committees and offering meaningful perspectives. None of this would have happened without Bin Yu, who had faith in my abilities at the very beginning; she showed me how to express altruism, candor, and intellectual vitality in the statistical sciences and beyond. Addison Hu was a dedicated mentor on Part I of this thesis, and I’m fortunate to call him a friend. These years have been enriched by the enduring friendship of Will Torous, Omer Ronen, Erez Buchweitz, Seunghoon Paik, Eric Xia, Austin Tau, and many other graduate students. Lastly, without the amazing department staff --- especially Tanisha Robinson, Alex Coughlin, and La Shana Polaris --- Evans would surely collapse ahead of schedule. 

I am proud to join my father, sisters, aunts, uncles, cousins, and friends in the ranks of alumni of the world’s \# 1 public university (go bears!). Indeed, my brilliant friends in the broader Berkeley community are befitting of the title. While there are too many people to mention by name, I’d like to give a special shout-out to Ari Goldberg, Seul Ah Kim, Yarden Goraly, Eli Bronstein, Ron Boger, Steph Brener, Karna Mendonca, and Isaac Lipsky. It has also been an indescribable privilege to perform 60+ concerts with the UC Berkeley Symphony Orchestra, under the baton of David Milnes.

Finally, I’d like to thank my family: my parents Richard and Susan, and my sisters Lily and Sarah. Their unconditional love has sustained me through it all.

\newpage
\pagenumbering{arabic}
\pagestyle{plain}

\chapter{Introduction}
\label{ch:intro}

\section{Public Health Metrics}

\subsection*{Motivation}
Data analysis is a critical component of public health. 
Decision-making in an infectious disease outbreak is difficult in its own right, but nearly impossible without good data. 
Statistical analyses provide guidance for policy-relevant questions such as: 
\begin{itemize}
    \item How many people are likely to be infected in the short- and long-term?
    \item What is the risk of hospitalization, or death?
    \item Which geographic or demographic subpopulations need the most resources?
\end{itemize}

These questions were of critical importance during the COVID-19 pandemic. 
Seroprevalence studies estimated that nearly 200 million Americans had been infected by February 2022, after the Omicron wave \citep{clarke2022seroprevalence}.
In total, over a million Americans died, with an infection-fatality rate (IFR) around 0.5\% \citep{meyerowitzkatz2020systematic}. 
About 4.5\% of infections, totaling $\sim$3.5 million, were hospitalized during the pre-vaccination ancestral period.
Mortality rates were heavily age-stratified, below 0.01\% for children and roughly 15\% for elders above 85 \citep{levin2020assessing}. 
They also varied by demographics, affecting men and non-White ethnicities disproportionately.

Throughout the pandemic, governments imposed or lifted non-pharmaceutical interventions (NPIs) like school lockdowns and mask mandates based on data-driven forecasts. 
These numbers were also instrumental in shaping the public's behavior.
Unfortunately, the data in question was far from perfect, in part due to low case ascertainment rates.
This was especially true in the early months of 2020, when decisions (or their lack thereof) were the most consequential. 
Even with the lockdowns starting March 2020, over 50 million Americans had likely been infected by the end of the summer \citep{reese2021estimated}. 

Moreover, salient epidemiological metrics changed over time, reflecting dynamic public behavior and biological factors.
New variants rarely had the same infectiousness and severity as their predecessors; for example, the Omicron variant was more transmissible than Delta, but also had lower hospitalization and fatality rates \citep{cfrs_by_variant}. 
Vaccines and other therapeutics lowered risks of adverse outcomes, though coverage was far from uniform. 
Viral transmission also depended on human elements like seasonal trends, NPIs, and falling compliance with guidelines.
In response to changing conditions, public health officials needed to adapt accordingly. 

While we focus on COVID-19, the pandemic's lessons extend to other diseases.
Tens of thousands of Americans die each year from seasonal influenza, and hundreds of thousands globally \citep{iuliano2018estimates, reed2015estimating, thompson2003mortality}. 
Tracking disease burden helps guide real-time responses such as vaccination, testing, and communication campaigns.
No two flu seasons are exactly alike, due to shifts in circulating strains, vaccine uptake, antigenic drift, and other factors.
Surveillance metrics are also critical for managing RSV, malaria, monkeypox, HIV, and many other diseases \citep{lloydsmith2005superspreading}. 

While there are many quantities of interest in epidemiology, we focus on two categories in this work. 
The first, \textit{severity rates}, is a framework that encompasses several well-known metrics. 
The second, \textit{reproduction numbers}, measures how many people an average case infects. 

\subsection*{Severity Rates}

Several public health metrics of interest express the probability that a second,
often more severe outcome will follow a primary event. 
We refer to such metrics as ``severity rates.''
The most commonly-reported severity rate is the case-fatality rate, or CFR \citep{cfr_cao,
nishiuraEx1}. 
Other examples of interest describe different types of
events, for example hospitalizations \citep{HFR_linelist3} and wastewater
shedding \citep{wastewater_cases}.

The CFR is sometimes treated as a proxy for the infection-fatality
rate, or IFR. 
While the IFR more faithfully represents disease status, infection counts are
generally unknown, so it is harder to compute. It can be attempted using seroprevalence data to estimate the ascertainment rates that relate infections to observed cases \citep{timevar_ifr, lancet_ifr}. 

It is common to treat severity rates as stationary over time \citep{ghani,
  jewell2007nonparametric, reich2012estimating, lancet_controversial}. This
assumes that the population-level probability of developing a secondary
event, given a primary event, is constant throughout the entire epidemic. 
However, as explained above, severity rates often change for biological, therapeutic, and behavioral reasons. 
For example, \citet{cfrs_by_variant} estimates the  
original COVID-19 variant to have a 3.6\% CFR globally, compared to 2.0\% for
the Delta variant, and 0.7\% for Omicron.  
Therefore we define the severity rate at time $t$ as
\begin{equation}\label{eq:severity}
    p_t = \Pprob(\text{secondary event will occur} \given \text{primary event at $t$}).
\end{equation}

\subsection*{Reproduction Numbers}

Like severity rates, reproduction numbers also concern the rate at which secondary events occur.
The reproduction number is best understood as the average number of secondary transmissions over the lifetime of a single primary infection.
There are several types, the most central of which is the \textit{basic reproduction number} $R_0$. This is the average number of transmissions for a single infection in a fully susceptible population, with no deliberate interventions in place \citep{diekmann1990definition}.

In contrast, the \textit{effective reproduction number} $R_t$ tracks how this quantity changes over time as immunity builds and conditions shift.
There are multiple competing definitions of $R_t$.
\textit{Case $R_t$} is the average number of secondary transmissions amongst the cohort of individuals infected at time $t$ \citep{fraser2007}.
Similar to our definition of severity rates \eqref{eq:severity}, case $R_t$ is a forward-looking quantity that conditions on primary events at $t$.

\textit{Instantaneous $R_t$} instead describes the average number of transmissions that occur at $t$, from infections before $t$.
This backward-looking notion is the standard definition in practice: it engenders a formula called the ``renewal equation'' amenable to real-time estimation. 
In spite of their differences, instantaneous and case $R_t$ are equivalent if conditions remain locally stable.

A third formulation arises from compartmental epidemic models, which decompose a population into groups based on disease status \citep{anderson1991infectious, kermack1927contribution}.
Movement between compartments is governed by transmission and recovery parameters, from which the reproduction number can be expressed.
We explore the relationships among these definitions, and methods to estimate them, in \cref{pt2}.

\subsection*{Computing Epidemic Rates}
In an ideal setting, severity rates and reproduction numbers can be obtained directly from a
line-list data set containing individual patient outcomes. 
Such line-lists do not necessarily need to be comprehensive, or even representative of the target population of interest.
Data obtained from contact tracing has been used to calculate reproduction numbers for SARS, measles, monkeypox, and many other infectious diseases.
Various techniques can be used to adjust for biases in non-representative line-lists, such as post-stratification on characteristics like age \citep{nishiuraEx2}, race \citep{HFR_linelist2}, and seroprevalence \citep{lancet_ifr}.

However, real-time tracking of individuals in fast-moving epidemics
has in many settings been infeasible, even at smaller scales. 
For example, the CDC's line-list during the COVID-19 pandemic excluded several large geographic regions, was only updated monthly, and had missing death statuses for a high proportion of individuals \citep{CDC_line_list}. 
Techniques like importance reweighting are often insufficient to address these problems.
Therefore, while line-lists may help obtain a preliminary snapshot or retrospective analysis, their utility is limited for real-time estimation. 

When this is the case, severity rates and reproduction numbers
are routinely estimated from aggregate count data. 
We present a number of examples for COVID-19.
Case and death counts were frequently used to estimate and report CFRs in the academic literature \citep{yuan2020monitoring,
  horita2022global, LIU2023100350, germany}, as well as in major news outlets including
The Atlantic \citep{atlantic}, Wall Street Journal \citep{wsj}, and New York Times \citep{nyt}. 
They were also used by public health organizations like the CDC \citep{ahmad2023covid,mississippi}, 
and reported by the Trump and Biden administrations \citep{whitehouse2020mcenany, whitehouse2021briefing}. 
Similarly, reproduction numbers were routinely reported in journal publications, news articles, and government statements. 
Most estimated instantaneous $R_t$ from aggregate data, with methods described in Chapters \ref{ch:paper3} and \ref{ch4}.

\section{Our contributions}

Our work studies methods to estimate time-varying epidemic rate parameters, especially in real-time.
Part \ref{pt1} of this thesis focuses on severity rates.
In Chapter \ref{ch:paper1}, we analyze the standard real-time estimators, 
and expose that they may be highly biased.
In Chapter \ref{ch:paper2}, we propose a solution that estimates severity rates via adaptive deconvolution. 
Part \ref{pt2} turns to reproduction numbers. Chapter \ref{ch:paper3} draws new connections between different formulations of $R_t$, and Chapter \ref{ch4} develops a fast, frequentist estimation method.

\subsection*{Part \ref{pt1}: Estimating Time-Varying Severity Rates}
\begin{leftbar}
\itshape
This work was done in collaboration with Addison Hu, Alyssa Bilinski, Daniel McDonald, and Ryan Tibshirani. 
\end{leftbar}

\subsubsection*{Understanding Bias in Real-Time Estimation}
Time-varying severity rates are typically estimated with a ratio of primary and secondary counts.
In fact, ratio estimators are so common that CFR is often 
(mis)labeled the case-fatality \emph{ratio} \citep{timevar_ifr}. 
Our work shows that these ratio estimators are prone to nontrivial
statistical bias. 
We empirically validate our mathematical analysis, 
tracking the hospitalization-fatality rate (HFR) during COVID-19.

Bias arises as a consequence of changing severity
rates---
precisely what time-varying estimates are employed to detect.
On \US\ data, for example, the ratio estimators failed to quickly signal increased
risk in the onset of the Delta wave.
In addition, bias also arises due to misspecification of the delay distribution that relates primary and secondary events. 
This is particularly troublesome during various stages of a surge, especially for the popular lagged ratio.
Once again, these are periods in which it is critical to properly ascertain the severity rate.
While the HFRs dropped in the aftermath of the initial Omicron surge, the lagged ratio's estimates jumped 50\% above the true values.

We provide practical heuristics for when epidemiologists should expect this bias to occur. 
These may prevent poor decision-making based on biased estimates of severity rates. 
More broadly, our work exposes the limitations of existing approaches, motivating the use of alternatives.

\subsubsection*{Adaptive Deconvolution with Trend Filtering}

We propose an approach that overcomes the limitations of the
methods described in Chapter \ref{ch:paper1}. 
Our deconvolution method has less bias than the existing ratio-based estimators, as
it uses a statistical model that explicitly encodes severity rates to connect
primary and secondary events.
Since this model is underspecified, we use regularization---specifically, a form
of trend filtering---to smooth out the estimated severity rates, while
maintaining the ability to adapt to potentially abrupt changes in the underlying
signal. In the real-time
case, we use additional regularization at the right tail of the sequence.

We validate our methodology on a broad range of experiments, calculating the hospitalization-fatality rate (HFR) of COVID-19. 
These experiments use real hospitalization data,
and generate synthetic death counts using plausible HFR curves.
To evaluate our real-time method, we add noise by modeling overdispersion
in the versioned death counts.
Our approach achieves around 15\% lower MAE than the convolutional ratio estimator,
and a 55\% improvement over the lagged ratio (used widely in practice).
It continues to outperform ratio-based methods under various degrees of
model misspecification. 
Finally, the same qualitative benefits are observed on 
real COVID-19 deaths.

\subsection*{Part \ref{pt2}: Estimating the Reproduction Number}
\begin{leftbar}
\itshape
This work was done in collaboration with Alyssa Bilinski and Ryan Tibshirani. 
\end{leftbar}


\subsubsection*{Unifying Perspectives on $R_t$}

The effective reproduction number is typically estimated using the instantaneous $R_t$ formulation.
Compartmental models like SEIR yield a distinct definition, which we call mechanistic $R_t$. 
These depend on entirely different mathematical quantities. 
For example, instantaneous $R_t$ uses the generation interval distribution, while mechanistic $R_t$ uses the mean duration of infectiousness.
These two perspectives are usually treated as separate modeling traditions, each with its own assumptions and estimation strategies.

We prove that instantaneous and mechanistic $R_t$ are equivalent under homogeneous mixing, the standard assumption for compartmental models.
We also derive the generation interval distribution that SEIR models imply.
This equivalence makes explicit that generation intervals encode mechanistic structure, even when they appear to have few assumptions.

\subsubsection*{Estimating $R_t$ via Deconvolution}

We propose ConvRt, a frequentist $R_t$ estimator that conducts two stages of deconvolution.
Most $R_t$ methods use the renewal equation, a formula that relates instantaneous $R_t$ to infection counts.
They use data on case reports or hospitalizations as a proxy for infections, which are not observed.
Most existing methods either ignore this latent structure or handle it with computationally expensive MCMC sampling.

ConvRt models latent infections explicitly but replaces MCMC with penalized maximum-likelihood, and thus fits in seconds.
It performs two deconvolution steps with spline-based Poisson regression: first estimating infections, then $R_t$ itself.
We adapt strategies from the severity rate work to stabilize real-time predictions.
We also develop conformal inference bands that account for the bias introduced by tail regularization.

We benchmark ConvRt against five existing methods on six simulated datasets, as well as real influenza hospitalizations and COVID-19 case data.
ConvRt matches or exceeds the CDC's preferred method in point accuracy and calibration, while running orders of magnitude faster.
It significantly outperforms all other methods that run in comparable time.

\part{Estimating Time-Varying Severity Rates}\label{pt1}
\chapter{Understanding Bias in Real-Time Estimation}
\label{ch:paper1}

\subsection*{Notation}

Notation is summarized in \cref{tab:notation-ch1} and applies to both this chapter and \cref{ch:paper2}.

\begin{table}[!ht]
\centering
\caption{Severity rate notation (Chapters~2--3).}
\label{tab:notation-ch1}
\renewcommand{\arraystretch}{1.15}
\begin{tabularx}{\linewidth}{@{}l >{\raggedright\arraybackslash}X@{}}
\toprule
\textbf{Symbol} & \textbf{Meaning} \\
\midrule
\multicolumn{2}{@{}l}{\textsc{Event counts}} \\
\addlinespace[2pt]
$x_t$ & Primary events (e.g.\ cases, hospitalizations) at $t$ \\
$y_t$ & Secondary events (e.g.\ deaths) at $t$ \\
\midrule
\multicolumn{2}{@{}l}{\textsc{Severity rate and delay}} \\
\addlinespace[2pt]
$p_t$ & Severity rate: probability of secondary event given primary event at $t$ \\
$\pi_k^{(t)}$ & Delay distribution: probability secondary event occurs $k$ steps after primary event at $t$ \\
$d$ & Maximum delay (support of the delay distribution) \\
\midrule
\multicolumn{2}{@{}l}{\textsc{Estimators}} \\
\addlinespace[2pt]
$\hat{p}_t^\ell$ & Lagged ratio estimator with lag $\ell$ \\
$\hat{p}_t^\gamma$ & Convolutional ratio estimator with delay estimate $\gamma$ \\
\bottomrule
\end{tabularx}
\end{table}

\section{Ratio Estimators of Severity Rates}\label{sec:defs}

Severity rates convey the probability that a primary event will result in a
secondary event in the future. In the case of CFR, for example, a primary event
is a positive COVID-19 case and a secondary event is a death with a positive
test result. 
As introduced in \cref{ch:intro}, the 
severity rate $p_t$ is the probability of a future secondary event, given a primary event occuring at $t$.

Here, $t$ may represent a discrete interval of time, such as a given day or 
week. It also may be understood in a continuous-time fashion. Although this will    
not be our focus in this paper, the same general principles apply in the
continuous-time case. For simplicity, we will consider only the discrete-time 
setting, and we index time steps via integers, as in $t=0,1,2,\dots$.    

Throughout, we denote by $\{x_t\}$ and $\{y_t\}$ the aggregate time series of 
new primary and secondary events, respectively. These are often counts, and we
will generally refer to them as such. At time $t$, we assume data for all past
$s \leq t$ is available, but future data is not. Therefore real-time estimates of
$p_t$ can only rely on past counts \smash{$x_{\leq t}$} and
\smash{$y_{\leq t}$}. In practice, to stabilize estimates, smoothed
counts are often used in place of raw counts. This may be simply absorbed into
the notation for $x_t$ and $y_t$, and we do not address smoothing explicitly in
the formulation of the estimators, but refer back to this issue in
\cref{sec:setup} and \cref{sec:setup2}.

In the context of CFR, for example, $\{x_t\}$ and $\{y_t\}$ denote cases and deaths.
However, the following estimators and their biases are shared for any severity rate $p_t$, including CFR and HFR.

\paragraph{Lagged estimator.} 

The canonical estimator for time-varying severity rates is a ratio between
the counts of primary and secondary events, offset by a lag $\ell$. This
estimator is widely-used in epidemiology, both in the academic literature and in 
public health practice and communication (e.g., \citealp{wsj, atlantic,
  yuan2020monitoring, timevar_ifr, thomas2021estimating, horita2022global,
  LIU2023100350, germany}). For concreteness, we define the \emph{lagged ratio} 
at time $t$ as:   
\begin{equation}
\label{eq:lagged}
\hat{p}_t^\ell = \frac{y_t}{x_{t-\ell}},
\end{equation}
where $\ell \geq 0$ is a given parameter (often chosen to maximize
cross-correlation between $\{x_t\}$ and $\{y_t\}$). 
Note without loss of generality that the lag itself may be time-varying, adapting to shifts in the delay distribution.

\paragraph{Convolutional estimator.} 

Alternative methods for estimating severity rates utilize a delay distribution,
which relates the two time series. The delay distribution at time $t$ and lag
$k$ is defined as: 
\[
\pi_k^{(t)} = \Pprob(\text{secondary event at $t+k$} \given \text{primary event at
  $t$, secondary event occurs}).  
\]

Throughout this work, we assume that the delay distribution has a finite support
of $d$ time steps. 
For the sake of notational convenience, we also assume the delay distribution is stationary:
\smash{$\pi_k^{(t)} = \pi_k$} for all $k$ and $t$. 
However, our analyses hold in the case of time-varying delay distributions.
While the delay distribution is generally unknown,
several tools exist to estimate them from aggregate or line-list data; see
\citealp{delay_distrs} for a review. Given $\pi$, we can express the
expected number of secondary events at time $t$ as follows: 
\begin{align}
\E[y_t \given x_{\leq t}]
&= \sum_{k=0}^d x_{t-k} \Pprob(\text{secondary at $t$} \given \text{primary at
  $t-k$}) \nonumber \\   
&= \sum_{k=0}^d x_{t-k} \Pprob(\text{secondary after $k$ time steps} \given
  \text{secondary occurs, primary at $t-k$}) \nonumber \\
&\hspace{100pt} \cdot\Pprob(\text{secondary occurs} \given\text{primary at $t-k$})
  \nonumber \\  
\label{eq:model}
&= \sum_{k=0}^d x_{t-k} \pi_k p_{t-k}.
\end{align}

This is a convolution of the delay distribution against the product of primary
incidence and the severity rate. If the severity rate remains constant, $p_{t-k}
= p_t$ for all $k$ between $0$ and $d$, then the expression in \eqref{eq:model}
simplifies to \smash{$\E[y_t \given x_{\leq t}] = p_t \sum_{k=0}^d
  x_{t-k} \pi_k$}. As studied in \citet{UKpaper}, we can rearrange this
relationship in order to estimate the severity rate at $t$, after plugging-in an
estimate $\gamma$ of the delay distribution $\pi$:   
\begin{equation}
\label{eq:conv}
\hat{p}_t^\gamma = \frac{y_t}{\sum_{k=0}^d x_{t-k} \gamma_k}.
\end{equation}

We call this the \emph{convolutional ratio} estimator of the severity rate. The superscript in our notation \smash{$\hat{p}_t^\gamma$} emphasizes that
$\gamma$ is the distribution used in the definition of the estimator
\eqref{eq:conv}. To reiterate, in moving from \eqref{eq:model} to
\eqref{eq:conv}, we are implicitly assuming that the severity rate $p_t$ is
stationary over the interval of time from $t-d$ and $t$. Of course, this runs in
contradiction to the fact that we are trying to estimate a time-varying severity
rate in the first place. As we will see shortly, this can create significant
bias in the convolutional ratio.  

The convolutional ratio was originally proposed for estimating stationary severity rates. \citet{nishiura} developed the estimator in this setting and used it to analyze the CFR in the H1N1 influenza pandemic of 2009. (The main difference to \eqref{eq:conv} is that in the stationary case we aggregate both the numerator and denominator over all past data.) This estimator, sometimes called the \emph{delay-adjusted} estimator, has since become popular in the academic literature and public health practice (e.g., \citealp{nishiuraEx1, nishiuraEx2, Russell2020, Unnikrishnan2021}). Its extension to the time-varying setting is more recent and less widely adopted; the R package \texttt{cfr} \citep{cfr_package} provides an implementation for both cases.  

Furthermore, we note that \eqref{eq:conv} can be seen as a generalization of the  
lagged ratio estimator \eqref{eq:lagged}: when we take $\gamma$ to be a point
mass at lag $\ell$, i.e., $\gamma_\ell = 1$ and $\gamma_k = 0$ for all $k \not=
\ell$, then \eqref{eq:conv} reduces to \eqref{eq:lagged}.


\section{Bias Analysis}
\label{sec:methods}

In this section, we analyze the bias of the ratio estimators.

\subsection{Well-specified analysis}
\label{sec:wellspecified}

First we analyze the bias of the convolutional ratio \eqref{eq:conv} in what we  
call the \emph{well-specified} case, where the true delay distribution $\pi$ is 
known. Formally, for an estimator \smash{$\hat{p}_t$} of $p_t$, we define its
bias as:     
\[
\bias(\hat{p}_t) = \E[\hat{p}_t \given x_{\leq t}] - p_t. 
\]

\begin{proposition}
\label{prop:OracleBias}
Assume the true delay distribution $\pi$ is known.
The bias of the well-specified convolutional ratio \smash{$\hat{p}_t^\pi$} is 
\begin{equation}
\label{eq:OracleBias}
\bias(\hat{p}_t^\pi)  = \sum_{k=0}^d \Bigg[ \frac{x_{t-k}\pi_k}{\sum_{j=0}^d
  x_{t-j}\pi_j} (p_{t-k}-p_t) \Bigg]. 
\end{equation}
\end{proposition}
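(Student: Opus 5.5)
The plan is a direct computation that uses only the conditional mean model \eqref{eq:model}. Conditional on $x_{\leq t}$, the denominator of the well-specified convolutional ratio $\hat{p}_t^\pi = y_t / \sum_{j=0}^d x_{t-j}\pi_j$ is a fixed (nonrandom) quantity. I will assume it is positive, as is implicit for the estimator to be defined. Linearity of conditional expectation then gives
\[
\E[\hat{p}_t^\pi \given x_{\leq t}] = \frac{\E[y_t \given x_{\leq t}]}{\sum_{j=0}^d x_{t-j}\pi_j} = \frac{\sum_{k=0}^d x_{t-k}\pi_k p_{t-k}}{\sum_{j=0}^d x_{t-j}\pi_j},
\]
where the second equality substitutes \eqref{eq:model} with the true delay distribution $\pi$.

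Next I will subtract $p_t$. To put everything over a common denominator, I write $p_t$ as the trivial weighted average
\[
p_t = \frac{\sum_{k=0}^d x_{t-k}\pi_k\, p_t}{\sum_{j=0}^d x_{t-j}\pi_j}.
\]
Subtracting this from the expression above and combining the sums termwise gives
\[
\bias(\hat{p}_t^\pi) = \sum_{k=0}^d \frac{x_{t-k}\pi_k}{\sum_{j=0}^d x_{t-j}\pi_j}\,(p_{t-k}-p_t),
\]
which is exactly \eqref{eq:OracleBias}. A natural remark to add afterwards is that the weights are nonnegative and sum to one. The bias is therefore a convex combination of the deviations $p_{t-k}-p_t$, and it vanishes when $p$ is constant on the window $[t-d,t]$.

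There is no real obstacle here. The only point that needs care is the conditioning: the argument requires treating $x_{\leq t}$ as fixed, so that the ratio's expectation equals the ratio of expectations. That is exactly how the bias was defined, conditional on $x_{\leq t}$. If the delay distribution is time-varying, the same steps go through verbatim with $\pi_k$ replaced by $\pi_k^{(t-k)}$.
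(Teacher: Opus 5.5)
Your proposal is correct and matches the paper's own proof (Appendix, proof of Proposition~1): you pull the fixed denominator out of the conditional expectation, substitute the convolutional mean model, write $p_t$ over the same denominator, and combine termwise. Your closing remarks, that the bias is a convex combination of the $p_{t-k}-p_t$ and that the argument carries over verbatim to time-varying $\pi^{(t)}$, are also consistent with what the paper notes.
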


The proof of \cref{prop:OracleBias} is given in \cref{apx:OracleBias}.
Intuitively, the well-specified bias can be interpreted as a weighted average of the difference between trailing and current severity rates. 
The weights depend on the delay distribution and primary incidence curve.
\cref{fig:wellspecified} provides an accompanying illustration.     

\begin{enumerate}
\item \textbf{Changes in severity rate}. The central component of this bias
  expression is the difference $p_{t-k}-p_t$. When the severity rate is constant
  over the $d$ preceding time points, the convolutional ratio is unbiased 
  (because this difference is zero). This falls in line with the motivation used
  to derive this estimator, as explained in the last subsection. But when 
  severity rates change before $t$, these difference terms will be nonzero, in
  which case the estimator will be generally biased. 
  \cref{fig:toy_hfr} shows a simple example of this: the estimated
  severity rates are most inaccurate in periods where the true rate is changing
  quickly.        

  To make matters worse, the bias is in the opposite direction of the trend we 
  want to detect; for example, suppose the severity rate is monotonically
  falling, with $p_t < p_{t-1} < \dots < p_{t-d}$. The bias will then be
  positive, meaning the ratio estimates do not decline with the true rate. In
  fact, the estimated severity may even rise, not fall. Conversely, when true
  severity rates are rising, the estimates will be too low. 

\item \textbf{The delay distribution}. How much the changing severity rates
  impact the bias depends on the shape of the delay distribution $\pi$. In
  general, the bias will be greatest when the delay distribution has a long
  enough tail to upweight significant differences in severity rate. While the
  effect of the delay distribution's shape may appear subtle, \cref{sec:results} highlights its
  surprisingly large effects. The simple example in \cref{fig:toy_delay}
  also shows significant differences in bias between shorter and longer delay
  distributions.     

\item \textbf{The primary incidence curve.} Changing primary incidence $x_t$ 
  will also affect the bias, presuming the severity rate changes roughly 
  monotonically in the recent past. Intuitively, this up- or down-weights the
  terms $x_{t-k}\pi_k (p_{t-k} - p_t)$ at times further from the present, which  
  are likely to contribute the most bias. In general, falling primary incidences
  will amplify the bias, whereas rising events will minimize it. Figure
  \cref{fig:toy_primary} provides an illustration of this phenomenon.   
\end{enumerate}

\begin{figure}[htb]
\centering
\begin{subfigure}[b]{0.325\linewidth}
  \centering
  \includegraphics[width=\linewidth]{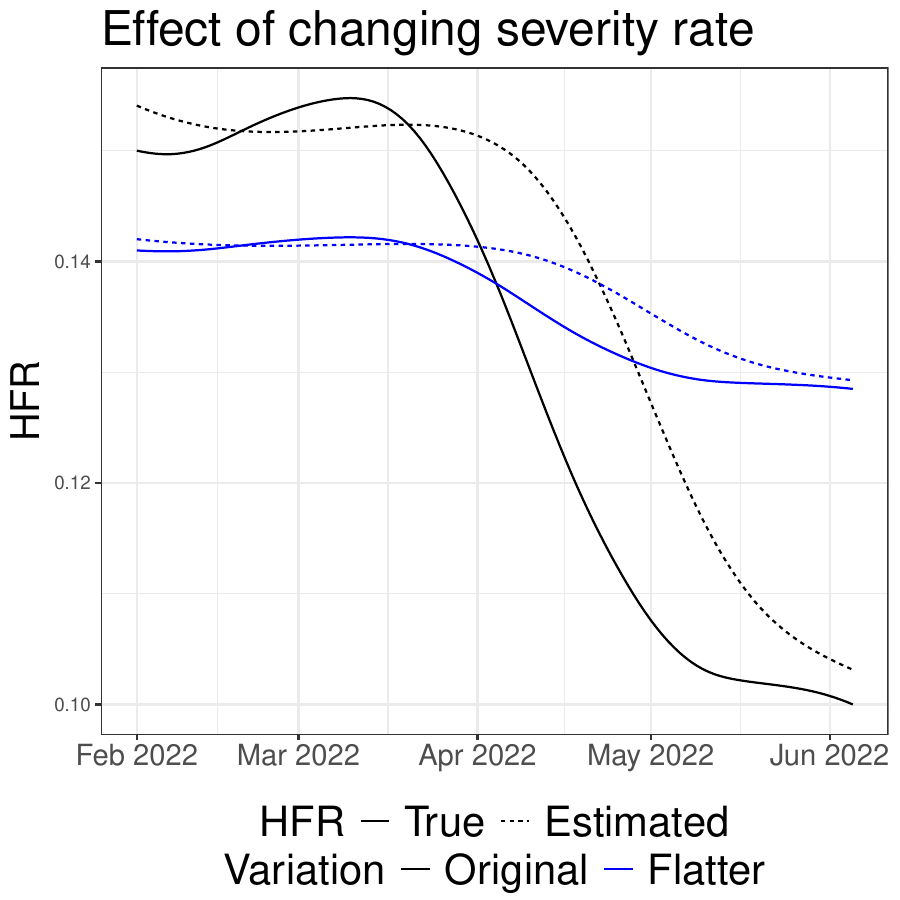} 
  \caption{}
  \label{fig:toy_hfr}
\end{subfigure}
\begin{subfigure}[b]{0.325\linewidth}
  \centering
  \includegraphics[width=\linewidth]{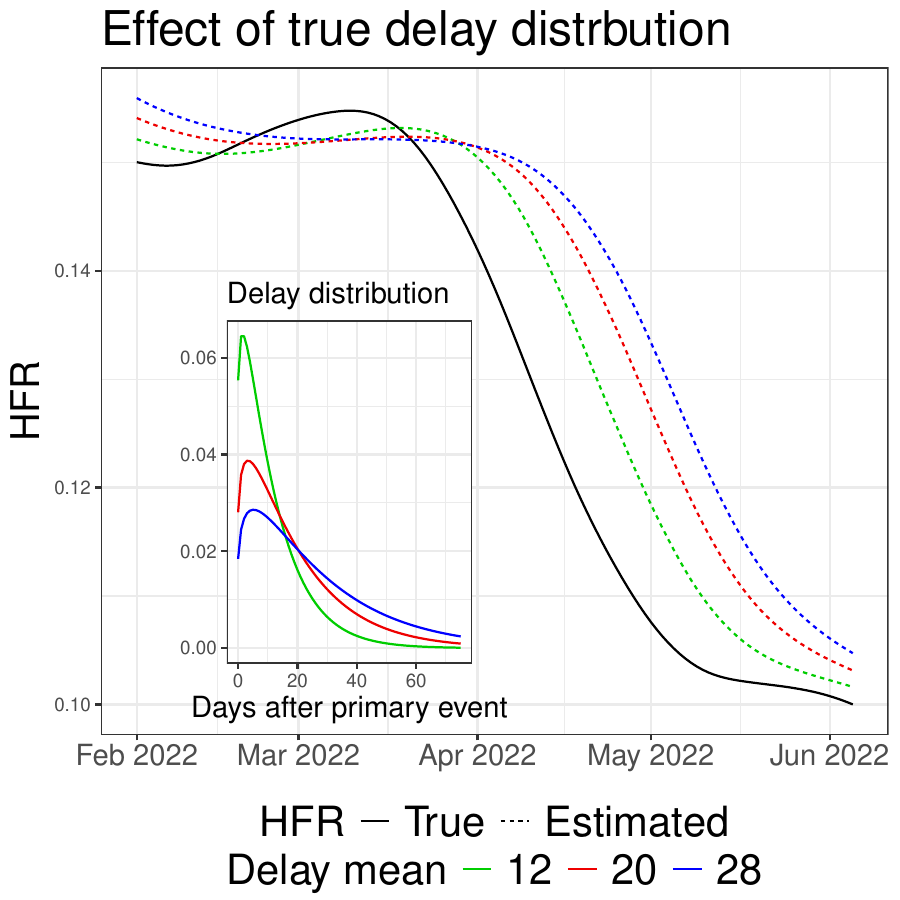}
  \caption{}
  \label{fig:toy_delay}
\end{subfigure}
\begin{subfigure}[b]{0.325\linewidth}
  \centering
  \includegraphics[width=\linewidth]{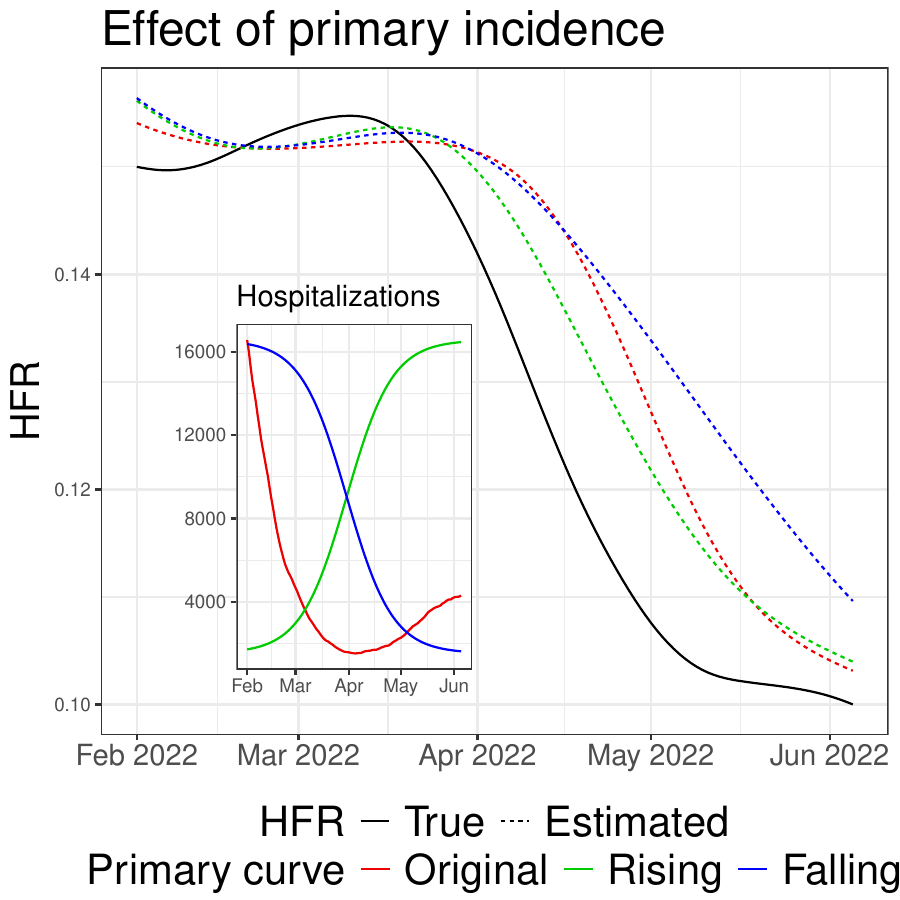} 
  \caption{}
  \label{fig:toy_primary}
\end{subfigure}
\caption[Three simulations explaining the bias of the well-specified convolutional ratio.]{Simple examples which illustrate the effects of the three factors
  explained above on 
  the bias of the well-specified convolutional ratio \eqref{eq:OracleBias}.
  In each figure, the colors correspond to different versions of the factor of interest.
  The primary incidence curve measures COVID-19 hospital admissions, as
  reported to the HHS in early 2022. We then simulate a secondary incidence
  curve, COVID-19 deaths, from \eqref{eq:model}, without noise. The underlying  
  HFR curve $p_t$ and delay distribution $\pi$ used (in simulating deaths) were 
  derived from external data sources, as explained in more detail in
  \cref{sec:setup}. }      
\label{fig:wellspecified}
\end{figure}

\cref{apx:analysis} provides further analysis, by discussing 
simplified settings in which the bias \eqref{eq:OracleBias} described in
\cref{prop:OracleBias} itself simplifies in elucidating ways.   

\subsection{Misspecified analysis}
\label{sec:misspecified}

We now analyze the bias of the convolutional ratio \eqref{eq:conv} for an
arbitrary distribution $\gamma$. Recall that $\pi$ denotes the true delay
distribution in \eqref{eq:model}. We refer to the present scenario as the
\emph{misspecified} case, as $\gamma$ may differ from $\pi$.    

\begin{proposition}
\label{prop:MispBias}
The bias of the convolutional ratio \smash{$\hat{p}_t^\gamma$} (where the true 
delay distribution $\pi$ is unknown, and the working delay distribution $\gamma$
is arbitrary, but also supported on $d$ time steps) is
\begin{equation}
\label{eq:MispBias}
\bias(\hat{p}_t^\gamma) = A_t^\gamma \bias(\hat{p}_t^\pi) + p_t (A_t^\gamma-1),  
\end{equation}
where \smash{$A_t^\gamma = \sum_{j=0}^d x_{t-j}\pi_j \,\big/\, \sum_{j=0}^d 
  x_{t-j} \gamma_j$}. This compares how the delay distributions convolve against  
the most recent primary incidence levels. 
\end{proposition}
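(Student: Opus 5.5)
The argument is a short algebraic comparison between the misspecified and well-specified estimators. Since the ratio estimators are linear in $y_t$ once we condition on $x_{\leq t}$, the whole proof reduces to relating their denominators.

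\emph{Step 1: relate the two estimators.} Writing $D_t^\pi = \sum_{j=0}^d x_{t-j}\pi_j$ and $D_t^\gamma = \sum_{j=0}^d x_{t-j}\gamma_j$, both are deterministic given $x_{\leq t}$. We have
\[
\hat{p}_t^\gamma = \frac{y_t}{D_t^\gamma} = \frac{D_t^\pi}{D_t^\gamma}\cdot \frac{y_t}{D_t^\pi} = A_t^\gamma\, \hat{p}_t^\pi .
\]
Because $A_t^\gamma$ is a function of $x_{\leq t}$ only, it follows that
\[
\E[\hat{p}_t^\gamma \given x_{\leq t}] = A_t^\gamma\, \E[\hat{p}_t^\pi \given x_{\leq t}].
\]

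\emph{Step 2: substitute the well-specified bias.} By definition of the bias, $\E[\hat{p}_t^\pi \given x_{\leq t}] = \bias(\hat{p}_t^\pi) + p_t$, with $\bias(\hat{p}_t^\pi)$ given explicitly by \cref{prop:OracleBias}. Substituting gives
\[
\E[\hat{p}_t^\gamma \given x_{\leq t}] = A_t^\gamma \bias(\hat{p}_t^\pi) + A_t^\gamma p_t .
\]
Subtracting $p_t$ then yields $\bias(\hat{p}_t^\gamma) = A_t^\gamma \bias(\hat{p}_t^\pi) + p_t(A_t^\gamma - 1)$, which is \eqref{eq:MispBias}.

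\emph{Step 3: check well-definedness.} There is no real obstacle in this argument. The only care needed is to confirm that $D_t^\gamma$ and $D_t^\pi$ are nonzero, so that both ratios and $A_t^\gamma$ are defined. This is implicitly assumed whenever the estimators are defined, and it requires the recent primary counts to be positive on the supports of $\gamma$ and $\pi$. As a check, taking $\gamma = \pi$ gives $A_t^\gamma = 1$, and the formula collapses to \cref{prop:OracleBias}.
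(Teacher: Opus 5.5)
Your proof is correct, but it takes a different route from the paper's.

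The paper substitutes the convolutional mean $\E[y_t \given x_{\leq t}] = \sum_{k=0}^d x_{t-k}\pi_k p_{t-k}$ directly. It writes the bias as $\sum_{k} x_{t-k}(\pi_k p_{t-k} - \gamma_k p_t)/\sum_j x_{t-j}\gamma_j$ and splits $\gamma_k = \pi_k + (\gamma_k - \pi_k)$ term by term. It then regroups into $A_t^\gamma$ times the weighted-average oracle bias of Proposition~1, plus $-p_t\sum_k x_{t-k}(\gamma_k-\pi_k)/\sum_j x_{t-j}\gamma_j$, which equals $p_t(A_t^\gamma - 1)$.

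You instead notice that the two estimators share the numerator $y_t$. So $\hat{p}_t^\gamma = A_t^\gamma \hat{p}_t^\pi$ holds pathwise, and $x_{\leq t}$-measurability of $A_t^\gamma$ plus linearity of conditional expectation finish the argument. Because you never use the form of $\E[y_t \given x_{\leq t}]$, you show the decomposition is purely algebraic. It holds under any mean model for $y_t$, including time-varying delays. It even holds before taking expectations: $\hat{p}_t^\gamma - p_t = A_t^\gamma(\hat{p}_t^\pi - p_t) + p_t(A_t^\gamma - 1)$.

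What the paper's longer computation gains is an explicit form of the misspecification term. It shows that the sign of $A_t^\gamma - 1$ is the sign of $\sum_k x_{t-k}(\pi_k - \gamma_k)$, which is what its heuristics about light-tailed, heavy-tailed, and point-mass $\gamma$ rely on. Your route recovers this immediately from the definition of $A_t^\gamma$, so nothing is lost.
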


The proof of \cref{prop:MispBias} is given in \cref{apx:MispBias}.
Before delving into the technical details, we first summarize its implications for the lagged ratio, which as a special case of \eqref{eq:conv} with a point-mass delay distribution is itself a misspecified convolutional ratio.
When primary incidence $\ell$ days ago is less than the average count weighted by the delay distribution, 
the lagged ratio is typically above the well-specified convolutional ratio. 
Otherwise, when counts $\ell$ days ago exceed the delay-weighted average, 
it tends to lie below.
This leads to the following heuristics, which we justify subsequently.
\begin{enumerate}
    \item \textbf{Heuristic \#1:} When primary incidence is climbing rapidly, the lagged ratio tends to overestimate severity, at least relative to the well-specified convolutional ratio.
    \item \textbf{Heuristic \#2:} As incidence falls swiftly, the lagged ratio may drop just as sharply, even if true severity has not changed.
    \item \textbf{Heuristic \#3:} Once incidence stabilizes after a decline, the lagged ratio can exhibit sudden, counterintuitive jumps.
\end{enumerate}

Under misspecification, the proposition gives an
additive decomposition \eqref{eq:MispBias} of the convolutional ratio bias,
based on the well-specified bias \smash{$\bias(\hat{p}_t^\pi)$} (as studied in
\cref{prop:OracleBias}), and a misspecification factor
\smash{$A_t^\gamma$}. At the outset, we note that if $\pi = \gamma$ (no  
misspecification), we have \smash{$A_t^\gamma = 1$}, and \eqref{eq:MispBias}
reduces to the well-specified bias. Generally, values of \smash{$A_t^\gamma >
  1$} amplify the well-specified bias (hereafter, the \emph{oracle bias}) and add positive misspecification bias \smash{$p_t
  (A_t^\gamma-1) > 0$}; meanwhile, values of \smash{$A_t^\gamma < 1$} shrink the
oracle bias and add negative misspecification bias \smash{$p_t (A_t^\gamma-1) <
  0$}. 

Whether or not misspecification contributes a larger magnitude of bias overall
hence depends on whether or not the sign of the misspecification term \smash{$p_t 
  (A_t^\gamma-1)$} agrees with the sign of the oracle bias
\smash{$\bias(\hat{p}_t^\pi)$}. This need not always be the case, though in our
experience, it is often true in both real and simulated experiments, as we will
see in \cref{sec:results}. Here, to gain more insight, we study the
behavior of the bias in three settings: 
\begin{itemize} 
\item Smooth $\gamma$ with a lighter tail and smaller mean than $\pi$ (more mass
  concentrated at recent time points).    
\item Smooth $\gamma$ with a heavier tail and larger mean than $\pi$ (less mass
  concentrated at recent time points).    
\item Nonsmooth $\gamma$, with a point mass at lag $\ell$; we reiterate that in 
  this case the convolutional ratio reduces to the lagged ratio
  \smash{$\hat{p}_t^\ell$} in \eqref{eq:lagged}. We also note that its
  misspecification factor \smash{$A_t^\gamma$} reduces to a quantity we
  similarly denote \smash{$A_t^\ell = \sum_{j=0}^d x_{t-j}\pi_j \,\big/\,
    x_{t-\ell}$}, and its bias \eqref{eq:MispBias} reduces to: 
  \begin{equation}
  \label{eq:LagBias}
  \bias(\hat{p}_t^\ell) = \frac{\sum_{j=0}^d x_{t-j}\pi_j}{x_{t-\ell}}
  \bias(\hat{p}_t^\pi) + p_t \Bigg(\frac{\sum_{k=0}^d x_{t-k}\pi_k}{x_{t-\ell}}
  - 1 \Bigg).  
\end{equation}
\end{itemize}

We discuss the behavior of the bias in these three settings, as a function of
the primary incidence curve (which drives bias through the misspecification
factor \smash{$A_t^\gamma$}). \cref{fig:misspecified} provides an
accompanying illustration.

\paragraph{Primary incidence rising.} 

Consider the case where primary events are rising---first slowly, then rapidly
before leveling off. A lighter-tailed $\gamma$ will place more weight
on recent time points, with higher counts, than $\pi$; thus \smash{$\sum_{j=0}^d  
  x_{t-j}\gamma_j > \sum_{j=0}^d x_{t-j}\pi_j$}, so \smash{$A_t^\gamma <
  1$}. The opposite occurs for a heavier-tailed $\gamma$: this
places more weight on distant low-count time points and less on the ongoing
surge, so \smash{$A_t^\gamma > 1$}. A point mass distribution $\gamma$ places
all of its mass on $x_{t-\ell}$, and during the steepest phase of the rise, this
is considerably less than $x_t$. The true delay $\pi$ distributes mass across
the  $d$ most recent time points, and a large fraction of its mass will be
convolved against the last $\ell-1$ time points, whose counts exceed
$x_{t-\ell}$. The times before $t-\ell$ have less of an offsetting effect, 
because incidence has risen at a  growing rate. Hence, during a surge we will 
see \smash{$x_{t-\ell} < \sum_{j=0}^d x_{t-j}\pi_j$}, and \smash{$A_t^\ell >
  1$}. However, the behavior of \smash{$A_t^\ell$} will be generally more
erratic than \smash{$A_t^\gamma$} for a smooth distribution $\gamma$, as the
denominator in the former is less smooth as $t$ varies. 

\cref{fig:misspecified} visualizes this as hopitalizations (primary
events) rise between December 2021 and mid-January 2022. Throughout this period
\smash{$A_t^\gamma$} is below/above 1 for the light-tailed/heavy-tailed
$\gamma$. Meanwhile, \smash{$A_t^\ell$} spikes to 1.25 in early January.  
Correspondly, the lagged HFR rises to 20\% when the true one drops to 15\%.    

\begin{figure}[p]
\centering
\includegraphics[width=0.9\linewidth]{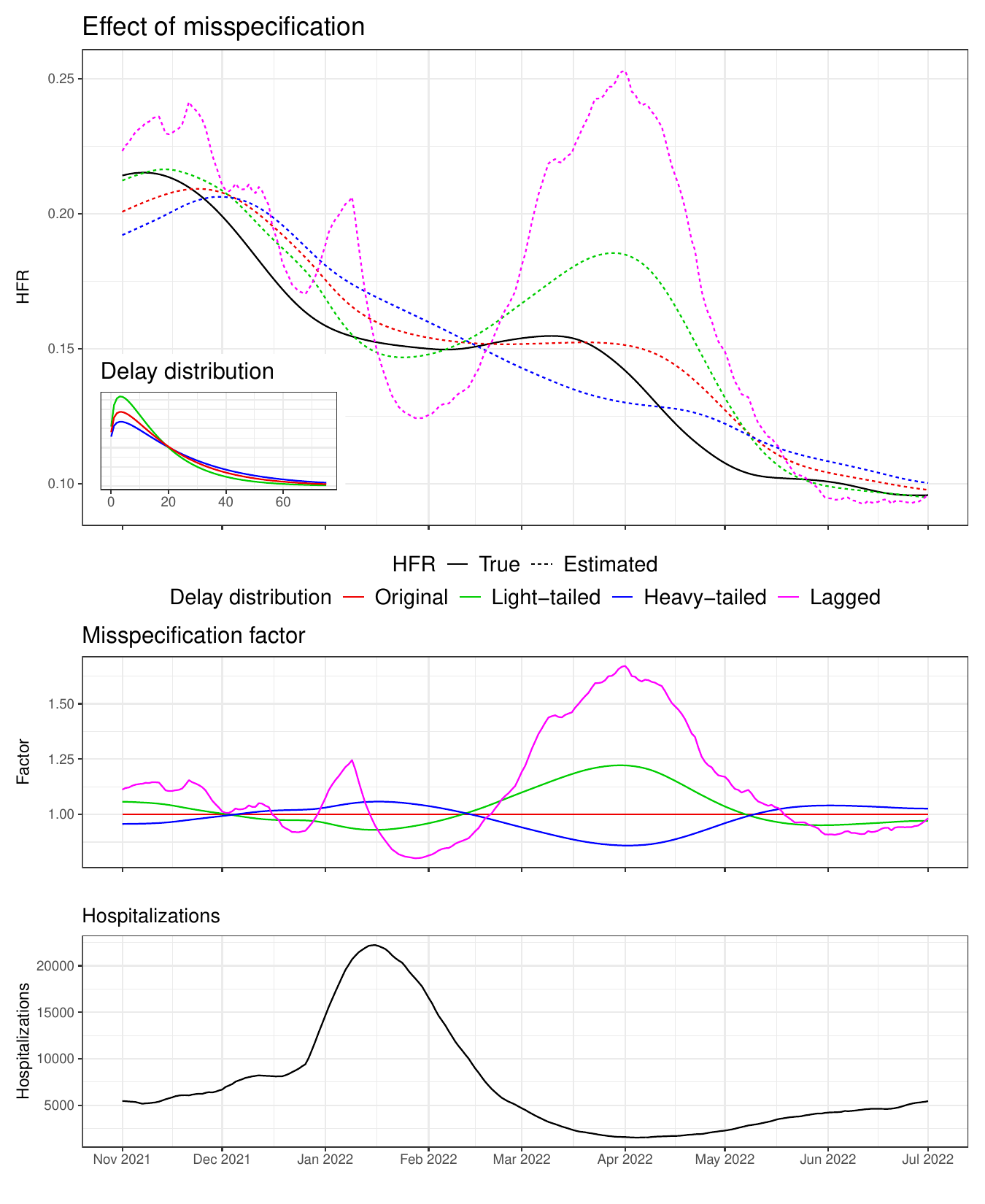}
\caption[Examples of convolutional ratio estimates under misspecification of the
  delay distribution.]{Examples of convolutional ratio estimates under misspecification of the
  delay distribution.
  As in \cref{fig:wellspecified}, the primary events are COVID-19
  hospitalizations, as reported to the HHS, and secondary events are deaths 
  simulated noiselessly from \eqref{eq:model}. The underlying HFR curve $p_t$
  and delay distribution $\pi$ used in the simulation were fit using external
  data sources detailed shortly in \cref{sec:setup}. The lagged ratio
  estimator used $\ell=16$, chosen to maximize cross-correlation between
  hospitalizations and deaths.} 
\label{fig:misspecified}
\end{figure}

\paragraph{Primary incidence falling.}

Next assume primary incidence reaches a maximum and begins to fall. The smooth
distributions behave much the same as when incidence was rising. The light-tailed
distribution has more mass around the peak than $\pi$, so \smash{$A_t^\gamma < 
1$}. Conversely, \smash{$A_t^\gamma > 1$} for the heavier-tail distribution
because it convolves more mass before the top of the rise. The lagged bias
changes its behavior in this period; while \smash{$A_t^\ell$} had exceeded 1
before the peak, it quickly plunges below 1. At exactly $\ell$ time points after  
the peak, the lagged estimator attains the smallest possible value of
\smash{$A_t^\ell$}, as $x_{t-\ell}$ maximizes its denominator. Again, the
lagged ratio is likely to have larger fluctuations of \smash{$A_t^\ell$},
since its denominator reaches extremes that are not witnessed in
\smash{$A_t^\gamma$} (the convolution in the denominator of 
\smash{$A_t^\gamma$} acts as a smoother).   

In \cref{fig:misspecified}, we can see \smash{$A_t^\ell$} drop below 0.8
near the start of February 2022; this happens precisely $\ell=16$ days after
daily new hospitalizations peak above 20,000 in mid-January. The lagged 
ratio falls from 20\% to 12.5\% accordingly, with the true HFR remains roughly
constant, hovering around 15\%. In the same period, the convolutional ratios
(with light- or heavy-tailed $\gamma$) stay quite close to the true HFR.   

\paragraph{Primary incidence levels out from a fall.}

The most jarring instance of misspecification bias occurs as primary incidence
levels out. The true delay distribution $\pi$ has a heavier tail than the
low-mean, light-tailed distribution $\gamma$. It also has a heavier tail than
the point mass distribution, which has no tail at all. This has important
implications as the peak of the surge fades into the past. Compared to $\pi$,
the light-tailed $\gamma$ and point mass distribution convolve little to no mass
with the high-count period of the wave. As a result, both \smash{$A_t^\gamma$}  
and \smash{$A_t^\ell$} rise above 1, and severity rate estimates spike. The
magnitude of this spike depends how quickly primary incidence is changing. 

\cref{fig:misspecified} displays this false spike. Around the start of April 2022,
we see \smash{$A_t^\gamma$} (for light-tailed $\gamma$) and $A_t^\ell$ reach 
maximums near 1.25 and 1.68, respectively. Their corresponding HFR estimates reach
18\% and 25\% while the true HFR has fallen below 14\%. This is of course highly 
problematic as it signals a rise in severity at a very counterintuitive time,
when hospitalizations are at their lowest. The heavy-tailed delay $\gamma$ has
the opposite trend and underestimates the true HFR at this time, but by a
smaller amount. 

\section{Empirical Setup}
\label{sec:setup}

Here we describe the data and general experimental setup used in
\cref{fig:wellspecified,fig:misspecified}, and in
\cref{sec:results}.    

\paragraph{Hospitalization-fatality rate.}

Our experiments primarily analyze the HFR throughout the COVID-19 pandemic. While HFR 
may be less common as an object of study compared to CFR, it has a few
advantages. 
Firstly, we were able to find a good ``ground truth proxy'' for the national HFR 
during the COVID-19 pandemic, as published by the National Hospital Care Survey
(NHCS). This helps guide our simulations and also serves as validation data for
us, as we describe in more detail below. 

A second advantage is that hospitalization reporting was much more complete
than case reporting throughout the pandemic. Hospitals were mandated to report
new daily COVID-19 admissions to the Department of Health and Human Services
(HHS) \citep{HHS2023}. Due to changes in case ascertainment over time (cases as
a fraction of infections), it is harder to interpret the CFR in a time-varying
fashion, i.e., harder to understand what precisely this is reflecting over the
course of the pandemic.  

Thirdly, hospitalization counts published by the HHS are
aligned by admission date. This makes it more meaningful to interpret the HFR as
a reflection of severity, especially in a time-varying fashion. In comparison,
case counts as aggregated by John Hopkins University (JHU) \citep{JHU} (the
central resource for comprehensive COVID-19 case data in the US) are aligned by
report date. Extreme reporting delays (sometimes cases were reported 45 days
after infections, see, e.g., \citealp{Jahja2022}) make the CFR less meaningful
to study as a time-varying quantity, even outside of ascertainment issues.  



To assess the generality of our results, \cref{apx:CFR} conducts the simulated analyses using CFR in place of HFR. 
In addition to \cref{fig:sims} (forthcoming in the results section), Figures \ref{fig:wellspecified} and \ref{fig:misspecified} are reproduced using case data. 
In all instances, the resulting trends mirror those for HFR, confirming that the bias mechanisms identified here apply broadly across severity metrics. 

\paragraph{Aggregate data streams.}

To estimate the real-time HFR, we use aggregate counts of daily COVID-19 
hospitalizations and deaths as made available in the Epidata API
\citep{Epidata}, developed by the Delphi Group. Like HHS for hospitalizations,
the JHU Center for Systems Science and Engineering (CSSE) provided the
definitive resource for real-time death counts during the pandemic. These counts
reflect times at which deaths were reported to health authorities, not
necessarily when they actually happened. Hence raw JHU death counts are highly
volatile due to reporting idiosyncrasies like day-of-week effects and data
dumps. Hospitalizations are also subject to strong day-of-week effects. We
thus smooth all data with a 7-day trailing average, for both hospitalizations
and deaths.   

Our real-time estimates of HFR actually use data that was available two days
after the date in question. This was done to account for a typical two-day
latency in the most recent data available. In this sense, one can actually view
our real-time estimates as a two-day backcast of the HFR. In the rare event that
counts were still unavailable at a two-day lag, we imputed their values with the
most recently observed data (this is a common scheme, called
last-observation-carried-forward or LOCF).        

\paragraph{Hyperparameters.}

The ratio estimators of the HFR require choices of the lag $\ell$ and delay
distribution $\gamma$. The experiments in \cref{sec:results} use a lag of
$\ell=20$ days, which roughly maximizes the cross-correlation between
hospitalizations and deaths over the entire pandemic. For $\gamma$, we use a
discrete gamma distribution, and set its support length to be $d=75$ days, a
conservative choice. For its mean, we use 20 again; this agrees nicely with a UK
analysis that finds a median hospitalization-to-death time of 11 days
\citep{UKdelay},\footnote{Of course, conditions in the UK may be quite different
  from the \US. However, we rely on the UK study because it provides the most
  comprehensive information on COVID-19 hospitalization-to-death delay 
  distributions.}   
and a CDC analysis that finds 63\% of COVID-19 deaths are reported in 10 days   
\citep{cdc_deaths_demographic_geographic_2023}. We set the standard deviation to 
18, because the delay distributions fit by the UK study had standard deviations
that were roughly 90\% of their means. 

\cref{apx:robustness} evaluates
the robustness of findings against different hyperparameter values. 
We show our findings do not change across a wide range of lags (\cref{fig:lag}) and delay distributions (\cref{fig:delays}).

\paragraph{Validation data.}

Due to a lack of ground truth, the bias cannot be rigorously evaluated on real data. 
However, there are sound ways to better approximate the true HFRs from external sources.
One way is to use estimates from the National Hospital Care Survey (NHCS)
\citep{NHCS2023}, which records weekly HFR based on a representative subset 
of 601 hospitals across the \US. These estimates end up being consistently
biased downwards because they are only based on deaths which occur in the
hospital. A CDC analysis \citep{ahmad2023covid} found that roughly 60\% of
COVID-19 deaths occurred in hospitals in 2022, down from nearly 70\% in 2021 and
2022. To account for non-inpatient deaths, we divide the NHCS estimates by these
percentages. Lastly, we smooth the resulting HFR estimates with a spline, via
the \texttt{smooth.spline} function in R (which chooses the smoothness
hyperparameter by minimizing generalized cross-validation error). This results 
in our proxy for the ground truth HFR curve $p_t$.  

This external estimate is a useful benchmark to judge the fidelity of our HFR
estimates. Of course, it is not perfect, and is derived from a relatively small
subset of hospitals. Results in \cref{sec:results} suggest it may be too
high in late 2022. 
\cref{apx:alt_gt} discusses alternative
approximations to the ground truth HFR. 
These alternatives, shown in \cref{fig:approxGT}, all exhibit similar qualitative behavior. 

\section{Empirical Results}
\label{sec:results}

We study the performance of the ratio estimators in greater depth, on real and 
simulated data. Throughout, we continue to use COVID-19 hospitalizations
reported to the HHS as the primary incidence curve. 
Code to reproduce our experimental results is available at \texttt{https://github.com/jeremy-goldwasser/Severity-Bias}.


\subsection{COVID-19 data}
\label{sec:results_real}

\cref{fig:basic_est_vs_gt_figs} displays real-time HFR estimates from
November 2020 to December 2022, which spans the major COVID-19 waves. These
estimates were fit to the data described in \cref{sec:setup}. The
real-time hospitalization and death counts exhibit a fair degree of instability,
and recall, these were preprocessed with a 7-day trailing average. Even then,
the convolutional \eqref{eq:conv} and lagged ratio \eqref{eq:lagged} estmators
each had wild spikes, so we further smoothed the estimates from both methods
with a 7-day trailing average. 
\cref{apx:robustness} studies the
sensitivity of the results to the choice of smoothing window in postprocessing.
\cref{fig:window} shows window size has very little qualitative effect on he main shape of the HFR estimates from either method. 

\begin{figure}[t!]
\centering
\begin{subfigure}[b]{\linewidth} 
  \centering
  \includegraphics[width=\linewidth]{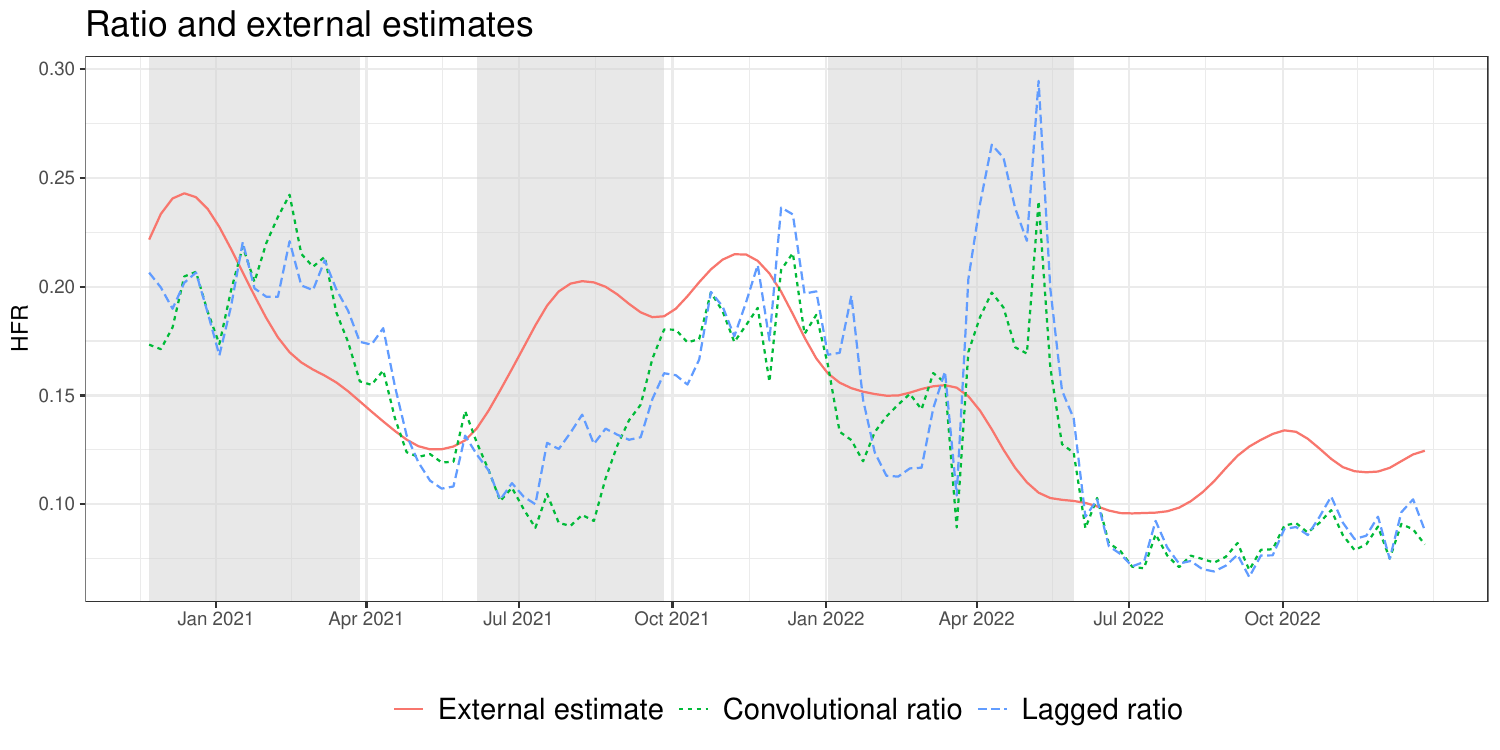}
  \caption{Convolutional and lagged ratios against external estimate of
    HFR. Waves corresponding to the original, Delta, and Omicron variants are highlighted in gray. } 
\end{subfigure}

\bigskip
\begin{subfigure}[b]{\linewidth}
  \centering
  \includegraphics[width=\linewidth]{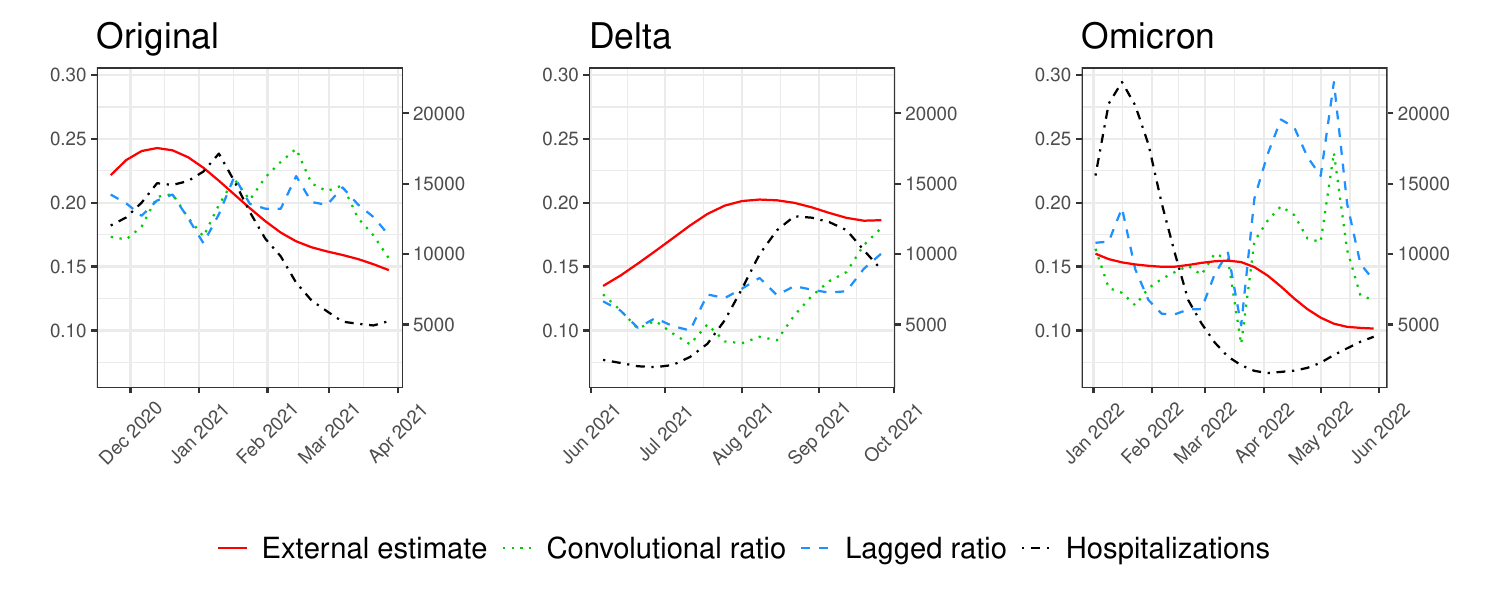}
  \caption{Zooming in to focus on major variants, with hospitalizations overlaid
    (right y-axis).}
\end{subfigure}

\caption[Comparing ratio and external (NHCS-based) estimates of HFR.]{Comparing ratio and external (NHCS-based) estimates of HFR. Computed using real-time
  COVID-19 counts in the \US, from November 2020 through December 2022.}
\label{fig:basic_est_vs_gt_figs}
\end{figure}

Overall, both ratio estimators perform poorly---their bias is consistent and 
nontrivial, especially for the lagged estimator. Both respond very slowly to
changes in the HFR. As the HFR declines following the wave in winter 2021, both   
ratios hover near 20\% for several months. More troublingly, they are too slow
to detect the rise in HFR in the early Delta period (summer 2021). If the
purpose of these estimators is to inform stakeholders of increased risks in real
time, they failed during the Delta surge.  

The most significant bias comes in the middle of the Omicron wave in spring
2022. In this period, the HFR remains around 15\% until April, then sharply
declines to 9\% two months later. The ratio estimates first fluctuate around the
true HFR, and then subsequently, both estimates surge as the true HFR nears
its nadir, with the lagged ratio approaching 30\%. This dramatic upswing signals
a serious false alarm.  The analysis in Sections \ref{sec:wellspecified} and
\ref{sec:misspecified} explain each of these failure cases. 

\paragraph{Well-specified analysis.}

We start by analyzing the convolutional ratio with respect to the well-specified
bias expression in \cref{prop:OracleBias}. While this expression
assumes that the true delay distribution is known, we found that different
choices of delay distribution generally yield similar bias; see \cref{fig:delays} in \cref{apx:robustness}.
This indicates that our convolutional ratio may 
not be far from the oracle (well-specified) ratio. 

\cref{prop:OracleBias} indicates that the bias moves in the opposite
direction of the true severity rate. This occurs during the Delta wave, when the
HFR rises well before the ratio estimates do. On the other hand, falling HFR
produces positive bias, as observed in the original and Omicron waves.  

The enormity of the bias during Omicron can partially be attributed to the
precipitous decline in hospitalizations, as falling primary incidence has been
shown to exacerbate the bias. Average daily hospitalizations declined from over
20,000 in mid-January to only 1,500 by April 1. Lastly, the delay distribution
is relatively heavy-tailed, because the aggregate deaths here (from JHU) are
aligned by report date. 
We find that this has a substantial impact on the bias, 
as analyzed in \cref{apx:NCHS_deaths}.
\cref{fig:jhu_vs_nchs} shows a large gap in bias between ratios computed on NCHS versus JHU deaths.

\paragraph{Misspecified analysis.}

The misspecification analysis explains central discrepancies between the
convolutional and lagged ratios. \cref{sec:misspecified} discusses why we 
expect \smash{$A_t^\ell < 1$} around the start of a decline in primary
incidence (Heuristic \#2). As a result, the lagged ratio will incur negative misspecification
bias, so it takes lower values than the convolutional ratio. 
We observe this when hospitalizations with the original variant
decline from their peak in mid-January 2021. Throughout February 2021, lagged
estimates are about 2\% below the positively biased convolutional ratios.   

As primary incidence rises, we expect \smash{$A_t^\ell > 1$}, contributing
positive bias relative to the well-specified
ratio (Heuristic \#1). Correspondingly, when hospitalizations surge due to the Delta variant in 
August 2021, the lagged ratio is less negatively biased. Lastly, recall we
expect \smash{$A_t^\ell > 1$} \emph{after} a fall in primary incidence (Heuristic \#3). This
accounts for the lagged ratio having higher bias in April 2022, when
hospitalizations level out from the Omicron surge. There, the lagged HFR (and 
presumably also the misspecification factor \smash{$A_t^\ell$}) hits its maximum
value.   

\paragraph{}

We performed several robustness checks to assess the stability of these
findings. 
\cref{apx:robustness} compares the ratio HFRs to multiple external estimates of the ground truth (\cref{fig:approxGT}). 
It also studies the effect of using finalized count data (\cref{fig:rt_and_final}) and different hyperparameter choices (Figures \ref{fig:window}-\ref{fig:delays}).
Finally, it examines the ratio estimators' behavior on six \US\ states (\cref{fig:state-level}). 
By and large, the ratio estimators yield roughly
the same type of bias throughout.

\subsection{Simulated data}
\label{sec:results_sim}

We further evaluate the ratio estimators in a variety of simulation settings. 
Keeping the primary incidence $x_t$ as hospitalizations reported to the HHS, we simulate deaths based on the convolutional model \eqref{eq:model} without noise. That is, we generate
\[
y_t = \E[y_t\given x_{\leq t}] = \sum_{k=0}^d x_{t-k} \Pprob(\text{die at $t$} \given \text{hospitalized at
  $t-k$}) = \sum_{k=0}^d x_{t-k} \pi_k p_{t-k},
\]

The simulations explore three different underlying HFR curves and two delay distributions. The delay distribution $\pi$ is a discrete gamma with
standard deviation 90\% of its mean. We consider means of 12 and 24 to
compare short and long distributions.
For the HFRs $p_t$, we first use the external estimates given by NHCS
In addition, we mimic the opposite trend by inverting and rescaling this curve. 
We do so with the following formula, preserving the minimum and maximum HFRs:
\[
p^\text{Inv} = \frac{1}{p}\cdot\frac{\text{min}(p)}{\text{max}(p)}
\]

The third HFR setting is a stationary $p=10\%$ over all time.
This case elucidates the quantity \smash{$A_t^\ell$} that drives the lagged ratio's bias. This value is \smash{$A_t^\ell = \hat{p}_t^\ell/p$}, the lagged ratio itself scaled by a constant. 
To see this, recall that the oracle convolutional ratio is unbiased under stationarity, so \cref{prop:MispBias} simplifies to
\smash{$\E[\hat{p}_t^\ell \given x_{\leq t}] = p A_t^\ell$} for the
lagged ratio. Furthermore, \smash{$\E[\hat{p}_t^\ell \given x_{\leq t}]  
  = \hat{p}_t^\ell$} in this noiseless simulation, so \smash{$A_t^\ell = \hat{p}_t^\ell / p$}. 
$A_t^\gamma$ also is $\hat{p}_t^\gamma / p$ for the well-specified convolutional ratio, since the estimator is unbiased and $A_t^\pi=1$ by definition. 

\begin{figure}[t!]
\centering
\includegraphics[width=\linewidth]{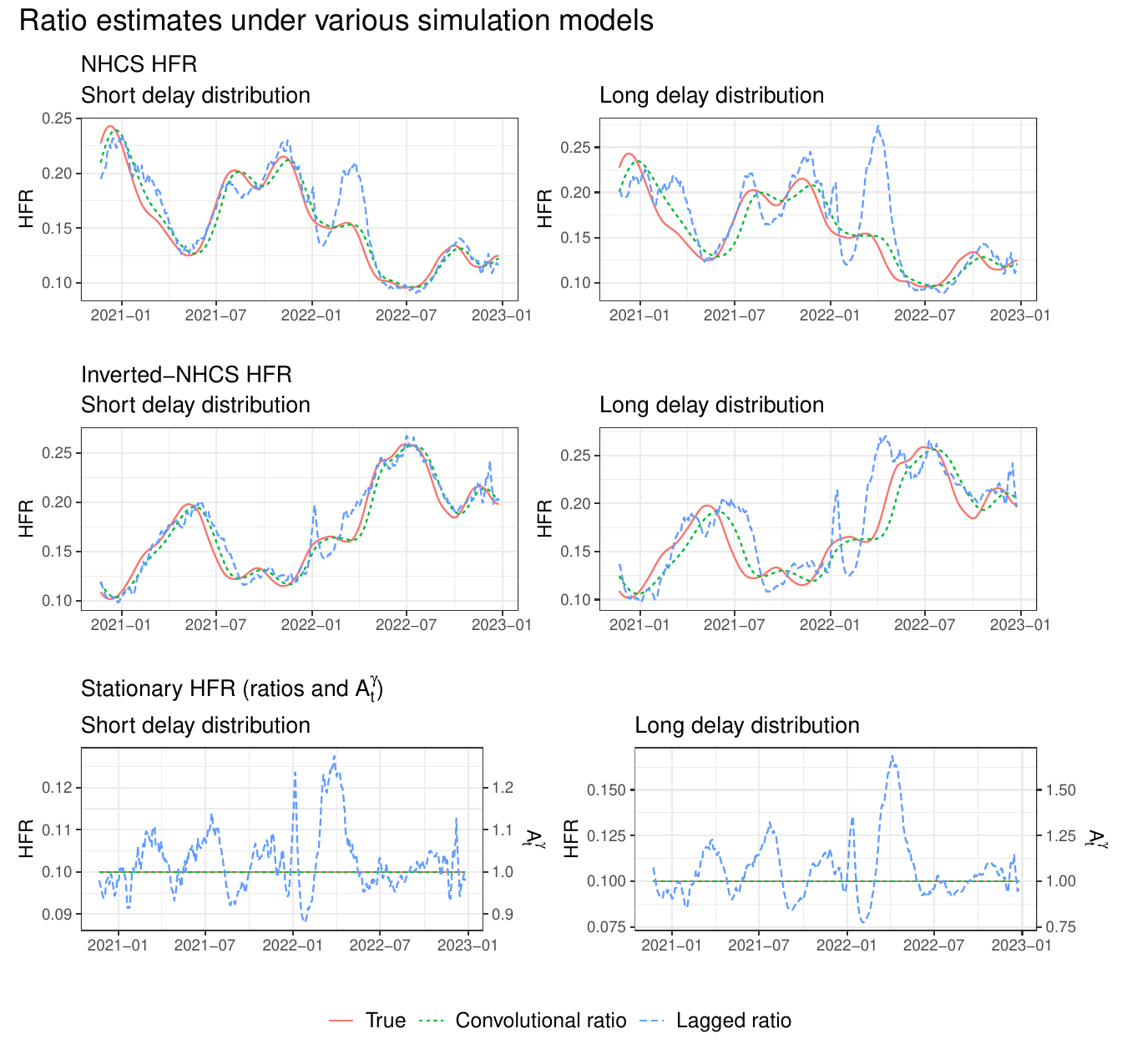}
\caption[Convolutional and lagged ratios on simulated data with various HFR curves and delay distributions.]{Convolutional and lagged ratios over various simulation settings, with
  three different underlying HFR curves (rows) and two delays (columns). The last row, for stationary HFR, also provides $A_t^\gamma = \hat{p}_t^\gamma/p$.}
\label{fig:sims}
\end{figure}

To estimate the severity rates, the convolutional ratio is well-specified with $\gamma=\pi$.
For the lagged ratio, we choose the lag $\ell$ by maximizing the
cross-correlation between hospitalizations and deaths. 
In both cases, we do not smooth the data in any capacity. (The same is true for Figures \ref{fig:wellspecified} and \ref{fig:misspecified}.)
 
\cref{fig:sims} shows the results across the six settings in total. As
expected, HFR estimates are significantly more biased when the underlying delay
distribution is longer. This bias is most pronounced with the lagged ratio. For example, even when the true HFR is a constant 10\%, the lagged ratio estimates
hit 13\% under the light-tailed delay, and 15\% under the heavy-tailed one. In
the NCHS HFR setting, it spikes (as we have seen before) as hospitalizations
level out in spring 2022, reaching over 20\% and 25\% under the light- and 
heavy-tailed delays, respectively. Note the convolutional ratio does not share
these dramatic oscillations. By and large, it tracks the general shape of the
true curve, albeit at a delay.
Its overall favorable performance rests on the fact that we have chosen in this
simulation to provide it with the benefit of the true delay distribution (no
misspecification).   

The results in \cref{sec:misspecified} explain the wide
gap in performance between these estimators. 
As anticipated, the lagged ratio is higher than the oracle convolutional
ratio when \smash{$A_t^\ell > 1$}, and lower when \smash{$A_t^\ell <
  1$}. Comparing \smash{$A_t^\ell$} to the estimated HFR curves, the bias moves
very similarly. 
For example, during the Delta and Omicron waves, rapid rises in
hospitalizations produced high values of \smash{$A_t^\ell$}. This accounts for
the spikes in August 2021 and January 2022 (Heuristic \#1). 
Heuristic \#2 dictates the lagged estimator should have lower bias than the well-specified ratio as primary events
fall. We observe this in Delta (September 2021) and Omicron (Februrary 2022).  
Lastly, when hospitalizations level out from
the Omicron surge, \smash{$A_t^\ell$} spikes to 1.2 and 1.5 for the short
and long distributions. This explains the positive bias in spring 2022, per Heuristic \#3.

The bias \eqref{eq:MispBias} rescales the oracle bias and adds a
misspecification term. Studying \cref{fig:sims}, we observe the
misspecification term tends to dominate when \smash{$A_t^\ell$} strays away from 
1. To understand this, consider periods in which the oracle bias is negative. As
introduced in \cref{sec:misspecified}, the oracle and misspecification
terms are at odds with each other when this is the case. Invariably, the lagged
ratio moves in the direction of the misspecification term
\smash{$p_t(A_t^\ell-1)$}. In the NHCS HFR setting, for example, the lagged 
estimates spike with $A_t^\ell$ in August 2021. In the inverted setting, the 
lagged bias tracks the down-up-down motion of \smash{$A_t^\ell$} during the
first five months of 2021. That the misspecification term wins out in these
conflicting settings indicates it comprises a disproportionate amount of the
bias. Indeed, the oracle bias is low enough that multiplicative rescaling must 
not have a large effect.  

\cref{apx:CFR} repeats these analyses for CFR, using case data in lieu of hospitalizations (Figures \ref{fig:wellspecified_cfr}-\ref{fig:sims_cfr}). It shows the same qualitative behavior, with the well-specified convolutional ratio remaining decently calibrated and the lagged ratio displaying inaccurate, erratic swings. This indicates that the underlying bias mechanisms generalize directly to CFR and other severity rates.

\section{Discussion}

Our analyses and experiments illustrate that practitioners should take caution
when using standard ratio estimators for time-varying severity rates. They
exhibit nontrivial bias as severity rates change, particularly the popular
lagged ratio. If a major purpose of such estimators is to inform stakeholders of
changing risks in real time, then such bias indicates they may fail to do so in
a reliable manner.
While our main analyses focus on HFR, we additionally replicate the simulated experiments for CFR, confirming that ratio estimates of other severity rates share the same bias patterns.

Analyzing the bias, as we have done in \cref{prop:OracleBias} and 
\cref{prop:MispBias}, allows us to form real-time heuristics about what to
expect in practice. For example, based solely on the primary incidence curve, we
generally expect the lagged ratio to make the following errors: 
\begin{itemize}
\item unreasonably high severity estimates when primary incidence is rising
  quickly (Heuristic \#1); 
\item rapid declines when primary incidence is falling quickly (Heuristic \#2); 
\item unexpected surges when primary incidence has leveled out after falling (Heuristic \#3).  
\end{itemize}
Practitioners may be able to adjust their reactions accordingly. For example, if 
the lagged CFR spikes shortly after hospitalizations have declined and reach a
stable low point, then a savvy epidemiologist can temper their alarm with the
knowledge it may well be spurious.  

While the lagged ratio seems ubiquitous in practice, the convolutional ratio
(when  a reasonable estimate of the delay distribution can be formed) can be
better behaved and should probably be favored. While it is still subject to
bias, this tends to be of a smaller magnitude. 

Going beyond, there is still room to improve upon the backward-looking
convolutional ratio. 
A promising forward-looking severity estimator was proposed by \citet{fusedlasso}. 
This method obtains all historical severity rates $p_t$ at once, by minimizing  
\[
\sum_t (y_t - \sum_{j=0}^d x_{t-j}\gamma_j p_{t-j})^2 + 
\lambda \sum_t |p_t - p_{t-1}|,
\]
\noindent where $\lambda \geq 0$ is a parameter that controls the level of
regularization. The regularizer above is a total variation penalty, also called the fused lasso; it produces a piecewise constant fit of the severity rates.
Unlike the convolutional ratio, this method models the relationship between events without assuming severity rates are locally stationary. 
Therefore, it may be a less biased alternative.
Since the severity rates are defined implicitly via optimization, their bias is analytically intractable and must be assessed empirically.

Unfortunately, while this method was introduced as a real-time tool, it struggles with instability at the most recent timesteps.
Improving its capabilities for real-time estimation, 
and extending it to fit smoother severity rates using
trend filtering penalties (see, e.g., \citealp{Tibshirani2014}), are interesting
directions for future work. \citet{Jahja2022} applied trend filtering to a
similar deconvolution problem, reconstructing latent infections from case
reports. Their insights on tail regularization may be useful to stabilize
real-time severity estimates.

Severity rates may be biased in ways beyond the statistical bias our work
focuses on. In \cref{sec:setup}, we mentioned that HFR estimation from
line-lists can be subject to ``survivorship bias'': the failure to account for 
deaths occurring outside the hospital \citep{lipsitch2015potential}.
Under-reporting is another central challenge, particularly for CFR. Not all
infections are reported, reporting rates change across time, and severe cases
are more likely to be reported than mild cases \citep{Tsang2021}. \citet{reich2012estimating}
proposed an estimator for a time-invariant \emph{relative} CFR---the ratio of
CFRs between groups---which learns latent reporting rates via the EM
algorithm. \citet{anastasios} applied this in the context of COVID-19. Their
work also identifies other sources of bias, like differences in case definition
and testing eligibility.  
Finally, examining how hyperparameter tuning strategies affect the bias poses an opportunity for future work.

\chapter{Adaptive Deconvolution with Trend Filtering}
\label{ch:paper2}



Public health authorities use severity rates to track the deadliness of a disease over the course of an epidemic. Typically, metrics like the case-fatality are estimated by dividing the number of new deaths by the number of recently reported infections. However, recent work reveals that this simple calculation can be seriously misleading, which poses large problems for decision-making during epidemics. When severity rates are underestimated, officials can miss genuine rises in risk and postpone critical interventions, and thus more people in harm’s way. Conversely, when these metrics are overestimated, authorities may enact needless restrictions and stoke undue public fear, wasting resources and eroding trust.

To address this, we propose methods to estimate time-varying severity rates, both in retrospect and real time. Our methods maximize a novel characterization of the likelihood, deconvolving the time series of severity rates that relate primary and secondary events (e.g. cases and deaths). The likelihood is regularized by a trend filtering penalty, which fits piecewise polynomials with adaptively selected knots in order to balance smoothness with strong local adaptivity. 

On a range of experiments, our methods consistently outperform the two leading benchmarks, reducing mean absolute error by roughly 15\% and 55\%. Moreover, they exhibit stronger qualitative behavior, with both smoother fits and less bias. These results support our class of estimators as a promising new approach to track public health risks as they unfold.

\section{Introduction}

\cref{ch:paper1} studied standard real-time methods for estimating severity rates, 
\begin{equation}\label{eq:severity2}
p_t = \Pprob(\text{secondary event will occur} \given \text{primary event at $t$}).
\end{equation}
We demonstrated
that these ratio-based methods can be subject to large, predictable statistical biases. 
In this chapter, we propose new methodology that seeks to overcome these challenges.

Severity rates can be estimated
\emph{retrospectively}, where data collected after time $t$ is used to estimate  
$p_t$. Retrospective analyses can be valuable, as they can provide insights into 
epidemic dynamics amidst changing conditions. Often of greater interest,
however, is estimating severity rates in \emph{real time}, which means only data
available up until $t$ may be used to estimate $p_t$. This is far more
challenging, and expanding \eqref{eq:severity2} helps explain why: we have 
\smash{$p_t = \sum_{k=0}^{\infty} \Pprob(\text{secondary event occurs at $t+k$}
  \given \text{primary event at $t$})$}, which depends on epidemic events (of 
potentially changing likelihood) in the future, yet real-time estimators for
$p_t$ may only use data through $t$, and none afterwards.

We introduce a new approach, tailored for both retrospective and real-time estimation. 
Our methods are centered around a generative model for secondary events (e.g. deaths),
that depends on the primary event counts (e.g. hospitalizations), delay distribution, and severity rates.
This Poisson binomial distribution is a likelihood model for the 
time series of severity rates. 
We estimate these latent variables via deconvolution,
maximizing an approximate likelihood for computational tractability.
The likelihood is regularized by a trend filtering penalty, which fits piecewise polynomials with adaptively selected knots in order to balance smoothness with strong local adaptivity. 
We add further regularization for the real-time case, lifting techniques from \citet{Jahja2022}.
Evaluating on real and semi-synthetic COVID-19 data, we find our method
consistently outperforms the ratio-based methods,
including under misspecification.

The rest of this article is structured as follows. After previewing and
motivating our methodology with a real data application comparing HFR estimates
in \cref{sec:real-data}, we develop in \cref{sec:stat-model} the
Poisson-binomial model, and corresponding approximations used for
deconvolution. In \cref{sec:estimators}, we introduce our approaches for
estimating severity rates, first in the retrospective setting and then in
real-time. In \cref{sec:setup2}, we describe our experimental setup, and
in \cref{sec:results2}, we analyze its results.
\cref{sec:discussion} concludes with a discussion.

\subsection*{Notation}

Notation for $x_t$, $y_t$, $p_t$, and $\pi_k^{(t)}$ is shared with \cref{ch:paper1} (see \cref{tab:notation-ch1}). Additional symbols introduced in this chapter are listed in \cref{tab:notation-ch2}.

\begin{table}[!ht]
\centering
\caption{Additional deconvolution notation (Chapter~3).}
\label{tab:notation-ch2}
\renewcommand{\arraystretch}{1.15}
\begin{tabularx}{\linewidth}{@{}l >{\raggedright\arraybackslash}X@{}}
\toprule
\textbf{Symbol} & \textbf{Meaning} \\
\midrule
\multicolumn{2}{@{}l}{\textsc{Distributional quantities}} \\
\addlinespace[2pt]
$\mu_t$ & Conditional mean $\E[y_t \given x_{\leq t}]$ \\
$\sigma_t^2$ & Conditional variance $\Var(y_t \given x_{\leq t})$ \\
\midrule
\multicolumn{2}{@{}l}{\textsc{Estimation}} \\
\addlinespace[2pt]
$D^{(m)}$ & Finite difference operator of order $m$ \\
$\lambda$ & Trend filtering regularization parameter \\
$\gamma$ & Tail-smoothness hyperparameter (real-time setting) \\
\bottomrule
\end{tabularx}
\end{table}

\subsection{Motivating application: time-varying HFR in COVID-19}
\label{sec:real-data}  

In this subsection, we compare our deconvolution approach to the existing
ratio-based methods in estimating the COVID-19 hospitalization-fatality rate (HFR) 
over the pandemic in the state of Pennsylvania. The HFR is an example of
a severity rate \eqref{eq:severity2} of key interest in public health, where primary
events are hopsitalizations and secondary events are deaths. Instead of
estimating a single number for the entire COVID-19 pandemic, we estimate the HFR
as a function of time---allowing the HFR to change in response to changing
underlying conditions, such as the introduction of new therapeutics, or the 
emergence of new variants.      

We consider two different settings in which we estimate the HFR curve: the
retrospective setting, where we use data that was only available in hindsight to
estimate the HFR at each time $t$ as best as possible; and the real-time
setting, where we limit ourselves to data available at each time $t$ in order to
estimate the HFR at $t$. Our proposed method, which performs regularized
deconvolution in a likelihood model, is described in \cref{sec:retro} for in the retrospective setting,
and \cref{sec:real-time} for real-time.
Existing ratio-based methods are described in
\cref{sec:methods2}.

The data used in our experiments in this subsection are from standard public
health reporting pipelines. For primary events, we use daily COVID-19 hospital
admissions, as coordinated by the National Healthcare Safety Network (NHSN). For
secondary events, we use different datasets for the retrospective and real-time
cases. In retrospect, we use daily COVID-19 death counts,\footnote{This data is 
  only available at the weekly level, hence we imputed daily deaths by sampling
  from a multinomial distribution with the weekly NCHS death total as the number
  of trials and uniform probabilities (all equal to $1/7$).}  
as collected by the National Center for Health Statistics (NCHS). In real-time,
we use daily COVID-19 deaths, as assimilated by Johns Hopkins University
(JHU). NCHS deaths were only available weeks after each event in question, while
JHU published provisional death counts in real-time. These data sources are
described in more detail in \cref{sec:data-generation}. 

The estimators compared here all depend on a delay distribution, representing
the (stochastic) transition from hospitalizations to deaths. We take this to be
a discretized gamma distribution (or a point mass, for the lagged ratio method),
whose mean is chosen separately in the retrospective and real-time cases by  
maximizing cross-correlation between the relevant observables, in the same
manner as described later in \cref{sec:data-generation}. 
All methods have hyperparameters, which we tune via cross-validation and the
one standard error (1se) rule, as described in \cref{sec:design}. 
We favor the 1se rule because the min rule qualitatively tends to undersmooth, and even with this smoothness-favoring tuning, the ratio-based methods still exhibit the instability highlighted below.
For the
deconvolution method, we choose quadratic order ($m=2$) regularization in the
retrospective case, because this yields a smooth HFR curve, and constant order 
($m=0$) in the real-time case, because this displays the most stable performance 
across our experimental suite to come later.  

\begin{figure}[t]
\centering
\includegraphics[width=0.95\textwidth]{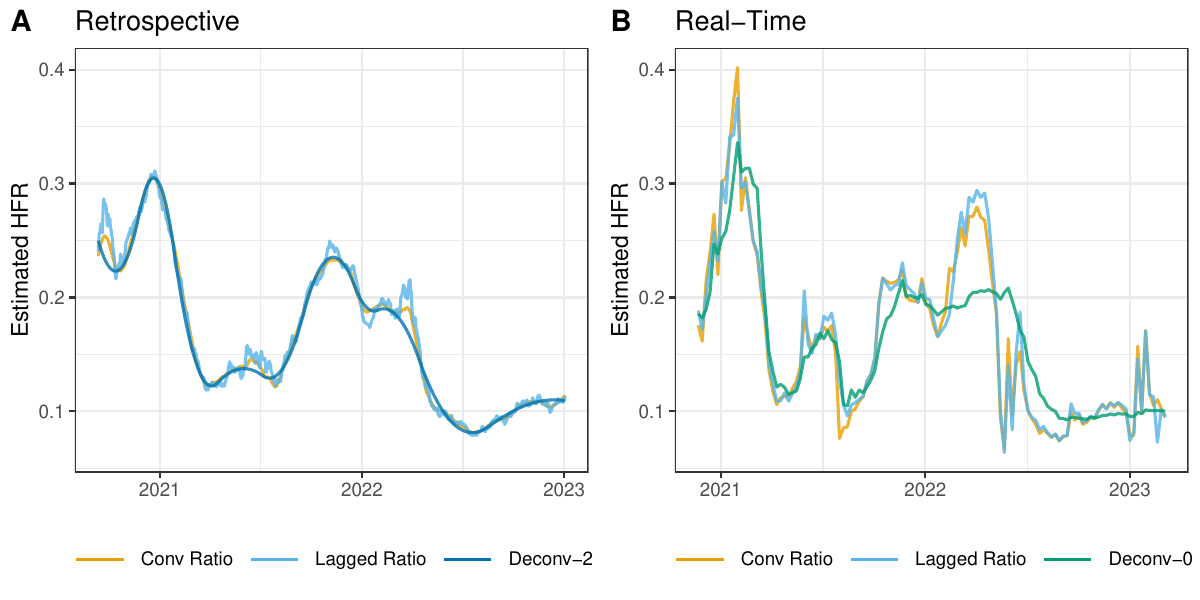}
\caption{HFR estimates on real COVID-19 data in Pennsylvania.}
\label{fig:real-data}
\end{figure}

\cref{fig:real-data} displays the retrospective and real-time severity
rate estimates. The high-level summary---across both settings (retrospective and 
real-time) and all methods---is as follows. We see HFRs peaking at around 30\%
during the surge around the beginning of 2021, then falling below 15\% as the
surge ended. Conditions remained relatively steady with the Alpha variant in
spring 2021, then climbed rapidly with the Delta variant later that year. They
dipped to around 20\% as the Omicron variant swept through the state in winter
2022. As the wave passed, severity rates fell sharply, bottoming out around
10\%. 

Comparing estimates to each another, the deconvolution estimates demonstrate
notable stability relative to the ratio-based estimates. This is true in both
retrospective and real-time cases, but especially pronounced in the real-time
case, displayed in panel B. In this case, the ratio estimates spike to about
40\% at the peak of the winter wave in early 2021, while the deconvolution
estimate stays near 30\% (consistent with all estimates in the retrospective
case). Moreover, the ratio estimates suspiciously \emph{increase} as the Omicron
wave passes in April 2022, whereas the deconvolution estimate holds steady over
the same period, before smoothly declining (again broadly consistent with the
retrospective estimates).

There is no ground truth HFR in these real data experiments against which we can
measure our estimates, to ultimately determine which is more accurate. However,
the suspicious behavior of the ratio-based methods in \cref{fig:real-data}
is consistent with the prior analysis in \cref{ch:paper1}.
For example, that work reveals how the real-time estimators are upwardly biased during periods of sharp decline in the primary counts, including early 2022. 
Furthermore, in Section \ref{sec:setup2} and \ref{sec:results2} of 
this paper, we carry out semi-synthetic experiments which start from real
primary events (COVID-19 hospitalizations), and then simulate secondary events 
(COVID-19 deaths) from custom models which are tailored to match both the trend
and noise levels in observed secondary events (NCHS and JHU deaths in the
retrospective and real-time settings, respectively). 
This allows us to recreate the qualitative differences between the estimators seen in \cref{fig:real-data} and quantify the degree to which the proposed deconvolution approach improves over ratio-based alternatives.

\section{Statistical model}\label{sec:stat-model}

In this section, we present a model which relates secondary to primary event
counts, via severity rates, and describe an approximate likelihood suitable for
inference. Terminology is shared with \cref{ch:paper1}.

\subsection{Exact likelihood}\label{sec:exact-lik}

At a time point $t$, let $x_t$ and $y_t$ be the aggregate numbers of
primary and secondary events, respectively. In the context of CFR, for example,
$x_t$ would be the number of new cases at time $t$, and $y_t$ the number of new 
deaths. Throughout we consider discrete integer-valued time points such as 
$t=1,2,3,\dots$, but allow negative time values, for notational simplicity (for  
example, when indexing the set of time points that precede a given time $t$, as 
we do below).   

In general, the number of secondary events at time $y_t$ can be expressed as a
sum of indicator functions, where each indicator represents whether a given
primary in the past event resulted in a secondary event at $t$. Using cases and
deaths as primary and secondary events, for concreteness, observe that conditional on the primary counts $x = \{x_s\}$,
\begin{equation}
\label{eq:secondary_incidence}
y_t \given x = \sum_{k=0}^{\infty} \sum_{i=1}^{x_{t-k}} \mathds{1}\{\text{case $i$
  at time $t-k$ dies at $t$}\}.
\end{equation}
The secondary incidence time series can be understood probabilistically by
noting that, conditional on $x$, the indicator functions are Bernoulli random variables. Applying the
definition of conditional probability,
\begin{align}\label{eq:expectation}
&\E[\mathds{1}\{\text{case $i$ at $t-k$ dies at $t$}\} \given x] \\ 
&\quad = \Pprob(\text{dies at $t$} \given \text{case at $t-k$})\nonumber \\
&\quad = \Pprob(\text{death occurs $\cap$ dies at $t$} 
  \given \text{case at $t-k$})\nonumber \\ 
&\quad = \Pprob(\text{death occurs} \given \text{case at $t-k$}) \cdot
 \Pprob(\text{dies at $t$} \given \text{case at $t-k$, death occurs}).\nonumber
\end{align}
The first term of the final line is the severity rate $p_{t-k}$, from
\eqref{eq:severity2}. Meanwhile, we define the second term to be the $k\nth$
element of the \emph{delay distribution} at time $t-k$. Formally, define  
\begin{equation}
\pi_k^{(t)} = \Pprob(\text{dies at $t+k$} \given \text{case at $t$, death occurs}) 
\end{equation}
as the $k\nth$ element of the delay distribution at time $t$, where
\smash{$\sum_{k=0}^\infty\pi_k^{(t)} = 1$}. 
Thus by definition the expectation \eqref{eq:expectation} is equal to $p_{t-k} \pi_k^{(t-k)}$, so we may write   
\begin{equation}
\label{eq:indicator_bernoulli}
\mathds{1}\{\text{case $i$ occurring at time $t-k$ dies at $t$}\} \sim
\mathrm{Bernoulli}(p_{t-k} \pi_k^{(t-k)}).
\end{equation}
It is reasonable to assume that these Bernoulli random variables are
independent across $i$. Each represents a different patient, whose outcomes should not affect
one another. 
An exception could arise if severity rates are calculated within a very small, resource-constrained group---for example, if one patient's death frees up hospital resources for others, marginally improving their prognosis \citep{nursing}.
In general, however, severity rates are
typically assessed over a broad population, at the resolution of counties or 
states or even countries---not single hospitals or nursing homes. Henceforth, we make the
assumption that individual patient outcomes are independent. 
This assumption of independent Bernoulli outcomes is shared by all other works on severity rates we compare to \citep{UKpaper,germany,ghani,timevar_ifr}.

Under this assumption, note that \smash{$y_t \given x_{\leq t}$} follows a
\emph{Poisson-binomial} distribution, where we abbreviate \smash{$x_{\leq t} =
  (x_1, \ldots, x_t)$}. The Poisson-binomial generalizes the binomial
distribution, in that it represents a sum of independent Bernoulli random variables that do not necessarily share the same success probability. 
In the context of severity rates, each Bernoulli trial is whether an individual patient has a secondary event time $t$, as expressed in \eqref{eq:secondary_incidence}. 
Note by \eqref{eq:indicator_bernoulli} that all individuals with primary events on the same dates have the same ``success'' probability; this is $\pi_{k}^{(t-k)}p_{t-k}$ for all $x_{t-k}$ trials at time $t-k$.
Thus we may concisely write  
\begin{equation}
\label{eq:pb}
y_t \given x_{\leq t} \sim \mathrm{PoissonBinomial} \Big( 
\underbrace{\pi_0^{(t)}p_{t},\dots,\pi_0^{(t)}p_{t}}_{x_{t}}, \, 
\underbrace{\pi_1^{(t-1)}p_{t-1},\dots,\pi_1^{(t-1)}p_{t-1}}_{x_{t-1}}, \,
\dots \Big),
\end{equation}
where the underbraces indicate the number of times each success probability is
repeated. A useful fact to record will be the mean and variance of this
distribution:
\begin{align}
\label{eq:pb_mean}
\mu_t &= \E[y_t \given x_{\leq t}] = \sum_{k=0}^\infty x_{t-k} \pi_k^{(t-k)}    
p_{t-k}, \\
\label{eq:pb_var}
\sigma_t^2 &= \Var(y_t\given x_{\leq t}) = \sum_{k=0}^\infty
  x_{t-k}\pi_k^{(t-k)} p_{t-k} (1-\pi_k^{(t-k)} p_{t-k}).  
\end{align}
which follow by linearity of expectation and independence.

\subsection{Approximate likelihood}\label{sec:approx}

\paragraph{Approximations to the Poisson-binomial.}

The Poisson-binomial likelihood is generally intractable for large counts. 
As with the binomial distribution, for the probability mass function at 
a realized count $y_t$, it is necessary to consider all \smash{$\sum_{k=0}^\infty x_{t-k}$}
choose $y_t$ combinations of patients that may yield this realized number of
secondary events. For the Poisson-binomial, however, these combinations do not
share the same probability of occurring, since the Bernoulli variates in
\eqref{eq:pb} have different success probabilities. As a result, evaluating the
probability mass function requires enumerating all products of $y_t$ Bernoulli 
probabilities. This combinatorial explosion is computationally prohibitive
unless the total number of primary events is very small, which is not the case
in our application, where we are concerned with COVID-19 cases or
hospitalizations for \US\ states, or across the whole nation.

Fortunately, different approximations exist for the Poisson-binomial
distribution. A natural choice is the Poisson distribution with rate $\mu_t$
from \eqref{eq:pb_mean}. This has nonnegative, albeit unbounded, support. It has
variance \smash{$\sum_{k=0}^\infty x_{t-k} \pi_k^{(t-k)} p_{t-k}$}, which is
larger than $\sigma_t^2$ in \eqref{eq:pb_var}, but this difference matters
little in practice, at least in our applications of interest. For example, using
numbers from the national experiments we run in \cref{sec:setup2}, the
Poisson variance is only bigger by about 1.28\%. More formally,
\citet{barbour1984poisson} showed that the Poisson approximation to the
Poisson-binomial incurs small error when the Bernoulli random variables have low
success rates. This is generally true for severity rates in epidemics, where the
probabilities \smash{$\pi_k^{(t-k)} p_{t-k}$} are small. By the results of
\citet{barbour1984poisson}, the total variation distance between the
Poisson-binomial in \eqref{eq:pb} and its corresponding Poisson approximation
(with rate $\mu_t$) is upper bounded by
\begin{equation}
\sum_{k=0}^{\infty}(\pi_k^{(t-k)}p_{t-k})^2.
\end{equation}
Again employing data from our \US\ simulation, the above bound is only 
$1.69 \times 10^{-5}$. The negligible TV distance justifies approximating the distribution of $y_t \mid x_{\leq t}$ as Poisson.

Alternatively, \citet{pb_theory} studies a normal approximation to the
Poisson-binomial, with mean $\mu_t$ and variance $\sigma_t^2$. Their empirical
and theoretical results show the normal approximation more faithfully captures
the upper tail of the cumulative density function than the Poisson. However, it 
struggles with the lower tail when the mean is low, since it includes negative
support.  

\paragraph{Weak dependence at successive time points.}

Conditional on past counts of primary events, we have shown that secondary event
counts follow a Poisson-binomial distribution. Strictly speaking, these
secondary event counts need not be conditionally independent at successive time
points, since they are defined over a common group of individuals. For example,
if case $i$ from $t-k$ dies at time $t$, then this same case cannot die at time
$t+1$. Fortunately, the next result demonstrates that the counts from successive
time steps are weakly dependent, under a simplifying assumption of equal
variance.  

\begin{proposition}\label{prop:corr}
Assume \smash{$\Var(y_t \given x_{\leq t}) = \Var(y_{t+1}\given x_{\leq
  t+1})$}. Under the Poisson-binomial model described in
\eqref{eq:secondary_incidence}, \eqref{eq:indicator_bernoulli}, \eqref{eq:pb},
it holds that
\begin{equation}
\Cor(y_t, y_{t+1} \given x_{\leq t+1}) \in \bigg[\frac{-\max_{k \geq 0}
  \pi_k^{(t-k)} p_{t-k}}{1-\max_{k \geq 0} \pi_k^{(t-k)} p_{t-k}}, \, 0 \bigg].    
\end{equation}
\end{proposition}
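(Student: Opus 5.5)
Write $y_t$ and $y_{t+1}$ as sums of indicators, as in \eqref{eq:secondary_incidence}. For each individual $i$ whose primary event is at time $s \le t$, let $A_i = \mathds{1}\{\text{$i$ has secondary event at } t\}$ and $B_i = \mathds{1}\{\text{$i$ has secondary event at } t+1\}$. Individuals with primary events at $t+1$ contribute only to $y_{t+1}$, and they are independent of everything else. So, conditional on $x_{\le t+1}$,
\[
\Cov(y_t, y_{t+1}) = \sum_{i,j} \Cov(A_i, B_j).
\]
By the independence assumption across individuals, every cross term with $i \ne j$ vanishes, which leaves $\sum_i \Cov(A_i, B_i)$.

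A given individual has at most one secondary event, so $A_i B_i = 0$ and $\Cov(A_i,B_i) = -\E[A_i]\E[B_i]$. For an individual with primary event at $t-k$, \eqref{eq:indicator_bernoulli} gives $\E[A_i] = q_k := \pi_k^{(t-k)}p_{t-k}$ and $\E[B_i] = r_k := \pi_{k+1}^{(t-k)}p_{t-k}$. Hence
\[
\Cov(y_t,y_{t+1}\given x_{\le t+1}) = -\sum_{k\ge0} x_{t-k}\, q_k r_k \le 0,
\]
and this gives the upper endpoint $0$.

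For the lower endpoint, set $M = \max_{k\ge 0} q_k$. Using \eqref{eq:pb_var}, $\sigma_t^2 = \sum_k x_{t-k} q_k(1-q_k) \ge (1-M)\sum_k x_{t-k} q_k$, since $1 - q_k \ge 1-M$ for every $k$. I would bound the covariance as $\sum_k x_{t-k} q_k r_k \le \max_k r_k \cdot \sum_k x_{t-k} q_k$. Combined with the assumption $\sigma_t^2 = \sigma_{t+1}^2$, this gives
\[
\Cor = \frac{\Cov}{\sigma_t^2} \ge -\frac{\max_k r_k}{1-M}.
\]

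The main obstacle is that the statement has $M = \max_k q_k$ in the numerator, whereas the natural factor here is $\max_k r_k$, which involves the shifted delay index and is the success probability for time $t+1$. There are two ways to close this gap. The first is to use the symmetric bound $\sum_k x_{t-k} q_k r_k \le M \sum_k x_{t-k} r_k$, together with $\sigma_{t+1}^2 \ge (1-\max r)\sum_k x_{t-k} r_k$ and the equal-variance assumption. This yields a bound in terms of $M/(1-\max r)$, so the two maxima would still need to be reconciled. The second, which I would try first, is to interpret the maximum in the statement as ranging over all the Bernoulli success probabilities that enter either $y_t$ or $y_{t+1}$. In that case both $r_k$ and $q_k$ are at most $M$, and the bound $-M/(1-M)$ follows from either route.

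Failing that, I would use the geometric mean. Write $\Cov \le \sum_k x_{t-k}\sqrt{q_k r_k}\,\sqrt{q_k r_k}$, then apply Cauchy--Schwarz and the equal-variance assumption so that both variances enter symmetrically. I would then check whether the bound can be stated purely in terms of $\max q_k$, possibly after noting that the $k=-1$ terms (individuals with primary events at $t+1$) only increase $\sigma_{t+1}^2$.
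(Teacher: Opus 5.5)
Your covariance computation is correct. It reaches the paper's identity $\Cov(y_t,y_{t+1}\given x_{\le t+1})=-\sum_{k\ge0}x_{t-k}\pi_k^{(t-k)}\pi_{k+1}^{(t-k)}p_{t-k}^2$ more directly than the paper, which expands $\E[y_ty_{t+1}\given x_{\le t+1}]$ into blocks of cross terms. The upper endpoint $0$ then follows.

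The lower-endpoint obstacle you flag is not a weakness of your method. Read literally, with $M=\max_{k\ge0}\pi_k^{(t-k)}p_{t-k}$, the statement is false, so none of your fallbacks (the symmetric bound, Cauchy--Schwarz, the $k=-1$ terms) can close it. For a counterexample, take $d=2$, stationary $\pi=(0.1,0.5,0.4)$, $p_t=0.5$, $p_{t-1}=0.1$, $p_{t-2}=0.125$, and $x_{t+1}=x_{t-1}=0$, $x_t=19$, $x_{t-2}=56$.
\begin{itemize}
\item Every $q_k$ with $k\le 2$ equals $0.05$, so $M=0.05$ and the claimed endpoint is $-M/(1-M)=-1/19$.
\item Both conditional variances equal $75(0.05)(0.95)=19(0.25)(0.75)=3.5625$, so the hypothesis holds.
\item Yet $\Cov=-19(0.05)(0.25)$, so $\Cor=-1/15<-1/19$.
\end{itemize}
The culprit is exactly your $r_0=\pi_1^{(t)}p_t=0.25$, which exceeds every $q_k$.

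The paper's proof follows the same route as yours: the covariance formula, then $\sigma_t^2\ge(1-M)\sum_k x_{t-k}q_k$. It passes over this point silently, pulling $\pi_{k+1}^{(t-k)}p_{t-k}$ out of the numerator as $\max_{k\ge0}\pi_k^{(t-k)}p_{t-k}$, which is unjustified.

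So commit to your second route rather than the fallbacks. Let $R$ be the largest success probability entering $y_{t+1}$. Your two bounds together prove $\Cor\ge-\min\{R/(1-M),\,M/(1-R)\}$. This gives the displayed $-M/(1-M)$ once $M$ is taken as the maximum over all Bernoulli probabilities entering $y_t$ or $y_{t+1}$, and under that reading your proof is the paper's.

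The example above attains $-M/(1-R)=-1/15$ exactly, so $R$ cannot be dropped from the bound. In the paper's regime, where all these probabilities are tiny, $M/(1-R)\approx M/(1-M)$, so its numerical conclusion of roughly $-0.018$ is essentially unaffected.
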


\cref{apx:corr} contains the proof of this result. Given reasonably
long-tailed delay distributions with low severity rates, each
\smash{$\pi_k^{(t-k)} p_{t-k}$} will be quite small. Plugging in numbers
corresponding to the national data considered in \cref{sec:setup2}, the 
lower bound on the correlations from \cref{prop:corr} is roughly
-0.018. Computing the actual correlations from our simulated data, the lowest
observed value is -0.012. As these numbers are so low, it seems reasonable to
approximate the distribution of \smash{$y_t \given x_{\leq t}$} as being
independent over different time points $t$, for the sake of estimation.   

While our likelihood characterization for severity rates is novel, several works in compartmental models and reproduction numbers also assume aggregate counts on different timesteps are independent \citep{cori2013new, rtestim, cauchemez2008likelihood, wallinga_teunis}. 


\paragraph{Joint distribution and MLE.}

We have established that \smash{$y_t \given
x_{\leq t}$} may be approximated by Poisson or Gaussian distributions, which are nearly conditionally independent 
as $t$ varies. Next, we characterize the joint distribution over $t$, leaving regularization to the next
section. Let $p$ denote the vector of severity rates, which has coordinate $p_t$
at time $t$.  By approximating the law of \smash{$y_t \given x_{\leq t}$} as
Poisson with rate $\mu_t = \mu_t(p)$ as in \eqref{eq:pb_mean}, and assuming 
independence of these conditional distributions over $t$, we arrive at the
following maximum likelihood problem: 

\begin{align}
&\maximize_{0 \preceq p \preceq 1} \; \prod_t \frac{\mu_t(p)^{y_t}e^{-\mu_t(p)}}  
  {y_t!} \\
\label{eq:mle-pois}
\iff &\minimize_{0 \preceq p \preceq 1} \; \sum_t \bigg[ \bigg(\sum_{k=0}^\infty     
  x_{t-k}\pi_k^{(t-k)}p_{t-k}\bigg) - y_t\log\bigg(\sum_{k=0}^\infty
  x_{t-k}\pi_k^{(t-k)}p_{t-k}\bigg)\bigg],  
\end{align}
where the notation $0 \preceq z \preceq 1$ denotes elementwise constraints.  
This optimization problem falls in the class of \emph{Poisson linear
  inverse} problems, well-studied in statistical physics and image processing 
\citep{richardson1972bayesian, lucy1974iterative, dupe2011linear,
  rond2016poisson}. For a Poisson linear inverse problem to be computationally
tractable, the linear operator being applied to the optimization variable must
have nonnegative entries. That is indeed the case here, as each of the terms 
\smash{$x_{t-k} \pi_k^{(t-k)}$} are nonnegative.      


We note that the identity-link parameterization used in \eqref{eq:mle-pois} contrasts with log-link Poisson regression, which is a generalized linear model that may be more familiar to many statisticians. In Poisson regression, the log of the rate is a linear function of the features; in this context, one would learn parameters $\beta$ such that $\log\mu_t = \sum_{k=0}^\infty x_{t-k}\pi_k^{(t-k)} \beta_{t-k}$. 
However, solving for $\beta$ is not useful for our purposes of estimating severity rates, as
$p$ cannot be recovered from a given $\beta$ in 
\begin{equation}
\log \bigg( \sum_{k=0}^\infty x_{t-k}\pi_k^{(t-k)}p_{t-k} \bigg) =
\sum_{k=0}^\infty x_{t-k}\pi_k^{(t-k)}\beta_{t-k}.
\end{equation}
By contrast, the identity link makes $\mu_t$ linear in $p$, so the likelihood \eqref{eq:mle-pois} can be optimized over $p$ directly. This problem is underspecified, but adding sufficient regularization renders the problem tractable.

In addition to the Poisson model described above, we also studied and 
implemented a severity rate estimator using the normal approximation. While both    
estimators performed roughly similarly well in our experiments, the Poisson
model was simpler (note that the severity rates $p$ appear in the variance in 
\eqref{eq:pb_var}, which complicates the Gaussian approximation), and had
marginally higher accuracy. For that reason, we only provide a high-level
account of the Gaussian methods in the main text, delegating further discussion
to \cref{apx:gauss}.

\paragraph{Estimating the delay distribution.}
In most applications, the oracle delay distribution is unknown. In lieu of
\smash{$\pi^{(t)}$}, a plug-in estimate \smash{$\hat\pi^{(t)}$} can be used
instead, which typically has a finite support, placing all of its mass on
$[0,d]$, where $d < \infty$. In other words, this assumes that no secondary
events occur after $d$ time steps (and turns all infinite sums in
\eqref{eq:mle-pois} from $k=0$ to $\infty$ into finite sums from $k=0$ to 
$d$). In any case, whether or not the support of the delay distribution is
finite, there will always be more severity rates than secondary event times, and
so the maximum likehood problem \eqref{eq:mle-pois} is underspecified. The  
estimators we propose in \cref{sec:estimators} will thus use
regularization.  

Estimating the delay distributions \smash{$\pi^{(t)}$} is an entire line of work
in and of itself. Many approaches use line lists which contain the times of
primary and secondary events, other approaches may be based on parametric
approximations
whose parameters are based on small observational studies or even chosen based
on epidemiological literature. 
Popular choices of parametrizations include discretized gamma, Weibull, and log-normal distributions.
For a thorough account of existing methods, we
refer the reader to \citet{delay_distrs}.

\section{Severity rate estimators}\label{sec:estimators}

We now present our proposed severity rate estimators for both the retrospective
and real-time settings. We focus on the Poisson likelihood, with
\cref{apx:gauss} covering the Gaussian case. Some form of regularization is
necessary to compute likelihood-based severity rates, due to the fact that the
maximum likelihood problem (either Poisson or Gaussian) is underspecified. While
many forms of regularization would suffice to make the problem well-specified,
we use a form which aligns with the smoothness considerations underlying
severity rates, discussed in detail below. We then present our retrospective and
real-time estimators. 

For our methods to work well, the approximate likelihood outlined in \cref{sec:approx} must match the true data-generating process reasonably well. 
To review, this model assumes that individuals have independent outcomes, the Poisson (or Gaussian) approximates the Poisson-binomial accurately, $y_{t}$ and $y_{t+1}$ are conditionally independent, and the delay distribution is estimated well. 
In addition, the following estimators assume the actual severity rates are smooth. The nature of the smoothness assumption depends on the order of trend filtering regularization, introduced next.

\subsection{Trend filtering regularization}\label{sec:tf}

Trend filtering is a nonparametric estimation technique, which models the mean
function of univariate data as a piecewise polynomial where the order (e.g.,
linear or cubic) is user-specified \citep{og_tf}. While this is reminiscent of
the smoothing spline, trend filtering can adjust more responsively to the local 
level of smoothness in the underlying signal \citep{Tibshirani2014} by
adaptively selecting the locations---also called knots---at which the fitted
polynomial changes. Local adaptivity is important for estimating severity rates,
as these may remain constant for long stretches of time when epidemic conditions
do not change substantially, before changing rapidly in response to new
conditions (e.g., the arrival of a new variant or a change in vaccine coverage).
That said, smoothing splines may perform comparably in practice when severity varies gradually; we find this to be the case for $R_t$ in \cref{ch4}, where spline- and trend-filtering-based estimates are quite similar.
The main cost of trend filtering relative to splines is the difficulty of uncertainty quantification, which we discuss in \cref{sec:discussion} and revisit in the Conclusion.     

For an integer order $m\geq 0$, trend filtering in its original form (for
nonparametric regression) models the mean vector $\theta \in \R^n$ over $n$ data
points as a piecewise polynomial, of degree $m$. It does so by minimizing a loss
term while penalizing the differences of $\theta$ of order $m+1$. These
differences can be compactly represented by multiplying $\theta$ by a difference
operator $D^{(m+1)}$. This can be viewed as the discrete analog of a derivative 
operator, and is defined recursively as
\begin{equation}
D^{(m+1)} = D^{(1)}\,D^{(m)}\in\mathbb{R}^{(n-m-1)\times n}
\end{equation}
where $D^{(1)}$ is the first difference operator
\begin{equation}
D^{(1)} =
\begin{bmatrix}
-1 & 1 & 0 & \cdots & 0 & 0\\
0 & -1 & 1 & \cdots & 0 & 0\\
\vdots & & \ddots & \ddots & & \vdots\\
0 & 0 & 0 & \cdots & -1 & 1
\end{bmatrix}
\,\in\,\R^{(n-m-1)\times (n-m)}.
\end{equation}
Trend filtering applies a penalty based on the $\ell_1$ norm, $\lambda
\|D^{(m+1)} \theta\|_1$, for a tuning parameter $\lambda \geq 0$. To give
examples, for the first three orders $m=0,1,2$, the trend filtering penalties
are
\[
\|D^{(1)} \theta\|_1 = \sum_{i=1}^{n-1} |\theta_{i+1} - \theta_i|, \;\;
\|D^{(2)} \theta\|_1 = \sum_{i=1}^{n-2} |\theta_{i+2} - 2\theta_{i+1} +
\theta_i|, \;\; 
\|D^{(3)} \theta\|_1 = \sum_{i=1}^{n-3} |\theta_{i+3} - 3\theta_{i+2} +
3\theta_{i+1} - \theta_i|.
\]
Since the $\ell_1$ norm induces sparsity, trend filtering solutions
\smash{$\hat\theta$} have the property that many elements of the differenced
vector \smash{$D^{(m+1)}\hat\theta$} will be exactly zero \citep{genlasso}, with
generally more zeros for larger $\lambda$ values. The nonzero elements
correspond to knots, which are adaptively chosen based on the data. Between the
knots, the solution traces out a degree $m$ polynomial; different segments of 
\smash{$\hat\theta$} may possess more or less smoothness, depending on how close
the knots are to each other. We refer to \citet{Tibshirani2014} for more
details, and to \citet{tibshirani2022divided} for more discussion of the broader
context and relation to splines.

The special case of $m=0$ produces a piecewise constant fit and is known as the
fused lasso \citep{og_fused_lasso} or total variation denoising
\citep{rudin1992nonlinear}, and has a longer history of study. Recently, 
\citet{fusedlasso} proposed to use a fused lasso penalty to estimate severity
rates in a regression framework with squared loss. The methods detailed below
can be understood as a generalization of their work, encompassing higher-order
trend filtering penalties, a Poisson loss which stems from approximating the
Poisson-binomial likelihood underlying secondary event generation 
(\cref{sec:stat-model}), and added tail regularization to tame real-time 
estimation (\cref{sec:real-time}, below). That said, the squared loss
used in \citet{fusedlasso} is close to the Gaussian approximation we describe in 
\cref{apx:gauss}.

\subsection{Retrospective deconvolution}\label{sec:retro}

\paragraph{Estimator.}

Our retrospective method appends a trend filtering penalty to the Poisson
likelihood approximation derived in \cref{sec:approx}. In particular,
starting from \eqref{eq:mle-pois}, we use a plug-in estimate
\smash{$\hat\pi^{(t)}$} for the delay distribution, with finite support $[0,d]$,
and append a trend filtering penalty of order $m \geq 0$, which yields 
\begin{equation}
\label{eq:tf-pois}
\minimize_{0 \preceq p \preceq 1} \; \sum_t \bigg[\bigg( \sum_{k=0}^d   
x_{t-k}\hat\pi_k^{(t-k)}p_{t-k} \bigg) - y_t\log\bigg( \sum_{k=0}^d 
x_{t-k}\hat\pi_k^{(t-k)}p_{t-k} \bigg)\bigg] + \lambda\|D^{(m+1)}p\|_1. 
\end{equation}
We denote the vector of restrospective severity rate
estimates, obtained by solving \eqref{eq:tf-pois}, by
\smash{$\hat{p}^{\text{rs}}$}. 
Problem \eqref{eq:tf-pois} is convex, and can be readily optimized with standard software. Our implementation uses CVXR \citep{cvxr}, the R interface to the disciplined convex programming framework CVX \citep{gb08}. CVXR supports a variety of back-end solvers, of which we use the Clarabel optimizer for its efficiency on cone programs \citep{clarabel}. The box constraint $\ell \leq \theta \leq u$ is equivalent to a set of linear inequalities, making this a linear cone program---an optimization problem of the form $\min_x \; c^\top x$ subject to $Ax + b \in \mathcal{K}$, where $\mathcal{K}$ is a convex cone. We discuss the computational cost of this solver in \cref{apx:extra}. Alternatively, one could adapt the specialized alternating direction method of multipliers optimizers developed for trend filtering \citep{ramdas2016fast,Jahja2022}, but we do not pursue this in the current paper.

The sum in \eqref{eq:tf-pois} is over all secondary event times. Let $N_Y$
denote the number of such event times. Note that there are $N_Y+d$ estimated
severity rates, starting at $d$ time steps before the first secondary event
time. Not all of these estimates are equally reliable. In this model, the
severity rate $p_t$ contributes to secondary events in the $d$ time steps which
follow. Away from the left or right boundary of second event times, assuming
$N_Y$ is sufficiently large relative to $d$, these secondary events are all
observed. As a result, the estimate \smash{$\hat{p}^{\text{rs}}_t$} is likely to
be more stable, as it contributes to (and is hence informed by) many elements of
the loss. 

On the other hand, severity rate estimates will be less stable when $t$ is near
the left or right boundaries of observations. In the most extreme edge cases,
the first and last severity rates only affect a single secondary count. This
data scarcity issue persists throughout the $d$ time steps of each tail. As they
contribute little to the loss, severity rate estimates at the tails are in
general subject to greater variability. To account for this, we recommend (and
implement) a burn-in and burn-out period for retrospective estimation. That is,
while $N_Y+d$ severity rates are obtained, it is prudent to ignore the values
near the boundaries.

\paragraph{Cross-validation.}

The hyperparameter $\lambda \geq 0$ in \eqref{eq:tf-pois} controls the degree of
smoothness exhibited by \smash{$\hat{p}^{\text{rs}}$}. In an extreme case 
($\lambda \to \infty$), the solution will take the form of a global polynomial
of degree $m$. At the other extreme ($\lambda \to 0$), it will be a highly
volatile piecewise polynomial, having a knot at each possible time point. We
rely on $K$-fold cross-validation for selecting $\lambda$, defining folds in a
structured way that respects the time dependence in our retrospective estimation
problem. In each of the $K$ folds, secondary incidence $y_t$ at every
$K\nth$ time step $t$ is held out from the loss, reducing the number of training
samples by a factor of $1/K$.
We use this evenly-spaced holdout scheme rather than block holdout because, per \cref{prop:corr}, the conditional dependence between successive secondary counts is negligible.
The closely related generalized cross-validation (GCV) and leave-one-out CV are standard for nonparametric smoothers such as smoothing splines \citep{wahba1990spline, mgcv}.
For fold $j \in \{1,\dots,K\}$, denoting by
\smash{$\hat{p}^{\text{rs},j}(\lambda)$} the result of optimizing the
corresponding version of \eqref{eq:tf-pois} with samples withheld, we record 
the mean absolute error (MAE) from reconvolution over the validation set $V_j$,  
\begin{equation}
\text{MAE}_j(\lambda) = \frac{1}{|V_j|} \sum_{t \in V_j} \bigg| y_{t} -
\sum_{k=0}^d x_{t-k}\hat\pi_k^{(t-k)} \hat{p}^{\text{rs},j}_{t-k}(\lambda)
\bigg|.  
\end{equation}
The cross-validation error for the given $\lambda$ is the average across the $K$
folds:
\begin{equation}
\text{MAE}(\lambda) = \frac{1}{K} \sum_{j=1}^K \text{MAE}_j(\lambda).
\end{equation}
This process is iterated over a grid of $\lambda$ values, the largest of which
should ideally produce an estimate with no knots. We denote this value by
\smash{$\lambda_\text{max}$}, and derive its form in 
\cref{apx:lambda-max}. Observe that any \smash{$\lambda \geq \lambda_\text{max}$}
will result in an estimate which is a global polynomial. From the
cross-validation error curve, we select the value $\lambda^*$ of the tuning 
parameter with the lowest error, and then use this value to estimate severity
rates using the whole time series. This is a standard strategy, sometimes called
the ``min rule.'' An alternative that is useful when the cross-validation error
curve is flat around its minimum is the ``1se rule'' \citep{hastie2009elements},
which selects the largest $\lambda$ whose cross-validation error is within one
standard error of the minimum value.  

Since fewer training samples are used when tuning $\lambda$ in cross-validation,
we make a slight adjustment to \eqref{eq:tf-pois} in order to ensure that it
balances the loss and regularization terms on a common scale. Specifically,
whenever solving \eqref{eq:tf-pois} (either in cross-validation iterations or in
full, on the whole time series), we normalize each sum by its number of
summands. 

The order of trend filtering, $m \geq 0$, may also be tuned using the data. We
can compare cross-validation errors across all $\lambda$ and $m$, and select the
pair $(\lambda^*,m^*)$ with the lowest error. As we we typically restrict our
attention to $m \in \{0,1,2\}$---resulting in piecewise constant, linear, and 
quadatic trends, respectively---such additional tuning over $m$ does not present 
much of an additional computational burden. Alternatively, the user may want 
to handpick $m \in \{0,1,2\}$ based on qualitative considerations, to reflect
the desired shape of the severity rate curve: piecewise constant, allowing for
jump discontinuities; piecewise linear, allowing for sharp turnaround points; or
piecewise quadratic, allowing for smoother evolution.

\subsection{Real-time deconvolution}\label{sec:real-time}

\paragraph{Estimator.}

Denote by $T$ the time through which data is available and at which one seeks 
to estimate the current severity rate $p_T$. To be clear, in this real-time
estimation problem, no data which becomes available after $T$ may be used, since
it does not yet exist. The most basic adaption of the retrospective approach in
the previous subsection to the current real-time case would be to solve
\eqref{eq:tf-pois}, where the sum is restricted to $t \leq T$. However, here the
severity rate $p_T$ only contributes to a single data point $y_T$ in the loss,
and this leads to a much greater degree of variability in estimating $p_T$
compared to the retrospective case. We will thus use extra regularization to
temper such tail variability. This itself introduces a potential tradeoff,
gaining stability but reducing adaptivity.

The regularization techniques we employ are inspired by \citet{Jahja2022}, who 
studied deconvolution in the context of nowcasting infections that will
eventually appear as case reports. The additional regularization comes in two 
parts. The first part is to impose a constraint on the tail values of the
severity rate curve, which prevents overfitting to the most
recent secondary counts. In particular, we constrain the differences of order
$m+1$ of the last $m$ severity rates to be zero, enforcing these rates  
to adhere to a polynomial trend of degree $m$ (preventing a knot from occuring
near the tail). This constraint differs only lightly from that in
\citet{Jahja2022}, who used a natural spline constraint. This constrains the 
tails to be a polynomial of degree $(m-1)/2$; note that our proposal applies to
all values of $m$, whereas natural splines are only defined for odd $m$.  

The second part adds a regularization term to the criterion which is a kind
of tapered smoothing penalty, penalizing the squared differences $p_{t-1}-p_t$
in adjacent severity rates with successively decreasing weight as $t$ moves away
from $T$. Such weights are chosen based on how much mass is captured by the
delay distribution between primary and secondary events. Specifically, we assign
$(p_{t-1}-p_t)^2$ a weight \smash{$w_t = 1/\hat{F}^{(T)}(T-t)$} for $t \geq
T-d$, and $w_t = 0$ for $t < T-d$, where \smash{$\hat{F}^{(T)}(k) =
  \sum_{j=0}^k \hat\pi^{(T)}_j$} is the CDF of \smash{$\hat\pi^{(T)}$} at 
$k$. Thus, the longer the tail of the delay distribution, the more we seek to
smooth out severity rates in the recent past.    

Putting these two sources of regularization together leads to the following  
optimization, which defines our real-time severity rate estimates at $T$:   
\begin{equation}
\begin{alignedat}{2}
\label{eq:tf-pois-rt}
&\minimize_{0 \preceq p \preceq 1} && \sum_{t \leq T} \bigg[\bigg( \sum_{k=0}^d    
x_{t-k}\hat\pi_k^{(t-k)}p_{t-k} \bigg) - y_t\log\bigg( \sum_{k=0}^d 
x_{t-k}\hat\pi_k^{(t-k)}p_{t-k} \bigg)\bigg] \\
& && \qquad\qquad\qquad\qquad\qquad + \lambda\|D^{(m+1)}p\|_1 + 
\gamma \|W D^{(1)}\|_2^2 \\
&\st \; && \sum_{k=0}^{m+1} (-1)^k \binom{m+1}{k} p_{T - m - 1 + k} = 0, 
\end{alignedat}
\end{equation}
where \smash{$W = \diag(0,\dots,0,\sqrt{w_{T-d}},\dots,\sqrt{w_T})$} is a
diagonal matrix whose only nonzero elements are the final $d+1$ diagonal 
entries. Denoting by \smash{$\hat{p}^{\text{rt}}$} the vector of real-time
severity rate estimates obtained by solving \eqref{eq:tf-pois-rt}, the real-time
estimate of $p_T$ is its last element \smash{$\hat{p}^{\text{rt}}_T$}.

In practice, it is common for epidemic data streams to be revised after initial
data is released \citep{reinhart2021open}. This means that, from one time $T$ to
the next, the whole sequences of primary and secondary counts used in
real-time estimation may be updated (rather than new primary and secondary
counts simply being appended). To reflect this, it helps to introduce some
additional notation: for time points $s \geq t$, denote by \smash{$x^{(s)}_t$}
the version of primary incidence at time $t$ which was available as of time $s$,
and similarly \smash{$y^{(s)}_t$} for secondary incidence. Therefore, in
practice, we would substitute each pair $x_t,y_t$ of primary and secondary
counts in \eqref{eq:tf-pois-rt} with \smash{$x^{(T)}_t,y^{(T)}_t$}, the versions
of these counts available at $T$. 

When $m=0$, the trend filtering penalty $\lambda\|D^{(1)}p\|_1$ and the tail regularization $\gamma\|WD^{(1)}p\|_2^2$ both act on first differences, so there is some redundancy.
However, their effects are not identical: the $\ell_1$ trend filtering penalty selects knots adaptively and could in principle place a knot close to the tail, allowing a steep change in the final stretch of the curve. A large $\gamma$ specifically discourages this by encouraging flatness near the boundary, weighted by the delay distribution CDF. Thus the tail regularization provides targeted control over boundary behavior that the global trend filtering penalty does not.

\paragraph{Cross-validation.}

We tune the hyperparameters $\lambda, \gamma \geq 0$ in \eqref{eq:tf-pois-rt}
with a two-stage approach, again inspired by \citet{Jahja2022}. First, setting
$\gamma=0$, tune $\lambda$ with $K$-fold validation, exactly as in the
retrospective case described previously. Then, fixing $\lambda = \lambda^*$ at
the chosen value from cross-validation, we tune $\gamma$ with a
rolling-validation (also called forward-validation) procedure. This works as
follows: for each $s = T-M,\dots,T-1$, we solve \eqref{eq:tf-pois-rt} using 
only data up through time $s$, and use the resulting severity rate estimates
\smash{$\hat{p}^{\text{rt},s}(\gamma)$} to form a prediction
\smash{$\hat{y}_{s+1}(\gamma)$} of $y_{s+1}$ by reconvolution. The 
rolling-validation error for the given  
$\gamma$ is then 
\begin{equation}
\text{MAE}(\gamma) = \frac{1}{M} \sum_{s=T-M}^{T-1} \big| \hat{y}_{s+1}(\gamma) 
- y_{s+1} \big|. 
\end{equation}
This process is iterated over a grid of $\gamma$ values. As with
cross-validation, we can select a value $\gamma^*$ according to the ``min rule''  
(the $\gamma$ with minimum MAE), or alternatively, the ``1se rule'' (the largest 
$\gamma$ whose MAE lies within one standard error of the minimum). Finally, the 
real-time problem is solved at hyperparameter values $\lambda^*$ and $\gamma^*$
on all available data, from which we extract \smash{$\hat{p}^{\text{rt}}_T$}.
This is summarized in Algorithm \cref{alg:fv}.  
\cref{fig:gammas} in \cref{apx:extra} plots deconvolution estimates at varying levels of $\gamma$, highlighting its ability to significantly alter tail predictions. 

It is worth making two further remarks. First, as in the retrospective case,
before running either cross-validation or forward-validation, the likelihood and
regularizer terms in \eqref{eq:tf-pois-rt} are placed on a common scale by
dividing each by their number of summands. Second, the order of trend filtering
$m$ can again be tuned using the two-stage approach, based on minimizing the 
forward-validation error. It can instead be chosen based on qualitative
considerations, as discussed in the retrospective subsection.  

\renewcommand{\algorithmicrequire}{\textbf{Input:}}
\renewcommand{\algorithmicensure}{\textbf{Output:}}

\begin{algorithm}[htb]
\caption{Tuning $\lambda,\gamma$ by two-stage validation}
\begin{algorithmic}[1]\label{alg:fv}
\REQUIRE Candidate sets of values $\Lambda,\Gamma$ for $\lambda,\gamma$; primary
and secondary incidence $x_t,y_t$, for $t \leq T$; number of cross-validation
folds $K$ and number of forward-validation steps $M$ 
\ENSURE Selected $\lambda^*, \gamma^*$ 

\FOR{each $\lambda \in \Lambda$}
\STATE Use $K$-fold cross-validation for problem \eqref{eq:tf-pois-rt} with
$\gamma = 0$, to record $\text{MAE}(\lambda)$ 
\ENDFOR
\STATE Select $\lambda^*$ by min or 1se rule
\FOR{each $\gamma \in \Gamma$}
\FOR{$s = T - M$ to $T - 1$}
\STATE Solve \eqref{eq:tf-pois-rt} with given $\gamma$, $\lambda = \lambda^*$,  
and $T=s$ to yield $\vphantom{\sum_{k=0}^d} \hat{p}^{\text{rt},s}(\gamma)$
\STATE Linearly extrapolate $\vphantom{\sum_{k=0}^d}
\hat{p}^{\text{rt},s}_{s+1}(\gamma)$ from
$\hat{p}^{\text{rt},s}_s(\gamma)$ and 
$\hat{p}^{\text{rt},s}_{s-1}(\gamma)$    
\STATE Compute prediction $\hat{y}_{s+1}(\gamma) = \sum_{k=0}^d x_{s+1-k} 
  \hat\pi^{(s)}_k \hat{p}^{\text{rt},s}_{s+1-k}(\gamma)$
\ENDFOR
\STATE Record $\text{MAE}(\gamma) = \frac{1}{M} \sum_{s=T-M}^{T-1}
|\hat{y}_{s+1}(\gamma) - y_{s+1}|$
\ENDFOR
\STATE Select $\gamma^*$ by min or 1se rule
\STATE \textbf{return} $\lambda^*,\gamma^*$
\end{algorithmic}
\end{algorithm}

\section{Experimental setup}\label{sec:setup2}

\subsection{Semi-synthetic data generation}
\label{sec:data-generation}

To evaluate our methods against existing benchmarks, we generated realistic
secondary incidence data using ground truth severity rates, whose computation is 
described below. These simulations are intended to mimic true COVID-19 deaths in
the \US\ over the pandemic. The severity rate we targeted throughout our
analysis was hence the hospitalization-fatality rate (HFR).   

\paragraph{Real data resources.}

Our simulations used real hospitalization counts as the primary incidence
data. In the first several years of the pandemic, COVID-19 hospital admissions
were reported daily in real time to the Department of Health and Human Services 
\citep{HHS2023}. This process was coordinated by the National Healthcare Safety
Network, thus we refer to the aggregate daily hospital admissions as NHSN 
data. 

From observed NHSN hospitalizations, we simulated COVID-19 deaths across the
\US\ and all 50 states, as well as for both the retrospective and real-time
settings. We designed separate simulation models for each of these settings; the
discrepancy here reflects the fact that different counts were available in
real-time versus in retrospect during the pandemic. True weekly death totals
were aggregated in retrospect by the National Center for Health Statistics
(NCHS) \citep{nchs}. NCHS revealed these death totals well after the week in
question. NHSN began comprehensive hospitalization reporting in summer 2020, and
NCHS stopped reporting deaths in early 2023. Our retrospective simulations cover
this 2.5-year window.

In real-time, Johns Hopkins University (JHU) published provisional daily death
counts. These data bear a subtle yet important distinction from those of
NCHS---JHU deaths reflect not the true number of deaths which occurred each day,
but rather the number of new deaths \textit{reported} on a given day. Often,
these deaths were reported days or even weeks after they occurred. As a result,
the delay distribution which underlies the JHU data has higher mean than that
for NCHS. The JHU counts were also considerably noisier, due at least in part to
reporting idiosyncrasies. Our real-time simulations cover the same 2.5-year
window, from summer 2020 to early 2023, consistent with the retrospective case. 

We use the Epidata API \citep{Epidata} to download all of the data described
above.  

\paragraph{True severity rates.}

The severity rates that we used as the ground truth in our simulations were
created using the following variant-based procedure. In each region (state or
\US\ national), we defined a single HFR associated with the four most
significant COVID-19 variants: the original strain, Alpha, Delta, and
Omicron. Using data from CoVariants \citep{Hodcroft2021}, we identified the
dominant period for each variant, based on when it accounted for over half of
cases. Within this window, we computed the HFR as the total number of NCHS
deaths divided by NHSN hospitalizations, offset by two weeks to account for
delays. Finally, we mixed these per-variant HFRs with the variant proportions in
circulation to obtain the final HFR curve: 
\begin{equation}
p_t = \sum_v c_t^v \, p^v, \;\; \text{where} \; \sum_v c_t^v = 1 \;
\text{for all $t$}.
\end{equation}
Here the sums are over variants $v$, with $p^v$ denoting the HFR for variant
$v$, and $c_t^v$ denotes the proportion of variant $v$ in circulation at time
$t$ (calculated again from CoVariants data). 

\paragraph{Delay distributions.}

While our severity rate estimators allow for nonstationary delay distributions
\smash{$\pi^{(t)}$}, for the sake of simplicity our simulations use a constant delay \smash{$\pi^{(t)} =
  \pi$}, for all $t$. 
  In reality, delay distributions are known to vary over time \citep{Ward2021}, though our evaluation of model performance under misspecified delays partially addresses this limitation.
  Moreover, this simplification is consistent with standard practice in estimating severity rates \citep{UKpaper,fusedlasso} and reproduction numbers \citep{wallinga_teunis, cori2013new, Chitwood2022}.
  
  Setting $d=60$ days, we fit
discrete gamma distributions in each region parameterized by
heuristically-chosen means and variances. The means were estimated by maximizing
the cross-correlation between NHSN hospitalizations and deaths, NCHS deaths in
the retrospective case and JHU in real-time. The standard deviations were set to
90\% of the means, motivated by empirical findings on typical
hospitalization-to-death delays \citep{UKdelay}. Denoting by $F_\gamma$ the
cumulative distribution function (CDF) of the corresponding gamma distribution,
we define the delay distribution by \smash{$\pi_k \propto F_\Gamma(k+1) -
  F_\Gamma(k)$}, where these values are normalized to sum to 1.  

\paragraph{Noise models.}

Using the NHSN hospitalizations, variant-based severity rates, and delay 
distributions as described above, we then simulate daily deaths from the
following noise models.  
\begin{itemize}
\item For the retrospective simulation, we generated deaths from a
  Poisson-binomial model; this qualitatively matches the variance of death
  counts from NCHS, as shown in \cref{fig:sim_vs_real_retro}.   
\item For the real-time simulation, we generated deaths from a beta-binomial 
  distribution, which reflects the fact that JHU deaths are more overdispersed
  than what the Poisson-binomial can accommodate. As we describe below, we first
  estimated the amount of overdispersion, then simulated deaths with the given
  variance. The results qualitatively match the dispersion of deaths counts from 
  JHU, see \cref{fig:sim_vs_real_rt}.   
\end{itemize}

To estimate the amount of dispersion in each region (50 states and \US\
overall), we fit a quasi-Poisson regression to cleaned JHU death
counts. \cref{apx:jhu-clean} details our methodology. This regression
computes a coefficient \smash{$\hbeta$} which encodes the amount of
overdispersion present in the JHU death data, in the given region. To
accommodate such overdispersion in our simulation model for deaths, we then use
a beta-binomial model, described next.  

\begin{figure}[p]
\centering
\includegraphics[width=0.95\textwidth]{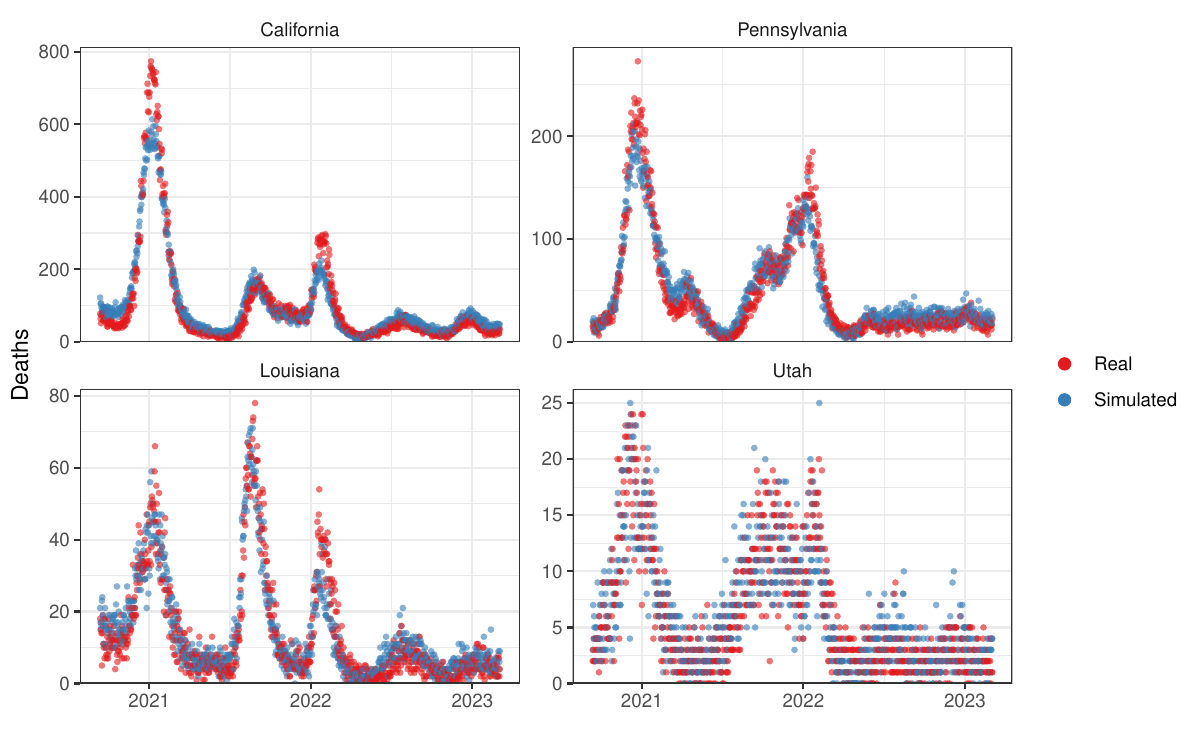}
\caption[Comparing retrospective daily death counts of COVID source and our simulation model.]{Real (NCHS) and simulated (Poisson-binomial) daily deaths. (The
  NCHS data is itself weekly, but here we have subsampled it to the daily level
  to match the time resolution of our simulated data.)}      
\label{fig:sim_vs_real_retro}

\bigskip
\includegraphics[width=0.95\textwidth]{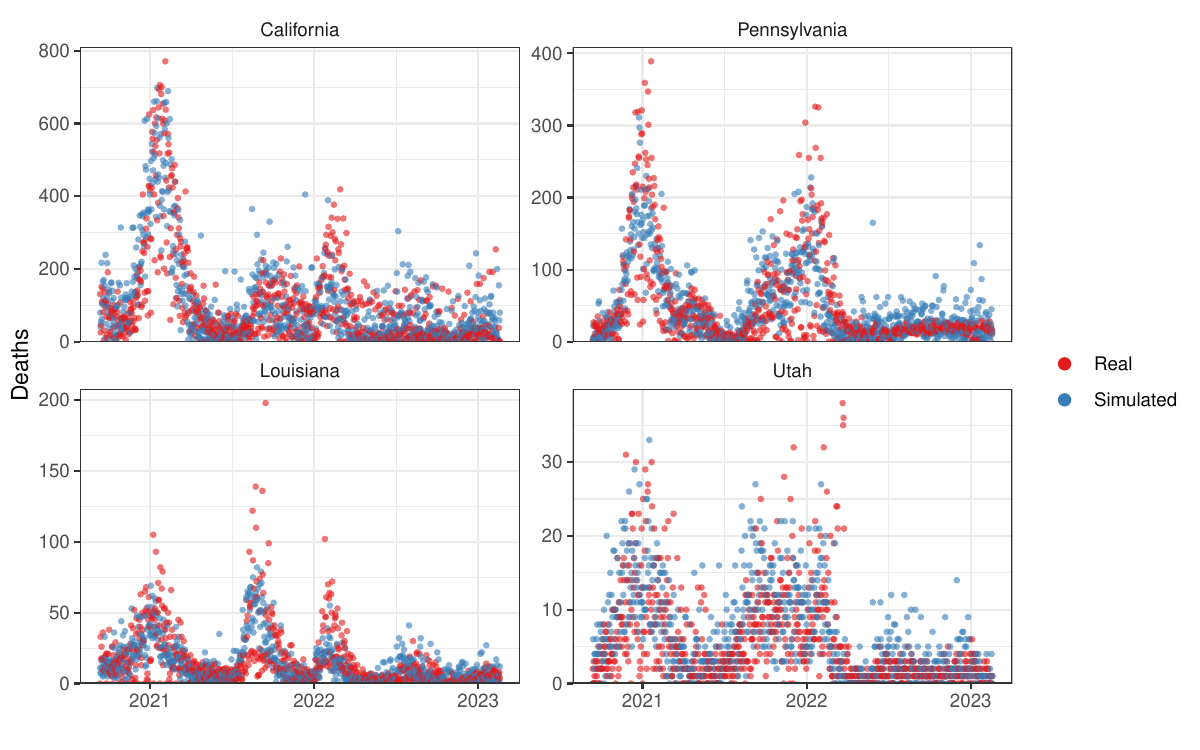} 
\caption[Comparing real-time daily death counts of COVID source and our simulation model.]{Real (JHU) and simulated (beta-binomial) daily deaths. (The JHU deaths
  displayed here are ``finalized'', meaning that the history has been revised
  after the end of the reporting period. They have been preprocessed as
  described in \cref{apx:jhu-clean}.)}     
\label{fig:sim_vs_real_rt}
\end{figure}

The beta-binomial distribution is a popular tool for modeling count data with
dispersion \citep{betabinom}. For our problem, in a given region with dispersion
coefficient \smash{$\hbeta$}, we can use the standard mean-rho parameterization
to define the beta-binomial at each time $t$, setting      
\begin{equation}
\label{eq:beta-binom}
M_t = \frac{\mu_t}{\sum_{k=0}^d x_{t-k}}, \;\; \text{and} \;\; 
\rho_t = \bigg[ \frac{\hbeta \sigma_t^2}{\mu_t\big(1-\mu_t / \sum_{k=0}^d    
x_{t-k}\big)}-1 \bigg]\bigg[ \frac{1}{\sum_{k=0}^d x_{t-k}-1} \bigg],   
\end{equation}
where $\mu_t, \sigma_t^2$ are as defined in \eqref{eq:pb_mean},
\eqref{eq:pb_var}. \cref{apx:beta-binom} gives the details behind this
calculation.   

Figures \ref{fig:sim_vs_real_retro} and \ref{fig:sim_vs_real_rt} qualitatively
assess the goodness-of-fit of our simulation models by plotting simulated 
and real data for four regions of varying sizes. While the curves do not overlap 
precisely, visual
inspection shows that the noise levels are relatively comparable.      

\subsection{Methods considered}\label{sec:methods2}

In the retrospective and real-time simulations, we ran Poisson deconvolution
with trend filtering penalties of orders $m = 0,1,2$, which deliver piecewise
constant, linear, and quadratic estimates, respectively. We further considered
tuning the order $m$ of trend filtering regularization itself, by
cross-validation in the retrospective case, and forward-validation in real-time.

We compared these deconvolution methods to the ratio-based estimators
of severity rates from \cref{ch:paper1}, which we will often refer to as ``benchmark'' methods.
Here we augment them with smoothed
primary and secondary counts over some window $W \geq 1$. 
Here, $W$ serves as a
hyperparameter, with $W=1$ corresponding to no smoothing. In the real-time
setting, we smooth counts with a trailing window; in retrospect, we use a
centered window. For the case of primary incidence, define 
\begin{equation} 
\widetilde{x}_t^{\text{rt}} = \frac{1}{W} \sum_{k = 0}^{W-1} x_{t-k},
\;\; \text{and} \;\;
\widetilde{x}_t^{\text{rs}} = \frac{1}{W} \sum_{k = -\lfloor W/2
  \rfloor}^{\lceil W/2 \rceil - 1} x_{t+k},  
\end{equation}
and analogously define \smash{$\widetilde{y}_t^\text{rt}$} and 
\smash{$\widetilde{y}_t^\text{rs}$} for secondary incidence. 

With this notation in place, we revisit the \emph{lagged ratio}
estimators from \cref{ch:paper1}, now augmented with the smoothing window $W$.
These are arguably the most widely-used estimators of severity rates in
practice \citep{yuan2020monitoring, timevar_ifr, horita2022global, lagged_chinese,
  LIU2023100350}, simply comparing primary and secondary events offset by a lag of
$\ell \geq 0$ time steps. The real-time and retrospective lagged ratios are       
\begin{equation}
\hat{p}_t^\text{rt} =
\frac{\widetilde{y}_t^\text{rt}}{\widetilde{x}_{t-\ell}^\text{rt}},
 \;\; \text{and} \;\;
\hat{p}_t^\text{rs} =
\frac{\widetilde{y}_{t+\ell}^\text{rs}}{\widetilde{x}_{t}^\text{rs}},
\end{equation}
respectively. The lag parameter $\ell$ represents the typical duration
between primary and secondary events. 

Similarly, we revisit the \emph{convolutional ratio} estimators from \cref{ch:paper1}, again with smoothing.
Originally introduced by \citet{UKpaper} (based on earlier work of \citet{nishiura}), these use an estimate of the delay distribution \smash{$\hat\pi^{(t)}$}, which convolves against trailing primary events to account for secondary incidence. The
real-time convolutional ratio is   
\begin{equation}
\label{eq:conv-rt}
\hat{p}_t^\text{rt} = \frac{\widetilde{y}_t^\text{rt}}{\sum_{k=0}^d
  \widetilde{x}_{t-\ell}^\text{rt}\hat\pi^{(t-k)}_{k}}.
\end{equation}
The retrospective version tracks the proportion of cases among primary events 
at $t$ who will eventually have secondary events. While secondary event dates  
for individual cases at $t$ are unknown, the relevant secondary count can be 
estimated with a convolutional model. This leads to the retrospective
convolutional ratio:
\begin{align}\label{eq:conv-retro2}
\hat{p}_t^\text{rs} = \sum_{k=0}^d
  \frac{\widetilde{y}_{t+k}^\text{rs}\hat\pi^{(t)}_{k}}{\sum_{j=0}^d
  \widetilde{x}_{t+k-j}^\text{rs}\hat\pi^{(t+k-j)}_{j}}. 
\end{align}
We note that when each distrbution \smash{$\hat\pi^{(t)}$} is a point mass at
$\ell$, the convolutional ratios reduce to the lagged ones. Therefore, the
former may be understood as generalizations of the latter. 

In \cref{ch:paper1}, we studied the bias of the real-time lagged
and convolutional ratios, both formally and empirically. Our analysis revealed
that the lagged ratio tends to exhibit large bias, and can both signal
nonexistent surges and fail to detect upticks in severity rate. The
convolutional ratio is generally more robust, but it still can have significant   
bias when the severity rate is changing.      

The retrospective lagged ratio is subject to roughly the same bias as in the 
real-time case: it is effectively the same estimator offset by $\ell$ time 
steps. The retrospective convolutional ratio is somewhat more challenging to
analyze than its real-time counterpart. However, its derivation is based on a
stationarity assumption, and it will still be particularly biased when the
severity rate changes around $t$.  

These ratio estimators all depend on the hyperparameter $W$, which controls the   
length of the smoothing window, and must be tuned. In the retrospective case,
this is tuned via $K$-fold cross-validation, just like $\lambda$ for the
retrospective deconvolution method. In the real-time case, it is tuned via
$M$-step forward-validation, just like $\gamma$ for the real-time deconvolution
method.  

\subsection{Experimental design}\label{sec:design}

In all cases, we examined the use of both the min rule and the 1se rule to tune
hyperparameters using cross- and forward-validation: recall this is $\lambda$
for retrospective deconvolution; $\lambda,\gamma$ for real-time deconvolution;
and $W$ for the benchmark methods, lagged and convolutional ratios, in both
retrospective and real-time settings. In each case, we report results for the
most favorable rule (min or 1se) in terms of MAE, giving each method the full
benefit of the doubt; details will be given in the next section.  We used $K=5$
cross-validation folds and $M=28$ forward-validation steps, throughout.

We evaluated deconvolution and benchmark severity rate estimators on simulated
data across 50 states as well as the \US\ nationally. The estimation window
encompassed a roughly two-year period between 2021 and 2022, omitting a burn-in
and burn-out period. With the exception of the misspecified experiments
described in the final paragraph of this subsection, the deconvolution methods
and the convolutional ratios were given access to the oracle delay distribution, 
thus using \smash{$\hat\pi^{(t)}=\pi$}, for all $t$. The lagged ratios use a lag
$\ell$ equal to the mean of the delay distribution in each region. In all but
the misspecified experiments, the reported results are averaged over 10
replications (10 draws of data from the simulation model).   

To peform deconvolution with trend filtering regularization we solved the
optimization problems \eqref{eq:tf-pois}, \eqref{eq:tf-pois-rt} with Clarabel in
CVXR \citep{clarabel}. The retrospective setting required much less computation
than real-time. For this reason, we also ran deconvolution and the benchmarks
with oracle hyperparameter tuning. This selected the hyperparameters whose
estimated severity rates had the smallest mean absolute error (MAE), in
hindsight. To lessen the computational burden in both the retrospective and
real-time cases, we only computed estimates from each method once every seven
days. In total, this amounted to about 100 estimation dates, over which we
calculated the eventual MAEs to be reported.

Lastly, we recomputed all the estimators under misspecified delay
distributions---still constant over time, but not equal to \smash{$\pi$}. We
considered six different misspecified delay distributions with varying means;
each was still a discrete gamma distribution, whose standard deviation is set to
90\% of its mean. In the retrospective case, we set the means to be 1, 2, and 3
days before and after the true value for each state. For the real-time case,
which had longer delay distributions, the means were offset by 1, 3, and 5 days,
on both sides of the true value. \cref{apx:misp} provides visualizations
of the misspecified delay distributions for a few states.

\section{Experimental results}\label{sec:results2}

We analyze results across the synthetic experiments described in the previous
section.

\subsection{Retrospective analysis}

 \begin{table}[ht]
\centering
\caption[MAE and relative performance of retrospective HFR estimation.]{MAE of methods in retrospective HFR estimation, and the
  associated percentage improvement on the convolutional ratio (CR) and lagged
  ratio (LR). Results are averaged over 51 regions and 10 replications.}    
\label{tab:retro-cv}
\begin{tabular}[t]{@{}lrrrrrr@{}}
\toprule
& Lagged & Conv & \textbf{Deconv-0} & \textbf{Deconv-1} & \textbf{Deconv-2} & \textbf{Deconv-T}\\
\midrule
MAE $\times \; 10^3$ & 16.6 ± .2 & 8.3 ± .1 & 6.7 ± .1 & 6.9 ± .1 & 7.1 ± .1 & 6.9 ± .1\\
Improv over CR (\%) & -112.5 ± 2.7 & NA & 15.2 ± .7 & 15.9 ± .7 & 12.4 ± .8 & 14.1 ± .7\\
Improv over LR (\%) & NA & 47.6 ± .6 & 56.2 ± .6 & 56.3 ± .7 & 54.3 ± .7 & 55.6 ± .6\\
\bottomrule
\end{tabular}
\end{table}

In the retrospective setting, deconvolution largely outperforms the
ratio-based estimators across the \US\ and the 50 states. 
\cref{tab:retro-cv} summarizes the performance of these methods when all of the
hyperparameters are tuned via cross-validation. As mentioned above in the
experimental design, we consider using both the min rule and 1se rule within 
cross-validation, and for each method we report the results from the rule with
the strongest performance; in the retrospective case, this ends up being the min  
rule for trend filtering, and the 1se rule for the ratio methods. The table
reports, for each method, the MAE averaged over the 51 regions and 10
replicates:      
\begin{equation}
\frac{1}{51} \sum_{r=1}^{51} \bigg (\frac{1}{10}
\sum_{i=1}^{10} \text{MAE}_{ri} \bigg),
\end{equation}
where \smash{$\text{MAE}_{ri}$} is the MAE for region $r$ and replicate $i$. It
also reports the associated Monte Carlo standard error, derived in \cref{apx:se-calc}:  
\begin{equation}
\sqrt{\frac{1}{51^2}\sum_{r=1}^{51} \frac{1}{10} \hat\sigma^2\big( 
\{\text{MAE}_{ri}\}_{i=1}^{10} \big)},   
\end{equation}
where \smash{$\hat\sigma^2(S)$} is the sample variance of elements in a set
$S$. In the table, ``Deconv-$m$'' denotes deconvolution with trend
filtering regularization of order $m$, and ``Deconv-T'' denotes deconvolution
with the order tuned by cross-validation. The same labeling is used throughout
all  tables and figures.

As we can see from \cref{tab:retro-cv}, all orders of trend filtering yield
more accurate severity rate estimates than the ratio-based methods. The MAE of
the convolutional ratio is roughly $8.3 \times 10^{-3}$ on average,
substantially outperforming the lagged ratio. For deconvolution methods, the
average MAE ranges from $6.7$ to $7.1 \times 10^{-3}$, depending on the order
$m$. This translates to a roughly 12-16\% improvement in MAE over the
convolutional ratio, and roughly 54-56\% over the lagged ratio. Tuning the order
of trend filtering also works well. Moreover, \cref{tab:retro-oracle} in
\cref{apx:extra} shows that under oracle tuning (with all tuning
parameters chosen to optimize MAE) deconvolution outperforms the benchmarks
by an even wider margin.

\begin{figure}[p]
\centering
\includegraphics[width=0.95\textwidth]{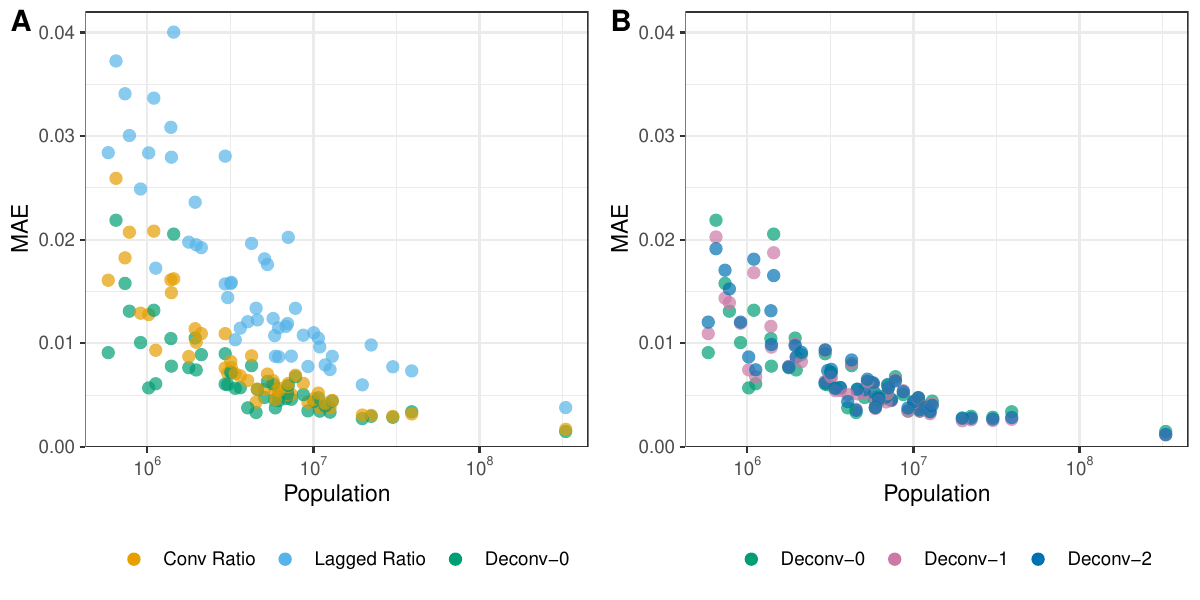}
\caption[MAE versus population size in retrospective HFR estimation.]{MAE versus population size in retrospective HFR estimation, for 51
  regions (\US\ and 50 states). Each point is averaged over 10 replications.}   
\label{fig:retrospective-comparison}

\bigskip 
\includegraphics[width=0.95\textwidth]{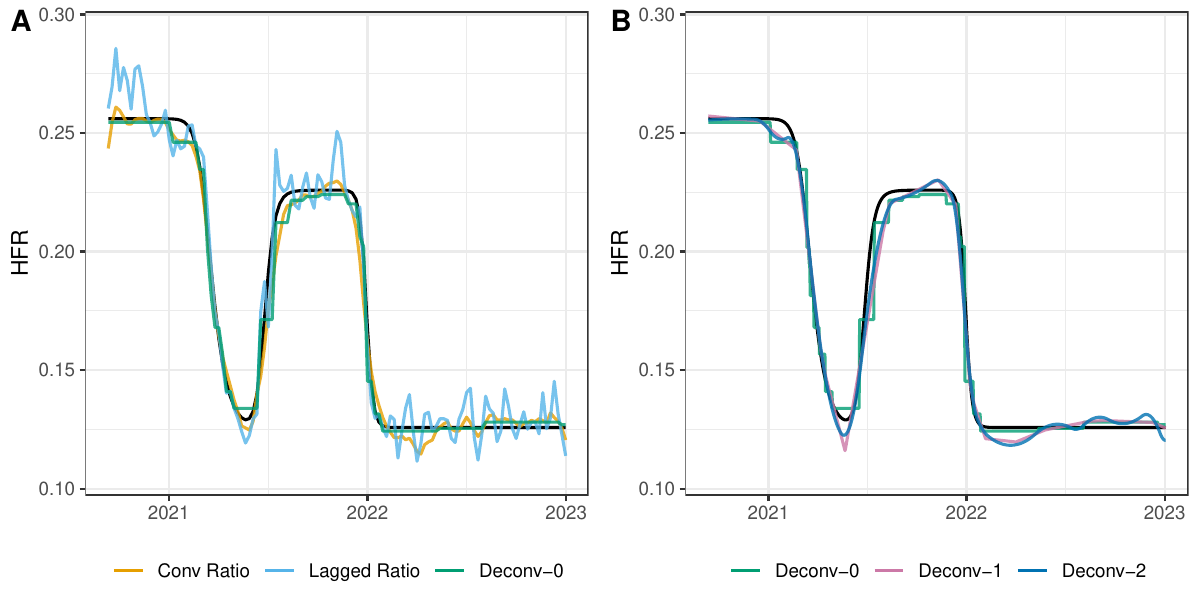}
\caption[Example retrospective HFR estimates for Pennsylvania.]{Example retrospective HFR estimates for Pennsylvania, from a single 
  replication.} 
\label{fig:retro-curves}
\end{figure}

\cref{fig:retrospective-comparison}A disaggregates the MAEs by population
size, across the 51 regions. Observe that the spread of improvement varies by
population, with deconvolution generally having stronger gains over the
benchmarks for smaller population sizes (where estimation is generally more
challenging). \cref{fig:retrospective-comparison}B confirms that this
holds for all orders of trend filtering regularization. 

\cref{fig:retro-curves} shows the HFR curves for an example region,
Pennsylvania. The ratio-based estimates oscillate around the ground truth HFR
(black line), with the lagged ratio displaying clearly greater volatility, but
the convolutional ratio still showing a nontrivial amount as well.  Broadly, the
estimates from deconvolution do not share this behavior and are qualitatively
more stable. We emphasize that this is the case even though the ratio-based
methods use cross-validation with the 1se rule to tune their hyperparameters
(which incentivizes a greater amount of regularization). The same behavior is 
generally consistent across all geographies, and the HFR curves for the
remaining states are displayed in \cref{fig:all_curves_retro} of 
\cref{apx:extra}.

\subsection{Real-time analysis}

\begin{table}[h]
\centering
\caption[MAE and relative performance of real-time HFR estimation.]{MAE of methods in real-time HFR estimation, and the associated
  percentage improvement on the convolutional ratio (CR) and lagged ratio
  (LR). Results are averaged over 51 regions and 10 replications.}     
\label{tab:rt-cv}
\begin{tabular}[t]{@{}lrrrrrr@{}}
\toprule
& Lagged & Conv & \textbf{Deconv-0} & \textbf{Deconv-1} & \textbf{Deconv-2} & \textbf{Deconv-T}\\
\midrule
MAE $\times \; 10^3$ & 27.0 ± .1 & 22.9 ± .1 & 18.5 ± .2 & 19.3 ± .1 & 19.8 ± .1 & 19.6 ± .1\\
Improv over CR (\%) & -19.3 ± .4 & NA & 16.7 ± .5 & 13.4 ± .4 & 10.9 ± .5 & 12.2 ± .5\\
Improv over LR (\%) & NA & 13.8 ± .3 & 28.6 ± .5 & 25.6 ± .5 & 23.6 ± .5 & 24.6 ± .5\\
\bottomrule
\end{tabular}
\end{table}

Deconvolution also performs strongly in the real-time setting. In the format
of \cref{tab:retro-cv} above, \cref{tab:rt-cv} reports the MAEs and 
percentage improvements over the benchmarks in the real-time case, when all 
hyperparameters are tuned with cross-validation. Again, for each method we
selected among the min and 1se rule depending on which resulted in more
favorable performance; in the real-time case, this ends up being the 1se rule
for all hyperparameters except for $\lambda$ in deconvolution, which resulted
in marginally better performance when tuned via the min rule. We note that the
convolutional ratio estimator had substantially worse performance under the min
rule, whose chosen window sizes tended to undersmooth counts. 

As we can see from the table, the deconvolution methods provide an improvement
in accuracy over the ratio-based methods, consistent with the previous results
for the retrospective case. All orders $m$ of trend filtering regularization are
around 11-17\% more accurate than the convolutional ratio, and around 24-29\%  
better than the lagged ratio. Tuning the order of trend filtering still
generally works well.        

\cref{fig:rt-comparison} breaks down the MAE by population size, and once
again we can see that deconvolution has an advantage over the ratio methods
across the full range of geographies, with the gap being generally larger for
smaller regions. \cref{fig:rt-curves} displays example HFR curves for
Pennsylvania, as before. In the current real-time case, we can clearly see the 
quality of all estimates degrade. The ratio-based estimates are especially
volatile, and their erratic behavior here can be understood from the
perspective of the analysis in \citet{goldwasser}. For example, their positive
bias in early 2022, particularly that of the lagged ratio, can be attributed to
the Omicron surge which just passed. The deconvolution estimates are
comparatively smoother and more stable. Finally, \cref{fig:all_curves_rt}
in \cref{apx:extra} displays the full set of HFR curves across all
states, where broadly the same conclusions are upheld. 

\begin{figure}[p]
\centering
\includegraphics[width=0.95\textwidth]{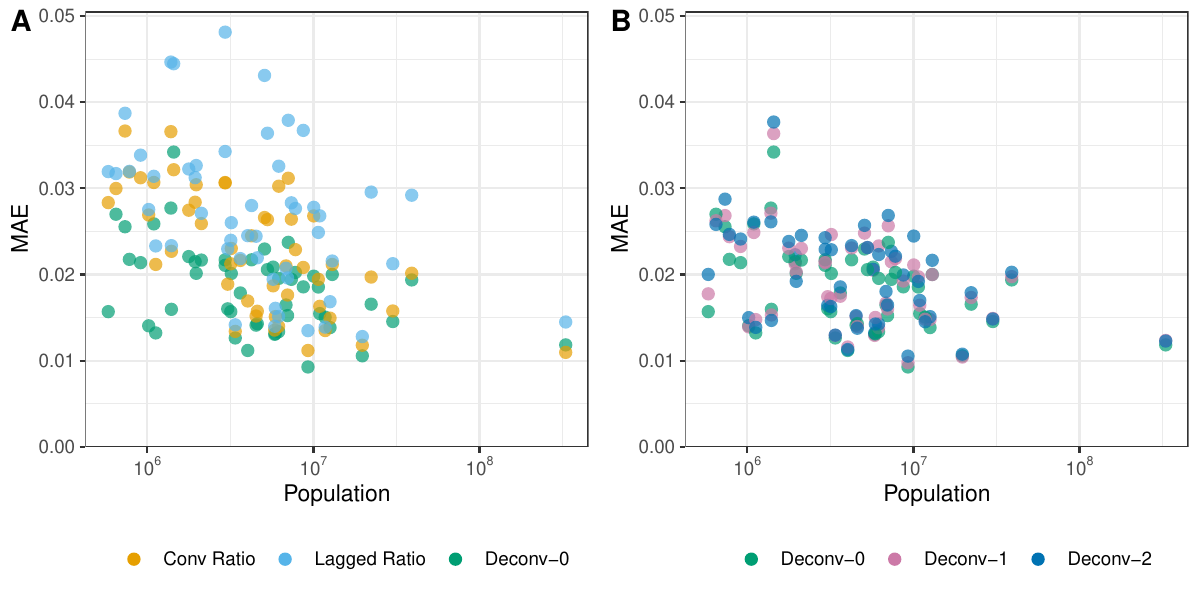}
\captionof{figure}[MAE versus population size in real-time HFR estimation.]{MAE versus population size in real-time HFR estimation, for 51
  regions (\US\ and 50 states). Each point is averaged over 10 replications.}   
\label{fig:rt-comparison}

\bigskip 
\includegraphics[width=0.95\textwidth]{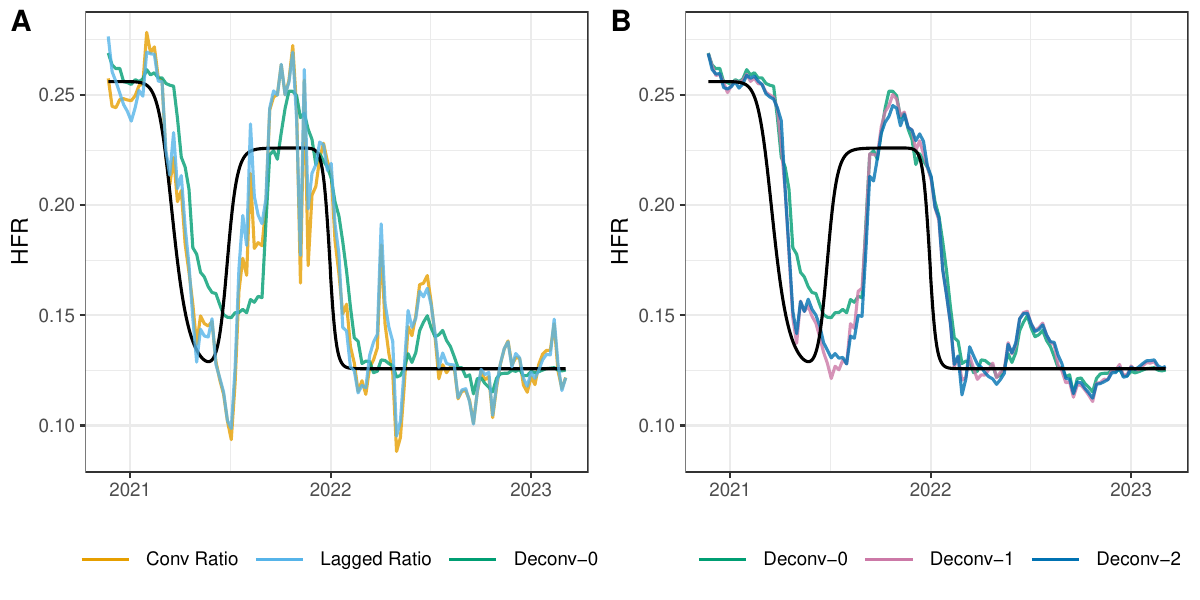}
\captionof{figure}[Example real-time HFR estimates for Pennsylvania.]{Example real-time HFR estimates for Pennsylvania, from a single 
  replication.} 
\label{fig:rt-curves}
\end{figure}

\subsection{Misspecification analysis}

The advantage of deconvolution over benchmark methods persists under various
degrees of misspecification. \cref{fig:misp-maes} displays the MAE as a
function of the offset of the mean of the working delay distribution used by
each method (or the lag used by the lagged ratio) relative to the true
delay. Unsurprisingly, we see accuracy degrade under misspecification, as
evidenced by the curves which generally slope upwards as the mean offset moves
away from zero. The gap in MAE of the lagged ratio and any of the other
estimators is clearly large, regardless of the amount of misspecification.
Comparing the convolutional ratio to deconvolution, the MAE gap narrows somewhat
as the mean offset grows, in the retrospective case; on the other hand, the gap
more or less holds steady as the mean offset varies, in the real-time case.
\cref{fig:misp-imps} in \cref{apx:extra} transforms the MAE curves
from deconvolution in \cref{fig:misp-maes} into percentage improvement
curves over the convolutional ratio. The findings are consistent and overall the
deconvolution methods display comparably strong performance in our misspecified
experiments.

\begin{figure}[t]
\centering
\includegraphics[width=0.95\textwidth]{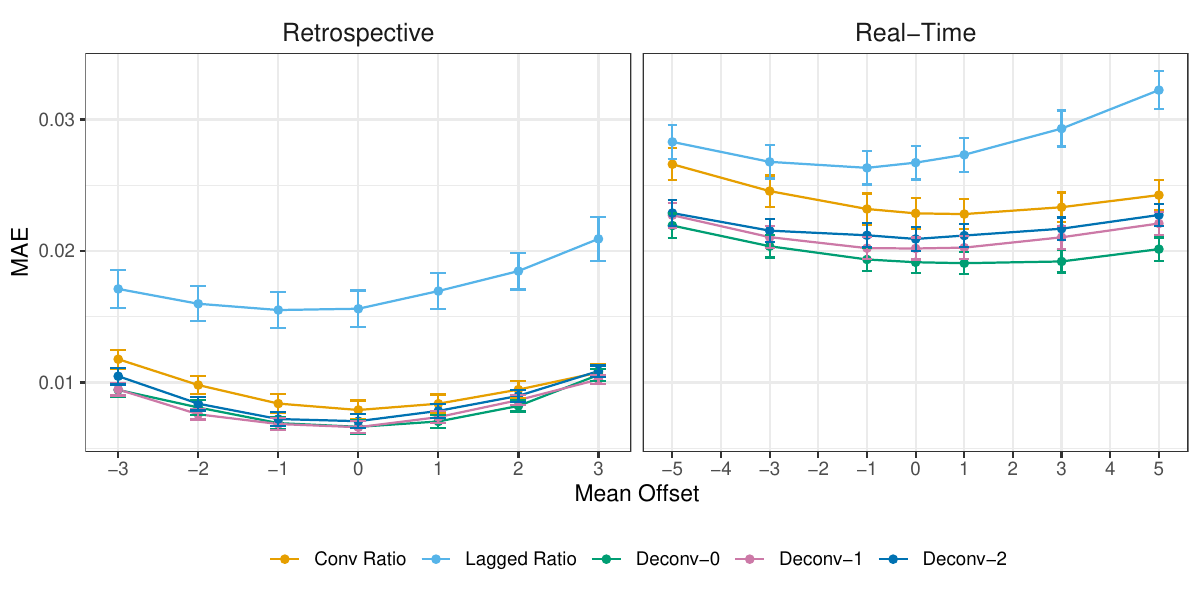}
\caption[MAE under misspecified delay distributions versus mean offset.]{MAE under misspecified delay distributions versus mean offset, with 0 offset indicating correct specification. Results are averaged over 51 regions.}  
\label{fig:misp-maes}
\end{figure}

\section{Discussion}\label{sec:discussion}

In this work, we characterize the probabilistic relationship between two time
series (primary and secondary event counts) which are related by time-varying
convolution: a delay distribution times a quantity called the severity rate. The
resulting likelihood allows us to construct new deconvolution-based estimators
of severity rates, in both the retrospective and real-time settings. To
encourage smoothness over time, we apply trend filtering regularization, and in
the real-time case, we use additional regularization to control volatility at
the boundary of observations (the right tail of the time series). In extensive
simulations, designed to mimic the relationship between COVID-19
hospitalizations and deaths in the \US, we find our deconvolution method to be
consistently more accurate than the typical ratio-based methods used in
practice, often by a significant margin. On real data, the qualitative
differences are quite similar, with deconvolution able to provide stable
estimates of the hospitalization fatality rate (HFR) under changing conditions,
while the ratio-based estimates are subject to suspicious swings and spikes, as
expected based on \citet{goldwasser}.

It is worth further reflecting on some of our results from the perspective of
public health practice. In our real-time experiments, cross-validation and the
1se rule---to choose the smoothing window in the lagged ratio and convoluational 
ratio methods---nearly always selects the maximum window size of 4 weeks. Under
smaller window sizes, the ratio-based methods perform much worse and the margin
of improvement of deconvolution only grows. This creates a tradeoff to be
navigated, as long windows may be undesirable in practice. A core purpose of
real-time estimation is to expeditiously detect changes in severity
rates. Smoothing values over an entire month seriously hampers our ability to do
so, and it seems window sizes this large are rarely used in practice. The
ability of trend filtering to perform locally adaptive smoothing---which
effectively uses a longer or shorter smoothing window as needed at different
parts of the time series---is a potentially powerful way to navigate the basic
tradeoff between detecting changes and controlling variability.

We close by mentioning several avenues for future work. A key component in
practice is the delay distribution, whose misspecification can lead to
degradation in performance. Without high-quality line list data, obtaining an
accurate plug-in estimate is challenging. Jointly estimating this distribution
along with the severity rates themselves would evade this challenge, posing an
open methodological challenge. \citet{li2023reconstructing} examine a joint
estimation strategy for a different but broadly related deconvolution problem,
based on the expectation-maximization (EM) algorithm. It would be interesting to
study an analogous approach in our problem setting. 

Another interesting direction would be to estimate severity rates disaggregated
by different demographic groups, for example, severity rates for different age
brackets. Conceivably, this could be done with the same primary or secondary
event data (not requiring counts per demographic group) by employing a
cohort-based model. This would feature a common set of severity rate curves  
(per demographic group) across all regions, and would then mix by known
demographic proportions in each region to explain the effective severity rate in
that region. 

To our knowledge, no analogous Bayesian methods exist for the general class of severity rates we consider. However, a Bayesian formulation is a natural direction for future work, where prior specification could serve as an alternative regularization mechanism. 
Bayesian methods have been developed for related epidemiological quantities, including growth rates  \citep{guzmanrincon2023bayesian} and reproduction numbers \citep{cori2013new}.
In addition, several Bayesian works on compartmental models \citep{flaxman2020, gibson2023, korolev2020identification} place a prior on the IFR and recover a posterior distribution. 
It would be interesting to compare these IFR estimates to our frequentist framework's, after first deconvolving estimates of latent infection counts.


Perhaps the most important limitation of our approach is the lack of uncertainty estimates. Fitting severity rates with splines rather than trend filtering would enable sandwich-based confidence intervals, as we derive for $R_t$ in \cref{ch4}; the Bayesian formulations discussed above would provide another avenue. For trend filtering, promising developments in data fission \citep{leiner2025data, dharamshi2025generalized} suggest a path forward. We discuss these directions further in the Conclusion.

\part{Estimating the Reproduction Number}\label{pt2}
\chapter{Unifying Perspectives on \texorpdfstring{$R_t$}{Rt}}
\label{ch:paper3}\label{ch3}

\section{Introduction}

The effective reproduction number ($R_t$) is a central metric in infectious disease epidemiology. It summarizes the transmissibility of a pathogen as the average number of secondary infections generated by a single infected individual at time $t$. 
Reliable, timely estimates of $R_t$ 
are useful for public health decision-making, enabling policymakers to adopt and adjust intervention measures and assess their effectiveness.

Translating the conceptual definition of $R_t$ into a computable quantity requires identifying a specific cohort of infectors and infectees over which to quantify average transmissions, for example based on time of acquisition or on onset of infectiousness. 
While $R_t$ can be operationalized in several ways \citep{gostic2020practical}, the standard framework for real-time estimation is instantaneous $R_t$.
This definition quantifies the ratio between new infections observed at a time $t$ to the number of contributing parent infections, with a formula called the renewal equation. 
Although described as ``instantaneous,'' this characterizes the path of the epidemic only if the conditions that produced incident infections remain unchanged in the near future. 

Although renewal equation models are the most widely-used approach to estimate $R_t$, the framework is relatively uncommon in the broader literature on infectious disease modeling because it requires a notion of generation interval that does not map straightforwardly onto observable quantities. 
By contrast, mechanistic models are more often used in papers exploring intervention design, scenario planning, and forecasting. 
These models, such as SEIR, explicitly represent ``compartments'' of infection stages across different population subsets at granular timesteps.
Progression between stages is implicitly used in some instantaneous $R_t$ estimates to, for example, deconvolve from cases to new exposures. However, the link between renewal equation estimators and standard mechanistic models is less well-established.

In this chapter, we develop an explicit crosswalk between mechanistic SEIR and renewal-equation perspectives.
We show that if both the mechanistic model and generation interval are correctly specified, the definition of $R_t$ that
arises from compartmental dynamics equals the instantaneous $R_t$ targeted by the renewal equation. 
We also derive the generation interval implicit in SEIR models, presenting the renewal equation in mechanistic terms. 

\subsection*{Notation}

Notation is summarized in \cref{tab:notation-ch3}.

\begin{table}[!ht]
\centering
\caption{Reproduction number and SEIR notation (Chapter~4).}
\label{tab:notation-ch3}
\renewcommand{\arraystretch}{1.15}
\begin{tabularx}{\linewidth}{@{}l >{\raggedright\arraybackslash}X@{}}
\toprule
\textbf{Symbol} & \textbf{Meaning} \\
\midrule
\multicolumn{2}{@{}l}{\textsc{Reproduction numbers}} \\
\addlinespace[2pt]
$R_0$ & Basic reproduction number \\
$R_t^I$ & Instantaneous reproduction number \\
$R_t^C$ & Case (cohort) reproduction number \\
$R_t^M$ & Mechanistic reproduction number \\
\midrule
\multicolumn{2}{@{}l}{\textsc{Infection counts}} \\
\addlinespace[2pt]
$x_t$ & Infections at $t$ ($E_t^*$ in SEIR model) \\
\midrule
\multicolumn{2}{@{}l}{\textsc{Renewal structure}} \\
\addlinespace[2pt]
$w(t,k)$ & Rate of secondary transmission at $t$ from an infection at $t-k$ \\
$g$ & Generation interval distribution (stationary; $g_k^{(t)}$ in general) \\
\midrule
\multicolumn{2}{@{}l}{\textsc{Compartmental structure}} \\
\addlinespace[2pt]
$N$ & Population size \\
$S_t, E_t, I_t$ & Susceptible, Exposed, Infectious prevalence at time $t$ \\
$E_t^*, I_t^*, R_t^*$ & Incidence into $E$, $I$, $R$ compartments at $t$ \\
$\beta_t$ & Effective contact rate \\
$\sigma,\ \gamma$ & Exit rates from $E$ and $I$ in basic compartmental model  \\
$ \mu^\text{EI}, \mu^\text{IR}$ & Average incubation and infectious periods; $\sigma^{-1}$ and $\gamma^{-1}$ in the basic model.\\
$W^\text{EI}, W^\text{IR}$ & Random variables for latent and infectious periods \\
$\pi^\text{EI}, \pi^\text{IR}, \pi^\text{EY}$ & Delay distributions: $E{\to}I$, $I{\to}R$, $E{\to}Y$ \\
\bottomrule
\end{tabularx}
\end{table}

\section{Background}\label{sec:background}

Literature on reproduction numbers dates back decades, especially on the basic reproduction number
\citep{dublin1925true, kermack1927contribution, macdonald1952equilibrium, diekmann1990definition, anderson1991infectious, heesterbeek2002brief}.
This metric, often referred to as $R_0$, is the expected number of secondary infections produced by one infectious individual in a fully susceptible population. 
It satisfies a sharp threshold: epidemics grow when $R_0 > 1$ and die out when $R_0 < 1$. 
Other important heuristics can be approximated from $R_0$, including short- and long-term forecasts, final size analysis, and the herd immunity threshold.

Since the assumption of full susceptibility rarely holds beyond an outbreak's earliest moments, attention often shifts to the effective reproduction number $R_t$.
Colloquially, $R_t$ is the expected secondary infections from an individual infected at time $t$ given the current population state.
\citet{fraser2007} operationalizes this definition with two related formalizations, which are conducive to estimation. 
While his ideas were originally presented in continuous time, we define them discretely for simplicity of explanation.

Let $x_t$ be the number of individuals infected at $t$. 
This indexes on the date the infection transmits, as opposed to when the person becomes infectious (capable of transmitting to others).
Further define $w(t,k)$ as the rate of secondary transmissions for a primary infection of age $k$ at time $t$:
$$w(t,k)=\mathbb{E}[\text{\# transmissions at $t$}\given\text{individual infected at $t-k$}].$$
This rate defines the mean process that underlies the transmission model:
\begin{equation}\label{eq:fraser_alt}
    \mathbb{E}[x_t\given x_{< t}] = \sum_{k> 0} x_{t-k} \cdot w(t,k),
\end{equation}
where in general $x_{<t} = (x_0,\dots,x_{t-1})$.

Some works define $w(t,k)$ to satisfy an exact transmission model, dropping the expectation in \eqref{eq:fraser_alt} \citep{fraser2007, abbott2020estimating}.
However, when infection counts $x_t$ are random, this empirical $w(t,k)$ must shift accordingly, implying $R_t$ is nonsmooth over time.
To avoid this, we treat the rate $w(t,k)$ as an expectation, per works like \citet{cori2013new}.

Before presenting $R_t$ definitions, we introduce the generation interval distribution.
The generation interval is the age of a primary infection at the time that it transmits to a second individual.
In discrete time, $g_k^{(t)}$ is the probability that a secondary transmission at time $t$ came from a primary infection $k$ timesteps earlier, given that transmission occurs at all.
\citet{fraser2007} formalizes $g_k^{(t)}$ as the $k$-step transmission rate normalized by its total mass:
\begin{align}\label{eq:gi_basic}
    g_k^{(t)}&=\frac{w(t,k)}{\sum_j w(t,j)}\\
    &=\mathbb{P}(\text{infector was infected $k$ steps ago}\given\text{transmission occurs at $t$}).
\end{align}
In general, the generation interval may vary with $t$ as epidemic conditions change. However, it is standard practice in the renewal equation literature to treat it as stationary \citep{cori2013new, fraser2007}. We adopt this assumption henceforth, writing $g_k$ in place of $g_k^{(t)}$.
\subsection{Case and Instantaneous \texorpdfstring{$R_t$}{Rt}}


The case reproduction number (also called cohort $R_t$) is defined as 
\begin{align}
    R_t^C &= \sum_k w(t+k,k) \\
    &\approx \text{Average secondary infections from cohort of primary infections at $t$.}
\end{align}

Case $R_t$ is a forward-looking quantity, analogous to how we define severity rates in Part \ref{pt1}. 
It reports conditions at time $t$ by how many secondary events (transmissions) they generate in the future.

While straightforward, case $R_t$ may be more descriptive of the near future than the present \citep{gostic2020practical}.
To analyze an intervention, $R_t^C$ has the undesirable property of changing before it occurs, since it affects individuals who were already infected.
This complicates interpretation, as it erroneously suggests transmission was already falling before the imposition of the intervention.\

While we study case $R_t$ further in \cref{apx:case_rt}, this chapter focuses more on the other definition in \citet{fraser2007}.

Case $R_t$ aggregates $k$-step transmission rates from $t$ onwards.
As an alternative, instantaneous $R_t$ shifts these rates backwards in time:
\begin{align}\label{eq:inst}
    R_t^I &= \sum_k w(t,k) \\
    &\approx \text{Average secondary infections at $t$ from a recent primary infection.}
\end{align}
Unlike case $R_t$, this rate is inherently backward-looking, in that it does not consider transmission after $t$. 
This makes it more suitable for understanding present conditions, such as the effect of interventions. 
Note that case and instantaneous $R_t$ are the same if \mbox{$w(t+k,k)=w(t,k)\;\forall k$}, meaning if $k$-step transmission rates are equivalent before and after $t$.
Thus, we can also interpret instantaneous $R_t$ as being the average number of secondary transmissions per primary infection at time $t$ assuming conditions remain the same into the near future.

Plugging \eqref{eq:inst} into the definition of generation interval \eqref{eq:gi_basic} yields \mbox{$g_k=w(t,k)/R_t^I$}.
Substituting this relation into the transmission model \eqref{eq:fraser_alt} produces the renewal equation:
\begin{equation}\label{eq:renewal}
    \mathbb{E}[x_t\given x_{< t}] = R_t^I \sum_{k> 0} x_{t-k} \cdot g_k,
\end{equation}
The renewal equation may be rearranged to solve for $R_t^I$, with $\mathbb{E}[x_t\given x_{< t}] $ in the numerator. 

Many works assume a stochastic noise model for infections $x_t$.
For example, \citet{cori2013new} assumes $x_t$ is drawn from a Poisson distribution whose rate is parameterized by $R_t^I$.
Their Bayesian method, EpiEstim, learns a posterior on $R_t$ with this likelihood and a gamma prior.
They propose using case reports as a surrogate for infections $x_t$, which are latent (unobserved).

\subsection{Mechanistic \texorpdfstring{$R_t$}{Rt}}
\label{sec:compartmental-bg}

A third definition of $R_t$ may arise from compartmental epidemic models
\citep{anderson1991infectious}.
These partition a population of size $N$ into
compartments representing disease status, with transitions between
compartments governed by specified rates.
In the original SIR model, susceptible individuals move to the infectious ($I$) compartment upon transmission, before recovering ($R$) after some duration \citep{kermack1927contribution}.
The $R$ compartment is also sometimes called Removal to include death.
We focus on the SEIR model, which extends this framework
with an Exposed compartment ($E$) for individuals who are infected
but not yet capable of transmitting the disease to others.
This compartment is biologically meaningful for pathogens like influenza and COVID-19 with non-negligible
latent periods.

The standard SEIR model is a deterministic, continuous-time process governed by a set of differential equations.
These equations contain three parameters, which are conventionally treated as stationary.
Most important for our purposes is $\beta$, the contact rate.
$\beta$ is the average number of contacts per unit time for an infectious individual, multiplied by the probability that a susceptible-infectious contact transmits the disease.
In addition, exposed individuals progress to infectious at rate $\sigma$, and infectious individuals recover at rate $\gamma$.
Implicitly, these assume waiting times are exponentially distributed, with mean latent and infectious periods of $1/\sigma$ and $1/\gamma$.
We will later relax this assumption.
Note that per-contact transmissibility is constant conditional on being in the $I$ state; the model does not allow infectiousness to vary with time since infection onset.

The basic reproduction number $R_0$ supposes the entire population is susceptible.
In a compartmental model, this is true at the onset of the epidemic, $t=0$.
There, $\beta$ is the daily number of transmissions per susceptible. 
$R_0$ is the lifetime number of secondary transmissions, so its formula merely scales $\beta$ by the average duration of infectiousness.
In simple compartmental models this is $1/\gamma$, so \mbox{$R_0 = \beta/\gamma$}.

In contrast, the effective reproduction number describes conditions at time $t$---particularly, the fact that not all individuals are susceptible.
Prevalence $S_t$ decays over the course of the epidemic, from an initial value around $N$.
Therefore, mechanistic $R_t$ in its simplest form is
\begin{equation}\label{eq:mech-rt-basic}
    R_t^M = \frac{\beta S_t }{\gamma N}.
\end{equation}

Mechanistic $R_t$ is the average number of secondary transmissions over the lifetime of an \textit{infectious} individual at time $t$, assuming conditions remain the same. 
(In reality, conditions will inevitably change as susceptibles $S_t$ deplete, but this is a small effect.) 
This notion of active infectiousness differentiates $R_t^M$ from the $R_t^C$ and $R_t^I$, which marked individuals based on their time of infection. 
We compare them in greater depth in the next section.

This definition of mechanistic $R_t$ is applicable to essentially any compartmental model.
We next present several ways in which the basic model presented above may be modified for practical purposes. 
Later, we use these modifications for $R_t$ estimation.

One important variation allows the rate parameters to be nonstationary. 
This is particularly useful for $\beta$, as behavior and policy dynamics affect the contact rate. 
We therefore depart from convention and allow $\beta_t$ to vary with time.
We call $\beta_t$ the effective contact rate, akin to the effective reproduction number $R_t$.
Without losing generality, we still assume the parameters that underlie waiting times are constant, since they are primarily biological.

Compartmental models can also be expressed in discrete-time.
Epidemic surveillance data typically arrives with daily resolution, so 
the continuous-time model is often approximated with an Euler step.
This first-order approximation uses day-to-day changes in each compartment's prevalence in place of its derivative.
These changes can be attributed to both inflow and outflow; 
we denote inflow ``incidence'' with asterisks.
For example, susceptible depletion can be written in terms of incident exposures $E_t^*$:
\begin{equation}
S_{t+1}=S_t - E_t^* \approx S_t + \frac{dS}{dt},\quad\text{where}\quad E_t^* \approx -\frac{dS}{dt} = \beta_t \frac{S_t I_t}{N}.
\end{equation}
Note that $E_t^*$ corresponds to $x_t$ in the prior section.

Mechanistic $R_t$ can also be revised for non-exponential waiting times between compartments.
This canonical assumption produces convenient differential equations: $I_t^* = \sigma E_t$ for infectious incidence and $R_t^* = \gamma I_t$ for recoveries.
However, the delay distributions they imply may have unrealistic variance, often too high.
Instead, define the exposure-to-infectious distribution $\pi^\text{EI}$ such that \mbox{$I_t^* = \sum_{s<t} E_s^* \pi_{t-s}^\text{EI}$} in a deterministic model;
Similarly define $\pi^\text{IR}$, the infectious-to-recovery distribution that relates \mbox{$R_t^* = \sum_{s<t} I_s^* \pi_{t-s}^\text{IR}$}.
This distribution has mean $\mu^\text{IR}$, which was $1/\gamma$ in the exponential case.
Combining $\mu^\text{IR}$ with $\beta_t$ yields our generalized definition of mechanistic $R_t$:
\begin{equation}\label{eq:mech-rt-cp}
    R_t^M = \beta_t\cdot \frac{S_t \mu^\text{IR}}{N}.
\end{equation}

Compartmental models may be stochastic, not deterministic.
Poisson models are the standard choice \citep{andersson2000stochastic},
as modeling exponential transitions as random implies a Poisson process for the number of events per unit time.
We impose this in our framework, though other noise models are possible.
The binomial distribution is a natural alternative, per the Reed-Frost model \citep{abbey1952examination}. 
However, Le Cam's theorem \citep{lecam1960approximation} shows that this can be well-approximated by a Poisson, when the population is large and per-timestep transition probabilities are small. 

Putting these pieces together---time-varying $\beta$, discrete-time approximation, arbitrary delay distributions, and stochastic noise---we arrive at the following mechanistic model:
\begin{subequations}\label{eq:seir-model}
\begin{alignat}{2}
    S_{t+1} &= S_t - E_t^*,         &\qquad E_t^* &\sim \text{Pois}\left(\beta_t \frac{S_t I_t}{N}\right) \label{eq:seir-s}\\
    E_{t+1} &= E_t + E_t^* - I_t^*, &\qquad I_t^* &\sim \text{Pois}\left(\sum_{s<t} E_s^* \pi_{t-s}^\text{EI}\right) \label{eq:seir-e}\\
    I_{t+1} &= I_t + I_t^* - R_t^*, &\qquad R_t^* &\sim \text{Pois}\left(\sum_{s<t} I_s^* \pi_{t-s}^\text{IR}\right) \label{eq:seir-i}\\
    R_{t+1} &= R_t + R_t^*.         &              & \label{eq:seir-r}
\end{alignat}
\end{subequations}
Henceforth, we use \cref{eq:seir-model} when discussing mechanistic $R_t$ and its estimation.

Finally, compartments can be added or dropped without changing the meaning of $R_t^M$.
The crucial piece is the SIR backbone, which encompasses $\beta_t$ and $\mu^\text{IR}$.
Nevertheless, we introduce mechanistic $R_t$ in terms of SEIR models because they are more justifiable than SIR.
A common extension is the SEIRD model, adding a fifth compartment for death ($D$).

\section{Equivalence of Instantaneous and Mechanistic \texorpdfstring{$R_t$}{Rt}}\label{sec:equivalence}


We now study the relationship between the definitions of $R_t$ introduced in \cref{sec:background}.
Our main result is as follows.

\begin{proposition}\label{prop:equivalence}
    Assume a population of size $N$ with homogenous mixing, meaning all contacts are equally likely to come into contact with one another. 
    Then instantaneous $R_t$ and mechanistic $R_t$ are equivalent: $R_t^I=R_t^M$.
\end{proposition}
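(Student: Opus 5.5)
We compute the kernel $w(t,k)$ directly from the compartmental model \eqref{eq:seir-model} and then sum over $k$, following the definition \eqref{eq:inst} of $R_t^I=\sum_k w(t,k)$. We identify $x_t$ with $E_t^*$, as noted in \cref{sec:compartmental-bg}. Homogeneous mixing lets us split the expected incidence $\E[E_t^*\given \cdot]=\beta_t S_t I_t/N$ evenly across all currently infectious individuals. Each infectious person present at time $t$ therefore generates new exposures at the common rate $\beta_t S_t/N$, regardless of who they are or when they were infected. This per-capita rate is the crux of the argument: it is only well defined when every infectious individual contributes identically.

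Next we express $I_t$ through the infection ages of the people in it. An individual exposed at $t-k$ is in the $I$ compartment at time $t$ with probability
\[
q_k=\Pprob(W^\text{EI}< k \le W^\text{EI}+W^\text{IR}),
\]
where $W^\text{EI}\sim\pi^\text{EI}$ and $W^\text{IR}\sim\pi^\text{IR}$ are independent. Taking expectations conditional on past exposures, the prevalence is $I_t\approx\sum_{k>0}E_{t-k}^*\,q_k$; this is exact in the deterministic version, and in expectation for the Poisson model. The expected transmissions at $t$ from a single individual infected at $t-k$ are then
\[
w(t,k)=\frac{\beta_t S_t}{N}\,q_k .
\]
Substituting this into \eqref{eq:fraser_alt} recovers $\E[E_t^*]=\beta_t S_t I_t/N$, which confirms the kernel is consistent with the model.

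Summing over $k$ gives $R_t^I=\frac{\beta_t S_t}{N}\sum_{k>0}q_k$. Since $\sum_{k>0}\mathds{1}\{W^\text{EI}<k\le W^\text{EI}+W^\text{IR}\}=W^\text{IR}$, taking expectations yields $\sum_k q_k=\E[W^\text{IR}]=\mu^\text{IR}$. Hence $R_t^I=\beta_t S_t\mu^\text{IR}/N=R_t^M$ by \eqref{eq:mech-rt-cp}. The exponential special case reduces to $\mu^\text{IR}=1/\gamma$ and recovers \eqref{eq:mech-rt-basic}.

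We expect the main obstacle to be justifying $w(t,k)$ rigorously in the stochastic, discrete-time model. The issues are the timing conventions (whether a person exposed at $t-k$ and recovering at time $t$ counts as infectious at $t$, i.e.\ $<$ versus $\le$ in $q_k$), and the fact that $I_t$ is random rather than equal to its conditional mean. We would handle both by defining $w(t,k)$ as a conditional expectation given the history, using the tower property, and fixing the conventions so that the telescoping identity $\sum_k q_k=\mu^\text{IR}$ holds exactly. As a byproduct, $g_k=q_k/\mu^\text{IR}$, which is the SEIR-implied generation interval.
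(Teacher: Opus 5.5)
Your argument is correct and follows the paper's route: write $w(t,k)$ as the per-infectious transmission rate $\beta_t S_t/N$ times the probability of being infectious at infection age $k$, then sum that probability to $\mu^{\text{IR}}$. The differences are minor. The paper derives the pairwise transmission probability $\beta_t/N$ from a contact model (\cref{lem:homo_to_beta}), whereas you read the per-capita rate off the mass-action incidence. Your pathwise identity $\sum_{k>0}\mathds{1}\{W^{\text{EI}}<k\le W^{\text{EI}}+W^{\text{IR}}\}=W^{\text{IR}}$ replaces the interchange of sums in \cref{lem:infectious_kernel}. Finally, your timing convention gives $q_k=\zeta^{\text{EI}}_{k-1}$: this leaves $R_t^I$ unchanged, but your byproduct generation interval (and its mean) sits one day later than the paper's.
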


Recall $R_t^I = \sum_{k> 0} w(t,k)$ depends on how the $k$-step transmission rate $w(t,k)$ is defined. 
We use the mean interpretation from \eqref{eq:fraser_alt}, as opposed to a data-driven notion that drops the expectation. 
Otherwise, the $R_t$ equivalence would only hold in expectation.

\begin{proof}

Expanding on the definition of $w(t,k)$,
\begin{align}
w(t,k) &= \mathbb{E}\left[\text{\# secondary transmissions at $t$}\mid \text{primary infection at $t-k$}\right]\\
    &=\mathbb{E}\left[\frac{1}{E_{t-k}^*}\sum_{i=1}^{E_{t-k}^*} \text{\# transmissions at $t$ of $i$'th infection at $t-k$}\right] \\
    &= \mathbb{E}\left[\frac{1}{E_{t-k}^*}\sum_{i=1}^{E_{t-k}^*} \sum_{j=1}^{S_{t}}\mathbf{1}\{\text{infection $i$ transmits to susceptible } j \text{ at } t\}\right]\\
    &= S_t \cdot \mathbb{P}(\text{$i$'th infection at $t-k$ transmits to $j$'th susceptible at $t$})
\end{align}

Three conditions must be met for transmission to occur. 
First, the $i$'th infection at $t-k$  must still be infectious by $t$. 
Since delay distributions are assumed stationary, this is equivalent to an infection being infectious $k$ days later.
Secondly, they must come into contact with the $j$'th susceptible.
Finally, secondary transmission must actually occur, as not all susceptible-infectious contacts result in infection.
Formalizing this and summing over all lags $k$  yields
\begin{align}
    R_t^I &= S_t \sum_{k> 0} {\mathbb{P}\!\left(\text{infectious at } t \mid \text{new infection at } t-k\right)} \\
    &\hspace{70pt}\cdot {\mathbb{P}(i \text{ contacts } j \text{ at } t)} \cdot {\mathbb{P} 
    (\text{transmission} \mid S\text{-}I \text{ contact})}\\
     &= S_t \cdot {\mathbb{P}(i \text{ contacts } j \text{ at } t)} \cdot {\mathbb{P} 
    (\text{transmission} \mid S\text{-}I \text{ contact})}\\
    &\hspace{70pt}\cdot\sum_{k> 0} {\mathbb{P}\!\left(\text{infectious $k$ days after infection}\right)}.
\end{align}

Next, we leverage the following two lemmas, proven below.
\begin{lemma}\label{lem:infectious_kernel}
For any delay distributions $\pi^\text{EI}$ and $\pi^\text{IR}$,
    $$\sum_{k> 0} {\mathbb{P}\!\left(\text{infectious $k$ days after infection}\right)} = \mu^\text{IR}.$$
\end{lemma}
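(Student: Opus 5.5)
The plan is to write the probability of being infectious $k$ days after infection as a difference of two cumulative distribution functions, and then sum over $k$ using the tail-sum formula for expectations. Let $W^\text{EI}$ be the latent period (time from exposure to onset of infectiousness) with distribution $\pi^\text{EI}$. Let $W^\text{IR}$ be the infectious period with distribution $\pi^\text{IR}$, and take the two to be independent. An individual infected at time $0$ becomes infectious at time $W^\text{EI}$ and recovers at time $W^\text{EI}+W^\text{IR}$. With the convention implied by the discrete-time model \eqref{eq:seir-model}, the individual is infectious at age $k$ exactly when
\[
W^\text{EI} \le k < W^\text{EI} + W^\text{IR}.
\]
This is the quantity $\mathbb{P}(\text{infectious $k$ days after infection})$ appearing in the proof of \cref{prop:equivalence}.

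The first step is the decomposition. The event $\{W^\text{EI} + W^\text{IR} \le k\}$ is contained in the event $\{W^\text{EI} \le k\}$, since $W^\text{IR}\ge 0$. Therefore
\[
\mathbb{P}(\text{infectious at } k) = \mathbb{P}(W^\text{EI}\le k) - \mathbb{P}(W^\text{EI}+W^\text{IR}\le k) = \mathbb{P}(W^\text{EI}+W^\text{IR} > k) - \mathbb{P}(W^\text{EI} > k).
\]

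The second step sums this identity over $k \ge 0$. For a nonnegative integer-valued random variable $Z$ we have $\sum_{k\ge 0}\mathbb{P}(Z>k)=\mathbb{E}[Z]$. Applying this to each term gives
\[
\sum_{k} \mathbb{P}(\text{infectious at } k) = \mathbb{E}[W^\text{EI}+W^\text{IR}] - \mathbb{E}[W^\text{EI}] = \mu^\text{IR}.
\]
The result holds for arbitrary $\pi^\text{EI}$. Independence of the two periods is not actually needed, only linearity of expectation. A more direct route, which I would mention as a check, is to write
\[
\sum_k \mathbb{P}(\text{infectious at } k) = \mathbb{E}\Big[\sum_k \mathbf{1}\{W^\text{EI}\le k< W^\text{EI}+W^\text{IR}\}\Big] = \mathbb{E}[W^\text{IR}],
\]
using Tonelli's theorem to swap the sum and the expectation. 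The indicator sum simply counts the days spent in the $I$ compartment.

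The main obstacle is bookkeeping at the boundaries, not the analysis. The lemma's sum runs over $k>0$, while the tail-sum formula runs over $k\ge 0$. I would reconcile these using the model's convention that $I_t^*$ depends only on $E_s^*$ for $s<t$. This forces $\pi^\text{EI}_0=0$, so $W^\text{EI}\ge 1$ and the $k=0$ term vanishes. I also need the inclusive/exclusive endpoint convention to match how $I_{t+1}=I_t+I_t^*-R_t^*$ counts an individual on their entry and exit days. I would fix this convention explicitly before the computation, so that the number of infectious days equals $W^\text{IR}$ exactly and not $W^\text{IR}\pm 1$. Finally, to justify exchanging sums and expectations, I would assume finite means, which is implicit in $\mu^\text{IR}$ being defined.
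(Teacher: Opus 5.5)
Your proof is correct, but it takes a different route from the paper's. The paper conditions on the latent period. Using independence, it writes $\zeta^\text{EI}_k=\sum_{j=1}^{k}\pi^\text{EI}_j\,\mathbb{P}(W^\text{IR}>k-j)$. It then swaps the order of summation, substitutes $a=k-j$, and identifies $\sum_{a\ge 0}\mathbb{P}(W^\text{IR}>a)=\mu^\text{IR}$, with $\sum_j\pi^\text{EI}_j=1$ absorbing the latent period.

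You never condition. Your Tonelli argument rests on the pointwise identity $\sum_k\mathbf{1}\{W^\text{EI}\le k<W^\text{EI}+W^\text{IR}\}=W^\text{IR}$. It is the cleanest version, and it is more general than the paper's: it needs neither independence nor any moment condition, since both sides may equal $+\infty$. I would make it the main proof rather than a check.

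Your difference-of-CDFs computation needs more. It splits the series into $\mathbb{E}[W^\text{EI}+W^\text{IR}]-\mathbb{E}[W^\text{EI}]$, which requires $\mathbb{E}[W^\text{EI}]<\infty$. That is not implied by $\mu^\text{IR}$ being defined, and the lemma is stated for arbitrary $\pi^\text{EI}$.

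What the paper's route buys is the explicit convolution form of $\zeta^\text{EI}$. It reuses this form to identify $g_k=\zeta^\text{EI}_k/\mu^\text{IR}$ and to compute the mean generation time.

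The endpoint bookkeeping you flag is immaterial for this lemma. Your interval $W^\text{EI}\le k<W^\text{EI}+W^\text{IR}$ matches the paper's kernel. The literal update $I_{t+1}=I_t+I_t^*-R_t^*$ would instead place an individual in $I$ at ages $W^\text{EI}<k\le W^\text{EI}+W^\text{IR}$. Either half-open interval contains exactly $W^\text{IR}$ integers. The choice only shifts $\zeta^\text{EI}$ by one day; for example, the $-\tfrac12$ in $\mathbb{E}[G]$ would become $+\tfrac12$.
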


\begin{lemma}\label{lem:homo_to_beta}
    Assuming homogeneous mixing, 
    $${\mathbb{P}(i \text{ contacts } j \text{ at } t)} \cdot {\mathbb{P} 
    (\text{transmission} \mid S\text{-}I \text{ contact})} = \beta_t / N.$$
\end{lemma}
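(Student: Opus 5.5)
The approach is to identify the per-pair transmission probability $q_t := \mathbb{P}(i \text{ contacts } j \text{ at } t)\cdot\mathbb{P}(\text{transmission}\mid S\text{-}I\text{ contact})$ by matching two expressions for the expected number of new exposures at time $t$. The first is the pairwise count implied by homogeneous mixing. The second is the mean of $E_t^*$ in the mechanistic model \eqref{eq:seir-s}.

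First, I will use homogeneous mixing to argue that $q_t$ does not depend on the particular infectious individual $i$ or susceptible $j$. Every individual is equally likely to contact every other, and by the SEIR assumption per-contact transmissibility is constant while in the $I$ state, so $q_t$ is a common value for all $S$--$I$ pairs. Summing the indicators $\mathbf{1}\{i \text{ transmits to } j \text{ at } t\}$ over the $I_t$ infectious and $S_t$ susceptible individuals and taking expectations then gives
\begin{equation}
\mathbb{E}[\text{\# transmissions at } t \mid S_t, I_t] = S_t I_t\, q_t .
\end{equation}
On the other hand, \eqref{eq:seir-s} states $\mathbb{E}[E_t^*\mid S_t,I_t] = \beta_t S_t I_t/N$. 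Equating the two expressions and cancelling $S_t I_t$ (assuming both are positive; otherwise the claim is vacuous) yields $q_t = \beta_t/N$.

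As a consistency check, I will also derive the same identity directly from the meaning of $\beta_t$. Let $c_t$ be the contact rate of an infectious individual and $\tau$ the per-contact transmission probability, so that $\beta_t = c_t\tau$. Under homogeneous mixing these $c_t$ contacts are spread uniformly over the $N$ members of the population. Hence a specific $j$ is contacted with probability $c_t/N$, and multiplying by $\tau$ gives $\beta_t/N$.

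The main obstacle is that the pairwise count tallies transmission \emph{events}, whereas $E_t^*$ counts newly exposed \emph{individuals}. A susceptible who is contacted by several infectious people in the same step would be counted more than once in the first expression. I plan to resolve this by noting that $q_t=\beta_t/N$ is $O(1/N)$, so the probability of multiple infectious contacts per susceptible per step is $O(I_t^2/N^2)$. This mismatch is therefore negligible at the level of approximation already adopted: it is exactly the regime in which the Poisson law for $E_t^*$ is justified via Le Cam's theorem. Alternatively, I would interpret $E_t^*$ as counting transmission events, in which case the identity is exact.
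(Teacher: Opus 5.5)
Your proposal is correct, but your main argument takes a different route from the paper's. The paper never uses the SEIR dynamics. It starts from the definition $\beta_t=\mu_t\cdot\mathbb{P}(\text{transmission}\mid S\text{-}I\text{ contact at }t)$, where $\mu_t=\mathbb{E}[C_t]$ is the mean number of contacts. Homogeneous mixing gives $\mathbb{P}(i\text{ contacts }j\mid C_t=n)=n/N$, and averaging over $C_t$ by the law of total probability yields $\mu_t/N$. That is essentially your ``consistency check'': you fix the contact count at $c_t$, and for a random count the linearity of $n/N$ in $n$ gives the same answer.

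Your primary argument instead identifies the product $q_t$ by matching the pairwise expected count $S_tI_tq_t$ to the mass-action mean $\beta_tS_tI_t/N$ of $E_t^*$ in \eqref{eq:seir-s}. This buys a direct tie to the stochastic model the chapter actually uses: it shows the lemma is exactly the condition for the individual-level and compartmental descriptions to agree. The costs are threefold.
\begin{itemize}
\item It determines only the product, not the separate contact factor $\mu_t/N$.
\item It is exact only if $E_t^*$ counts transmission events. Otherwise you carry an $O(I_t\beta_t/N)$ relative error from susceptibles contacted more than once, which you handle correctly.
\item The paper defines $\beta_t$ microscopically, as contacts times transmissibility. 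Matching against the mass-action incidence therefore presupposes that this incidence is consistent with that definition, which is precisely what the lemma asserts.
\end{itemize}
So under the paper's definition, your second argument is the actual proof and your first is a consistency check. The first stands on its own only if $\beta_t$ is regarded as defined by \eqref{eq:seir-s}.
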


Applying these lemmas, $R_t^I = S_t \cdot  \mu^\text{IR} \cdot \beta_t/N$.  
By definition, this is $R_t^M$  \eqref{eq:mech-rt-cp}, completing the proof. 
Again, $\mu^\text{IR} = \gamma^{-1}$ in the special case that $W^\text{IR} \sim \mathrm{Exp}(\gamma)$ or $\mathrm{Geom}(\gamma)$.
\cref{sec:compartmental-generation} provides further connections between instantaneous and mechanistic $R_t$.  

\end{proof}

\subsection{Proof of \texorpdfstring{\cref{lem:infectious_kernel}}{Lemma \ref*{lem:infectious_kernel}}}

We want to show that the probability of being infectious $k$ days after infection sums to $\mu^\text{IR}$, the mean duration of infectiousness.
We define this probability as
\begin{equation}\label{eq:infectious-kernel}
\zeta^\text{EI}_k := \Pprob(\text{infectious $k$ days after infection}).
\end{equation}
In an SIR model, this reduces to the survival function of $W^\text{IR}$. 
For SEIR models, a person is infectious $k$ days after infection if their latent period $W^\text{EI}$ was some $j\in\{1,\dots,k\}$ and their infectious duration $W^\text{IR}$ has not yet elapsed at age $k$. 
By the law of total probability and conditional independence of $W^\text{EI}$ and $W^\text{IR}$,
    \begin{align}
    \zeta^\text{EI}_k 
    &= \sum_{j=1}^{k} \Pprob(W^\text{EI}=j) \cdot \Pprob(W^\text{IR} > k-j)\\
    &= \sum_{j=1}^{k} \pi_j^\text{EI} \cdot \sum_{\ell=k-j+1} \pi_\ell^\text{IR}
\end{align}

To show that $\sum_k \zeta^\text{EI}_k=\mu^\text{IR}$, swap the order of summation:

\begin{align*}
    \sum_{k \geq 1} \zeta^\text{EI}_k &= \sum_{k \geq 1} \sum_{j=1}^{k} \Pprob(W^\text{EI} = j) \cdot \Pprob(W^\text{IR} > k-j)\\ 
    &= \sum_{j \geq 1} \Pprob(W^\text{EI} = j) \sum_{k \geq j} \Pprob(W^\text{IR} > k-j).
\end{align*}
The inner sum runs over $k = j, j+1, \dots$; substitute $a = k - j $ so that $a = 0,1, 2, \dots$:
\[
\sum_{k \geq j} \Pprob(W^\text{IR} > k-j) = \sum_{a \geq 0} \Pprob(W^\text{IR} > a).
\]
This is independent of $j$, so it factors out of the outer sum:
\[
\sum_{k \geq 1} \zeta^\text{EI}_k = \sum_{j \geq 1} \Pprob(W^\text{EI} = j) \cdot \sum_{a \geq 0} \Pprob(W^\text{IR} > a) = 1\cdot \mu^\text{IR}.
\]
The last line invokes the total mass of a distribution and the fact that the survival function of a non-negative random variable sums to its mean. 

\subsection{Proof of \texorpdfstring{\cref{lem:homo_to_beta}}{Lemma \ref*{lem:homo_to_beta}}}


Define $C_t$ as a random variable for the average number of contacts at time $t$.
Define its mean:
\[
\mu_t = \mathbb{E}[C_t] = \sum_{n=0}^N n \cdot P(C_t = n).
\]
Under the assumption of homogeneous mixing, all contacts are equally likely, and the fact that $i$ is infectious and $j$ is irrelevant. 
We decompose the contact probability using the law of total probability:
\begin{align}
     \mathbb{P}(\text{Infectious $i$ contacts Susceptible $j$ at $t$}) &=  \mathbb{P}(\text{$i$ contacts $j$ at $t$})\\
     &= \sum_{n=0}^N \mathbb{P}(\text{$i$ contacts $j$ at $t$}\given C_t=n)\Pprob(C_t=n)\\
     &= \sum_{n=0}^N \frac{n}{N}\, P(C_t = n) = \frac{\mu_t}{N}.
\end{align}

Finally, recall the effective contact rate is defined as  \mbox{$\beta_t = \mu_t \cdot \Pprob(\text{transmission} \mid S\text{-}I \text{ contact at } t)$}. Hence,
\begin{align}
     &\mathbb{P}(\text{Infectious $i$ contacts Susceptible $j$ at $t$})\cdot
\mathbb{P}(\text{transmission} \mid S\text{-}I \text{ contact}) \\&\hspace{50pt}= \mu_t/N \cdot \mathbb{P}(\text{transmission} \mid S\text{-}I \text{ contact at }t) \\
&\hspace{50pt}= \frac{\beta_t}{N}.
\end{align}


\section{The SEIR Generation Interval}\label{sec:compartmental-generation}

The equivalence result established that instantaneous and mechanistic $R_t$ agree under homogeneous mixing.
We now ask a complementary question: what generation interval distribution does the SEIR model imply?
Deriving this closes the loop between the two frameworks, expressing the renewal equation entirely in compartmental terms.
\begin{proposition}\label{prop:compartmental-generation}
    Recall the infectious kernel $\zeta^\text{EI}_k = \Pprob(\text{infectious $k$ days after exposure})$ \eqref{eq:infectious-kernel} for a discrete-time compartmental model. The generation interval distribution, which satisfies the renewal equation, has weights
    $$g_k := \Pprob(\text{transmission $k$ days after exposure $\vert$ transmission occurs}) = \zeta^\text{EI}_k/{\mu^{IR}}.$$
    Moreover, $\mu^\text{IR}$ is the total mass of the kernel $\zeta^\text{EI}$, so $g_k = \frac{\zeta_k^\text{EI}}{\sum_j \zeta_j^\text{EI}}.$
    Lastly, the mean generation time (proved in \cref{apx:mean-generation}) is
    \[
        \mathbb{E}[G] = \mu^\text{EI} + \frac{\mu^\text{IR}}{2} - \frac{1}{2} + \frac{(\sigma^\text{IR})^2}{2\, \mu^\text{IR}}.
    \]
\end{proposition}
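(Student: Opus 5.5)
The plan is to read the generation interval directly off the factorization of $w(t,k)$ obtained in the proof of \cref{prop:equivalence}, and then compute its mean by the same swap-of-summation device used for \cref{lem:infectious_kernel}.

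\textbf{Step 1: the weights $g_k$.} That proof (before summing over $k$) shows that, under homogeneous mixing,
\[
w(t,k) = S_t \cdot \zeta^\text{EI}_k \cdot \mathbb{P}(i \text{ contacts } j \text{ at } t)\cdot \mathbb{P}(\text{transmission}\mid S\text{-}I\text{ contact}),
\]
and by \cref{lem:homo_to_beta} this equals $S_t\,\zeta^\text{EI}_k\,\beta_t/N$. I will substitute this into the definition \eqref{eq:gi_basic}, $g_k = w(t,k)/\sum_j w(t,j)$. The factor $S_t\beta_t/N$ does not depend on $k$, so it cancels, giving $g_k = \zeta^\text{EI}_k/\sum_j \zeta^\text{EI}_j$. \cref{lem:infectious_kernel} identifies the denominator as $\mu^\text{IR}$, which gives both displayed forms of $g_k$. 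Since $g$ no longer depends on $t$, it is stationary. Because $w(t,k)=R_t^M g_k = R_t^I g_k$, this $g$ satisfies the renewal equation \eqref{eq:renewal}.

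\textbf{Step 2: the mean.} I will write
\[
\mathbb{E}[G] = \frac{1}{\mu^\text{IR}}\sum_{k\ge1} k\,\zeta^\text{EI}_k, \qquad \zeta^\text{EI}_k = \sum_{j=1}^k \pi^\text{EI}_j\,\Pprob(W^\text{IR}>k-j).
\]
Swapping the sums and setting $a=k-j$, as in \cref{lem:infectious_kernel}, gives
\[
\sum_k k\,\zeta^\text{EI}_k = \sum_{j\ge1}\pi^\text{EI}_j \sum_{a\ge0}(j+a)\,\Pprob(W^\text{IR}>a).
\]
The $j$-part contributes $\mu^\text{EI}\mu^\text{IR}$, using $\sum_{a\ge0}\Pprob(W^\text{IR}>a)=\mu^\text{IR}$ and $\sum_j j\,\pi^\text{EI}_j=\mu^\text{EI}$. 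The $a$-part contributes $\sum_{a\ge0} a\,\Pprob(W^\text{IR}>a)$, which is free of $j$ and is multiplied by total mass $1$.

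\textbf{Step 3: the tail-sum identity (main obstacle).} The key step is to show, for an integer-valued $W\ge0$,
\[
\sum_{a\ge0} a\,\Pprob(W>a) = \mathbb{E}\!\left[\tbinom{W}{2}\right] = \tfrac12\left(\mathbb{E}[W^2]-\mathbb{E}[W]\right).
\]
I will prove this by writing $\Pprob(W>a)=\sum_{w>a}\Pprob(W=w)$, swapping the sums, and using $\sum_{a=0}^{w-1}a = w(w-1)/2$. Substituting $\mathbb{E}[W^2]=(\sigma^\text{IR})^2+(\mu^\text{IR})^2$ and dividing everything by $\mu^\text{IR}$ then yields
\[
\mathbb{E}[G]=\mu^\text{EI}+\frac{\mu^\text{IR}}{2}-\frac12+\frac{(\sigma^\text{IR})^2}{2\mu^\text{IR}}.
\]
The $-\tfrac12$ term is a discretization artifact, and its exact value depends on the convention that $W^\text{IR}>k-j$ means still infectious at age $k$. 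I will therefore fix the indexing conventions of \eqref{eq:infectious-kernel} carefully before carrying out this calculation, since an off-by-one there would change that constant.
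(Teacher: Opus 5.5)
Your argument is correct. For the mean generation time (Steps 2--3) it is exactly the paper's appendix proof: swap the sums, reindex with $a=k-j$, split $(j+a)$, and evaluate $\sum_{a\ge0}a\,\Pprob(W^\text{IR}>a)=\tfrac12\bigl(\mathbb{E}[(W^\text{IR})^2]-\mu^\text{IR}\bigr)$ by the arithmetic series. With the paper's convention $\zeta^\text{EI}_k=\sum_{j=1}^k\pi^\text{EI}_j\,\Pprob(W^\text{IR}>k-j)$, the $-\tfrac12$ comes out as stated.

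For the weights, your route differs from the paper's. The paper argues from the aggregate SEIR dynamics. It writes incidence in mass-action form $E_t^*=\beta_t S_t I_t/N$, rewrites the prefactor as $R_t^I/\mu^\text{IR}$ via \cref{prop:equivalence}, and expands prevalence as $I_t=\sum_{k\ge1}E^*_{t-k}\zeta^\text{EI}_k$. It then identifies $g_k=\zeta^\text{EI}_k/\mu^\text{IR}$ by matching the result against \eqref{eq:renewal}, first deterministically and then in expectation via the tower rule. The normalized form in the ``moreover'' clause follows from \cref{lem:infectious_kernel}.

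You instead compute $g$ directly from its definition \eqref{eq:gi_basic}. You use the per-lag factorization $w(t,k)=S_t\,\zeta^\text{EI}_k\,\beta_t/N$ implicit in the proof of \cref{prop:equivalence}, so homogeneous mixing enters as the cancellation of a $k$-independent factor. This gives you three things:
\begin{itemize}
\item normalization for free;
\item no coefficient matching and no deterministic/stochastic case split, since \eqref{eq:fraser_alt} is already stated in expectation;
\item a reduced role for \cref{lem:infectious_kernel}, which you need only to convert $\sum_j\zeta^\text{EI}_j$ into $\mu^\text{IR}$.
\end{itemize}
The paper's route, in exchange, shows the renewal equation following from the compartmental incidence and prevalence equations themselves, which is the ``renewal equation in compartmental terms'' it is after. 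Your route instead rests on the more informal individual-level factorization of transmission probabilities used in the equivalence proof.
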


\begin{proof}
    For now, suppose transmissions are deterministic.
In an SEIR model, $E_t^* = \beta_t \cdot \frac{S_t}{N} \cdot I_t$. 
Instantaneous $R_t$ is equivalent to mechanistic $R_t$, defined as $R_t^M = \beta_t \cdot \frac{S_t}{N} \cdot \mu^{IR}$;
plugging this in yields $E_t^* = R_t^I \cdot I_t / \mu^\text{IR}$.
Infectious prevalence $I_t$ can be defined using the kernel $\zeta^\text{EI}$:
\begin{equation}
I_t = \sum_{k \geq 1} E_{t-k}^* \cdot \zeta^\text{EI}_k.
\label{eq:prev-decomp}
\end{equation}
Substituting this expression in for $I_t$ yields the renewal equation \eqref{eq:renewal}, expressed in compartmental terms:
\begin{equation}\label{eq:mech-renewal-halfway}
    E_t^* = R_t^I/ \mu^\text{IR} \cdot \sum_{k \geq 1} E_{t-k}^* \cdot \zeta^\text{EI}_k =  R_t^I \cdot \sum_{k \geq 1} E_{t-k}^* \cdot (\zeta^\text{EI}_k / \mu^\text{IR}).
\end{equation}
Comparing with \eqref{eq:renewal}, the generation interval weights are $g_k=\zeta^\text{EI}_k / \mu^\text{IR}$:
the infectious kernel $\zeta^\text{EI}$, rescaled by the mean infectious duration $\mu^\text{IR}$.

To handle the stochastic case, 
recall that instantaneous $R_t$ and the generation interval are typically defined taking the renewal equation in expectation: \mbox{$\mathbb{E}[E_t^*\ \vert\ E_{<t}^*] = \sum_{k\geq 1} E_{t-k}^* g_k$}
To derive $\mathbb{E}[E_t^*\ \vert\ E_{<t}^*]$ in the compartmental perspective, we use the tower rule to impute $I_t$:
\begin{equation}
    \mathbb{E}[E_t^*\ \vert\ E_{<t}^*] = \mathbb{E}[\mathbb{E}[E_t^*\ \vert\ E_{<t}^*, I_t]]  = 
\frac{R_t}{\mu^\text{IR}}
\mathbb{E}[ I_t\ \vert\ E_{<t}^*] 
= \frac{R_t}{\mu^\text{IR}}
\sum_{k \geq 1} E_{t-k}^* \cdot \zeta^\text{EI}_k.
\end{equation}
Thus, we return to the renewal equation, with $g_k = \zeta_k^\text{EI}/\mu^\text{IR}$ defined probabilistically.
\end{proof}

\subsection{Discussion}

Renewal equation estimators are often presented as relatively assumption-light: given a generation interval $g$ and observed infections, one can estimate $R_t$ without specifying a full mechanistic model.
However, the results above reveal that the generation interval itself encodes compartmental structure.
The weights $g_k = \zeta_k^\text{EI}/\mu^\text{IR}$ depend on the latent and infectious period distributions $\pi^\text{EI}$ and $\pi^\text{IR}$, and the mean generation time depends on their first two moments.
In practice, choosing a generation interval for use in methods like EpiEstim implicitly assumes a model of disease progression, even if that model is never written down.
Making this dependence explicit, as we have done here, clarifies what assumptions are baked into standard $R_t$ estimates.

Our equivalence result relies on homogeneous mixing: all individuals are equally likely to contact one another.
Real populations exhibit substantial heterogeneity in contact patterns across age groups, geographic regions, and behavioral strata.
Extending the renewal--compartmental correspondence to structured populations is a natural direction, though in practice most $R_t$ estimation pipelines assume homogeneous mixing as a simplifying approximation \citep{gostic2020practical}.
\chapter{Fast, Frequentist Estimation of Epidemic Reproduction Numbers}
\label{ch4}

\section{Introduction}

Introduced in \cref{ch3}, the effective reproduction number $R_t$ is one of the most important metrics in epidemiology.
It is critical to estimate it accurately and efficiently.
Most methods use a Bayesian framework: priors simultaneously encode beliefs about $R_t$ and regularize estimates over time. 
One strategy, exemplified by EpiEstim \citep{cori2013new}, substitutes observed case reports for daily transmissions, effectively estimating $R_t$ at a lag. 
An alternative, exemplified by EpiNow2 \citep{abbott2020estimating}, explicitly models the latent infection process and the observation delay in a Bayesian generative framework. This yields more principled estimates, but at considerable computational cost and with sensitivity to prior assumptions.

We instead propose ConvRt, a frequentist estimator that replaces prior-based regularization with spline-based smoothing. Like EpiNow2, it models latent infections directly, but avoids MCMC sampling. We show that ConvRt not only matches or outperforms leading Bayesian methods in accuracy, but does so at a fraction of the computational cost.

Bayesian $R_t$ models typically conflate two jobs: expressing substantive beliefs about transmission and controlling the roughness of the estimated trajectory. Our frequentist framework separates them. A smoothness hyperparameter controls the temporal resolution of the estimate, while a separate tail parameter governs how aggressively recent trends are extrapolated. Practitioners can tune each independently, enabling the scenario analyses we demonstrate later in the chapter.

The rest of this chapter proceeds as follows. \cref{sec:rt-background} reviews existing $R_t$ methods. \cref{sec:rt-methods} develops our estimator, including retrospective and real-time versions; the latter employs the same tapered tail regularization developed for severity rates in \cref{ch:paper2}. \cref{sec:sim-study,sec:rt-experiments-real} evaluate on simulated and real surveillance data from influenza and COVID-19, demonstrating improvements in runtime, accuracy, and coverage.

\subsection*{Notation}\label{sec:notation}

Notation is summarized in \cref{tab:notation}.
\begin{table}[!ht]
\centering
\caption{Estimation notation for ConvRt (Chapter~5).}
\label{tab:notation}
\renewcommand{\arraystretch}{1.15}
\begin{tabularx}{\linewidth}{@{}l >{\raggedright\arraybackslash}X@{}}
\toprule
\textbf{Symbol} & \textbf{Meaning} \\
\midrule
\multicolumn{2}{@{}l}{\textsc{Reproduction numbers}} \\
\addlinespace[2pt]
$R_0$ & Basic reproduction number \\
$R_t$ & Effective reproduction number (instantaneous) \\
\midrule
\multicolumn{2}{@{}l}{\textsc{Infection counts}} \\
\addlinespace[2pt]
$x_t$ & Infections at $t$ \\
$y_t$ & Observed counts (e.g.\ cases, hospitalizations) at $t$ \\
$g$ & Generation interval distribution \\
$\pi$ & Infection-to-report delay distribution \\
\midrule
\multicolumn{2}{@{}l}{\textsc{Observation model}} \\
\addlinespace[2pt]
$\rho_t$ & Case ascertainment rate (proportion of infections reported at $t$) \\
$\omega_d$ & Day-of-week multiplicative effect, $d \in \{0,\ldots,6\}$ \\
$\varphi$ & Quasi-Poisson dispersion parameter \\
$\Lambda_t$ & Unscaled conditional mean of $y_t\given x_{<t}$ \\
$\mu_t$ & Conditional mean of $y_t\given x_{<t}$ \\
\midrule
\multicolumn{2}{@{}l}{\textsc{Estimation}} \\
\addlinespace[2pt]
$\theta, \omega$ & Spline coefficients and day-of-week effects \\
$S$ & Spline basis matrix; $R_t(\theta) = S(t)^\top \theta$ \\
$Z_t$ & Covariate row at $t$ from \eqref{eq:linear-mean-cases} \\
$\lambda$ & Curvature-penalty hyperparameter \\
$\gamma$ & Tail-smoothness hyperparameter (real-time setting) \\
$\Omega, \Psi$ & Curvature and tapered-smoothness penalty matrices \\
$D^{(m)}$ & Finite difference operator of order $m$ \\
\bottomrule
\end{tabularx}
\end{table}

\section{Background}\label{sec:rt-background}

\cref{ch3} introduced the concept of reproduction numbers: the average number of secondary infections generated by a single primary infection.
It compared multiple definitions of $R_t$, showing conditions under which two perspectives are equivalent.
Early lines of work aimed to estimate $R_t$ under the mechanistic definition, but these encoded strong structural assumptions that are typically unrealistic.
We focus on the instantaneous definition in this chapter, in step with most methods developed in recent decades.
While we refer the reader to \cref{ch3} for a thorough description, we restate the renewal equation here.
With the generation interval distribution $g$, infections are generated via
\[
    \mathbb{E}[x_t\given x_{< t}] = R_t \sum_{s<t} x_{s} g_{t-s}.
\]

Strategies to estimate reproduction numbers must contend with the fact that infections $x_t$ are typically latent. 
Instead, the observed data $y_t$ reflects infections at a delay---namely, positive case reports or hospitalizations.
$R_t$ estimators differ in whether they account for this structure.
Early instantaneous-$R_t$ methods ignored these delays, treating observed reports as a proxy for infections.
Later approaches, developed during the COVID-19 pandemic, handled the latency explicitly, most by modeling infections with computationally expensive MCMC sampling.
Almost all are Bayesian, with various strategies for modeling smoothness.

We now elaborate on several $R_t$ methods, which we revisit later as baselines.
All have open-source \texttt{R} implementations and are widely used in practice, spanning the major structural choices in the contemporary literature.

\textbf{EpiNow2}, the basis for the CDC's
published real-time $R_t$ estimates for seasonal influenza
\citep{cdccfa2026rt}, is a fully Bayesian generative model with three
components: a Gaussian-process prior on log-$R_t$, a renewal equation
linking $R_t$ to latent infections, and a reporting-delay convolution
mapping infections to observed counts \citep{abbott2020estimating}. Inference is by Hamiltonian
Monte Carlo via Stan.

\textbf{EpiEstim}, introduced by \citet{cori2013new}, anchors most instantaneous-$R_t$
software, and is among the most widely deployed $R_t$ tools in
epidemiological practice \citep{nash2023real}. It slides a
gamma--Poisson conjugate update along the renewal equation, treating
observed reports as the time series of new infections. The sliding
window serves as the smoother, and the gamma prior on $R_t$ is updated
to a gamma posterior in closed form.

\textbf{EstimateR}, developed at ETH Zürich
for the Swiss COVID-19 surveillance program, prepends a deconvolution
stage to EpiEstim \citep{scire2023estimateR}. Observed counts are first smoothed by LOESS, then
deconvolved against the case-to-report delay with the Richardson--Lucy
algorithm. The resulting infection series is fed into EpiEstim's
sliding-window step. Uncertainty quantification uses a block bootstrap
over the LOESS residuals, rather than EpiEstim's native posterior.

\textbf{EpiLPS} parameterizes log-$R_t$ as a
penalized B-spline (P-spline) with a second-difference roughness
penalty, and fits the renewal-equation likelihood by Laplace
approximation around the posterior mode \citep{gressani2022epilps}. The smoothing parameter is learned from the data via its hyperprior.

\textbf{Rtestim}, the most recent baseline, applies trend filtering to renewal-equation $R_t$ estimation \citep{rtestim}. 
Confidence bands are obtained by inverting a
quadratic relaxation of the objective. 
Among the baselines, it is the closest to ours: both are frequentist penalized-likelihood estimators producing piecewise-cubic $R_t$. However, it does not model latent infections, and its $\ell_1$ penalty allows locally sharper changes than our globally smooth fit.

\textbf{CovidEstim}, which produced publicly
available $R_t$ trajectories throughout the pandemic, is a Bayesian
generative model specific to COVID-19 \citep{Chitwood2022}. Modeling log-$R_t$ as a random walk, 
CovidEstim fits jointly to case and death series by Hamiltonian Monte Carlo.
Its unique contribution is to jointly learn time-varying case ascertainment rates from seroprevalence data.
This comes at a significant computational cost, often taking 10 hours to fit \citep{covidestim_summer2021}. 
We include
it only as a reference on real COVID-19 data
(\cref{sec:rt-experiments-real}), where we compare against its
publicly released $R_t$ outputs rather than refitting it.

\section{Estimating \texorpdfstring{$R_t$}{Rt} via Deconvolution}\label{sec:rt-methods}

We propose a two-stage deconvolution procedure which first estimates latent infections, and then $R_t$.
Each stage parameterizes its time series as a spline, fitting its coefficients with penalized maximum-likelihood.
This approach is fast to fit and admits standard frequentist confidence intervals.

\cref{sec:rt-seir} introduces the conditional distribution for observed cases, used in \cref{sec:inf-deconv} to deconvolve latent infections.
\cref{sec:rt-retro} expresses this distribution in terms of $R_t$, and proposes a retrospective
estimator; \cref{sec:rt-rt} extends it for the real-time setting.
\cref{sec:rt-uncertainty} introduces our approaches for uncertainty quantification.
\cref{sec:method-contrasts} situates our approach among existing
$R_t$ estimators.

\subsection{Convolutional Distribution of \texorpdfstring{$y_t\given x_{<t}$}{y\_t | x\_\{<t\}}}\label{sec:rt-seir}

In general, we observe epidemic incidence $y_t$, such as positive cases or hospitalizations.
Each infection is recorded several days after it actually occurs.
For example, COVID-19 cases were reported around 5 days after symptom onset \citep{abbott2020estimating}, which itself came 5 days after exposure \citep{kucharski2020early}.
To derive the mean report $\mu_t$, define $\pi$ as the infection-to-report delay distribution:
$$\pi_k = \mathbb{P}(\text{Infection reported after $k$ timesteps}\given\text{Infection reported}),\qquad\forall\ k>0.$$

Throughout this section, we treat $\pi$ as known. 
In practice, it is often estimated from line-list data or household contact tracing studies, and may be misspecified.
For ease of notation, we also treat $\pi$ as stationary.
This may in truth be time-varying, and could be estimated as such.

To compute the mean report count $\mu_t$, the delay distribution $\pi$ is convolved against trailing infection incidence $x_{<t}$.
If all infections are eventually observed, this is simply \mbox{$\mu_t = \sum_{s<t} x_{s} \pi_{t-s}$}.
However, this is not the case in most epidemics. 
\citet{reese2021estimated} estimated a case ascertainment rate of 13\% for COVID-19 through September 2020;
only a small fraction of influenza cases are actively recorded, generally through hospitalization pipelines like FluSurv-NET \citep{cdccfa2026rt}.
Letting the ascertainment rate $\rho_t$ be the proportion of observed infections at $t$, mean cases are thus
$$\mu_t = \rho_t\sum_{s<t} x_{s} \pi_{t-s}.$$

We impose a Poisson noise model on reported counts $y_t$.
The Poisson rate is the mean, so \mbox{$y_t\given x_{<t}\sim \text{Poisson}(\mu_t)$}.
This is a standard model in practice (e.g., \citet{cori2013new}), though some works' likelihoods allow for overdispersion (e.g., \citet{abbott2020estimating}).
The Poisson model can be relaxed to accommodate quasi-Poisson noise, characterized by a dispersion term $\varphi$.
This extension does not affect parameter estimation, so we maintain Poisson distributions in the following expressions. 
Quasi-Poisson inference simply inflates the Poisson variance, estimating $\varphi$ with the Pearson chi-squared statistic.

We are interested in the joint distribution of $y_t$ from $t=t_0$ to $t_1$. 
Conditional on the infection path $x_{<t_1}$, we assume reports are independent across time, so that
\begin{align}\label{eq:joint-likelihood}
    \mathbb{P}(y_{t_0:t_1} \given x_{<t_1}) &= \prod_{t=t_0}^{t_1} \text{Poisson}(y_t;\, \mu_t), \\
    \log\mathbb{P}(y_{t_0:t_1}\given x_{<t_1})&= \sum_{t=t_0}^{t_1} \left( y_t \log \mu_t - \mu_t \right) + \text{const}.
\end{align}
This working-likelihood assumption, standard for time series, underlies most $R_t$ estimators (e.g. \citet{abbott2020estimating}). 
Notably, it ignores any cross-time dependence in reporting noise.
To model some temporal idiosyncrasies, we optionally include day-of-week effects.
These terms scale expected counts by a factor of $\omega_{(\text{$t$ mod 7})}$ (near 1). 
Combining day-of-week effects, overdispersion, and quasi-Poisson noise, the per-timestep distribution is thus
\begin{equation}\label{eq:quasi-poisson-model}
y_t\given x_{<t} \sim \text{Quasi-Poisson}(\mu_t, \varphi), 
\qquad \mu_t = \omega_{(t \bmod 7)} \rho_t \Lambda_t, 
\qquad \Lambda_t = \sum_{s<t} x_{s} \pi_{t-s}.
\end{equation}

\subsection{Deconvolving Latent Infections}\label{sec:inf-deconv}

The joint distribution \eqref{eq:joint-likelihood} depends on incident infections $x_{<t_1}$, which are unobserved. 
Bayesian methods like EpiNow2 handle this by sampling these counts with MCMC.
We use a plug-in estimate obtained via deconvolution.
The method, presented in \cref{apx:deconvolution}, is analogous to the deconvolution problem for severity rates in \cref{ch:paper2}. 
Both solve for non-negative, smooth time series with regularized regression.

\citet{Jahja2022} proposed an estimator for this problem, using a least-squares loss and trend filtering penalty.
We instead minimize Poisson loss, motivated by the conditional distribution \eqref{eq:joint-likelihood}.
In addition, we parameterize the infection curve as a spline. Both approaches regress $x$ as a piecewise polynomial, but splines are globally smooth, whereas trend filtering can place sharp changes anywhere in the curve. The spline basis admits a closed-form penalized-likelihood solution, making it faster to fit. We compare the two further in \cref{apx:extra-methods}.
Fortunately, some degree of approximation error is tolerable, so long as errors in $x_{<t}$ are smoothed out in the convolution to $\mu_t$.

To deconvolve infections $x_t$, counts are inflated by a factor of $\rho_t^{-1}$.
If the ascertainment rate is a constant $\rho$, this cancels out with the factor in $\mu_t$ \eqref{eq:quasi-poisson-model}. 
Consequently, the rate itself is irrelevant and does not need to be specified.
That stationarity assumption is plausible in certain settings.
For example, the flu hospitalization rate should be relatively stationary, barring significant shifts in variant proportions and severity.
Otherwise, it is necessary to plug in ascertainment rates estimated \textit{a priori}. 
Jointly solving for ascertainment rates and infections is an underspecified problem, but $\rho_t$ may be estimated separately from seroprevalence data \citep{reese2021estimated, Chitwood2022}.

\subsection{\texorpdfstring{$R_t$}{Rt} Deconvolution in Retrospect}\label{sec:rt-retro}

Having estimated $\hat x_{<t_1}$, we now turn our focus to $R_t$. 
Throughout this section, we assume that the epidemic unfolds according to the renewal equation \eqref{eq:renewal}.
To reiterate, transmissions occur as \mbox{$\mathbb{E}[x_t\given x_{<t}]=R_t\sum_{s<t} x_{s} g_{t-s}$}.
As with $\pi$, we assume the generation interval distribution $g$ is known and stationary.

The Poisson rate in \eqref{eq:quasi-poisson-model} includes the convolution \mbox{$\Lambda_t = \sum_{s<t} x_{s} \pi_{t-s}$}.
To express this in terms of $R_t$, we replace $x_{s}$ with the renewal equation.
This is by nature an approximation, since we replace a quantity with its mean.
Further approximating with plug-in infections, the convolution becomes
\begin{equation}\label{eq:exp-cases-rt}
    \Lambda_t \approx \sum_{s<t} R_s \left(\sum_{u<s} \hat x_u\, g_{s-u}\right) \pi_{t-s}
\end{equation}

Uncertainty quantification is important for $R_t$, since small differences can have large epidemiological consequences.
In \cref{ch:paper2}, we performed deconvolution with trend filtering, which is excellent at curve-fitting but does not lend itself to uncertainty quantification. For that reason, we instead parameterize $R_t$ as a spline. We compare the two approaches in \cref{apx:extra-methods}.

Let $S$ be the spline basis matrix with rows $S(t)$, and $\theta$ the coefficients, so that $R_t(\theta) = S(t)^\top \theta$. In our experiments we use a natural cubic spline basis---essentially a smoothing spline with a reduced knot set for computational efficiency. This framework, however, accommodates other parameterizations (e.g., B-splines, P-splines) without modification. Substituting into \eqref{eq:exp-cases-rt} makes the expected counts linear in $\theta$:
\begin{equation}\label{eq:linear-mean-cases}
    \Lambda_t(\theta) = \sum_{s<t} R_s(\theta) \left(\sum_{u<s} \hat x_u\, g_{s-u}\right) \pi_{t-s} =
             \theta^\top \underbrace{\sum_{s<t} S(s) \left(\sum_{u<s} \hat x_u\, g_{s-u}\right) \pi_{t-s}}_{Z_t} = \theta^\top Z_t.
\end{equation}

Therefore, deconvolution amounts to fitting a penalized identity-link Poisson GLM. We minimize the negative log likelihood subject to constraints and a smoothness penalty. $\theta$ is constrained to produce non-negative $R_t$ values, and $\omega$ to produce valid day-of-week effects. Smoothness is imposed through the quadratic term $\theta^\top \Omega\, \theta$. By default, we take $\Omega$ to give the integrated second-derivative penalty,
\mbox{$\theta^\top \Omega\, \theta = \int_{t_0}^{t_1}\!\bigl[R''(u)\bigr]^2\,du$,}
familiar from the smoothing spline literature. However, any quadratic penalty $\theta^\top \Omega\, \theta$ is admissible; for example, P-splines penalize second differences of the coefficients. Alternatively, with a sparse enough knot set, the penalty can be dropped entirely.

Applying \eqref{eq:linear-mean-cases} to the joint likelihood \eqref{eq:joint-likelihood}, we solve for $\theta$ and $\omega$:
\begin{align}\label{eq:glm-retro}
\hat{\theta}, \hat\omega &= 
\argmin_{\substack{\bar{\omega}_{GM}=1 \\ S\theta \succeq 0}}\
             \sum_{t=t_0}^{t_1}\!\bigl[ \mu_t(\omega, \theta) - y_t \log \mu_t(\omega, \theta) \bigr]
             \;+\; \lambda\,\theta^\top \Omega\, \theta,\\
             &\qquad\text{where}\quad \mu_t(\omega, \theta) = \omega_{(\text{$t$ mod 7})}\,\rho_t\, \Lambda_t(\theta) 
             \quad\text{and}\quad \Lambda_t(\theta) = Z_t^\top\theta.
\end{align}
This can be fit quickly with off-the-shelf IRLS solvers.
Details on optimization are in \cref{apx:optim}.

We may tune the smoothness hyperparameter $\lambda$ with the same approach as introduced in \cref{ch:paper2}. 
To recap, $K$-fold cross-validation holds out every $K\nth$ timestep from the loss. 
The model is fit on the in-sample timesteps, normalizing each term in \eqref{eq:glm-retro} by the number of summands.
Each fold's $R_t$ values are then convolved to predict expected incidence on the held-out timesteps.
The cross-validated error curve is the Poisson deviance of these predictions \citep{rtestim}.
To avert undersmoothing, we tune $\lambda$ via the 1se rule, selecting the largest $\lambda$ whose cross-validated error is within one standard error of the minimum.

\subsection{Real-Time Methods}\label{sec:rt-rt}

Recall that in the retrospective setting, $R_t$ is estimated through time $t_1$, but the final predictions before $t_1$ are discarded. In contrast, these estimates are of utmost importance in the real-time setting, where $t_1$ denotes the present. On the days leading up to $t_1$, a growing share of infections will not surface as reports until \textit{after} $t_1$, so the $R_t$ signal in the available data is attenuated. As with severity rates in \cref{ch:paper2}, predictions at the tail are highly unstable without extra regularization.
We address this in two ways.
First, parametrizing $R_t$ as a smoothing spline constrains the tail to be linear past the boundary knots. 
This adds a degree of stability without
any explicit constraint on $\theta$.
(As an alternative, \cref{apx:optim-constraints} shows how to enforce a constant fit.)

Second, we add a tapered smoothing penalty to the loss, borrowing again from \cref{ch:paper2}. 
This progressively encourages tail predictions towards stationarity.
Squared first-order differences in $R_t$ are penalized with a weight
proportional to the inverse CDF of the exposure-to-observation delay distribution.
Here, the final timesteps correspond to early timesteps of the delay distribution, at which little mass has accumulated; therefore the inverse CDF is high, so nonstationarity is heavily penalized.
Aggregating root weights into diagonal matrix $W$, define \mbox{$\Psi = D^{(1)\top} W^\top W D^{(1)}$} such that \mbox{$\theta^\top \Psi \theta = \|W D^{(1)}\theta\|_2^2$}. 
Together, these real-time penalties modify \eqref{eq:glm-retro} to

\begin{align}\label{eq:glm-real-time}
\hat{\theta}, \hat\omega &= 
\argmin_{\substack{\bar{\omega}_{GM}=1 \\ S\theta \succeq 0}}\
     \sum_{t=t_0}^{t_1}\!\bigl[\mu_t(\omega, \theta)-y_t \log \mu_t(\omega, \theta) \bigr]
     \;+\; \lambda\,\theta^\top \Omega\, \theta 
     \;+\; \gamma \,\theta^\top \Psi \, \theta,\\
     &\qquad\text{where}\quad \mu_t(\omega, \theta) = \omega_{(\text{$t$ mod 7})}\,\rho_t\, \Lambda_t(\theta) 
     \quad\text{and}\quad \Lambda_t(\theta) = Z_t^\top\theta.
\end{align}

The tail smoothness hyperparameter, $\gamma$, may also be tuned exactly as in \cref{ch:paper2}. 
Recall we used forward-validation to estimate out-of-sample error on each candidate $\gamma$.
For each timestep in the recent past (e.g. 7 days),
we fit our estimator, then convolve the $R_t$ estimate to predict observations one day ahead.
Aggregating the absolute error over the validation window, we select $\gamma$
via the min- or 1se rule. 

Real-time surveillance counts are typically incomplete at the most recent timesteps due to reporting delays, and are revised upward as backfill accumulates. We discuss this limitation further in \cref{sec:ch4-discussion}.

\subsection{Inference}\label{sec:rt-uncertainty}

Because $R_t(\theta) = S(t)^\top\theta$ is linear in the spline coefficients, estimating $\mathrm{Cov}(\hat\theta)$ is enough to produce confidence intervals for $R_t$ via $\widehat{\mathrm{Var}}(\hat R_t) =
   S(t)^\top\,\widehat{\mathrm{Cov}}(\hat\theta)\,S(t)$. 
In \cref{apx:rt-uncertainty}, we derive $\widehat{\mathrm{Cov}}(\hat\theta)$ as a sandwich estimator for the penalized Poisson score equations of \eqref{eq:glm-retro}.
Day-of-week effects, quasi-Poisson overdispersion, and plug-in asymptotics also preserve this structure with minor modifications.
Invoking the central limit theorem, we thus produce pointwise Wald confidence intervals,
\begin{equation}\label{eq:wald}
    \hat R_t \;\pm\; z_{1-\alpha/2}\,\mathrm{SE}(\hat R_t).
\end{equation}
Simultaneous coverage bands, which guarantee all timesteps are covered with high probability, may also be attained.
To do so, $z_{1-\alpha/2}$ is replaced with a simulation-based critical value that accounts for the correlation among adjacent $\hat R_t$.

One caveat is worth flagging.
Even asymptotically, these intervals cover $\mathbb{E}[\hat\theta]^\top S(t)$ rather than the true $R_t$. 
The two coincide only when $R_t$ lies in the span of the spline basis and $\hat\theta$ is unbiased---a caveat shared with essentially all nonparametric inference, not specific to our approach.

In the real-time setting, tail regularization biases $R_t$ in order to stabilize predictions.
As a result, the variance-only intervals \eqref{eq:wald} are often too narrow near the boundary. 
To address this, we obtain real-time confidence intervals with conformal inference.
\cref{apx:rt-realtime-ci} details our approach, as well as less effective alternatives.
These uncertainty bands are not necessarily calibrated, since their assumption of exchangeable data does not hold for time series, and they rely on a surrogate ground truth.
Nevertheless, empirical results demonstrate their practical utility, especially relative to the Wald baseline.

\subsection{Comparison with Existing Methods}\label{sec:method-contrasts}

Having derived our estimator, we now situate it among the methods introduced in \cref{sec:rt-background}.
We will compare their performance in \cref{sec:sim-study,sec:rt-experiments-real}. 
\Cref{tab:method-contrast} summarizes the
contrasts.
Notably, only one of the baselines deconvolves latent infections, like ours.
All but one are Bayesian, as opposed to frequentist. 
EpiLPS and rtestim both parameterize $R_t$ as a piecewise polynomial, predating our approach.

\begin{table}[!h]
\centering
\small
\caption[Structural comparison of $R_t$ estimators.]{Structural comparison of $R_t$ estimators benchmarked in this
chapter. ``Latent infections'' indicates
how the method recovers latent infections from observed reports, if at
all. ``$R_t$ form'' is the functional class for $R_t$ (or log-$R_t$,
where applicable). All methods are run on both retrospective and weekly
real-time vintages except CovidEstim, for which we use the project's
publicly released outputs; configurations are detailed in
\cref{apx:methods-config}.}
\label{tab:method-contrast}
\begin{tabular}{lccc}
\toprule
Method      & Latent infections & $R_t$ form        & Inference   \\
\midrule
\textbf{ConvRt (ours)} & Deconvolution     & Spline             & Frequentist \\
EpiNow2     & MCMC              & Gaussian process   & Bayesian    \\
EpiEstim    & --                & Sliding window     & Bayesian    \\
estimateR   & Deconvolution     & Sliding window     & Bayesian    \\
EpiLPS      & --                & Spline             & Bayesian    \\
rtestim     & --                & Trend filter       & Frequentist \\
CovidEstim  & MCMC              & Random walk        & Bayesian    \\
\bottomrule
\end{tabular}
\end{table}

\section{Experimental Results}

We ran ConvRt with knots spaced evenly every 5 days.
To tune $\lambda$, we used $K=5$-fold cross-validation and the min rule.
For the real-time setting, we tuned $\gamma$ with the min rule on 7-day forward-validation.
All methods receive the true generation interval $g$ and delay distribution $\pi$, granting oracle knowledge of the data-generating process.

\subsection{Synthetic-Data Experiments}\label{sec:sim-study}

\subsubsection*{Benchmarking Setup}

We benchmark ConvRt against five methods detailed in \cref{sec:rt-background}---EpiEstim, estimateR, EpiNow2, EpiLPS, and rtestim---on six synthetic datasets for which the true $R_t$ is known.
CovidEstim was omitted due to its reliance on seroprevalence data.
Most of our simulation pipelines generate counts without day-of-week effects or overdispersion.

We evaluate point predictions with mean absolute error (MAE),
and uncertainty quantification with $\ell_1$ calibration error (CE).
Both metrics are defined between burn-in and burn-out periods, ensuring methods have sufficient data with which to estimate $R_t$.
For a nominal central level $\alpha \in (0,1)$, let $\hat c(\alpha)$ denote
the empirical coverage of the corresponding two-sided prediction interval.
We define calibration error
\[
\mathrm{CE} \;=\; \frac{1}{|\mathcal{A}|}\sum_{\alpha \in \mathcal{A}} \bigl|\hat c(\alpha) - \alpha\bigr|,
\qquad \mathcal{A} = \{0.5,\,0.6,\,0.7,\,0.8,\,0.9,\,0.95\},
\]
which averages the absolute gap between empirical and nominal coverage over a grid of central levels.
A perfectly calibrated method has $\mathrm{CE}=0$; values above zero penalize both over- and under-coverage symmetrically, without rewarding narrow intervals for their own sake.
While frequentist confidence intervals and Bayesian credible intervals have different interpretations, we nevertheless score them according to the same coverage criterion.

Four of our synthetic $R_t$ benchmarks are introduced in \citet{rtestim}.
That paper proposed the rtestim method, so we refer to these as the rtestim datasets.
They pose unique estimation challenges and mimic some real-world phenomena.
Three have jump discontinuities in $R_t$, which could represent sweeping policy changes like school closures or lockdowns.
The fourth is sinusoidal, modeling multiple smooth yet rapid $R_t$ shifts.
Our experiments on these benchmarks only evaluated methods' retrospective predictions, matching rtestim.

The other two simulated datasets are based on real seasonal influenza data (see \cref{fig:flu-sim-both} in \cref{apx:flu-sim}).
We simulated hospitalizations using the delay distributions described in \cref{sec:rt-experiments-real} and a 1.5\% severity rate \citep{reed2015estimating}.
For the first dataset's ground truth, we used $R_t$ predictions on the 2022/23 season, made by the method from \citet{wallinga_teunis}.
$R_t$ changes direction frequently in this season, posing an interesting challenge for estimation. 
For the second dataset, we heavily smoothed the $R_t$ curve with LOESS.
We also added a higher degree of noise to mimic real-world data.

\subsubsection*{Retrospective Performance}

  \begin{table}[ht]
  \centering
  \caption[Retrospective performance across simulation settings.]{Retrospective performance across simulation settings. Mean absolute error (MAE)
  and $\ell_1$ calibration error (CE) reported in units of $10^{-2}$; runtime is for a single
  fit on the full input series. Influenza benchmarks run from October 1 to January 31.}
  \label{tab:combined-retro}
  \setlength{\tabcolsep}{5pt}
  \begin{tabular}{l l cccccc}
  \toprule
  & & \textbf{ConvRt} & EpiNow2 & EpiEstim & estimateR & EpiLPS & rtestim \\
  \midrule
  \multicolumn{8}{l}{\textbf{\textit{rtestim benchmarks}}} \\
  \midrule
  \multirow{3}{2.2cm}{Piecewise Constant}    & MAE     & 1.58    & 2.64     & 4.86    & 1.93    & 3.96    & 3.37    \\
                                             & CE      & 9.75    & 6.89     & 11.84   & 19.69   & 15.08   & 20.03   \\
                                             & Runtime & 1.25\,s & 117.8\,m & 0.03\,s & 0.18\,s & 0.07\,s & 22.9\,s \\
  \addlinespace
  \multirow{3}{2.2cm}{Piecewise Exponential} & MAE     & 1.77    & 2.14     & 11.78   & 2.42    & 8.45    & 8.60    \\
                                             & CE      & 3.35    & 3.44     & 73.89   & 12.29   & 21.34   & 22.73   \\
                                             & Runtime & 1.21\,s & 45.1\,m  & 0.03\,s & 0.18\,s & 0.07\,s & 22.9\,s \\
  \addlinespace
  \multirow{3}{2.2cm}{Piecewise Linear}      & MAE     & 3.79    & 4.45     & 17.25   & 5.20    & 12.54   & 12.10   \\
                                             & CE      & 11.43   & 5.61     & 74.17   & 10.37   & 9.29    & 11.19   \\
                                             & Runtime & 1.18\,s & 130.5\,m & 0.03\,s & 0.18\,s & 0.07\,s & 22.9\,s \\
  \addlinespace
  \multirow{3}{2.2cm}{Periodic}& MAE     & 0.62    & 1.95     & 29.91   & 3.20    & 21.27   & 21.43   \\
                                             & CE      & 3.86    & 2.86     & 71.89   & 18.59   & 58.67   & 24.51   \\
                                             & Runtime & 1.26\,s & 38.1\,m  & 0.03\,s & 0.18\,s & 0.07\,s & 22.9\,s \\
  \midrule
  \multicolumn{8}{l}{\textbf{\textit{Influenza benchmarks}}} \\
  \midrule
  \multirow{3}{2.2cm}{Wiggly $R_t$} & MAE     & 0.84    & 0.83    & 7.40    & 1.74    & 4.64    & 4.46 \\
                                    & CE      & 7.36    & 11.47   & 52.49   & 25.66   & 66.71   & 22.58 \\
                                    & Runtime & 3.25\,s & 35.6\,m & 0.04\,s & 0.06\,s & 0.20\,s & 0.69\,s \\
  \addlinespace
  \multirow{3}{2.2cm}{Smooth $R_t$} & MAE     & 0.62    & 0.82    & 3.45    & 1.18    & 2.01    & 2.18 \\
                                    & CE      & 8.22    & 20.51   & 32.57   & 3.04    & 51.13   & 17.84 \\
                                    & Runtime & 2.38\,s & 46.1\,m & 0.04\,s & 0.07\,s & 0.17\,s & 0.61\,s \\
  \bottomrule
  \end{tabular}
  \end{table}

\begin{figure}[ht]
    \centering
    \includegraphics[width=.95\linewidth]{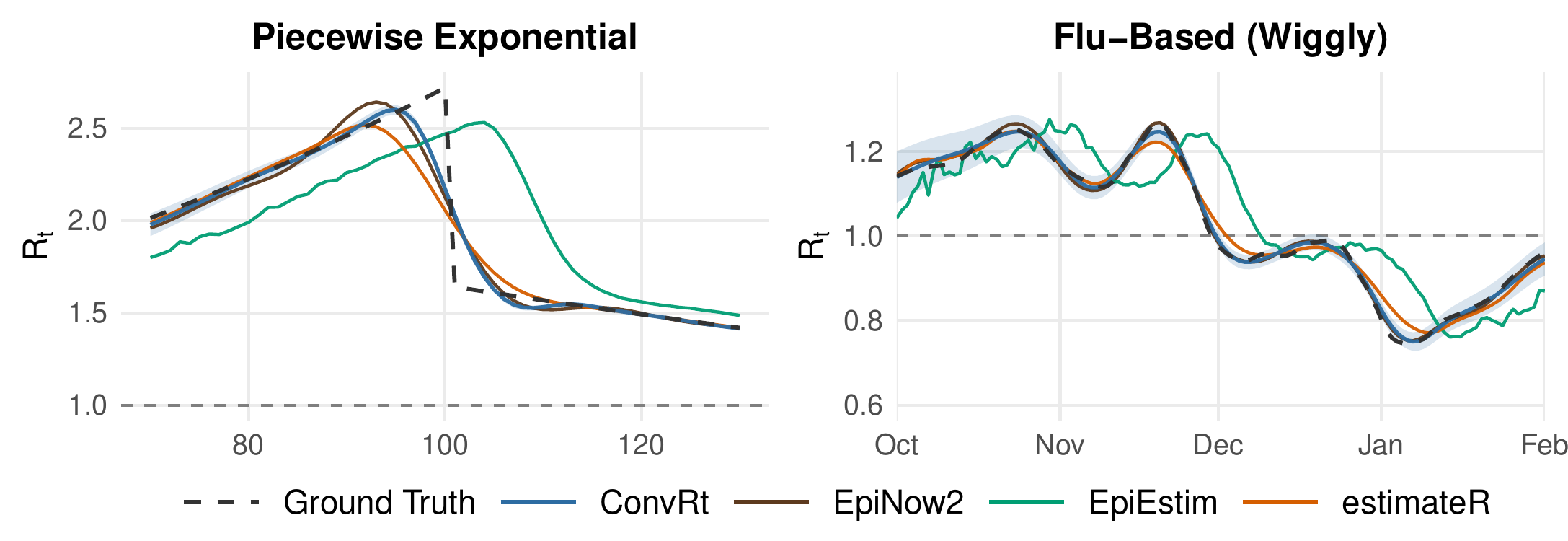}
    \caption{Comparison of retrospective $R_t$ estimates on benchmark datasets.}
    \label{fig:retro-comparison}
  \end{figure}

\Cref{tab:combined-retro} reports retrospective performance across all six benchmarks.
It reveals that EpiNow2 is ConvRt's only real rival on accuracy. 
Nevertheless, ConvRt reduces MAE by 20\% or more on 5 of 6 datasets, and is effectively tied on the last. 
In the best case---the periodic rtestim benchmark---ConvRt has nearly 70\% lower MAE than EpiNow2. 
\Cref{fig:retro-comparison} compares predictions on selected rtestim and influenza datasets.
Both ConvRt and EpiNow2 track the true $R_t$ curves closely.
Neither method has conclusively better uncertainty bands: each has lower CE in exactly half of the benchmarks.
Fortunately, both of them are relatively well-calibrated. 
With the exception of EpiNow2 on the smooth benchmark, CE always falls below 12, and on two benchmarks is below 5.

ConvRt is also several orders of magnitude faster than EpiNow2, running in seconds versus up to 2.5 hours.
Since a single fit is rarely the end of the analysis, this gap compounds: refitting under many hyperparameter settings multiplies EpiNow2's cost while leaving ConvRt's negligible.
EpiNow2 also fails to converge on many runs, lengthening the wait further.

Among methods that run in seconds, ConvRt wins on both performance metrics.
The methods that do not model latent infections track $R_t$ at a delay, and thus have much higher MAE and CE. \cref{fig:retro-comparison} omits EpiLPS and rtestim, which have similar bias to EpiEstim but are much less widely used.
estimateR fares better, since like ConvRt it deconvolves observations to recover latent infections.
However, it does not use a precise statistical model to estimate infections or $R_t$.
This results in less accurate predictions, for example oversmoothing the intervention badly in \cref{fig:retro-comparison}.
Compared to estimateR, ConvRt has over 20\% lower MAE on all datasets, and over 50\% on half.

ConvRt's main weakness is at the jump discontinuity, where its point estimates are too smooth and its confidence intervals fail to cover (\cref{fig:retro-comparison}).
Fortunately, such gaps are unlikely in practice, since interventions rarely have large instant effects.
Moreover, this has an easy fix.
Splines are not disposed to model sharp transitions between smooth periods, assuming instead that $R_t$ is piecewise cubic.
(Gaussian processes, assumed by EpiNow2, have similar smoothness constraints.)
To handle jumps in $R_t$, we augment its basis in ConvRt
with a step function that switches on at the intervention day. 
Performance then improves dramatically, with near-perfect predictions (\cref{apx:rtestim}).
The other methods could in principle be adapted to allow jumps, but we did not implement this.

\subsubsection*{Real-time Performance}

\begin{table}[ht]
\centering
\caption[Real-time performance on the influenza simulations.]{Real-time performance on the influenza simulations: last-7-day estimates
across 18 weekly vintages. MAE and $\ell_1$ calibration error (CE) in units of $10^{-2}$;
runtime is mean wall-clock time per vintage.}
\label{tab:combined-realtime}
\setlength{\tabcolsep}{5pt}
\begin{tabular}{ll cccccc}
\toprule
& & \textbf{ConvRt} & EpiNow2 & EpiEstim & estimateR & EpiLPS & rtestim \\
\midrule
\multirow{3}{*}{Flu (wiggly)} & MAE     & 4.14   & 4.29    & 7.16    & 7.17    & 6.45    & 6.70 \\
                        & CE      & 5.65   & 17.16   & 49.30   & 54.19   & 54.19   & 25.83 \\
                        & Runtime & 1.38 s & 24.3 m  & 0.01 s  & 1.36 s  & 0.11 s  & 0.44 s \\
\addlinespace
\multirow{3}{*}{Flu (smooth)} & MAE     & 2.51   & 2.14    & 3.67    & 4.24    & 2.98    & 5.09 \\
                        & CE      & 4.96   & 5.25    & 32.76   & 46.26   & 42.16   & 25.83 \\
                        & Runtime & 1.88 s & 24.9 m  & 0.02 s  & 1.36 s  & 0.12 s  & 0.42 s \\
\bottomrule
\end{tabular}
\end{table}

\Cref{tab:combined-realtime} reports real-time performance on the flu benchmark.
We refit weekly, with vintage end dates from October through January, and evaluate the last seven days of each.
ConvRt's confidence intervals use the conformalized approach of \cref{apx:rt-realtime-ci}.
Predictions are shown in \cref{apx:flu-sim}.

The retrospective picture carries over to the real-time case.
ConvRt and EpiNow2 remain the most accurate methods, with comparable point estimates.
Among methods that run in seconds, ConvRt wins on both metrics; the gap is widest on calibration, where every fast baseline has roughly $5$--$10\times$ its CE.

ConvRt also has advantages over EpiNow2 in the real-time setting.
It is much faster, taking about two seconds per vintage against EpiNow2's 25 minutes.
It has dramatically better calibration on the wiggly benchmark and comparable calibration on the smooth benchmark.
Later, we show a further advantage: its capacity for real-time scenario analysis.

\subsection{Real-Data Experiments}\label{sec:rt-experiments-real}

\subsubsection*{Setup}
Next, we demonstrate ConvRt's utility on real-world influenza data.
To run ConvRt, we tuned $\lambda$ with the 1se rule after noting a tendency for cross-validation to undersmooth given noisy counts.
Its predictions are consistent with those of established methods, while offering improved runtime.
We achieve similarly strong $R_t$ estimates for COVID-19, shown in \cref{apx:covidestim}.

The CDC posts flu $R_t$ in real-time, estimated by EpiNow2 \citep{cdccfa2026rt,abbott2020estimating}.
As in \cref{sec:setup2}, we use public hospitalization counts reported to HHS through the National Healthcare Safety Network (NHSN).
NHSN published inpatient hospitalizations for a variety of pathogens
from the beginning of the COVID-19 pandemic until May 2024.
We focus on the 2022/23 and 2023/24 flu seasons (\cref{fig:retro-flu-both}), since the pandemic flu seasons were considerably smaller.
In each season, we ran ConvRt and EpiNow2 from July 1 onwards.

We parameterized all delay distributions as discrete gamma,
choosing means and standard deviations to align with the literature. 
For $\pi^\text{EY}$, this was 5.7 and 2.3 days, respectively \citep{rousogianni2025clinical,noh2014viral}.
The generation interval had mean 3.2 and 1.6 \citep{chan2025estimating,cauchemez2009household,cowling2009estimation}.

\subsubsection*{Retrospective Results}

\begin{figure}[!h]
    \centering
    \includegraphics[width=\linewidth]{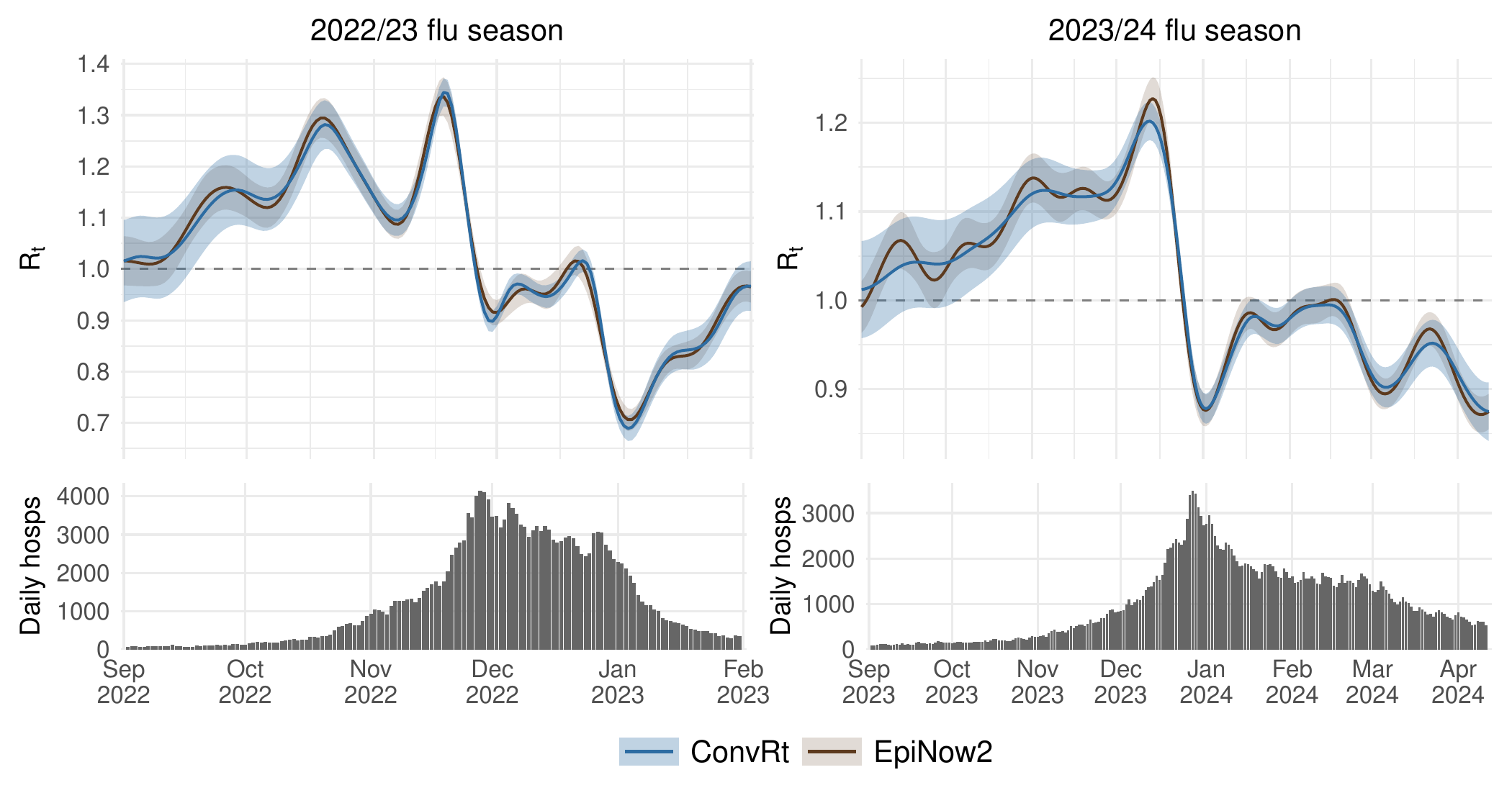}
    \caption[$R_t$ predictions for seasonal influenza.]{$R_t$ predictions for seasonal influenza. ConvRt and EpiNow2 fit retrospectively on NHSN hospitalization data, with 95\% pointwise uncertainty bands. 2023/24 was a longer flu season than 2022/23, so its plotting window is extended.}
    \label{fig:retro-flu-both}
\end{figure}

The two methods generate consistent $R_t$ curves when fit retrospectively at the end of each flu season (\cref{fig:retro-flu-both}).
Their shapes are highly similar, with two peaks in 2022 and one in 2023.
The main difference is that ConvRt predicts a smoother rise in fall 2023---a more intuitive shape, though the true $R_t$ is unknown.
EpiNow2 could match it with a tighter prior on its Gaussian-process jumps, but at the risk of oversmoothing elsewhere. ConvRt avoids this tradeoff by learning the smoothness level from the data.
Quantitatively, the two are very close. 
Both estimate $R_t$ near $1.3$ in 2022 and $1.2$ in 2023, with discrepancies around $0.02$ at the peaks.
This matches a CDC analysis that found a median peak $R_t$ of $1.28$ for seasonal influenza \citep{biggerstaff2014estimates}.

Beyond accuracy, ConvRt has much better runtime. 
It completes in 3 seconds, while EpiNow2 takes well over 30 minutes per season.
ConvRt's confidence intervals are also meaningful.
They almost always cover EpiNow2's predictions, while staying tight enough to be useful.
EpiNow2's credible intervals are narrower and sit entirely within ConvRt's, perhaps reflecting structure imposed by its priors.

\subsubsection*{Real-time Results}

\begin{figure}[!h]
    \centering
    \includegraphics[width=\linewidth]{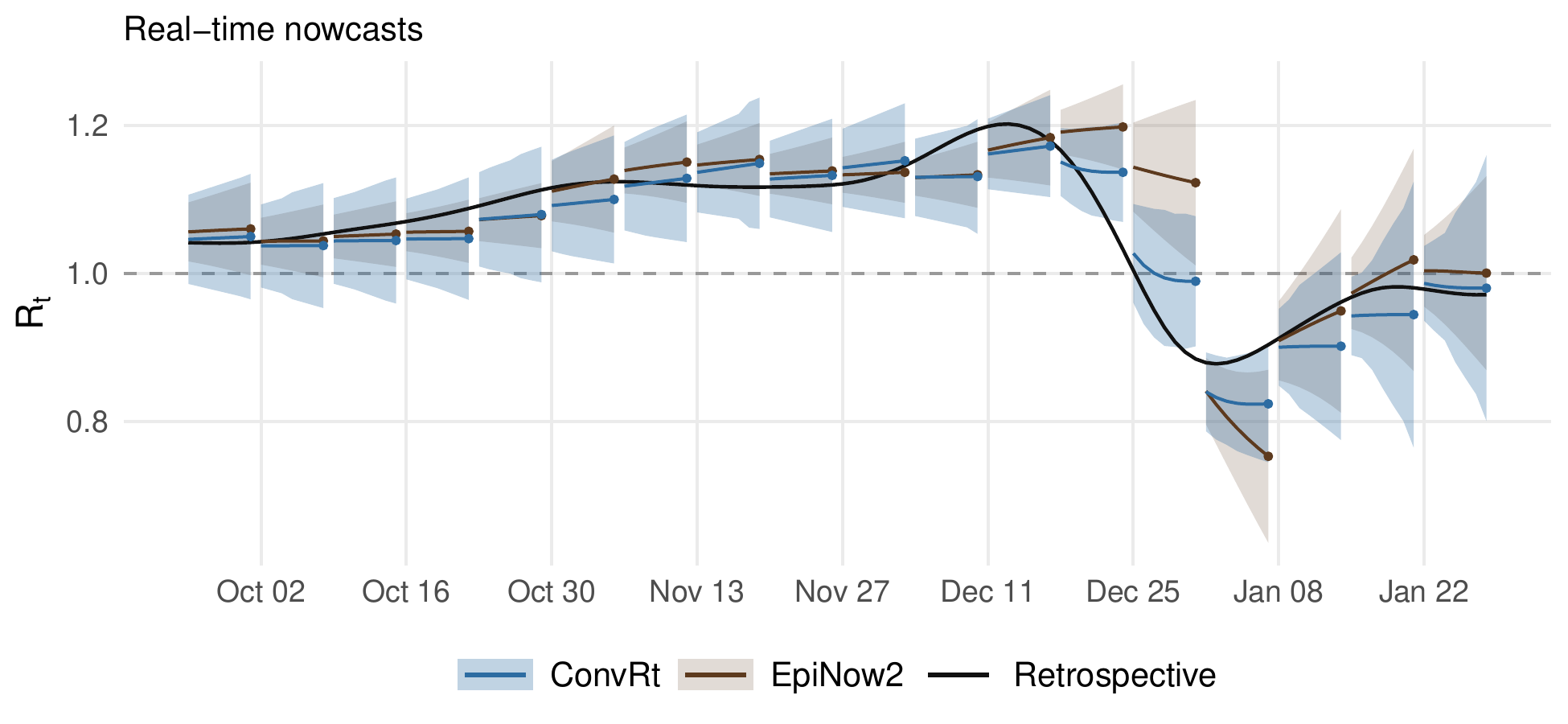}
    \caption[Real-time $R_t$ estimates and uncertainty bands for seasonal influenza.]{Real-time $R_t$ estimates and uncertainty bands for seasonal influenza in 2023/24.}
    \label{fig:UQ-ConvRt}
\end{figure}

Having established that ConvRt is competitive with EpiNow2 in retrospect, we examine its real-time performance.
\cref{fig:UQ-ConvRt} shows each method's weekly real-time estimate against the
finalized end-of-season fit.
Both closely track rising $R_t$ from October through mid-December.
However, in late-December, ConvRt predicts the decline in $R_t$ quite well;
in contrast, EpiNow2 is initially far too high, then overcorrects at the trough.
Some lag is unavoidable, since many infections take a week or more to surface in the
data, and the final nowcasts effectively extrapolate recent observations.
But ConvRt's tapered linear tail follows the prevailing trend, tracking turning
points more closely than EpiNow2.

The two methods also differ in uncertainty quantification.
ConvRt's bands come from the conformal method of \cref{apx:rt-realtime-ci} and
cover the retrospective ConvRt curve 96\% of the time, close to the 95\%
nominal level.
EpiNow2's credible intervals cover its own retrospective fit (not shown) only
76\% of the time.
The undercoverage is clearest at the late-December turning point, where its
bands widen sharply yet still miss the retrospective trough.
ConvRt's bands stay narrower through this transition and cover throughout.

\section{Scenario Analysis}\label{sec:scenario}

\begin{figure}[h]
    \centering
    \includegraphics[width=\linewidth]{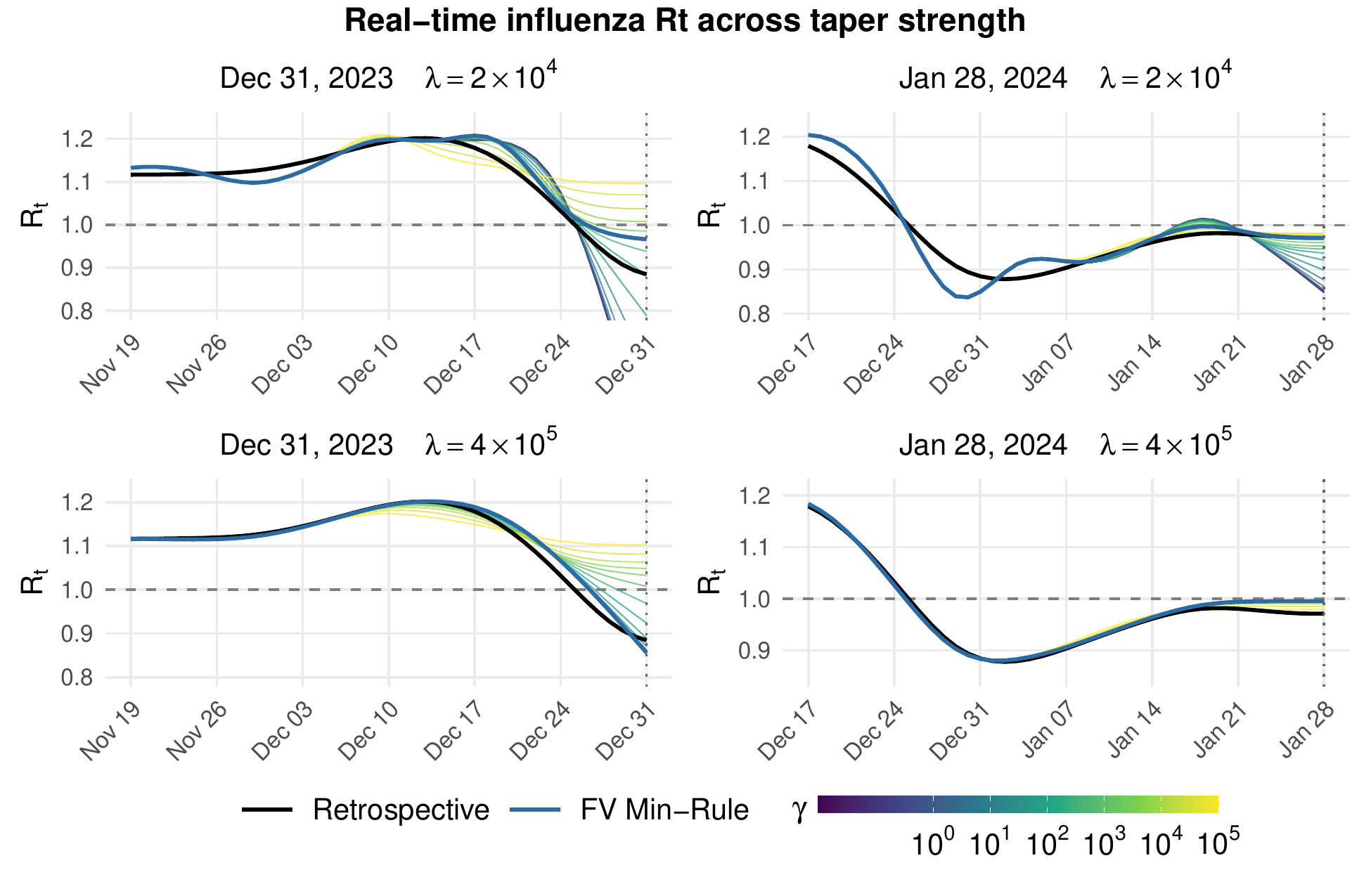}
    \caption[$R_t$ predictions at varying levels of the smoothness hyperparameters.]{$R_t$ predictions at varying levels of the smoothness hyperparameters $\lambda$ (overall) and $\gamma$ (tail).}
    \label{fig:rt-by-gamma}
\end{figure}

There is no way to know for certain which smoothness hyperparameters are optimal.
Different choices yield different stories about $R_t$, many of which may be plausible at once.
Cross-validation provides a reasonable default strategy, but distinguishing between competing narratives may be ambiguous quantitatively, as error curves are often relatively flat for deconvolution problems (e.g. \cref{fig:fv-errors-flu}).
Cross-validation may also select a fit that looks unreasonable upon visual inspection. 
Examining the range of fits across hyperparameters is therefore informative in its own right, and a practitioner's judgment matters throughout. 
This is especially important in real-time, where we have both $\lambda$ and $\gamma$.

Tuning $\lambda$ is particularly challenging towards the start of the season, as signals are hard to identify with fewer estimation dates. Our default is the 1se rule, but an acceptable option is to fix $\lambda$ at the value tuned retrospectively over the previous season, and another is to begin fitting $R_t$ from the prior year. These are specific remedies rather than a general recipe. The choice of $\gamma$ is even less clear-cut, since by its nature it extrapolates recent trends and imposes a stationarity assumption on $R_t$ near the tail. No amount of in-sample data tells us whether that assumption will hold going forward.

\cref{fig:rt-by-gamma} shows how the hyperparameters $\lambda$ and $\gamma$ affect $R_t$ predictions near the tail. Lower values of $\lambda$ (top row) correspond to more wiggly $R_t$ estimates than higher $\lambda$ (bottom row). When $\gamma$ is close to 0, $R_t$ is effectively linear after the final knot. Conversely, large values of $\gamma$ flatten the recent $R_t$ predictions, as it penalizes a weighted sum of squared first differences \eqref{eq:glm-real-time}. \cref{fig:fv-errors-flu} in \cref{apx:real-flu} visualizes the forward-validation error curves.

The two columns of \cref{fig:rt-by-gamma} illustrate contrasting scenarios. Treating retrospective estimates as ground truth, the left subplots depict $R_t$ falling sharply about 10 days after its peak. With hospitalizations averaging 5.7 days after infection, there is scarce data to express this change. Still, the linear-tail fit (with $\gamma = 10^{-1}$, effectively no penalization) predicts the steep decline. Given a smaller $\lambda$ (top left), those $R_t$ estimates fall extremely sharply, far below the true values. With larger $\lambda$ (bottom left), the sharpest fall is much less dramatic, and in fact quite accurate. $R_t$ hardly drops at all with larger $\gamma$, leveling out around the mid-December peak. Min rule tuning for $\gamma$ successfully predicts that $R_t$ is declining, in spite of the small amount of data. With smaller $\lambda$, it is spot-on. However, with larger $\lambda$ it remains above the ``true'' $R_t$.

The right subplots of \cref{fig:rt-by-gamma} show $R_t$ dropping below 0.9 on January 1. It then rises above 0.95 in mid-January, then remains there for the rest of the month. When using a smaller $\lambda$ and $\gamma$, the spline fit is not entirely steady at these dates (top right). Rather, it fits $R_t$ to noise, dropping around 0.1 in the final two weeks. Tuning $\gamma$ with the min rule averts this issue: the selected value is large enough that $R_t$ is roughly constant leading up to the estimation date. The bottom right subplot shows $R_t$ spanning a much smaller range across $\gamma$.

\section{Discussion}\label{sec:ch4-discussion}

\subsection{Contributions}

We developed ConvRt, a frequentist estimator for the effective reproduction number $R_t$.
It deconvolves latent infections from observed counts, then estimates $R_t$ via penalized quasi-Poisson regression with a spline basis.
Tapered penalties and tail structure stabilize real-time predictions.

In our experiments, ConvRt's only real competitor on accuracy is EpiNow2.
It reduces MAE on most benchmarks in retrospect, and the two are effectively tied in real time.
Every other baseline has substantially higher error.
But EpiNow2 and other accurate Bayesian methods---CovidEstim \citep{Chitwood2022} and recent MCMC-based approaches for stratified models \citep{bosse2024bayesian}---rely on sampling that can take tens of minutes to hours per fit.
In contrast, ConvRt fits in seconds.
This gap is consequential: public health agencies refit $R_t$ daily across many jurisdictions, and during surges, when timely estimates matter most, computational budgets are tightest.
Speed also enables the scenario analyses of \cref{sec:scenario}, impractical when a single refit is expensive.

ConvRt's confidence intervals do not require specifying prior distributions.
Bayesian $R_t$ methods require priors that simultaneously encode beliefs about transmission and control the roughness of the trajectory.
Prior choices always affect point estimates and credible intervals, complicating interpretation---especially when they are poorly chosen, such as when package defaults are adopted without consideration of the specific epidemic context.
Frequentist intervals avoid this entanglement: their coverage depends on the data and the model, not on prior assumptions.
Empirically, ConvRt achieves better coverage than every fast baseline and outperforms EpiNow2 in real-time calibration.

More broadly, ConvRt decouples two jobs that Bayesian methods conflate: regularizing the estimate and expressing assumptions about future transmission.
The hyperparameter $\lambda$ controls wiggliness, while $\gamma$ controls how aggressively recent trends are extrapolated.
Practitioners can choose $\lambda$ based on the temporal resolution they care about, then specify $\gamma$ based on their assumptions about the near future, rather than inheriting both from a prior chosen for in-sample regularization.
Because refitting is cheap, one can examine the full family of estimates across $(\lambda, \gamma)$ and assess which narratives the data supports.

\subsection{Extensions}

We implemented two extensions of ConvRt, detailed in \cref{apx:extra-methods}, that are not benchmarked comprehensively in this chapter.

The first handles weekly data.
Daily surveillance data were more widely available during the pandemic era, but many systems report only at weekly resolution.
FluSurv-NET has published weekly flu hospitalizations since 2005, and CDC's COVID-NET and RSV-NET follow the same cadence.
NHSN, the source used in our experiments, itself ceased daily reporting in May 2024.
We adjust the likelihood by aggregating daily means within each week, preserving the underlying daily transmission dynamics.
On the simulated flu benchmark, retrospective estimates from weekly data are nearly identical to those from daily data (\cref{apx:weekly}).
Real-time predictions are also encouraging, with forward-validation correctly selecting a tail penalty that stabilizes the final week's estimate.

The second extension replaces the spline basis with trend filtering regularization.
Trend filtering adaptively selects knot locations, whereas splines place them at predetermined positions.
This added flexibility may better capture local changes in $R_t$, such as abrupt shifts around interventions.
In \cref{apx:extra-methods}, we run ConvRt with a fourth-difference penalty on the 2022/23 flu season.
The resulting $R_t$ estimates are very similar to the spline's.
A rigorous comparison across datasets is left to future work.
Trend filtering may also be a good choice for the initial step of deconvolving latent infections, where we currently use splines but do not rely on their uncertainty bands.

We chose to focus on splines in this paper because trend filtering does not lend itself to uncertainty quantification as naturally.
This is particularly important for $R_t$, since epidemic growth is nonlinear and small differences can have large consequences. For example, an SIR model with $R_0$ near 1 ultimately infects around $2(R_0-1)\times 100\%$ of the population \citep{diekmann2013mathematical}, so $R_0 = 1.15$ versus $1.05$ implies 20\% more of the population infected. Accordingly, popular $R_t$ methods all report confidence or credible intervals \citep{Hooker2011, cori2013new, abbott2020estimating, Chitwood2022, scire2023estimateR}, and the CDC asserts whether $R_t$ is rising or falling based on EpiNow2's credible interval \citep{cdccfa2026rt}.

\subsection{Limitations and Future Directions}

We discuss limitations shared with the severity rate work of Part~\ref{pt1} in the concluding chapter.
Here, we focus on issues specific to $R_t$ estimation.

A limitation of this work is that we did not address the nowcasting problem.
Real-time counts are typically incomplete at recent dates due to reporting delays and are revised upward as backfill accumulates.
ConvRt takes observed counts as given.
In practice, real-time estimation should be paired with a nowcast \citep{mcgough2020nowcasting, abbott2020estimating}.
How uncertainty in the nowcast propagates into $R_t$ estimates deserves further study.

ConvRt's two-stage procedure also does not propagate uncertainty from the infection deconvolution step into $R_t$ inference.
The plug-in estimates $\hat{x}_t$ are treated as fixed when fitting $R_t$, so the reported confidence intervals may understate the true uncertainty.
That said, our empirical results suggest this omission may not be severe: ConvRt already achieves competitive coverage without accounting for deconvolution noise.
Still, a principled correction would be useful.
A natural approach is the parametric bootstrap: resample infections from their estimated distribution, refit $R_t$ on each draw, and combine.
Since ConvRt fits in seconds, running hundreds of bootstrap replicates is feasible.
The bootstrap variance across infection draws can be combined with the closed-form sandwich variance (conditional on a given infection curve) using Rubin's combining rules \citep{rubin1987multiple}.
This gives a total variance that accounts for both estimation noise in $R_t$ and uncertainty in the plug-in infections.

The GLM structure naturally accommodates covariates: signals like mobility indices or vaccination rates could enter the linear predictor of $R_t$ alongside the spline basis, so that \mbox{$R_t = S(t)^\top \theta + z_t^\top \alpha$}.
Care is needed to ensure that noisy covariates do not inject roughness into $R_t$, for example by penalizing $\alpha$.

Our experiments used hospitalizations and case reports, both of which depend on time-varying ascertainment rates.
Misspecification of these rates biases $R_t$ estimates, and in practice they are difficult to pin down.
Wastewater viral concentrations sidestep this problem entirely, since all infected individuals shed virus regardless of whether they seek testing.
The observation model has the same convolutional structure: viral load at time $t$ is approximately infections convolved against a shedding kernel, playing the role of $\pi$.
ConvRt could be applied directly with this substitution.

More generally, one could fuse multiple data streams---cases, hospitalizations, deaths, wastewater---by summing their log-likelihoods against the same latent infection curve, each with its own delay kernel and ascertainment model.
This would apply to both deconvolution stages: estimating infections and estimating $R_t$.
CovidEstim \citep{Chitwood2022} takes a similar approach in a Bayesian framework, jointly fitting to case and death series.
A frequentist analogue would be straightforward: stack the observation equations and optimize jointly over the shared infection curve or $R_t$ spline.
Multiple streams constraining the same underlying infections may help resolve ambiguities that arise from any single source.
They could also make estimates more robust to misspecification of any one stream's ascertainment rate or delay distribution.

\cleardoublepage
\bookmarksetup{startatroot}   
\addtocontents{toc}{\bigskip} 

\chapter{Conclusion}
\label{ch:conclusion}

This dissertation studied methods for estimating two classes of time-varying epidemic metrics: severity rates and reproduction numbers.
These problems share deep structural parallels.
Both involve a primary event count, a secondary event count, a delay distribution connecting them, and a time-varying rate parameter to be estimated.
The generation interval in reproduction numbers plays the same role as the delay distribution $\pi$ in severity rates.
Standard real-time estimators for both quantities adopt a backward-looking perspective, analyzing secondary events at $t$ in terms of past primary events. 
For reproduction numbers, this perspective defines instantaneous $R_t$; 
severity rates are defined as forward-looking \eqref{eq:severity}, but the convolutional ratio estimator \eqref{eq:conv} uses this same idea.

In Part~\ref{pt1}, we exposed systematic bias in ratio estimators for severity rates and proposed a deconvolution-based alternative regularized by trend filtering.
In Part~\ref{pt2}, we proved the equivalence of instantaneous and mechanistic $R_t$ under homogeneous mixing, derived the generation interval implied by SEIR models, and developed ConvRt, a fast frequentist $R_t$ estimator that adapts the deconvolution framework to the renewal equation.
In this concluding chapter, we discuss how ideas developed for one problem can inform the other, and identify shared challenges and future directions.

\section{From Reproduction Numbers to Severity Rates}

Several techniques developed for $R_t$ estimation in \cref{ch4} could improve the severity rate methods of \cref{ch:paper2}.

\paragraph{A backward-looking definition of severity rates.}
The severity rates in \cref{ch:paper1,ch:paper2} are forward-looking: $p_t$ is defined by the future secondary events generated by primary events at $t$.
An alternative is the backward-looking severity rate,
\begin{equation}
\label{eq:back-conclusion}
\tilde{p}_t = \sum_{k=0}^d \Pprob(\text{secondary event at $t$} \given
\text{primary event at $t-k$}),
\end{equation}
which asks how many secondary events at $t$ can be attributed to recent primary events.
This parallels the distinction between case $R_t$ (forward-looking) and instantaneous $R_t$ (backward-looking) in \cref{ch3}.

As shown in \cref{ch:paper2}, the forward- and backward-looking rates coincide when conditions are locally stationary.
When they differ, the convolutional ratio \eqref{eq:conv} is in fact unbiased for $\tilde{p}_t$, not $p_t$ (see \cref{apx:back}).
This suggests an alternative estimation strategy: rather than targeting $p_t$ with deconvolution, one could target $\tilde{p}_t$ directly.
The convolutional ratio provides an unbiased starting point, and trend filtering or splines could further improve the estimate.
Just as most $R_t$ methods target the instantaneous (backward-looking) definition, backward-looking severity rates may be a natural and practical estimand for real-time surveillance.

\paragraph{Spline-based estimation and uncertainty quantification.}
The severity rate estimator in \cref{ch:paper2} uses trend filtering, which produces locally adaptive, piecewise polynomial fits.
However, as discussed in \cref{sec:ch4-discussion}, trend filtering does not lend itself naturally to uncertainty quantification.
Fitting severity rates as splines instead would enable the same sandwich-based confidence intervals we derived for $R_t$.
The two approaches could also be compared more rigorously: trend filtering may capture abrupt changes better, while splines offer smoother fits with principled inference.

\paragraph{Deconvolving latent infections.}
The severity rates in Part~\ref{pt1} are defined between observed event types: cases to deaths (CFR), hospitalizations to deaths (HFR), and so on.
An alternative is to define severity relative to infections rather than cases, yielding quantities like the infection-fatality rate (IFR) or infection-hospitalization rate (IHR).
These are more epidemiologically fundamental, since they do not depend on case ascertainment.
Estimating them requires first deconvolving latent infections from observed counts, exactly as ConvRt does.
The two-stage approach of \cref{ch4}---deconvolve infections, then estimate the rate parameter---could be applied directly, potentially using time-varying ascertainment rates estimated from external data such as seroprevalence studies.

\paragraph{Day-of-week effects.}
ConvRt includes multiplicative day-of-week effects in its observation model (\cref{eq:quasi-poisson-model}), capturing regular reporting artifacts.
Severity rate estimation would benefit from the same treatment, since hospitalization and death reports exhibit similar weekly patterns.

\section{Shared Challenges and Future Directions}

The methods developed across this dissertation share common structure and face common challenges.

\subsection*{Tail Regularization}

Both the severity rate and $R_t$ estimators use a tapered smoothing penalty to stabilize real-time predictions.
The hyperparameters $\lambda$ and $\gamma$ have complementary roles---wiggliness and tail extrapolation---whose relative importance varies by epidemic phase.
As discussed in \cref{sec:scenario}, examining the full family of estimates across $(\lambda, \gamma)$ is informative in its own right.

Several aspects of this framework deserve further investigation.
The tapered weights were taken from \citet{Jahja2022}, but other weight functions have not been explored.
Cross-validation for $\lambda$ does not always work well: oscillatory solutions can achieve low held-out error because overestimates and underestimates cancel across folds.
Developing improved tuning heuristics is an important practical direction.

\subsection*{Plug-In Quantities}

Both severity rate and $R_t$ estimation rely on plug-in quantities---delay distributions, generation intervals, and ascertainment rates---that are treated as fixed but are estimated from external data and subject to uncertainty.
Misspecification can bias the resulting estimates.

It would be valuable to propagate uncertainty in these quantities through to the final estimates.
In \cref{sec:ch4-discussion}, we discussed a similar issue for ConvRt's plug-in infection estimates and proposed combining the sandwich variance with bootstrap variance via Rubin's rules \citep{rubin1987multiple}.
The same idea applies here: draw plug-in quantities from their estimated sampling distributions, refit the estimator on each draw, and combine the within-draw and between-draw variances.
At least one Bayesian $R_t$ method places a prior on the delay distribution and learns it jointly \citep{abbott2020estimating}, providing a point of comparison.

More ambitious approaches would learn the delay distribution jointly with the rate parameter, or allow it to vary over time with its own smoothness penalty.
\citet{li2023reconstructing} examine a joint estimation strategy to deconvolve latent infections using the EM algorithm.
Constructing analogous approaches for our two metrics poses an interesting open problem.

\subsection*{Uncertainty Quantification}

Uncertainty quantification remains an open challenge, particularly in the real-time setting.
At the tail, regularization introduces bias that standard variance-only intervals do not capture.
This is not specific to our methods; any regularized estimator faces the same tradeoff between stability and coverage.

For $R_t$, we addressed this with a conformal approach (\cref{apx:rt-realtime-ci}): at each vintage, we treat predictions from earlier vintages as approximate ground truth and take quantiles of their residuals to form prediction intervals.
More sophisticated approaches could draw on the online inference literature.
Quantile tracking methods \citep{angelopoulos2024conformal} maintain adaptive prediction intervals that respond to distribution shift, and could be applied to the stream of $R_t$ nowcasts as they arrive.

For severity rates estimated with trend filtering, uncertainty quantification is harder.
The $\ell_1$ penalty produces a non-smooth objective, and the resulting estimator does not have a standard asymptotic distribution.
Data fission \citep{leiner2025data, dharamshi2025generalized} offers a promising path forward: it splits the information in each observation to create independent copies for selection and inference.
Applying this technique to trend filtering under a Poisson inverse problem, as arises in our deconvolution setting, is an open problem.

The structural parallels between severity rates and reproduction numbers run deeper than their shared mathematical form. As the directions above suggest, methodological progress on either problem directly benefits the other. Resolving these shared challenges would make surveillance estimates more trustworthy precisely when they matter most.

\bibliographystyle{plainnat}
\bibliography{refs}

\appendix
\chapter{\texorpdfstring{Supplementary material for \cref{ch:paper1}}{Supplementary material for Chapter 1}}
\label{app:A}

\section{Proofs}
\label[appendix]{apx:proofs}
The assumption of stationary delay distribution is not necessary for either bias expression. 
In the proofs in Sections \cref{apx:OracleBias} and \cref{apx:MispBias},
the delay distributions $\pi$ may simply be replaced with $\pi^{(t)}$.

\subsection{Proof of Proposition 1}
\label[appendix]{apx:OracleBias}

Proposition 1 establishes the bias of the well-specified convolutional ratio.
Here and henceforth, we abbreviate \smash{$\E_t[\cdot] = \E[\cdot \given
  x_{\leq t}]$}. Observe that
\begin{align*}
\bias(\hat{p}_t^\pi) 
&= \frac{\E_t[y_t]}{\sum_{k=0}^d x_{t-k}\pi_k} - p_t \\
&= \frac{\sum_{k=0}^d x_{t-k}\pi_k p_{t-k}}{\sum_{k=0}^d x_{t-k}\pi_k} - 
\frac{p_t \sum_{k=0}^d x_{t-k}\pi_k}{\sum_{k=0}^d x_{t-k}\pi_k} \\
&= \sum_{k=0}^d \frac{x_{t-k}\pi_k}{\sum_{j=0}^d x_{t-j}\pi_j} (p_{t-k}-p_t).
\end{align*}


\subsection{Proof of Proposition 2}
\label[appendix]{apx:MispBias}
Proposition 2 establishes the bias of the misspecified convolutional ratio, the lagged ratio being a special case.
Observe that
\begin{align*}
\bias(\hat{p}_t^\gamma) 
&= \frac{\E_t[y_t]}{\sum_{k=0}^d x_{t-k}\gamma_k} - p_t \\
&= \frac{\sum_{k=0}^d x_{t-k}\pi_k p_{t-k}}{\sum_{k=0}^d x_{t-k}\gamma_k} -
\frac{\sum_{k=0}^d x_{t-k}\gamma_k p_t}{\sum_{k=0}^d x_{t-k}\gamma_k} \\
&= \sum_{k=0}^d \frac{x_{t-k}}{\sum_{j=0}^d x_{t-j}\gamma_j}
(\pi_k p_{t-k} - \gamma_k p_t) \\
&= \sum_{k=0}^d \frac{x_{t-k}}{\sum_{j=0}^d x_{t-j}\gamma_j}
(\pi_k p_{t-k}-(\pi_k +(\gamma_k-\pi_k)) p_t) \\
&= \frac{\sum_{j=0}^d x_{t-j}\pi_j}{\sum_{j=0}^d x_{t-j}\gamma_j}
\sum_{k=0}^d \frac{x_{t-k}\pi_k}{\sum_{j=0}^d x_{t-j}\pi_j}(p_{t-k}-p_t) -
p_t\sum_{k=0}^d \frac{x_{t-k}}{\sum_{j=0}^d x_{t-j}\gamma_j}(\gamma_k -\pi_k) \\ 
&= \frac{\sum_{j=0}^d x_{t-j}\pi_j}{\sum_{j=0}^d x_{t-j}\gamma_j} 
\bias(\hat{p}_t^\pi) + p_t\Bigg( \frac{\sum_{k=0}^d x_{t-k}\pi_k} 
{\sum_{j=0}^d x_{t-j}\gamma_j}-1\Bigg).
\end{align*}

\section{Additional analysis and data sources}
\subsection{Further analysis of bias}
\label[appendix]{apx:analysis}

We first present examples that further explain the bias for the well-specified convolutional ratio. These examples are considerably more
contrived that the ones in \cref{sec:results}. Nonetheless, their bias can be simplified to 
simple analytic formulas, isolating the three contributing factors. 

To elucidate the relationship between changing severity rates and bias, let us
consider the case where all secondary events occur after precisely $\ell$ time  
points. The well-specified convolutional and lagged ratio estimators
coincide: \smash{$\hat{p}_t^\gamma = \hat{p}_t^\ell = p_{t-\ell}$}. The bias in
this setting is simply the change in the true severity rate, $p_{t-\ell} - p_t$, 
and the ratio estimator is unbiased only if the severity rate is stationary. 
Otherwise the ratio will be 20\% too low, for example, if the true severity rate was
20\% lower $\ell$ time steps ago. 

\begin{figure}[tb]
\centering

\begin{subfigure}[b]{0.49\linewidth}
  \centering
  \includegraphics[width=\linewidth]{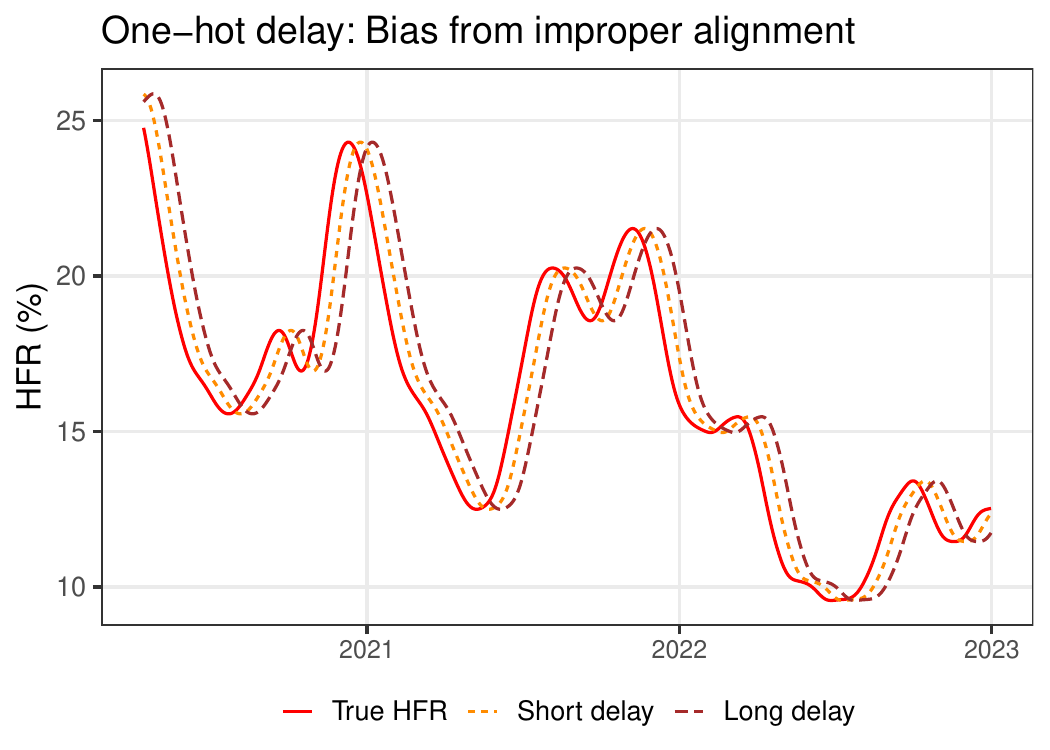}
  \caption{All deaths after $\ell$ days. HFR ratios equal;
  plotting delays of $\ell=14$ and 28 days.}
  \label{fig:onehot}
\end{subfigure}
\hfill
\begin{subfigure}[b]{0.49\linewidth}
  \centering
  \includegraphics[width=\linewidth]{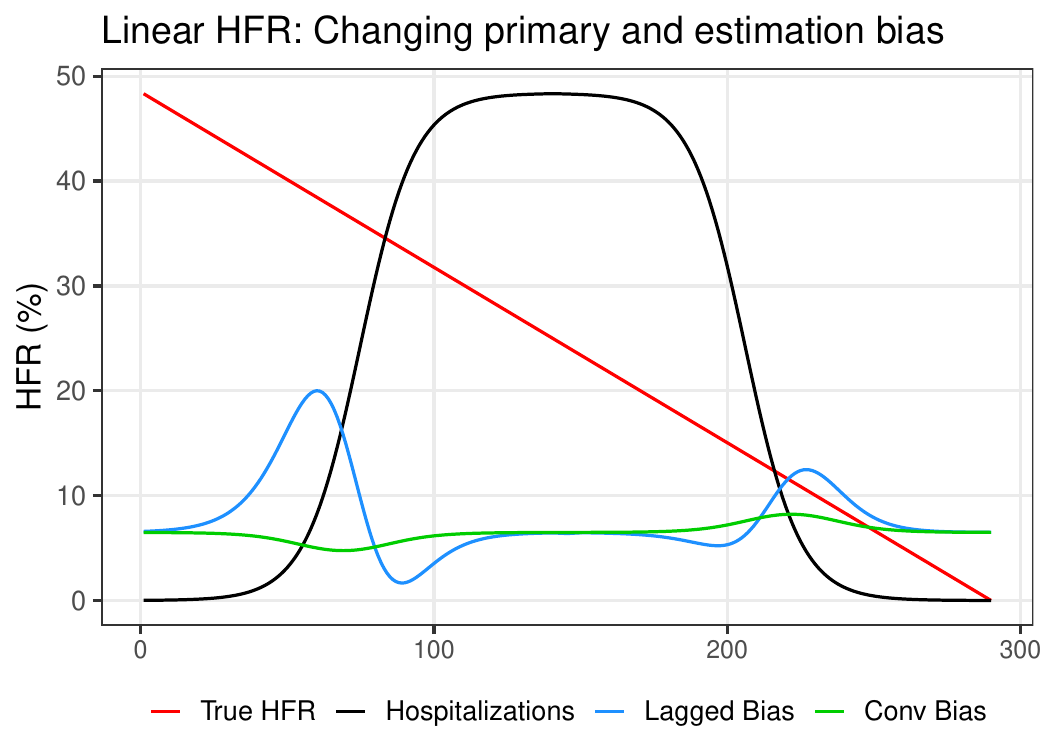}
  \caption{Changing primary incidence. Bias of lagged and
  convolutional ratios.}
  \label{fig:chging_primary}
\end{subfigure}

\vspace{0.75em}

\begin{subfigure}[b]{0.49\linewidth}
  \centering
  \includegraphics[width=\linewidth]{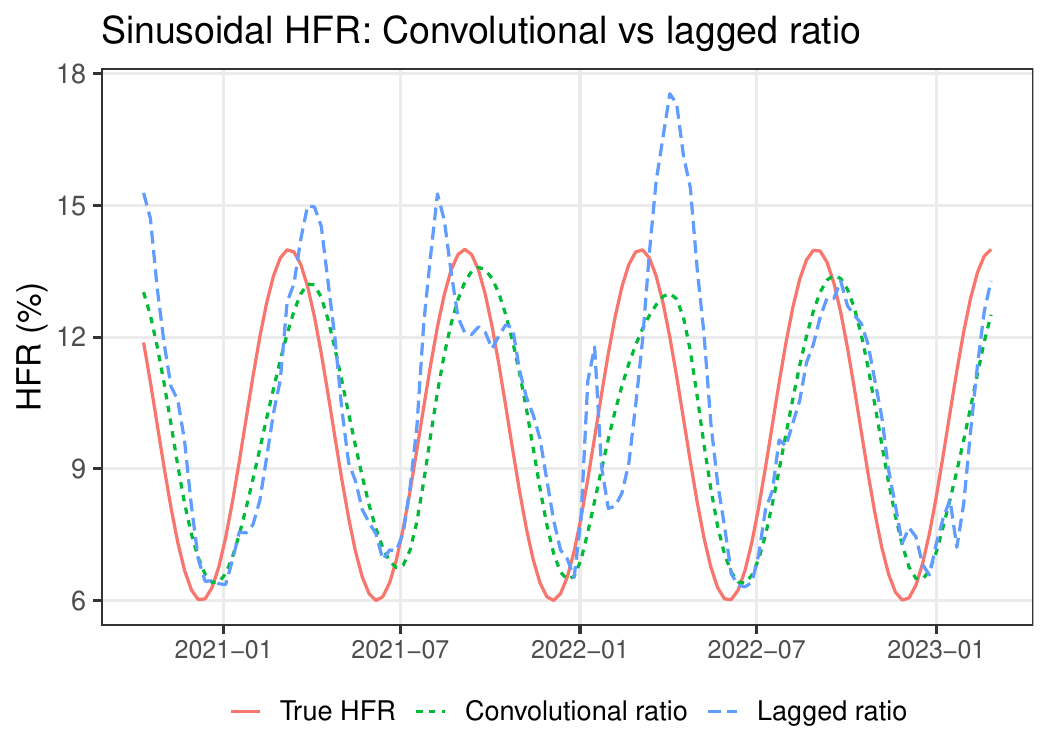}
  \caption{Sinusoidal severity rate and ratio estimates.}
  \label{fig:sinusoidal}
\end{subfigure}
\hfill
\begin{subfigure}[b]{0.49\linewidth}
  \centering
  \includegraphics[width=\linewidth]{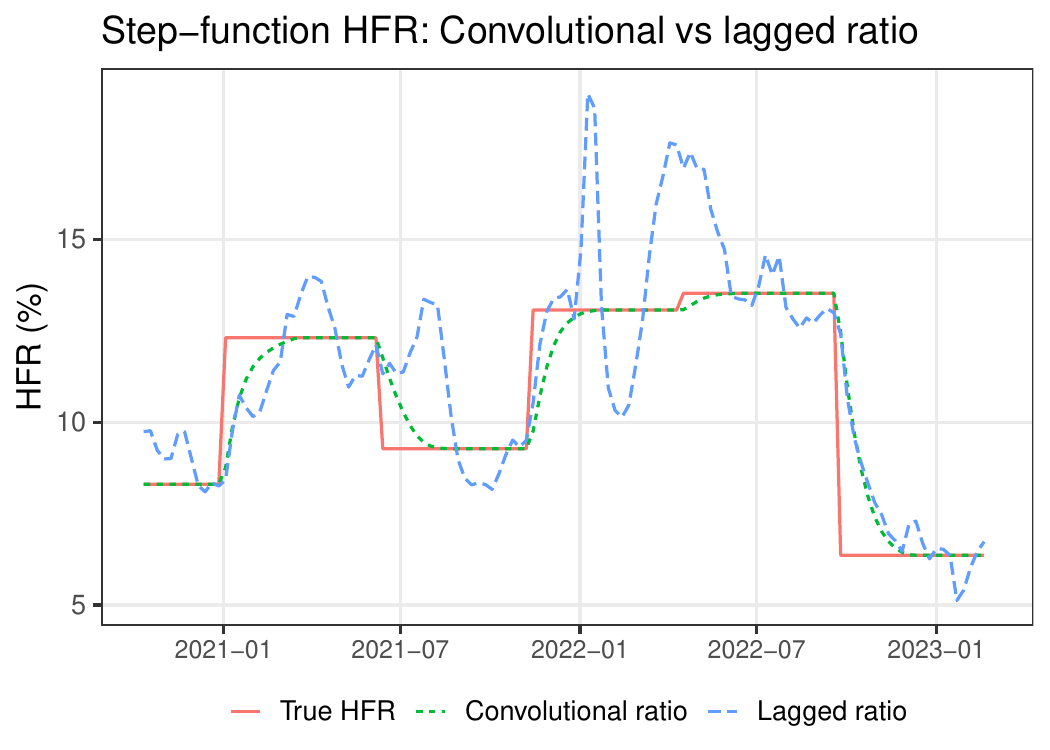}
  \caption{Severity rate shifts via step functions.}
  \label{fig:stepfun}
\end{subfigure}

\caption[Toy examples demonstrating biased severity-rate estimators.]{Toy examples demonstrating biased severity-rate estimators under different delay structures and primary-incidence patterns.}
\label{fig:bias_ex}
\end{figure}

\cref{fig:onehot} displays the results on the NHCS HFRs.
In general, severity rates will be less similar to the present value $p_t$ as we go further back in time. The bias $p_{t-\ell}-p_t$ tends to be larger when $\ell=28$ versus $\ell=14$. This supports the overarching idea
that estimates with heavier-tailed delay distributions tend to have more bias.          

Now to elucidate the relationship between primary incidence and bias, let us 
consider a delay distribution $\pi$ which places half its mass at lag 0, and the
other half at lag $q$. Then the well-specified bias has magnitude:
\[
\big| \bias(\hat{p}_t^{\pi}) \big| = \frac{\frac{1}{2} \big| x_t(p_t-p_t) +
  x_{t-q}(p_{t-q}-p_t) \big|} {\frac{1}{2}(x_t+x_{t-q})} = \frac{|p_{t-q}-p_t|}
{1 + x_t / x_{t-q}}. 
\]
In other words, the absolute bias is monotonically decreasing in
$x_t / x_{t-q}$, the proportion change in primary incidence. Rising
primary incidence ($x_t / x_{t-q} > 1)$ yields less bias, while falling 
levels yield more. 

\cref{fig:chging_primary} displays this setting with $q=10$. Rising hospitalizations are
defined as \mbox{$x = \sigma(s)*9000+1000$}, where $\sigma$ is the sigmoid function
and $s$ takes 300 evenly spaced steps from -9 to 7. These quantities are subsequently reflected to express a decline.
The true HFRs fall from
0.5 to 0 over the same number of even steps. 
Accordingly, the absolute bias of the convolutional ratio is $c_q\frac{x_{t-q}}{x_{t-q}+x_t}$, where $c_q\approx 0.0167$. Shown in red, it dips as hospitalizations rise, and rises as they fall.  

The figure also plots with lagged ratio with $\ell=\frac{d}{2}$, the mean of
the delay distribution. When daily hospitalizations are close to constant, the
two estimators converge towards the same ratio. During periods of change,
however, the lagged estimator has different bias. It first moves upwards ---
the opposite direction as the convolutional bias --- with far greater
magnitude. This can be explained by the ratio $A_t^\ell =
\frac{x_{t-2\ell}+x_t}{2x_{t-\ell}}$ from Proposition 2. As
hospitalizations begin to steeply rise, $x_{t-2\ell}$ and $x_{t-\ell}$ are
similar, but $x_t > x_{t-\ell}$. Hence, $A_t^\ell>1$, contributing positive
bias to both the oracle and misspecification terms. As hospitalizations level
out near the top, $A_t^\ell < 1$, hence the bias falling lower. The opposite
pattern occurs as hospitalizations fall.  



We consider two additional settings with contrived severity rates, beyond the linear change in \cref{fig:chging_primary}. \cref{fig:sinusoidal,fig:stepfun} generate deaths with sinusoidal and piecewise constant HFRs, respectively. They use true US hospitalization counts throughout over two years of the COVID-19 pandemic. Deaths are simulated noiselessly to highlight the estimators' bias. The delay distribution is the same as in the simulated experiments in the main text: A gamma distribution whose mean maximizes the correlation between hospitalization counts from HHS and death counts from JHU. 

\cref{fig:sinusoidal} reveals the ratio estimators have varying capacity for bias when the true severity rate is sinusoidal. In general, the ratios trail behind the true values some number of days. The convolutional ratio mimics the general sinusoidal shape, though its high and low values are about 1\% less stark than the true values. Intuitively, the ratio smooths over the trailing history, so it is incapable of precisely identifying peak and trough values. The lagged ratio, as explained in the main text, depends heavily on the primary incidence curve and its relation to the delay distribution. Consistent with our analysis in \cref{sec:results_sim}, this leads to high values of $A_t^\ell$ inflating HFR estimates in Spring 2022. We see the highest lagged HFR reaching 17\%, while the true severity rate turns over around 14\%. 

\cref{fig:stepfun} tracks the ratios as the true severity rate jumps up and down by varying degrees. The well-specified convolutional ratio gradually reaches each new plateau and stabilizes. The time for it to adjust depends on the magnitude of the shift. This is consistent with our analysis in \cref{sec:misspecified}, which showed that larger changes in severity rate correspond to more bias. 

The lagged ratio is more heavily dependent on the primary incidence curve, acting through $A_t^\ell$. 
Proposition 2 shows this term alters the well-specified bias via a multiplicative correction and, more significantly, an additive adjustment.
How $A_t^\ell$ produces the lagged ratio's high positive, negative, and positive bias with a constant severity rate in early 2022 is explained precisely in \cref{sec:results_sim}. 
Another example of biased divergence occurs in the Delta wave of summer 2021, in which rising primary incidence triggers a rise in $A_t^\ell$ and thus $\hat p^\ell$.
However, when primary incidence is roughly constant, the lagged and convolutional ratios coincide. We see this in the latter half of 2022, when hospitalization counts were low.

\begin{figure}
    \centering
    \includegraphics[width=0.9\linewidth]{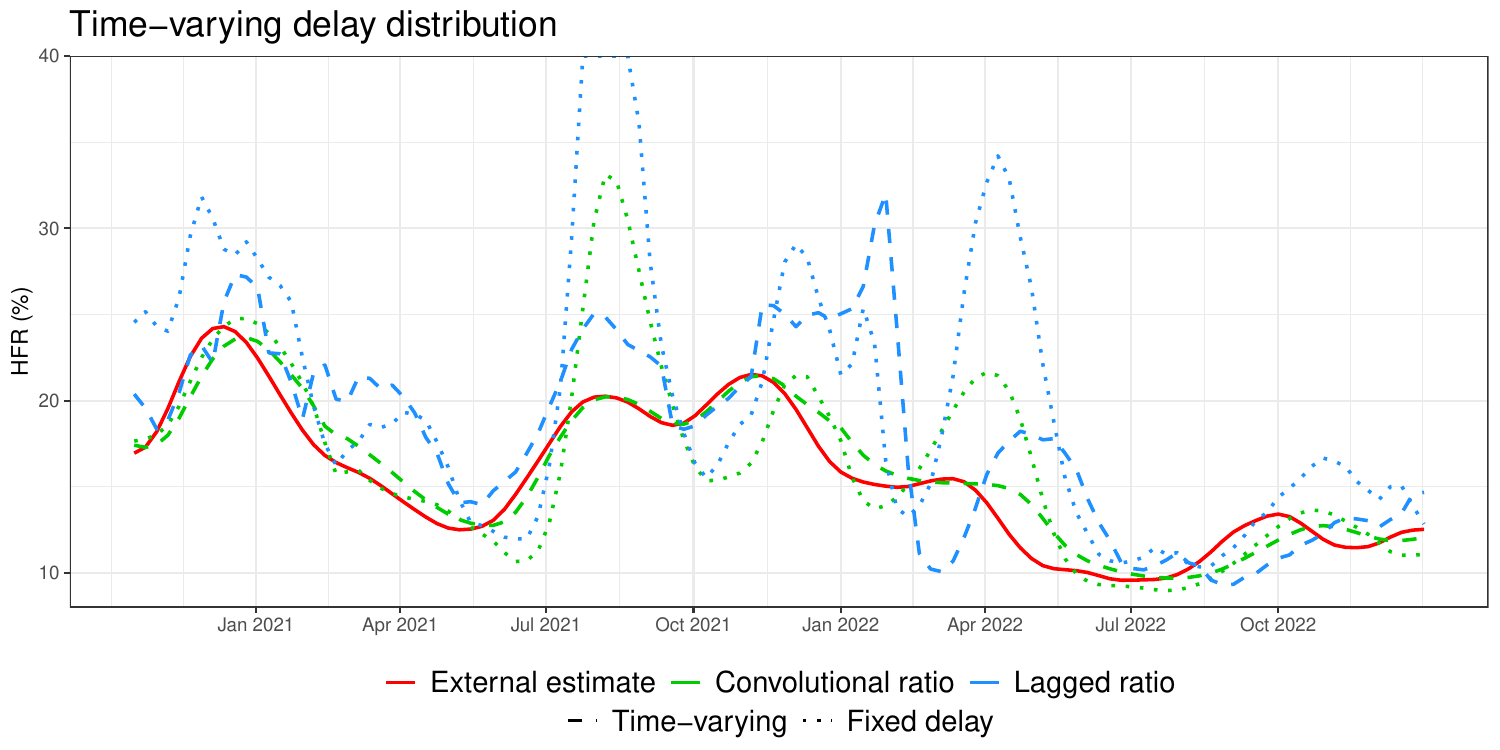}
    \caption[Severity rates where the true delay distribution is time-varying.]{Severity rates where the true delay distribution is time-varying. Dashed lines compute ratios with the true nonstationary delay distributions and their means as the lags. Dotted lines use the same constant distribution and lag -- their means over all time.}
    \label{fig:timevar}
\end{figure}

We additionally evaluated these methods when the true delay distribution is time-varying. Our analyses focus on the bias at a single point in time, so their messages should transfer translate to the time-varying setting. 
To generate realistic delay distributions, we identified the major variant periods using data described in \cref{apx:alt_gt}. 
Within these ranges, we computed the correlation-maximizing lag between HHS hospitalizations and JHU deaths. 
These lags ranged by a surprising margin: only a handful of days for the Delta wave, and over 30 during Omicron. 
We then computed the expected hospitalization-to-death delay by mixing these lags with the proportions of each variant in circulation. 
(This is akin to how we alternatively estimated HFRs in the aforementioned Section.)
Finally, we parameterized the delay distributions based on these mean delays and their corresponding standard deviations.
We simulated deaths without noise from these delay distributions and the NCHS-based HFRs.

On this simulated data, we computed the convolutional and lagged ratios in two ways. Firstly, we let them use the true time-varying delay distributions and lags. Under this formulation, the convolutional ratio is well-specified; the lagged ratio is not, as it still reduces the distribution to a point mass, but the mass is at least selected in the proper location.
The second strategy ignored the time-varying nature of the true model. 
It instead used stationary quantities: the mean hospitalization-to-death distribution and lag (18 days) over all time. 
This stationarity induces misspecification bias into both ratio estimators.

The behavior of the biases in \cref{fig:timevar} is consistent with our other analyses. 
The well-specified convolutional ratio is close to unbiased during stretches for which the delay distribution had low mean. 
During Omicron, when deaths were reported at a significant delay, it tracks the ground truth HFR at a much longer offset.
This observation in the time-varying case is identical to our analysis that the well-specified convolutional ratio is more biased under heavier-tailed delay distributions (see, for example, \cref{fig:toy_delay}).

The corresponding lagged ratio is also more biased during Omicron than Delta in this experiment. 
The light-tailed delay distribution indicates the probability mass is more condensed around the lag, so $A_t^\ell$ is not significantly greater than 1 as hospitalizations rise in Delta.
However, as Omicron surged around the start of 2022, the lag is over 30 days.
Hospitalizations were much lower at this offset, so $A_t^\ell$ has the capacity to get very large, and thus produce a high amount of bias.
The bias is greatest around the peak in late January, when the lagged ratio is more than double the true HFR of 15\%.
Interestingly, the positive bias that spring is less pronounced than in previous experiments. This is because the longer lag here captures more distant hospitalizations, which brings down $A_t^\ell$ somewhat.

The misspecified ratios with stationary parameters exhibit more bias. 
In the Delta wave, where the delay distribution and lag were too long, both estimators were much too high. They placed too much emphasis on low counts as the surge was beginning, resulting in significant positive bias. 
This was particularly pronounced for the lagged ratio, which was positively biased even under the accurate lag. 
It peaked off the figure at 60\%, triple the true HFR.

During Omicron, the ratios' delay distributions and lags were too short. 
This led to \textit{less} bias than the appropriately-specified ratios during the surge and decline, since these parameters accurately reflected current conditions. 
However, it produced high positive bias for both estimators after the surge leveled out in April 2022. 
As in the main text, the lagged ratio is extremely biased, since its denominator entirely fails to reflect the bygone surge. It reaches over 30\% as the true HFR falls towards 10\%. 

\subsection{Retrospective deaths}
\label[appendix]{apx:NCHS_deaths}

JHU aggregated daily deaths in real time, aligned by the date they were
reported. In contrast, the National Center for Health Statistics (NCHS) provided
weekly totals of deaths aligned by occurrence, which were not available in real  
time. Intuitively, the delay which relates hospitalization to death occurrence
should have a lighter tail in comparison to that relating hospitalization to
death report. Therefore, since that heavier-tailed delays generally introduce
greater bias, we would expect the ratio estimates computed from JHU deaths to
have greater bias than those from NCHS deaths. \cref{fig:jhu_vs_nchs}
shows that this is indeed the case. 
NCHS data leads to lagged HFR estimates with substantially less bias. 
However, this resource was only available in retrospect, so these ``real-time'' estimates could not actually be produced until long after the dates in question.

\begin{figure}[h]
\centering
\includegraphics[width=\linewidth]{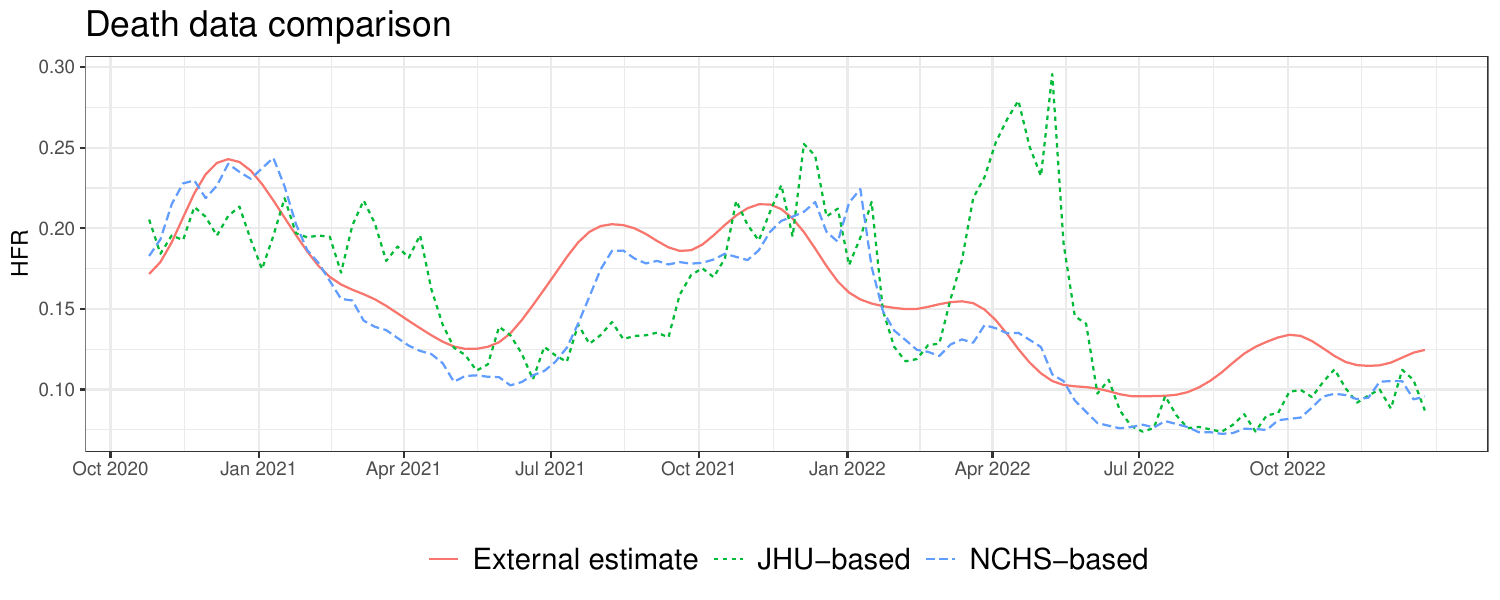}
\caption{Comparing lagged ratios based on data from JHU versus NCHS.}
\label{fig:jhu_vs_nchs}
\end{figure}

\section{Robustness checks}
\label[appendix]{apx:robustness}

\subsection{Alternative external estimates of HFR}
\label[appendix]{apx:alt_gt}

\begin{figure}[b!]
\centering
\includegraphics[width=\linewidth]{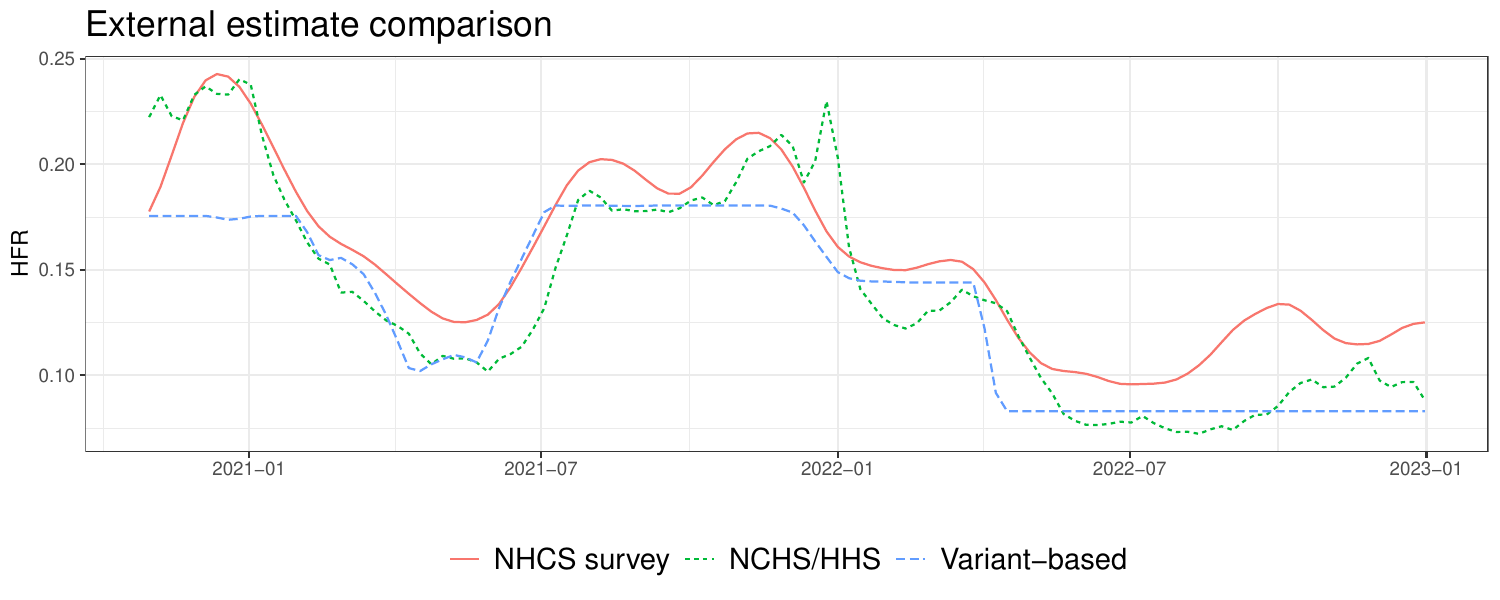}
\caption{Comparing methods that use external data to retrospectively approximate HFR.}
\label{fig:approxGT}
\end{figure}

We consider two alternative approaches to approximate the true HFR
curve, over the COVID-19 pandemic. The first is simply to use the lagged ratio
based on NCHS data. As discussed above, this benefits from a lighter-tailed
delay distribution than JHU data. 
Further, as we able to use data after $t$ to estimate $p_t$, we use the retrospective severity estimator introduced in Overton et al., 2022:
\begin{equation}\label{eq:conv-retro}
    \hat{p}_t = \sum_{k=0}^d
  \frac{{Y}_{t+k}\hat\pi^{(t)}_{k}}{\sum_{j=0}^d
  {X}_{t+k-j}\hat\pi^{(t+k-j)}_{j}}. 
\end{equation}
The second approach we consider is
to compute a single HFR by dividing total deaths by total hospitalizations in
each major variant period,  
and then create a smooth curve by mixing these per-variant HFRs by estimates
of the proportion of variants in circulation, obtained from CoVariants.org. We only consider the four largest variants: the original
strain, Alpha, Delta, and Omicron. However, because Omicron began with an
enormous surge that quickly subsided, we split it into early and late periods.

\cref{fig:approxGT} displays these two alternative HFR estimates,
alongside the curve obtained from NHCS. They have nontrivial differences in
magnitude, but reassuringly, the three curves move more or less in
conjunction. The retrospective NCHS ratio is still subject to statistical bias
similar to equation \eqref{eq:MispBias}; the variant-based HFR curve is flatter, as it 
does not account for other sources of potential variability in the underlying
severity rate.   


\subsection{Real-time versus finalized data}

Recall, the results in \cref{sec:results_real} use hospitalization and
death counts available in real time. To investigate the sensitivity of our
findings, we recompute the lagged and convolutional ratios, this time using
finalized counts. \cref{fig:rt_and_final} shows the real-time and
finalized estimates track one another very closely. Therefore, the observed bias
in \cref{fig:basic_est_vs_gt_figs} cannot be attributed to real-time
reporting quirks.  


\begin{figure}[htb]
\includegraphics[width=\linewidth]{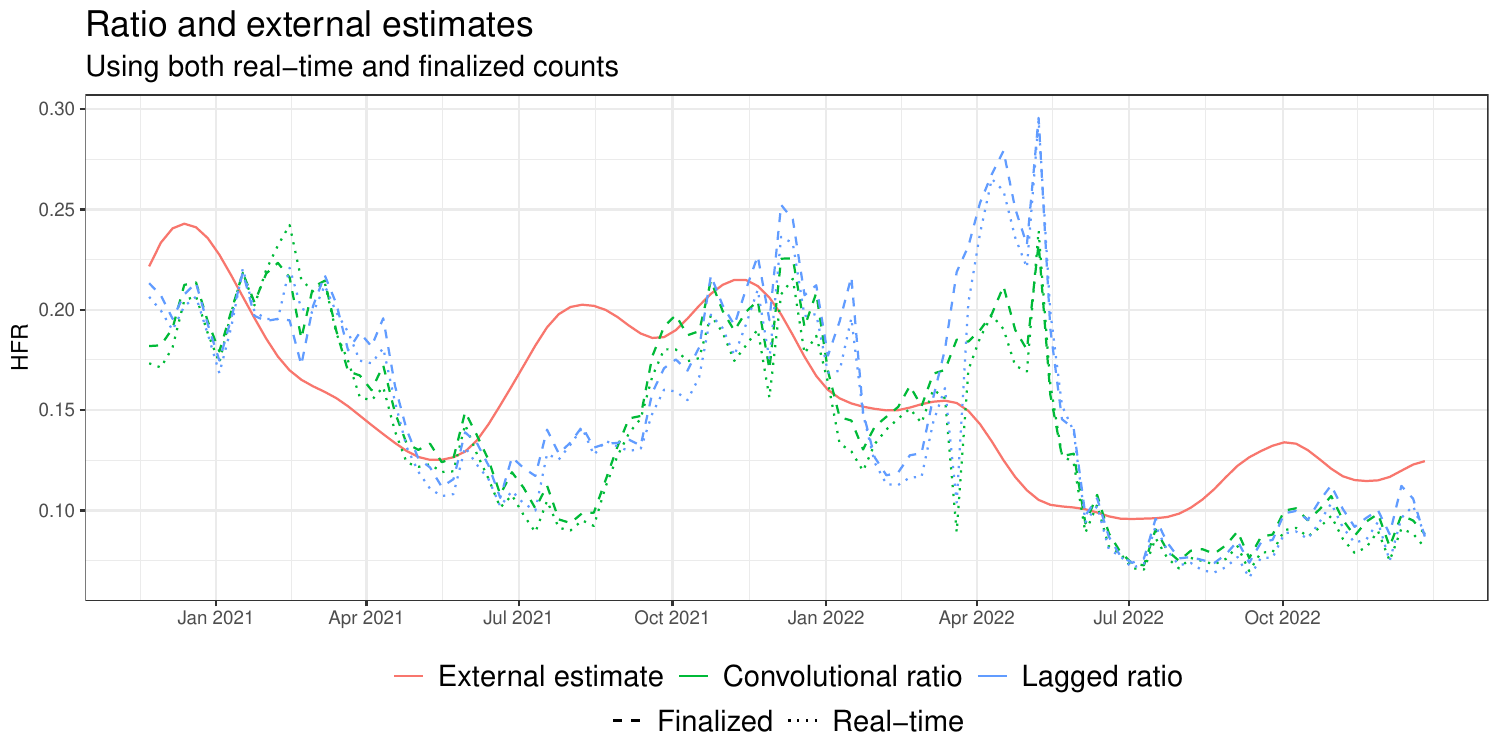}
\caption{Comparing estimates based on real-time versus finalized counts.}    
\label{fig:rt_and_final}
\end{figure}

\subsection{Hyperparameters}

We evaluate the robustness of our findings against choices of hyperparameters.
First, we analyze smoothed versions of the ratio estimators, where we smooth the 
numerator and denominator separately:
\begin{align*}
\hat{p}_t^{\ell,w} &= \frac{\sum_{s=t-w+1}^t y_s}
{\sum_{s=t-w+1}^t x_{s-\ell}}, \\
\hat{p}_t^{\gamma,w} &= \frac{\sum_{s=t-w+1}^t y_s}
{\sum_{s=t-w+1}^t \sum_{k=0}^d x_{s-\ell-k}\gamma_k}.
\end{align*} 
\cref{fig:window} shows the results for varying window lengths $w >
0$. The results are very similar, indicating the bias does not disappear when
smoothing over a longer history.  

\begin{figure}[h!]
\centering
\includegraphics[width=.95\linewidth]{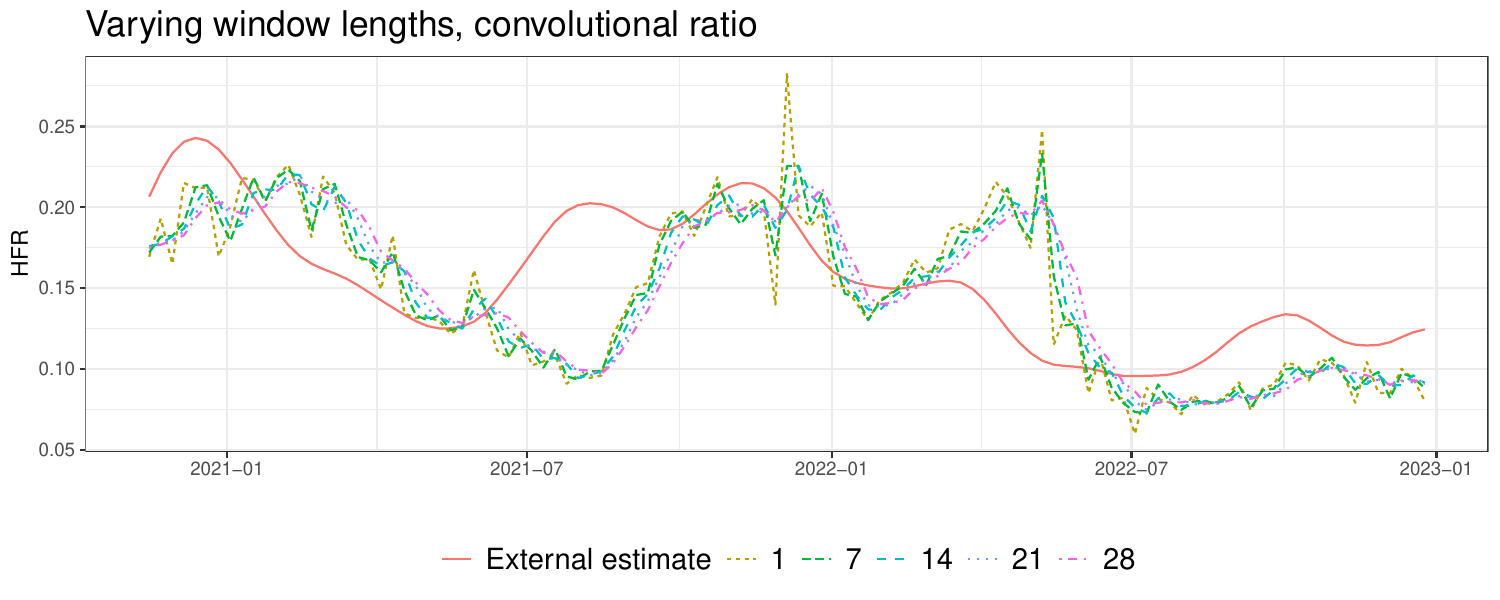}
\includegraphics[width=.95\linewidth]{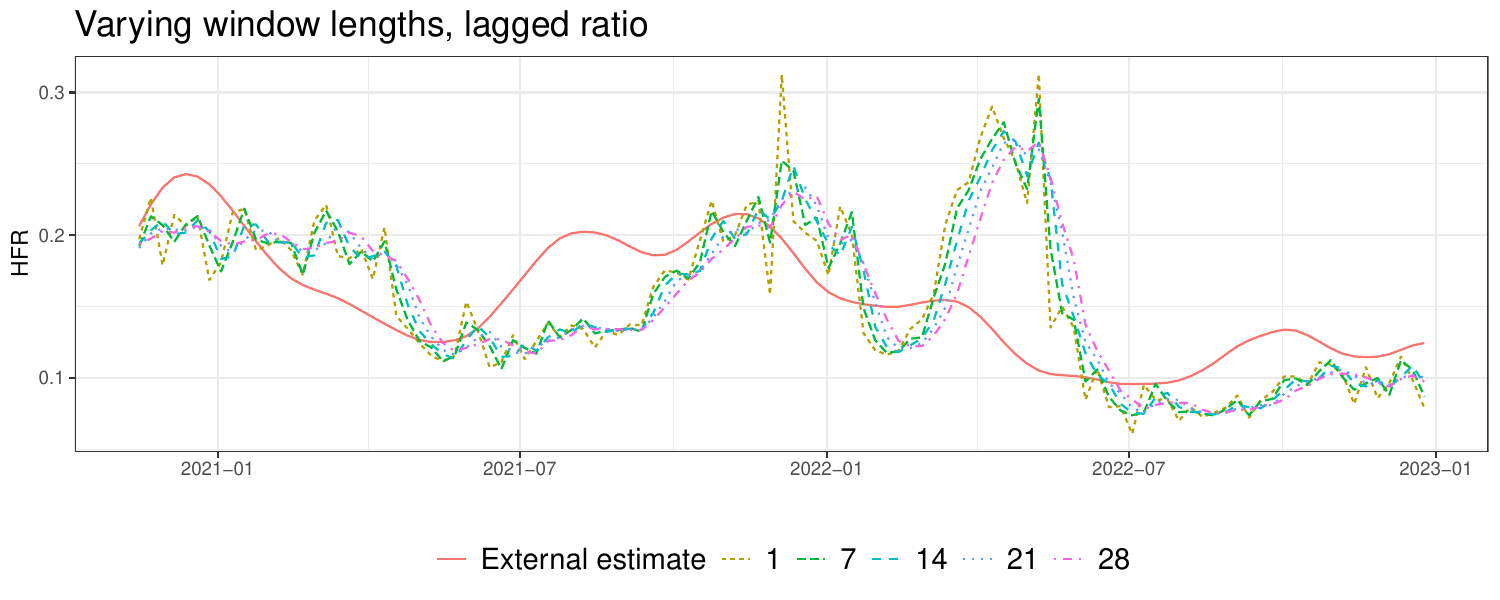}
\caption{Comparing different choices of window length in post-smoothing.}
\label{fig:window}
\end{figure}

We next examine the time-to-death hyperparameters: The lag for the lagged ratio
and delay distribution for the convolutional ratio. \cref{fig:lag}
displays lagged HFR estimates where the lag $\ell$ ranges from 2 to 5
weeks. Unlike the window size, changing this parameter leads to notably
different behavior. Some choices of lag are better than others; a 28-day lag,
for example, falls appropriately in winter 2021, and rises less slowly during
Delta. However, all are biased to varying degrees, most notably the huge
spurious surge in spring 2022.

\begin{figure}[p]
\centering
\includegraphics[width=\linewidth]{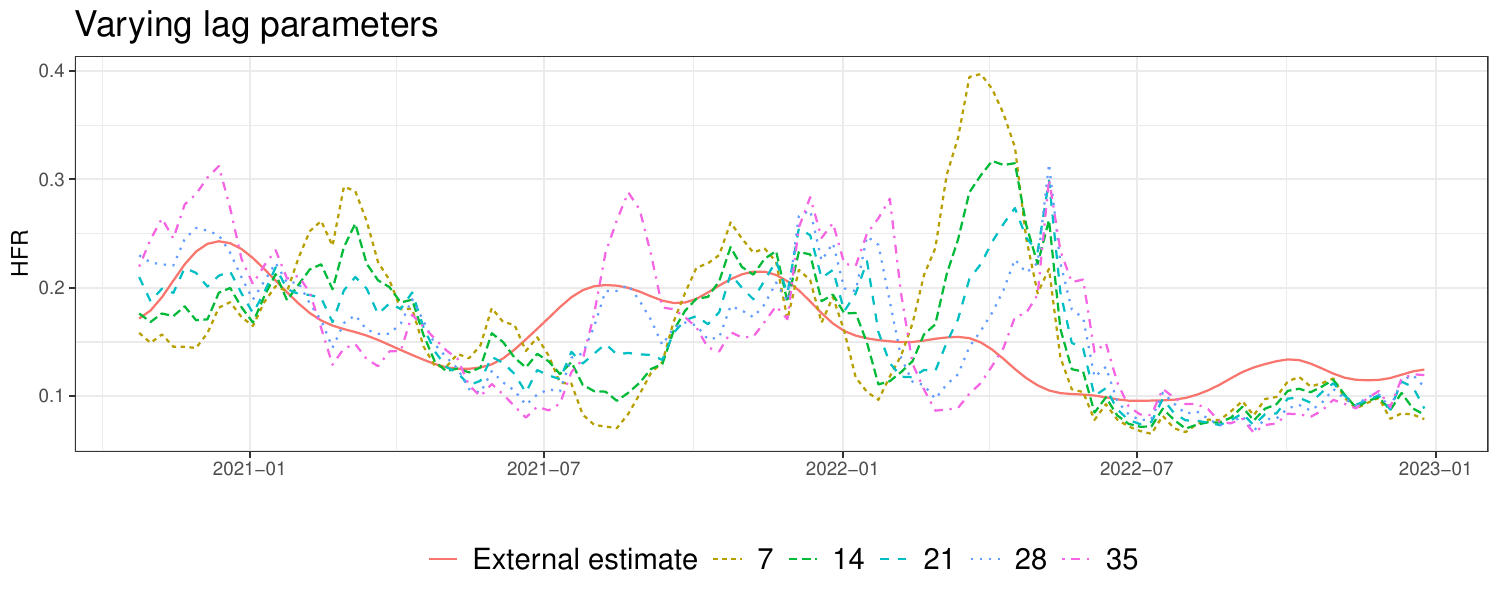}
\caption{Comparing different choices of lag parameter in the lagged ratio.}
\label{fig:lag}

\bigskip
\includegraphics[width=\linewidth]{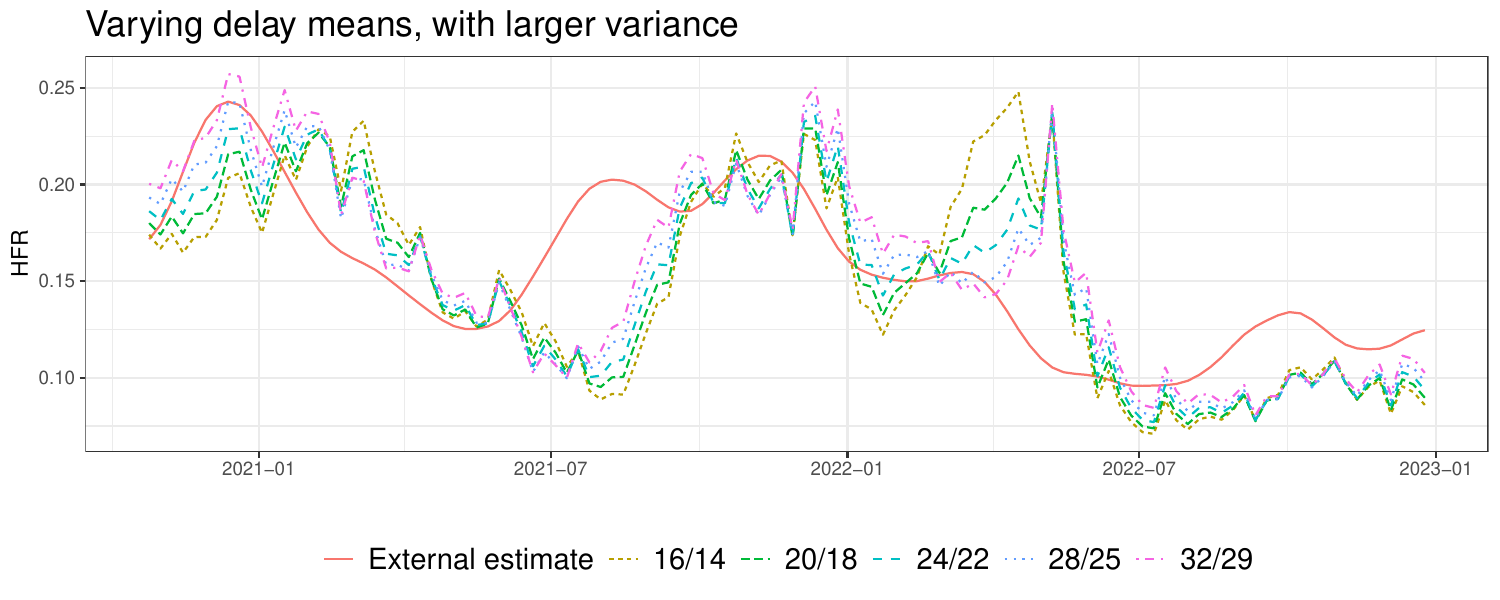}
\includegraphics[width=\linewidth]{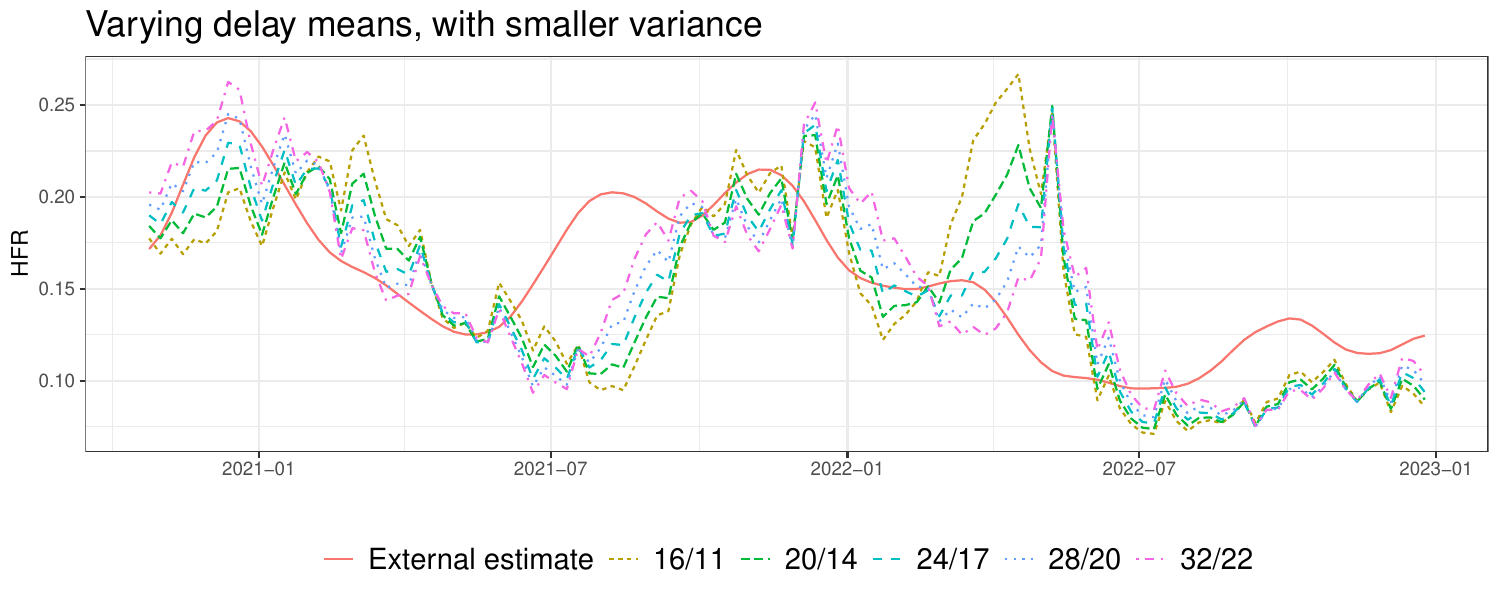}
\caption[Comparing different choices of delay distributions in the convolutional
  ratio.]{Comparing different choices of delay distributions in the convolutional
  ratio. The top panel shows gamma distributions whose standard deviation is 0.9
  times the mean, versus 0.7 times the mean the bottom panel. The legend labels
  reflect the mean/standard deviation.}  
\label{fig:delays}
\end{figure}

\cref{fig:delays} compares the performance of the convolutional ratio
across different choices of delay $\gamma$; we kept the discrete
gamma shape for each, but varied the mean and standard deviation. We
investigated a standard deviation equal to 90\% of the mean, and also a more
compact delay with standard deviation equal to 70\% of the mean. All
resulting HFR estimates are significantly biased. Regardless of delay
distribution, the ratios are negatively biased during the onset of Delta, and
surge after the peak of Omicron. This indicates the bulk of the error is
fundamental to the estimator, and cannot be attributed to model
misspecification.     


\subsection{State-level results}

We repeat our analysis the six large US states, finding similar trends. The NHCS
survey was conducted on a subset of hospitals meant to represent the US at
large, so it cannot be used to accurately estimate the HFRs within each state.
Instead, our external estimates of the state-level HFRs use the retrospective ratio discussed above (Equation \cref{eq:conv-retro}), with NCHS deaths.
\cref{fig:state-level} compares this external HFR estimate to the real-time 
convolutional and lagged ratios. 
For each state, we select the lag for the
real-time lagged ratio to maximize cross-correlation between hospitalizations and JHU deaths. 
We then use
a discrete gamma distribution with mean equal to this lag, and standard
deviation 90\% of its mean, for the convolutional ratio.

\begin{figure}[htbp]
\centering
\includegraphics[width=\linewidth]{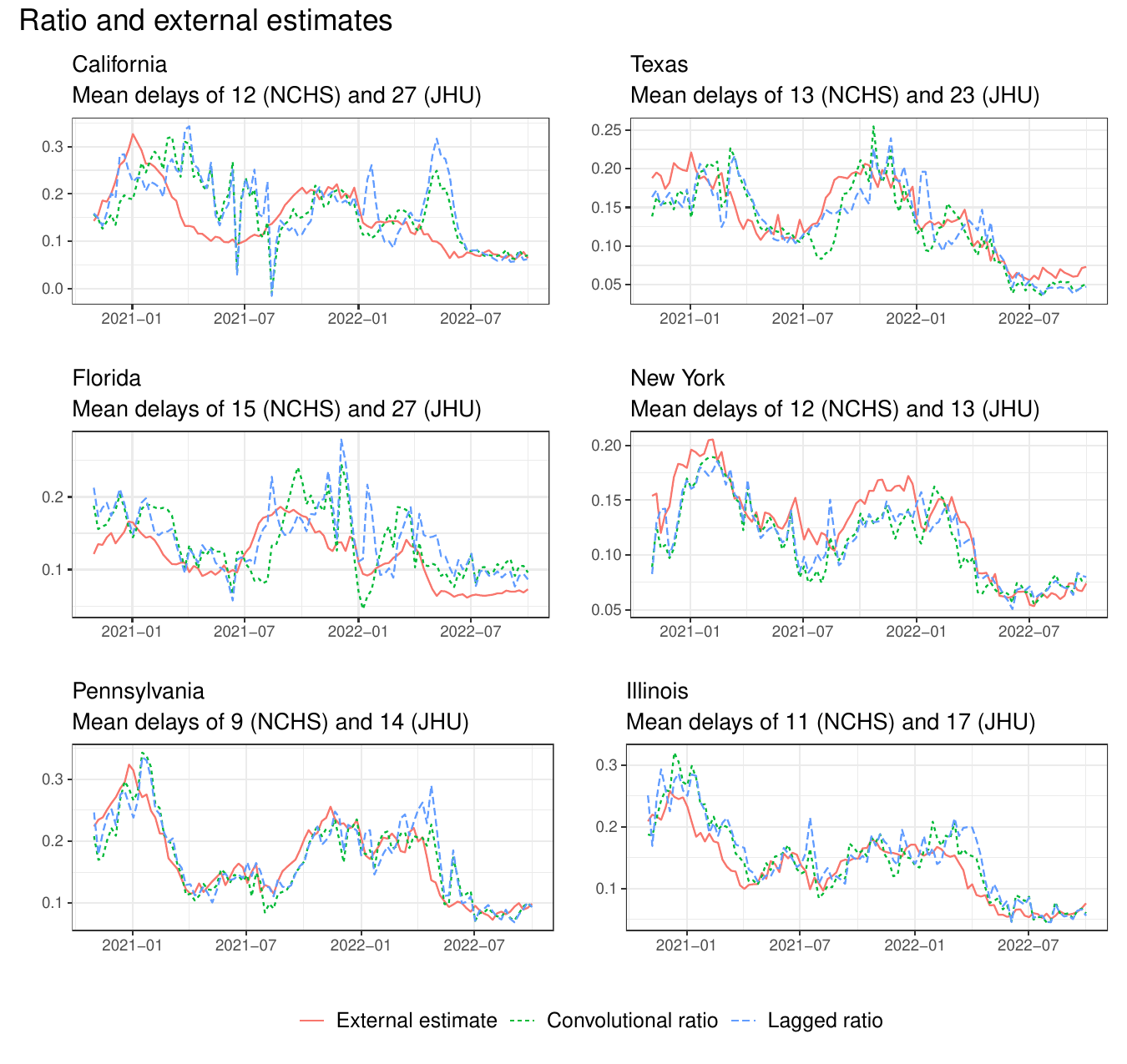}
\caption{Comparing ratio estimates for individual US states.} 
\label{fig:state-level}
 \end{figure}

Several states display similar biases to the US as whole (Figure
3). Estimates in California, Texas, and Florida are 
all slow to detect the uptick in HFR during Delta; they also spike during
Omicron in California, and to a lesser extent Florida. Note these states are the
ones with the largest cross-correlation optimal lags, an estimate of the average 
time-to-death. In contrast, New York, Pennsylvania, and Illinois have mean delays
of at most 17. Their HFR estimates are still biased, but overall less so. This
once again emphasizes the role of the delay in bias. The takeaway: fatality
ratios are generally less trustworthy in states that take longer to report
deaths. 

\section{CFR analysis}\label[appendix]{apx:CFR}

To illustrate the ubiquity of our analyses across all severity rates, we recreated our simulated experiments with CFR as opposed to HFR. 
As a recap, the case-fatality rate (CFR) measures the proportion of reported cases that ultimately die. 
It is often used as a proxy for the infection-fatality rate (IFR), since true infection counts are challenging to estimate, especially in real-time.

In this section, we recreate Figures 1, 2, and 4, using cases instead of hospitalizations. 
We restrict our focus to simulated deaths because unlike HFR, a strong external estimate for the true CFR is unavailable in real-time.
Another reason for this restriction is that CFR is a problematic surrogate for IFR. 
Not all infected individuals show up as positive case reports, and the case ascertainment rate varies over time.
Therefore, devising a schema to estimate IFR on COVID-19 data in real-time, and comparing it to notions of the ``true'' IFR and CFR is out of scope for this work.

The ground truth CFRs in these simulations follow the same NHCS-based curve in our prior experiments.
To reflect CFRs, we rescaled this curve such that the average CFR was the overall COVID-19 CFR from 2020-2022.
Deaths are simulated using the same delay distributions as in HFR. 
This is justified by our confirmation that the correlation-maximizing lag between US cases and deaths is the same as that between hospitalizations and deaths.

\begin{figure}[htb]
\centering
\begin{subfigure}[b]{0.325\linewidth}
  \centering
  \includegraphics[width=\linewidth]{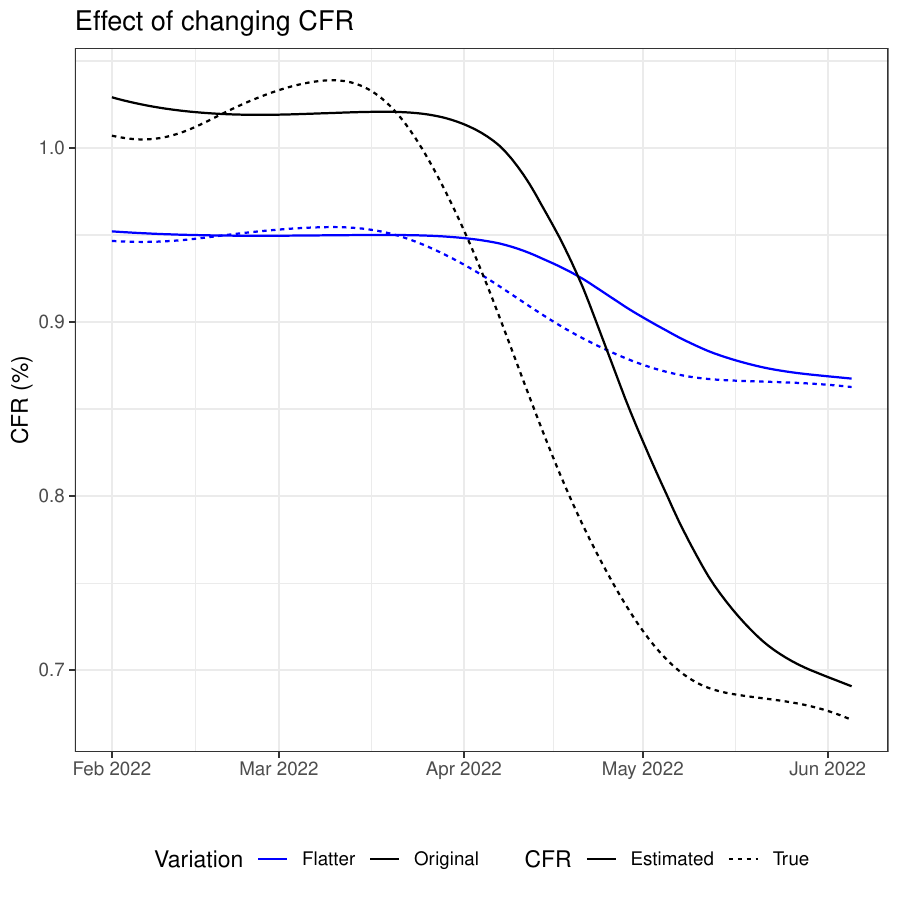} 
  \caption{}
  \label{fig:toy_cfr}
\end{subfigure}
\begin{subfigure}[b]{0.325\linewidth}
  \centering
  \includegraphics[width=\linewidth]{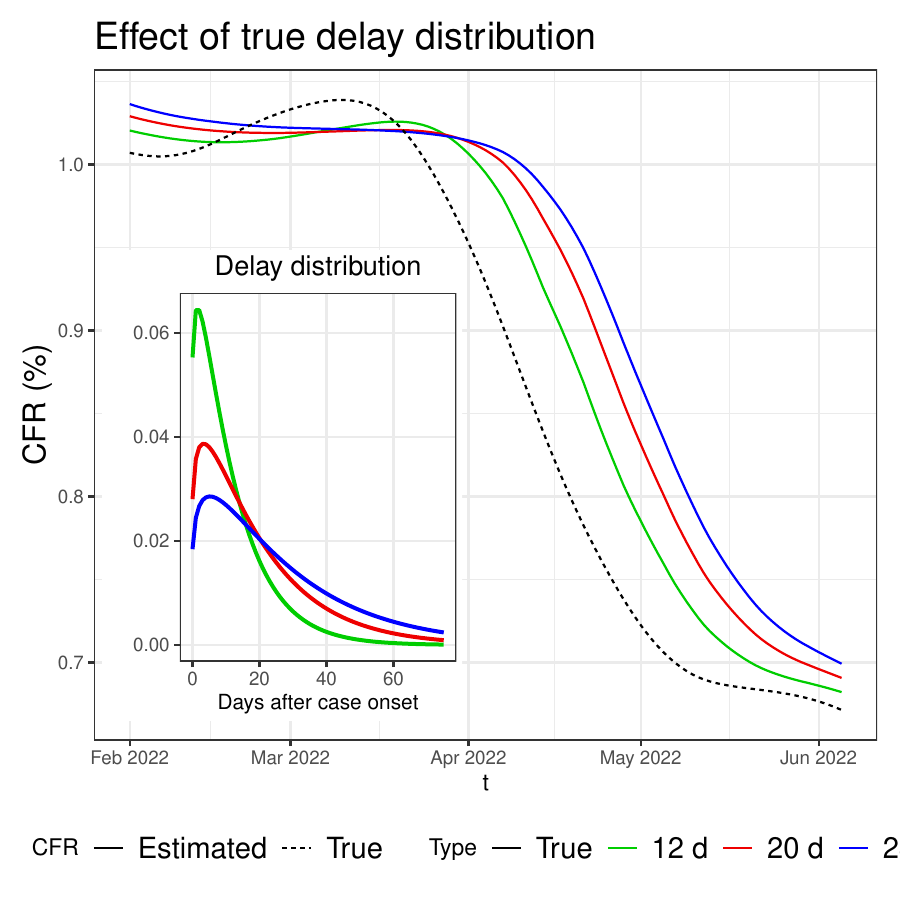}
  \caption{}
  \label{fig:toy_delay_cfr}
\end{subfigure}
\begin{subfigure}[b]{0.325\linewidth}
  \centering
  \includegraphics[width=\linewidth]{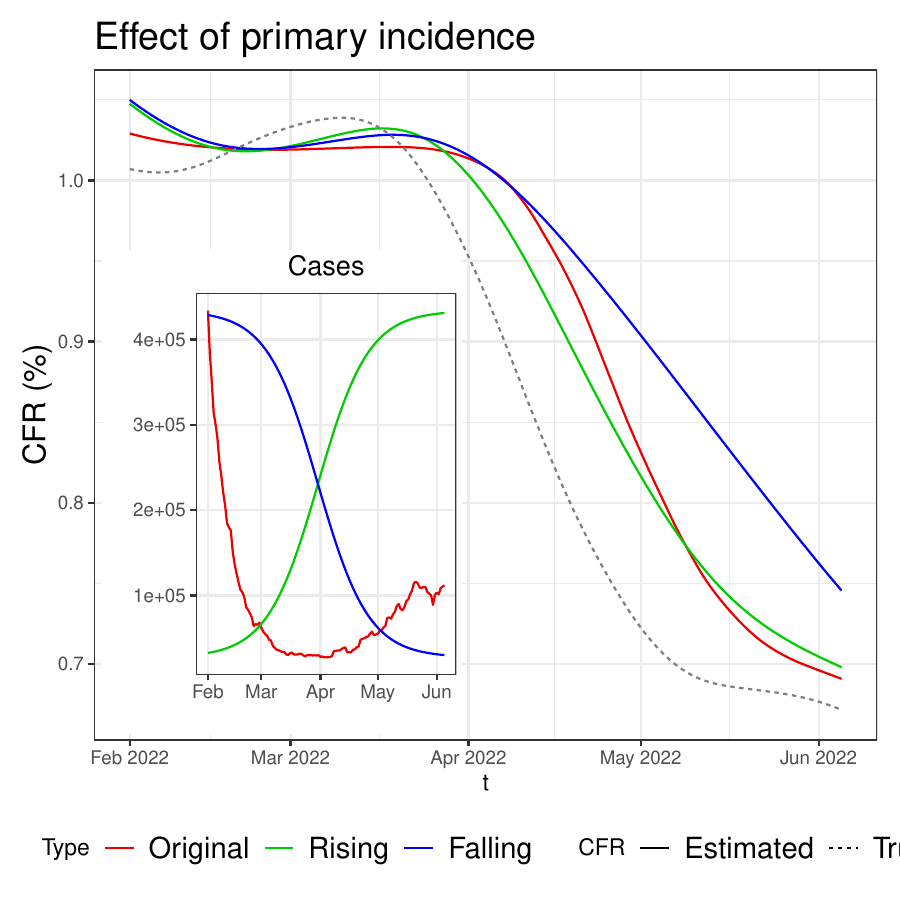} 
  \caption{}
  \label{fig:toy_primary_cfr}
\end{subfigure}
\caption[Recreating \cref{fig:wellspecified} with CFR instead of HFR.]{
  These figures illustrate the same qualitative trends as in \cref{fig:wellspecified}, 
  now using CFR instead of HFR. The same three factors—changing severity, delay distribution, 
  and primary incidence—produce similar effects on the bias of the well-specified convolutional 
  ratio.}
\label{fig:wellspecified_cfr}
\end{figure}

\cref{fig:wellspecified_cfr} presents the CFR counterpart to \cref{fig:wellspecified} in the main text. As with HFR, the same qualitative trends emerge: rapid changes in the underlying severity rate produce larger biases, longer delay distributions amplify these distortions, and variations in the primary incidence curve modulate both the direction and magnitude of bias. These parallels confirm that the mechanisms driving bias in the well-specified convolutional ratio are not specific to HFR but generalize directly to CFR.

\begin{figure}[p]
\centering
\includegraphics[width=0.9\linewidth]{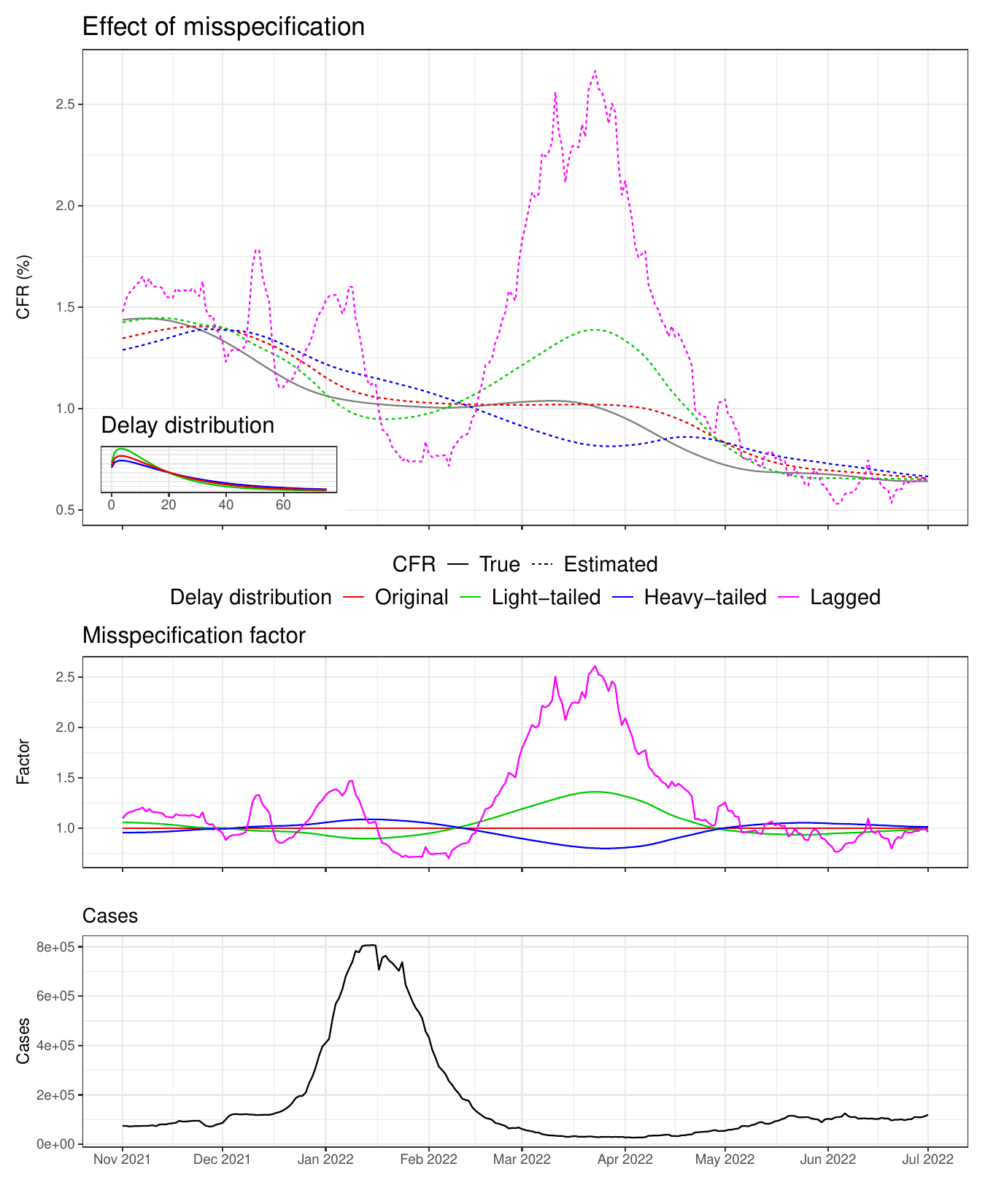}
\caption[Recreating \cref{fig:misspecified} with CFR instead of HFR.]{Analogous to \cref{fig:misspecified}, this figure illustrates
  the effects of delay-distribution misspecification when using CFR instead of HFR.
  The same qualitative behavior is observed across light-tailed, heavy-tailed,
  and lagged specifications.}
\label{fig:misspecified_cfr}
\end{figure}

\cref{fig:misspecified_cfr} shows the effects of delay‐distribution misspecification for CFR. The same qualitative patterns observed for HFR persist: the light‐tailed delay yields upward or downward deviations depending on incidence trends, the heavy‐tailed delay has smaller opposing bias, and the lagged ratio remains the most volatile. 
These results confirm that the misspecification mechanisms outlined in Proposition 2 extend directly to CFR.

In fact, the bias is even larger than in the HFR case.
The lagged ratio spikes to 2.5\%, a staggering 150\% above the true CFR of 1\%.
In contrast, lagged HFRs peaked at 25\% while the true values were 15\%
The behavior of $A_t^\ell$ explains this discrepancy.
It peaks around 2.5 for CFR, compared to 1.5 for HFR.
The larger value for CFR is driven by cases climbing higher and falling steeper than hospitalizations, in relative terms.

\begin{figure}[t!]
\centering
\includegraphics[width=\linewidth]{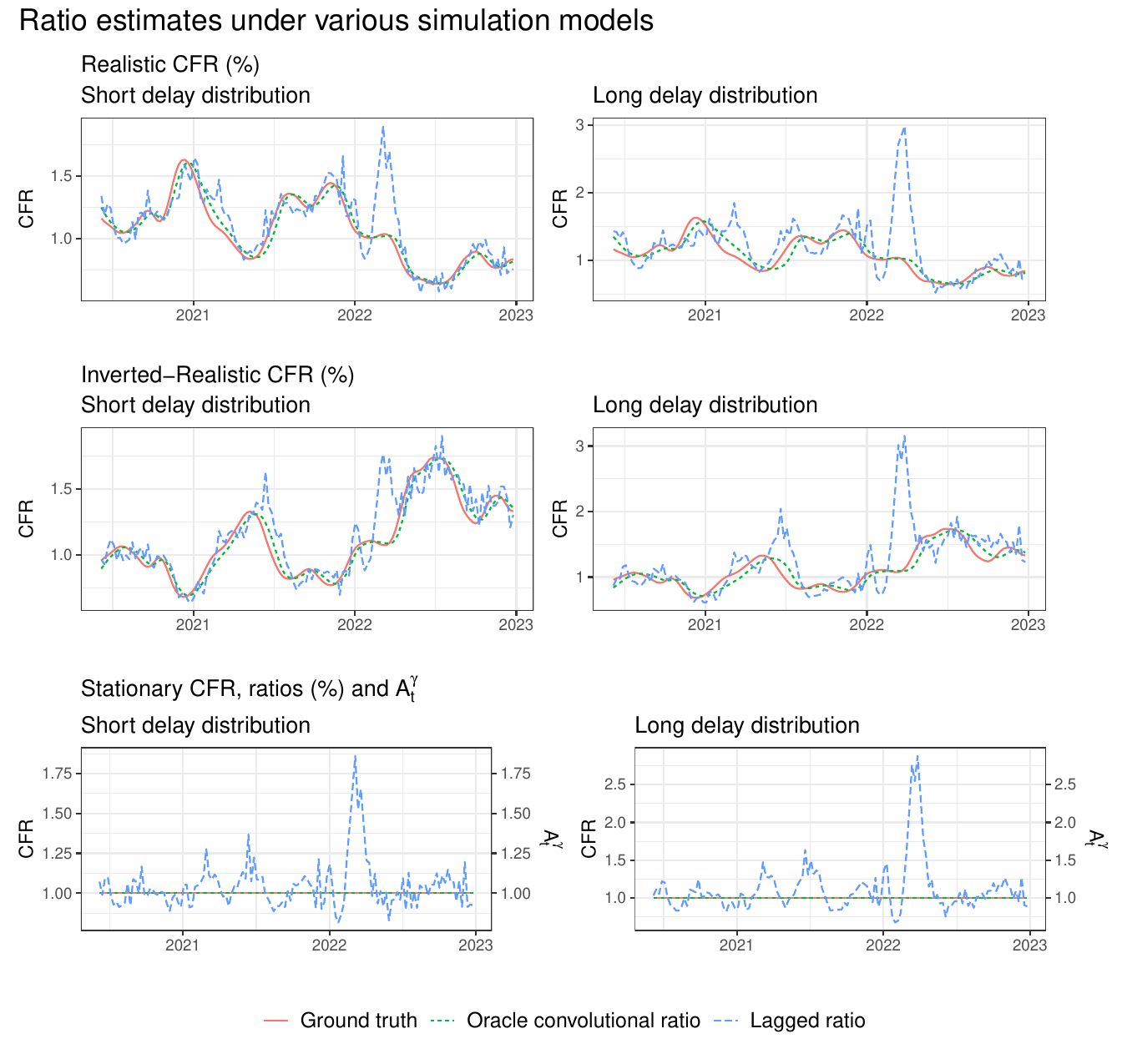}
\caption[CFR estimates under various simulation settings.]{
  Ratio estimates under various simulation settings, using CFR instead of HFR.
  As in \cref{fig:sims}, the convolutional ratio tracks the true severity curve closely,
  while the lagged ratio exhibits larger oscillations, particularly under the
  longer delay. 
}
\label{fig:sims_cfr}
\end{figure}

\cref{fig:sims_cfr} shows the simulated results for CFR, paralleling \cref{sec:results2} from the main text. The qualitative behaviors mirror those observed for HFR, confirming that the same bias mechanisms apply to CFR-based analyses.
The convolutional ratio continues to track the true curve relatively closely, while the lagged ratio exhibits exaggerated oscillations, especially under the longer delay. When incidence surges, the lagged CFR overshoots the truth; when incidence falls, it undershoots. These dynamics follow directly from the misspecification behavior of $A_t^\ell$ described in Proposition 2.

Even in the stationary-severity simulations, the lagged ratio remains heavily biased upward whenever $A_t^\ell>1$. Its peaks for CFR roughly twice as large as those seen for HFR. As discussed for \cref{fig:misspecified_cfr}, this difference is driven by the sharper swings in case incidence compared with hospitalizations. Meanwhile, the convolutional ratio’s stability reflects its correct specification of the delay distribution—its advantage holding equally for CFR as for HFR.

\chapter{\texorpdfstring{Supplementary material for \cref{ch:paper2}}{Supplementary material for Chapter 2}}
\label{app:B}

\section{Proof of \cref{prop:corr}}\label[appendix]{apx:corr}

For the sake of clarity, this proof assumes the maximal secondary event time is
$d$ time steps after a primary event. However, this constraint is not necessary
for the argument.  By definition, $\mu_t = \sum_{k=0}^d x_{t-k} \pi_k^{(t-k)}
p_{t-k}$ and $\mu_{t+1} = \sum_{k=0}^d  x_{t+1-k} \pi_k^{(t+1-k)}
p_{t+1-k}$. The first step is to bound the covariance:
\begin{equation}
\Cov(y_t, y_{t+1} \given x_{\leq t+1}) = \E[y_t y_{t+1} \given x_{\leq t+1}] - 
\mu_t\mu_{t+1}. 
\end{equation}
Define the Bernoulli random variable $D_{s,t}^{(i)}$ as the indicator that the 
$i\nth$ individual with primary event at time $s$ has secondary event at time 
$t$. With this notation, 
\begin{equation}
 y_t = \sum_{k=0}^{d} \sum_{i=1}^{x_{t-k}} D_{t-k,t}^{(i)},   
 \;\; \text{and} \;\; 
y_{t+1} = \sum_{\ell=0}^{d} \sum_{j=1}^{x_{t+1-\ell}} D_{t+1-\ell,t+1}^{(j)}.  
\end{equation}
Now partition the cross-terms of $\E[y_t y_{t+1} \given x_{\leq t+1}]$ into
three pieces: 
\begin{align}
\E[y_t y_{t+1} \given x_{\leq t+1}] 
&= \E\bigg[ \sum_{i=1}^{x_{t-d}} D_{t-d,t}^{(i)} \,\cdot\,
  \sum_{\ell=0}^{d} \sum_{j=1}^{x_{t+1-\ell}} D_{t+1-\ell,t+1}^{(j)} \bigg] 
+ \E\bigg[ \sum_{k=0}^{d} \sum_{i=1}^{x_{t-k}}D_{t-k,t}^{(i)} \,\cdot\,
  \sum_{j=1}^{x_{t+1}} D_{t+1,t+1}^{(j)} \bigg] \\
&\qquad\qquad + \E\bigg[ \sum_{k=0}^{d-1} \sum_{i=1}^{x_{t-k}} D_{t-k,t}^{(i)}  
  \,\cdot\,  \sum_{\ell=1}^{d} \sum_{j=1}^{x_{t+1-\ell}} D_{t+1-\ell,t+1}^{(j)}
  \bigg] \\ 
&= \Big( x_{t-d}\pi_d^{(t-d)}p_{t-d} \Big) \mu_{t+1} + 
\Big( x_{t+1}\pi_{0}^{(t+1)}p_{t+1}\Big) \mu_{t} \\
&\qquad\qquad + \E\bigg[ \sum_{k=0}^{d-1} \sum_{i=1}^{x_{t-k}}
  D_{t-k,t}^{(i)} \,\cdot\, \bigg( \sum_{\ell\neq k+1} \sum_{i=1}^{x_{t+1-\ell}} 
  D_{t+1-\ell,t+1}^{(i)} + \sum_{j=1}^{x_{t-k}} D_{t-k,t+1}^{(j)} \bigg) \bigg].
\end{align}
The term on the final line may be further split into pieces, rewritten as:   
\begin{align}
&\sum_{k=0}^{d-1} x_{t-k} \pi_k^{(t-k)} p_{t-k} \sum_{\ell\neq k+1}
  x_{t+1-\ell}\pi_{\ell}^{(t+1)}p_{t+1-\ell} + \E\bigg[ \sum_{k=0}^{d-1}
  \sum_{i=1}^{x_{t-k}} D_{t-k,t}^{(i)} \,\cdot\, \sum_{j=1}^{x_{t-k}} 
  D_{t-k,t+1}^{(j)} \bigg] \\
&= \sum_{k=0}^{d-1} x_{t-k} \pi_k^{(t-k)} p_{t-k} \sum_{\ell\neq k+1} 
   x_{t+1-\ell}\pi_{\ell}^{(t+1)}p_{t+1-\ell} \\ &\qquad\qquad +
  \sum_{k=0}^{d-1} \sum_{i=1}^{x_{t-k}} \E \bigg[ D_{t-k,t}^{(i)} \sum_{j\neq i}
  D_{t-k,t+1}^{(i)} \bigg] + \sum_{k=0}^{d-1} \sum_{i=1}^{x_{t-k}} \E 
  \bigg[\underbrace{D_{t-k,t}^{(i)} D_{t-k,t+1}^{(i)}}_{\text{mutually
  exclusive}} \bigg] \\  
&=\sum_{k=0}^{d-1} x_{t-k} \pi_k^{(t-k)} p_{t-k} \bigg(
 \sum_{\ell\neq k+1} x_{t+1-\ell} \pi_{\ell}^{(t+1-\ell)}p_{t+1-\ell} +
  (x_{t-k}-1) \pi_{k+1}^{(t-k)} p_{t-k} \bigg) \\ 
&= \sum_{k=0}^{d-1} \sum_{\ell=0}^{d-1} \Big( x_{t-k}\pi_k^{(t-k)}p_{t-k} 
\Big) \Big( x_{t-\ell}\pi_{\ell+1}^{(t-\ell)}p_{t-\ell} \Big) - 
\sum_{k=0}^{d-1} x_{t-k} \Big( \pi_k^{(t-k)} \pi_{k+1}^{(t-k)} \Big) p_{t-k}^2 .
\end{align}
So putting all parts together,
\begin{align}
\E[y_t y_{t+1} \given x_{\leq t+1}] &= 
\Big( x_{t-d}\pi_d^{(t-d)}p_{t-d} \Big) \mu_{t+1} + 
\Big( x_{t+1}\pi_{0}^{(t+1)}p_{t+1}\Big) \\
&\qquad\qquad + \sum_{k=0}^{d-1} \sum_{\ell=0}^{d-1} \Big( x_{t-k}
  \pi_k^{(t-k)}p_{t-k} \Big) \Big( x_{t-\ell}\pi_{\ell+1}^{(t-\ell)}p_{t-\ell}
  \Big) - \sum_{k=0}^{d-1} x_{t-k} \Big( \pi_k^{(t-k)} \pi_{k+1}^{(t-k)} \Big)
  p_{t-k}^2 \\
&= \underbrace{\bigg( \sum_{k=0}^d x_{t-k}\pi_k^{(t-k)}p_{t-k} \bigg)}_{\mu_t} 
\underbrace{\bigg( \sum_{\ell=0}^d x_{t+1-\ell}\pi_\ell^{(t+1-\ell)}p_{t+1-\ell}
\bigg)}_{\mu_{t+1}} - \sum_{k=0}^{d-1} x_{t-k} \Big( \pi_k^{(t-k)}
\pi_{k+1}^{(t-k)} \Big) p_{t-k}^2,
\end{align}
and therefore,
\begin{equation}
\Cov(y_t, y_{t+1} \given x_{\leq t+1}) = \E[y_t y_{t+1} \given x_{\leq t+1}] -
\mu_t\mu_{t+1} = -\sum_{k=0}^{d-1} x_{t-k} \Big( \pi_k^{(t-k)} \pi_{k+1}^{(t-k)}
\Big) p_{t-k}^2 .
\end{equation}
We can see that the correlation clearly must be nonpositive. To lower bound the
correlation, assume equal (conditional) variances at times $t$ and $t+1$. Then 
\begin{align}
\Cor(y_t,y_{t+1} \given x_{\leq t})
&= \frac{\Cov(y_t,y_{t+1}\given x_{\leq t})}{\Var(y_t \given x_{\leq t})} \\
&= -\frac{\sum_{k=0}^{d-1}x_{t-k}\pi_k^{(t-k)}\pi_{k+1}^{(t-k)}p_{t-k}^2} 
{\sum_{k=0}^{d}x_{t-k}\pi_k^{(t-k)}p_{t-k}(1-\pi_k^{(t-k)}p_{t-k})}\\
&\geq -\frac{(\max_{k \geq 0} \pi_k^{(t-k)}p_{t-k})
\sum_{k=0}^{d-1}x_{t-k}\pi_k^{(t-k)}p_{t-k}} 
{(1-\max_{k \geq 0} \pi_k^{(t-k)}p_{t-k})
\sum_{k=0}^{d-1}x_{t-k}\pi_k^{(t-k)}p_{t-k}}\\
&= -\frac{\max_{k \geq 0} \pi_k^{(t-k)}p_{t-k}}
{1-\max_{k \geq 0} \pi_k^{(t-k)}p_{t-k}}.
\end{align}

\section{Gaussian deconvolution filtering}\label[appendix]{apx:gauss}

The Gaussian approximation is an alternative to the Poisson approximation to the
Poisson-binomial. It uses a mean $\mu_t$ and variance $\sigma_t^2$ as defined in 
\eqref{eq:pb_mean} and \eqref{eq:pb_var}, respectively. With the Gaussian log
likelihood in place of the Poisson log likelihood, the retrospective
optimization problem \eqref{eq:tf-pois} becomes  
\begin{equation}
\minimize_{0 \preceq p \preceq 1} \; \sum_t \bigg[\bigg( 
\log \hat\sigma_t^2(p) + \frac{1}{\hat\sigma_t^2(p)} \bigg( 
y_t - \sum_{k=0}^d x_{t-k}\hat\pi_k^{(t-k)}p_{t-k} \bigg)^2\bigg] +  
\lambda\|D^{(m+1)}p\|_1,
\end{equation}
where \smash{$\hat\sigma_t^2(p) = \sum_{k=0}^\infty x_{t-k}\hat\pi_k^{(t-k)}
  p_{t-k} (1-\pi_k^{(t-k)} p_{t-k})$}. The above is a nonconvex problem, hence
to simplify computation, we drop the log term and approximate
\smash{$\hat\sigma_t^2(p)$} by an estimate \smash{$\hat\mu_t$} of the mean,
motivated by the fact that in our problem setting, $\sigma_t^2$ is close to
$\mu_t$ (recall the discussion in \cref{sec:exact-lik}). The estimate
\smash{$\hat\mu_t$} can be obtained by nonparametric smoothing of the secondary
event time series $y_t$, and can therefore be treated as fixed (not depending on
$p$) in the optimization. This results in 
\begin{equation}
\minimize_{0 \preceq p \preceq 1} \; \sum_t \frac{1}{\hat\mu_t} \bigg( y_t -
\sum_{k=0}^d x_{t-k}\hat\pi_k^{(t-k)}p_{t-k} \bigg)^2 + \lambda\|D^{(m+1)}p\|_1, 
\end{equation}
for the Gaussian retrospective deconvolution estimator. The real-time estimator
is defined similarly, simply appending the additional regularization terms as in
\eqref{eq:tf-pois-rt}. 

In experiments not shown, we found that the Gaussian model for deconvolution
performed similarly but marginally worse than the Poisson model. With constant
order ($m=0$), the average performance was nearly as strong as that of the
Poisson model. However, it was somewhat worse on higher orders, especially in 
the real-time case. We did not find significant gaps in performance when 
analyzing by region population. Finally, Gaussian deconvolution was not hindered
by its use of the plug-in estimate \smash{$\hat\mu_t$} of the variance. We also
ran experiments with the true (fixed) variance was used in lieu of
\smash{$\hat\mu_t$}, and the performance was very similar. 

\section{Derivation of \texorpdfstring{$\lambda_\text{max}$}{lambda\_max}}\label[appendix]{apx:lambda-max}

Consider first the general optimization problem, for a differentiable convex
loss $\ell$, matrix $D$, and norm $\|\cdot\|$, 
\begin{equation}
\label{eq:gen-opt}
\minimize_\theta \; \ell(\theta) + \lambda \|D\theta\|.
\end{equation}
The first-order optimality conditions determining a solution \smash{$\htheta$}
are   
\begin{equation}
\label{eq:first-order}
-\nabla \ell(\htheta) =\lambda D^\T v, \;\; \text{where $v \in \partial
  \|D\htheta\|$}, 
\end{equation}
where $\nabla\ell(\theta)$ denotes the gradient of $\ell$ at $\theta$, and
\smash{$\partial \|D\htheta\|$} denotes the subdifferential (the set of 
subgradients) of $\|\cdot\|$ at \smash{$D\htheta$}. Suppose that     
\begin{equation}  
\label{eq:lambda-max}
\lambda \geq \underbrace{\min \Big\{ \|(D^\T)^+ \nabla \ell(\theta) + 
\eta\|_* : \theta \in \nul(D), \, \eta \in \row(D) \Big\}}_{\lambda_\text{max}},     
\end{equation}
where $\nul(D)$ denotes the null space of $D$, $\row(D)$ denotes the row space
of $D$, $(D^\T)^+$ denotes the generalized inverse of $D^\T$, and $\|\cdot\|_*$
denotes the dual norm to $\|\cdot\|$. If \smash{$\htheta, \hat\eta$} denotes a 
solution to the above minimization determining \smash{$\lambda_\text{max}$},
then    
\begin{equation}
v = \frac{-(D^\T)^+ \nabla \ell(\htheta) + \hat\eta}{\lambda}
\end{equation}
is a valid subgradient of $\|\cdot\|$ at \smash{$D\htheta = 0$}, because  
\smash{$\|v\|_* = \|(D^\T)^+ \nabla \ell(\htheta) + \hat\eta\|_* / \lambda \leq
  1$} (note the subdifferential of $\|\cdot\|$ at 0 is precisely the unit ball
in the dual norm, centered at the origin). In other words, the pair
\smash{$\htheta, v$} solves the first-order conditions \eqref{eq:first-order}
and \smash{$\htheta$} solves \eqref{eq:gen-opt}.      

We now inspect \smash{$\lambda_\text{max}$} as defined in \eqref{eq:lambda-max} 
for the Poisson linear inverse problem \eqref{eq:tf-pois}. Write $A$ for the
linear operator such that 
\begin{equation}
(A p)_t = \sum_{k=0}^d x_{t-k}\hat\pi_k^{(t-k)} p_{t-k}, 
\end{equation}
and $Y$ for the vector of secondary event counts. Then the Poisson linear
inverse loss can be expressed as 
\begin{equation}
\ell(p) = 1^\T Ap - Y^\T \log(Ap),
\end{equation}
where $1$ denotes the vector of all 1s, and $\log(Ap)$ is interpreted as 
elementwise application of log. Hence
\begin{equation}
\nabla \ell(p) = A^\T (1 - Y / Ap),
\end{equation}
where $Y / Ap$ is interpreted as elementwise division. Abbreviating $D =
D^{(m+1)}$, this matrix always will always be full row rank, and thus $(D^\T)^+
=  (DD^\T)^{-1} D^\T$. Then, with $\|\cdot\| = \|\cdot\|_1$ and $\|\cdot\|_* = 
\|\cdot\|_\infty$, note that 
\begin{equation}
\label{eq:lambda-max-pois}
\lambda_\text{max} = \min \Big\{ \|(DD^\T)^{-1} D^\T A^\T (1 - Y / Ap) +
\eta\|_\infty :  p \in \nul(D), \, \eta \in \row(D), \, 0 \preceq p \preceq 1
\Big\}.   
\end{equation}
Setting $\eta = 0$ and $p = B\alpha$, where $B$ is a matrix whose columns span
$\nul(D)$---which for the trend filtering penalty matrix is the space of
polynomials of degree $m$---we obtain a simpler upper bound    
\begin{equation}
\label{eq:lambda-max-pois-ub}
\lambda_\text{max} \leq \min \Big\{ \|(DD^\T)^{-1} D^\T A^\T (1 - Y /
AB\alpha)\|_\infty :  0 \preceq B\alpha \preceq 1 \Big\}.
\end{equation} 
This reduces to rational optimization (sum of ratio of polynomials), since we
can equivalently write it as
\begin{multline}
\lambda_\text{max} \leq \min \bigg\{ \max_s \; \bigg| \sum_t [(DD^\T)^{-1} D^\T 
A^\T]_{st}  \bigg(1 - \frac{y_t}{\sum_{k=0}^d x_{t-k}\hat\pi_k^{(t-k)} g(t-k)}
\bigg) \bigg| : \\ \text{$g$ is a degree $m$ polynomial with $0 \leq g \leq 1$} 
  \bigg\} .
\end{multline}
In principle, this upper bound should be computable to arbitrary accuracy using
SDP relaxation techniques \citep{lasserre2021minimizing}, though this is likely
to be difficult in practice. Instead, we can rewrite
\eqref{eq:lambda-max-pois-ub} once more as a min-max problem,  
\begin{equation}
\lambda_\text{max} \leq \min_{0 \preceq B\alpha \preceq 1} \; 
\max_{\|u\|_1 \leq 1} \; u^\T (DD^\T)^{-1} D^\T A^\T (1 - Y / AB\alpha),
\end{equation} 
and then use alternating optimization to approximate the solution. That is, we 
iteratively take a (projected) gradient step with respect to $\alpha$, and 
perform exact minimization over $u$ (which is easy, since $u^\T x$ for a fixed  
vector $x$ is minimized over all $\|u\|_1 \leq 1$ by taking $u = \sign(x_i)
e_i$, where $|x_i| = \|x\|_\infty$ and $e_i$ is the $i\nth$ canonical basis
vector). Note that after any number of iterations of alternating minimization,
we can take the resulting $\alpha$ and plug this into
\eqref{eq:lambda-max-pois-ub} to obtain a valid upper bound. 

\section{Further experimental details}

\subsection{Dispersion and lag calculation}\label[appendix]{apx:jhu-clean}

Before estimating the dispersion for our beta-binomial simulation model, it was
necessary to preprocess the real-time JHU data. The raw data contained many
outliers. Often, large numbers of previously unreported counts were dumped on a
certain day. Low outliers were also prevalent, for example on holidays, or in the
use of negative values to revert cumulative totals after accounting for
duplicate reports. To deal with these issues, we preprocessed the data to remove
outliers. 

First, we replaced missing values with zeros.  Next, we identified clear
outliers that exceeded 3 standard deviations from a 6-week rolling mean,
replacing them with the mean itself.  We also addressed obvious data dumps:
stretches of 3 or more days with no counts, followed by a nonzero value. We
distributed this value backward over the trailing zero-count days, allocating
uniformly with a multinomial sampler.  Similarly, we redistributed negative
counts over the entire preceding history, with a multinomial sampler weighted by
the counts themselves. The negative count was then replaced with zero.

We estimated dispersion by fitting a quasi-Poisson regression to the cleaned
death counts, using a 4-week moving average as the mean curve.  This regression
included day-of-week indicator variables, to prevent bias due to systematic
under- or over-reporting. The quasi-Poisson fit estimated the dispersion, which
we used to simulate deaths.  Finally, to compute the lag between
hospitalizations and deaths, we took weekly averages of death counts to remove
day-of-week effects. Then, we computed the cross-correlation function between
the two time series, and identified the lag which maximized this function. To
ensure it fell within a reasonable range, we capped the optimal lag at a minimum
of 6 days and a maximum of 35.

\subsection{Beta-binomial distribution}\label[appendix]{apx:beta-binom}

Like the Poisson-binomial distribution, the beta-binomial is supported on
integers between 0 and the number of trials. This distribution counts the number
of successes among independent Bernoulli trials whose success probabilities are
drawn from a given beta distribution, whose parameters reflect the amount of
overdispersion. For $n$ trials, the beta-binomial distribution is often written
as $\text{BetaBinom}(n,M,\rho)$, which is called mean-rho parameterization. Here
$M\in(0,1)$ is the average probability of success, and $\rho\in[0,1)$ controls
the degree of overdispersion. (Another formulation is
$\text{BetaBinom}(n,\alpha,\beta)$, where $\alpha,\beta$ parameterize the beta
distribution; there is a one-to-one relation between the two parameterizations:
$M=\alpha / (\alpha+\beta)$, and $\rho=1/(\alpha+\beta+1)$.) The distribution
$\text{BetaBinom}(n,M,\rho)$ has mean $\mu = nM$ and variance     
\begin{equation}
\sigma^2 = nM(1-M)[1+(n-1)\rho]. 
\end{equation}
Rearranging the variance expression reveals
\begin{equation}\label{eq:rho}
\rho = \bigg( \frac{\sigma^2}{nM(1-M)} - 1 \bigg) \bigg( \frac{1}{n-1} \bigg).  
\end{equation}
In our severity rate estimation context, the number of secondary events at each
time $t$ is approximated by a beta-binomial variate. Recall that its mean
$\mu_t$ is as in \eqref{eq:pb_mean}. The number of trials $n_t$ at $t$ is the
total primary incidence over the previous $d$ days, that is, the pool of
individuals who may have a secondary event at $t$. The average success
probability at $t$ is thus  
\begin{equation}
M_t = \frac{\mu_t}{n_t} =\frac{\mu_t}{\sum_{k=0}^d x_{t-k}},
\end{equation}
which explains the first expression in \eqref{eq:beta-binom}. Moreover, given
the Poisson-binomial variance $\sigma_t^2$ in \eqref{eq:pb_var}, and a
dispersion factor \smash{$\hbeta$}, the variance of secondary incidence is
\smash{$\hbeta\sigma_t^2$}. Plugging this variance into \eqref{eq:rho} confirms
the second expression in \eqref{eq:beta-binom}.   

\subsection{Misspecified delay distributions}\label[appendix]{apx:misp}

\cref{fig:misp-delays} displays the misspecified delay distributions used
in our experiments. The black line is the original delay distribution, whereas
the colored ones have varying degrees of misspecification. The delay
distributions differ across states, since their means depend on the correlation
between primary and secondary events. We choose to visualize California,
Louisiana, and Wyoming as a representative trio because they are the largest,
roughly median, and smallest state by population, respectively. 

\begin{figure}[t]
\centering
\includegraphics[width=\textwidth]{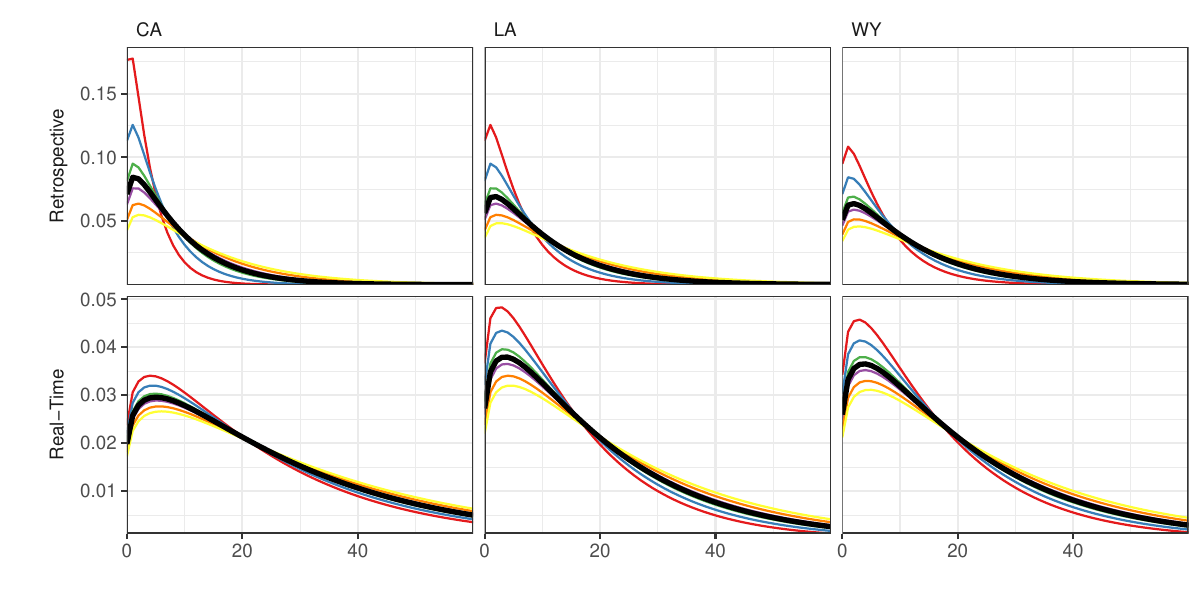}
\caption[Example misspecified delay distributions, with varying means.]{Example misspecified delay distributions, for three states, as we vary 
  the mean (which recall is tied to the variance), in both retrospective and
  real-time simulation models.} 
\label{fig:misp-delays}
\end{figure}

\subsection{Standard error of reported MAE}\label{apx:se-calc}

Let $n_r = 51$ (regions) and $n_i = 10$ (iterations). The estimand is the finite-population average of true per-region MAE,
\[
\theta = \frac{1}{n_r}\sum_{r=1}^{n_r} \mu_r, \qquad \mu_r = E[\mathrm{MAE}_{ri}],
\]
estimated by $\bar X = \frac{1}{n_r n_i}\sum_{r,i}\mathrm{MAE}_{ri}$. 
Since iterates are i.i.d.\ across $i$, $E[\bar X] = \theta$. For the variance, expand as a double sum:
\[
\mathrm{Var}(\bar X) = \frac{1}{n_r^2 n_i^2}\sum_{r,i}\sum_{r',i'} \mathrm{Cov}(\mathrm{MAE}_{ri},\, \mathrm{MAE}_{r'i'}).
\]
We simplify this in three steps.

\textit{Step 1: i.i.d.\ across iterations.} Because iterates are independent across $i$, every cross-iterate covariance ($i \neq i'$) vanishes; because they are identically distributed, the surviving $i = i'$ terms do not depend on $i$. The outer sum over $i$ thus contributes a factor of $n_i$:
\[
\mathrm{Var}(\bar X) = \frac{1}{n_r^2 n_i}\sum_{r,r'} \mathrm{Cov}(\mathrm{MAE}_{ri},\, \mathrm{MAE}_{r'i}).
\]

\textit{Step 2: split diagonal and off-diagonal.} Writing $\mathrm{Cov}(\mathrm{MAE}_{ri}, \mathrm{MAE}_{ri}) = \sigma_r^2$,
\[
\mathrm{Var}(\bar X) = \underbrace{\frac{1}{n_r^2}\sum_{r=1}^{n_r} \frac{\sigma_r^2}{n_i}}_{\text{(I)}} + \underbrace{\frac{1}{n_i n_r^2}\sum_{r \neq r'} \mathrm{Cov}(\mathrm{MAE}_{ri},\, \mathrm{MAE}_{r'i})}_{\text{(II)}}.
\]

\textit{Step 3: cross-region independence.} The only source of randomness in $\mathrm{MAE}_{ri}$ is the Monte Carlo draw used to simulate region $r$'s epidemic in iterate $i$. Since each region's simulation uses a separate random stream, $\mathrm{MAE}_{ri}$ and $\mathrm{MAE}_{ri'}$ are independent for $r \neq r'$, regardless of any similarity in regions' data-generating processes. (This is specific to the simulation; we would not expect real deaths to be independent across regions.) Therefore $\mathrm{Cov}(\mathrm{MAE}_{ri}, \mathrm{MAE}_{ri'}) = 0$ for all $r \neq r'$, term~(II) vanishes, and
\[
\mathrm{Var}(\bar X) = \frac{1}{n_r^2}\sum_{r=1}^{n_r} \frac{\sigma_r^2}{n_i}.
\]

The formula for standard error in the paper estimates $\text{Var}(\bar{X})$ by plugging in per-region sample variances $\hat\sigma_r^2 = \hat\sigma^2(\{\mathrm{MAE}_{ri}\}_{i=1}^{n_i})$. Note that this allows a distinct $\sigma_r^2$ per region, so heteroscedasticity across regions is not an issue. The reported standard error is therefore unbiased for the full $\mathrm{Var}(\bar X)$.

Another valid approach computes $M_i = \frac{1}{n_r}\sum_r \mathrm{MAE}_{ri}$ and reports $\sqrt{\frac{1}{n_i}\hat\sigma^2(\{M_i\}_{i=1}^{n_i})}$. This targets the same $\mathrm{Var}(\bar X)$ without requiring cross-region independence. To see why, note that $M_1, \ldots, M_{n_i}$ is an i.i.d. sequence, so $\bar X = \frac{1}{n_i}\sum_i M_i$ has variance 
\[
\mathrm{Var}(\bar X) = \frac{1}{n_i}\mathrm{Var}(M_i) = \frac{1}{n_i}\cdot\frac{1}{n_r^2}\sum_{r,r'}\mathrm{Cov}(\mathrm{MAE}_{ri}, \mathrm{MAE}_{ri'}) = \frac{1}{n_r^2}\sum_{r=1}^{n_r} \frac{\sigma_r^2}{n_i}.
\]
The key difference is that $\hat\sigma^2(\{M_i\}_{i=1}^{n_i})$ estimates $\mathrm{Var}(M_i)$ empirically from the $n_i$ observed values of $M_i$, thereby absorbing any cross-region covariance structure automatically rather than assuming it away. However, this robustness is unnecessary given the independence argument in Step 3, which is well-justified here. 

Moreover, this robustness comes at a steep cost in efficiency. 
Under normality, a variance estimator $\hat s^2$ with $\nu$ degrees of freedom satisfies $\nu \hat s^2 / \sigma^2 \sim \chi^2_\nu$, which gives $\mathrm{Var}(\hat s^2) = 2\sigma^4/\nu$: more degrees of freedom means a more precise variance estimate. 
The suggested estimator has only $n_i-1=9$ degrees of freedom.
Our estimator instead sums $n_r$ such terms independently.
If regions had equal variance, the degrees of freedom add to $n_r(n_i-1) = 459$.
This assumption of homoskedasticity is likely not justified, but the degrees of freedom are always greater than 9, and may be far more.

\section{Further experimental results}\label[appendix]{apx:extra}

\cref{tab:retro-oracle} displays the MAE results for retrospective
estimation, now under oracle tuning, where we select the best hyperparameters
for each method to minimize the average MAE (over all regions and 
replications). The results of deconvolution relative to the convolutional ratio
are even more favorable, compared to those from cross-validation tuning in 
\cref{tab:retro-cv}.    

\begin{table}[h]
\centering
\caption[MAE and relative performance of retrospective HFR estimation, under oracle tuning.]{MAE of methods in retrospective HFR estimation, and the
  associated percentage improvement on the convolutional ratio (CR) and lagged
  ratio (LR), under oracle tuning. Compare to  \cref{tab:retro-cv}.} 
\label{tab:retro-oracle}
\begin{tabular}[t]{@{}lrrrrrr@{}}
\toprule
 & Lagged Ratio & Conv Ratio & \textbf{Deconv-0} & \textbf{Deconv-1} & \textbf{Deconv-2} & \textbf{Deconv-Tuned}\\
\midrule
MAE $\times \;10^3$ & 11.4 ± 0.1 & 8.3 ± 0.1 & 6.0 ± 0.1 & 6.4 ± 0.1 & 6.8 ± 0.1 & 6.4 ± 0.1\\
Improv over CR (\%) & -47.3 ± 0.8 & 0.0 ± 0.0 & 22.3 ± 0.6 & 19.9 ± 0.7 & 15.4 ± 0.8 & 18.0 ± 0.7\\
Improv over LR (\%) & 0.0 ± 0.0 & 28.1 ± 0.3 & 45.0 ± 0.5 & 42.6 ± 0.6 & 39.2 ± 0.7 & 41.6 ± 0.6\\
\bottomrule
\end{tabular}
\end{table}

Figures \ref{fig:all_curves_retro} and \ref{fig:all_curves_rt} show the
retrospective and real-time HFR estimates, respectively, for all 50 states. As
in Figures \ref{fig:retro-curves} and \ref{fig:rt-curves} in the main text,
deconvolution and the convolutional ratio use the oracle delay distribution, the
lagged ratio uses the oracle lag, and the hyperparameters all are tuned via
cross-validation. For the sake of keeping the visualization simple, we show only 
constant order ($m=0$) trend filtering regularization in the retrospective case,
and quadratic order ($m=2$) in the real-time case. Overall, we see that the
deconvolution estimators are more stable than the ratio estimators; they capture
changes in HFR quickly, and for the most part, do not exhibit the same spurious
oscillations.     

\begin{figure}[p]
\centering
\includegraphics[height=0.98\textheight]{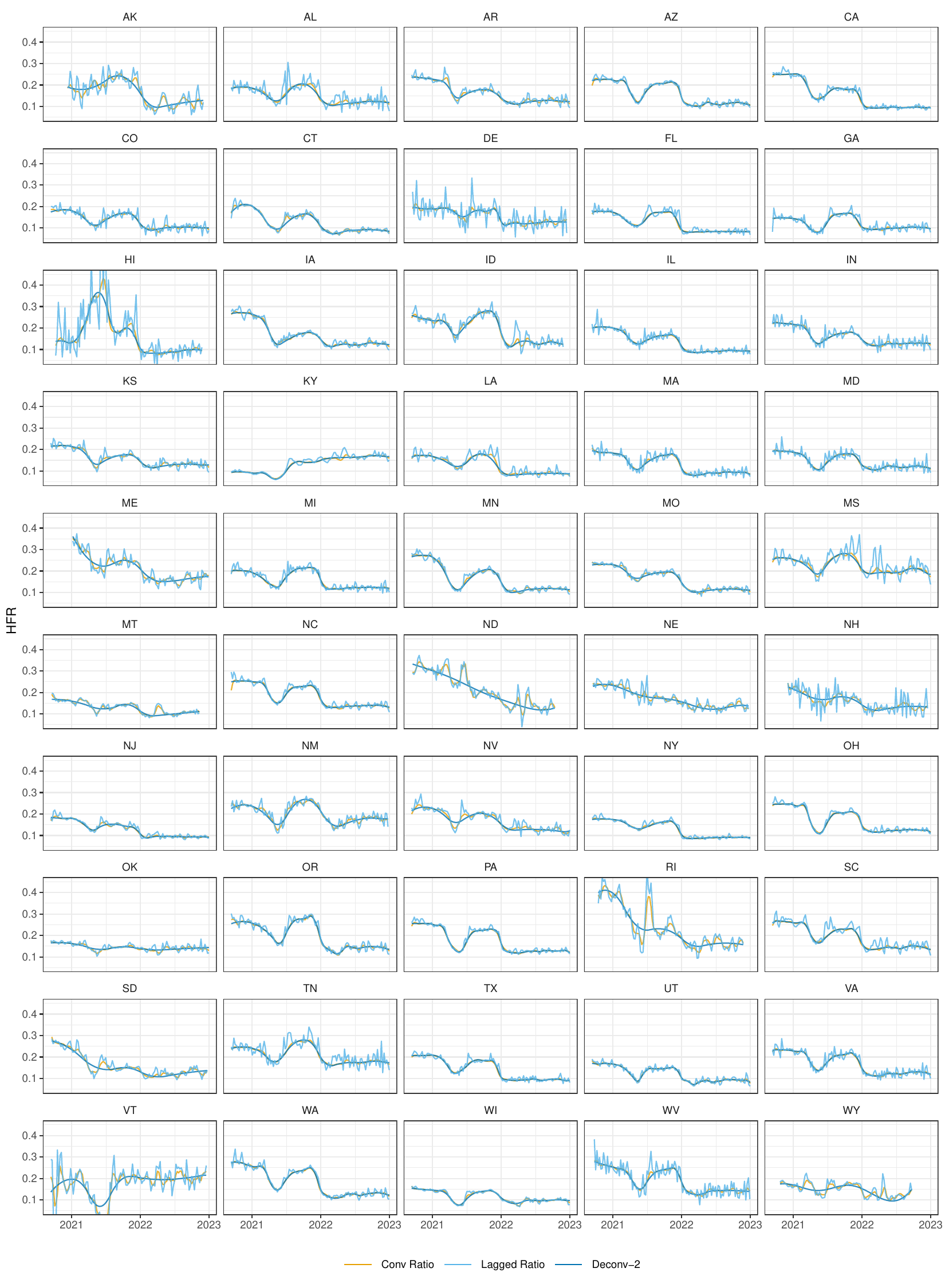}
\caption{Retrospective HFR estimates for all 50 states.}
\label{fig:all_curves_retro}
\end{figure}

\begin{figure}[p]
\centering
\includegraphics[height=0.98\textheight]{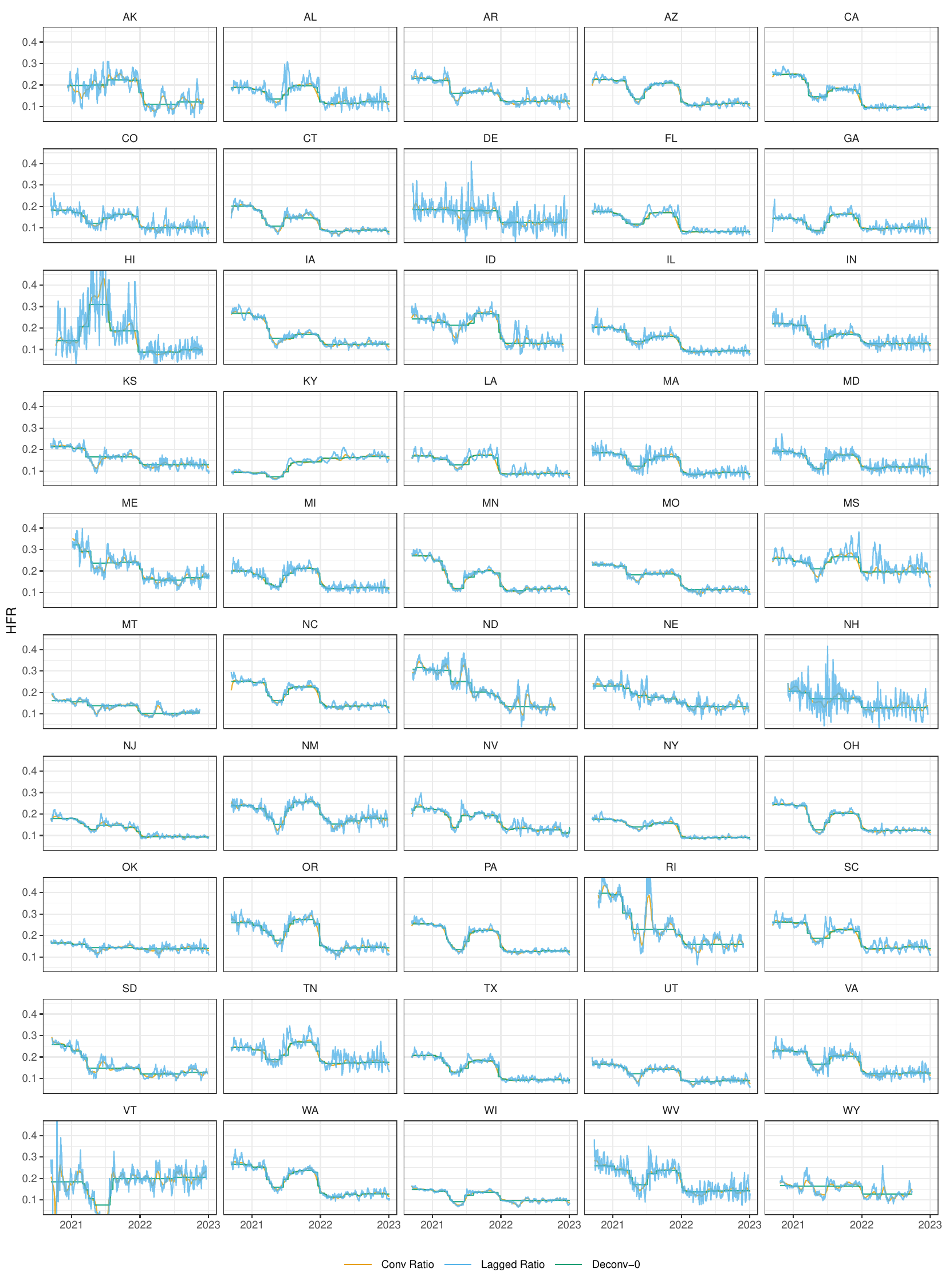}
\caption{Real-time HFR estimates for all 50 states.}
\label{fig:all_curves_rt}
\end{figure}

\cref{fig:misp-imps} displays the performance relative to the
convolutional ratio for misspecified delay distributions. This is analogous to
\cref{fig:misp-maes} in the main text. The results show that the
deconvolution approach maintains an advantage over the convolutional ratio for
varying degrees of misspecification, with the exception of large positive mean
offsets in the retrospective case, where this advantage is somewhat erased.  

\begin{figure}[t]
\centering
\includegraphics[width=0.95\textwidth]{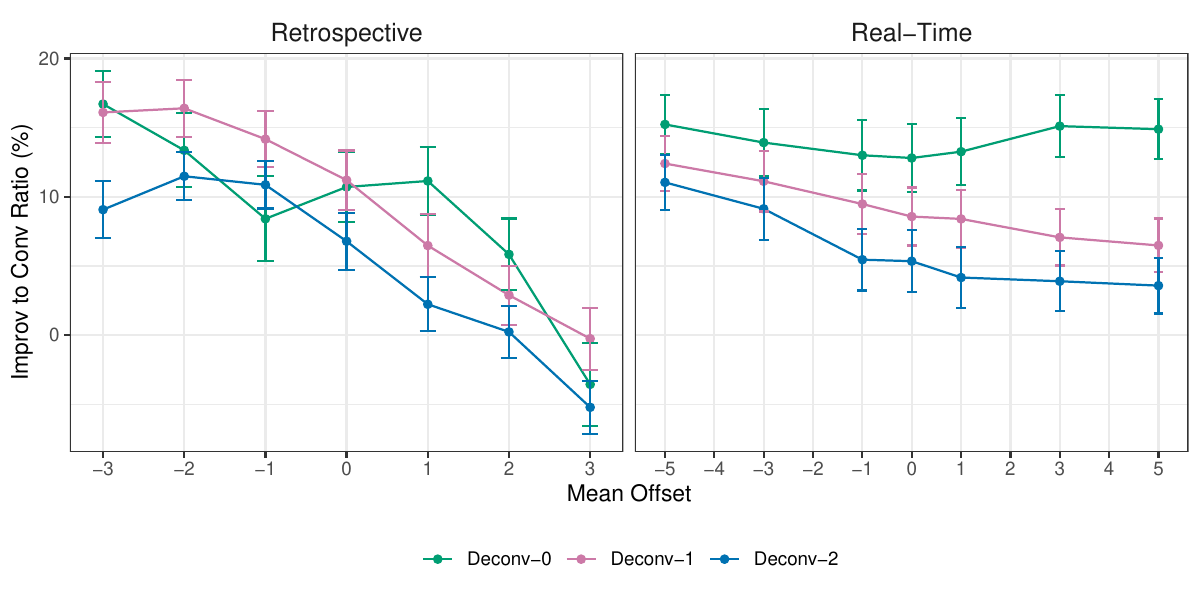}
\caption{Percentage improvement of deconvolution to the convolutional ratio as
  a function of mean offset in the misspecified case. Compare to
  \cref{fig:misp-maes}.} 
\label{fig:misp-imps}
\end{figure}

\begin{figure}[!b]
    \centering
    \includegraphics[width=0.95\textwidth]{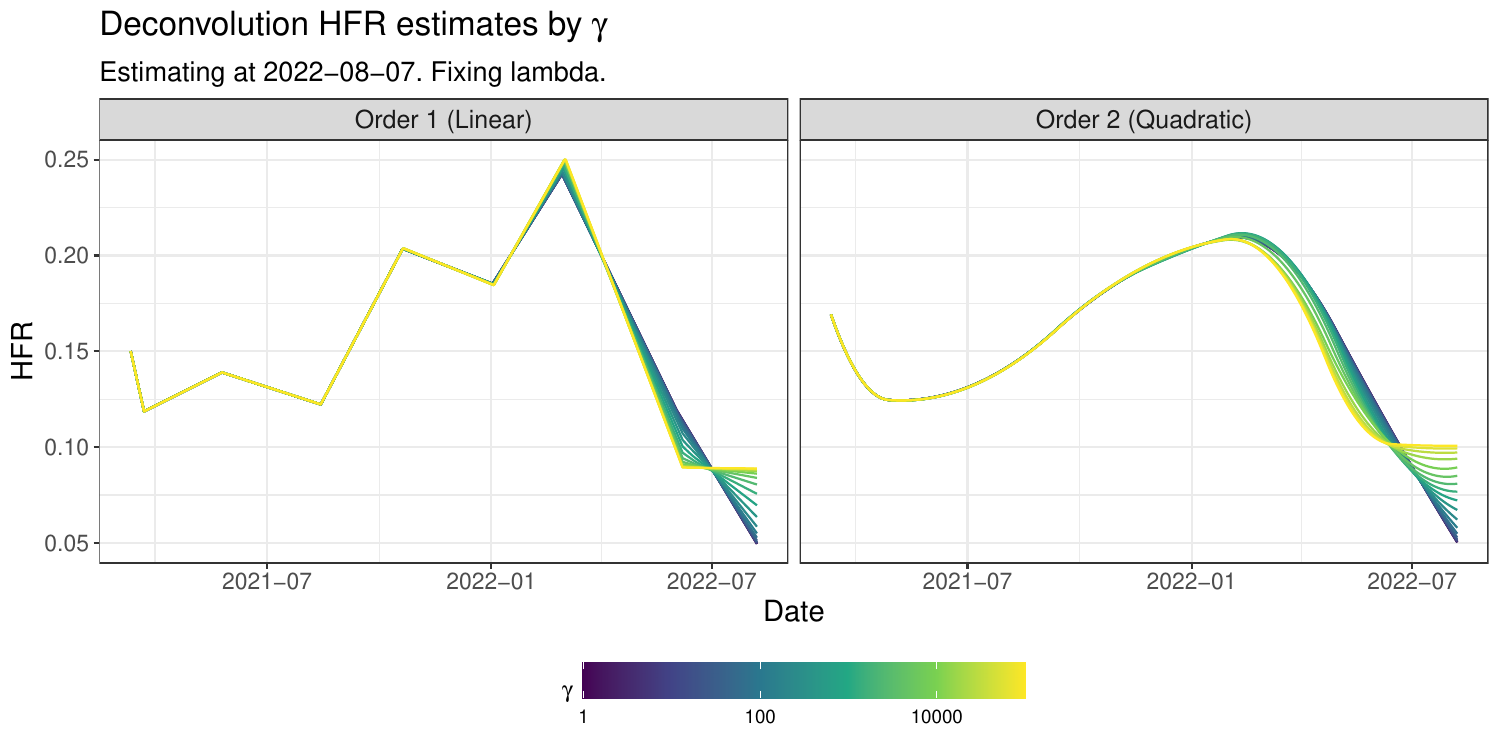}
    \caption{Deconvolution HFRs at varying levels of $\gamma$, the tail smoothness hyperparameter.}
    \label{fig:gammas}
\end{figure}

Real-time deconvolution relies heavily on the hyperparameter $\gamma$ to stabilize predictions at the tail, at least for orders above piecewise constant. \cref{fig:gammas} displays linear and quadratic deconvolution at fixed $\lambda$ and 20 values of $\gamma$. Piecewise-constant deconvolution is not shown because its tail is flat by construction, so $\gamma$ has little effect.
The HFR estimates change significantly as $\gamma$ ranges geometrically from 0.1 to 100,000. 
With close to no tail regularization, the HFRs continue their downward trend to the present date. In contrast, the largest values of $\gamma$ flatten the curve entirely, yielding final HFRs around 10\% -- double those of the minimum $\gamma$. The gap is slightly more pronounced for quadratic trend filtering, owing to its capacity for more flexible fits. 

In general, the sensitivity of deconvolution's predictions to different values of $\gamma$ highlights the importance of tuning it properly, whether by forward validation or heuristic evaluation. 
It also accounts for why on average, constant-order deconvolution outperformed quadratic deconvolution, and to a lesser extent linear. 
Future work on tuning strategies beyond the min and 1se rule could further improve these orders' predictions. 

\begin{figure}
    \centering
    \includegraphics[width=0.8\linewidth]{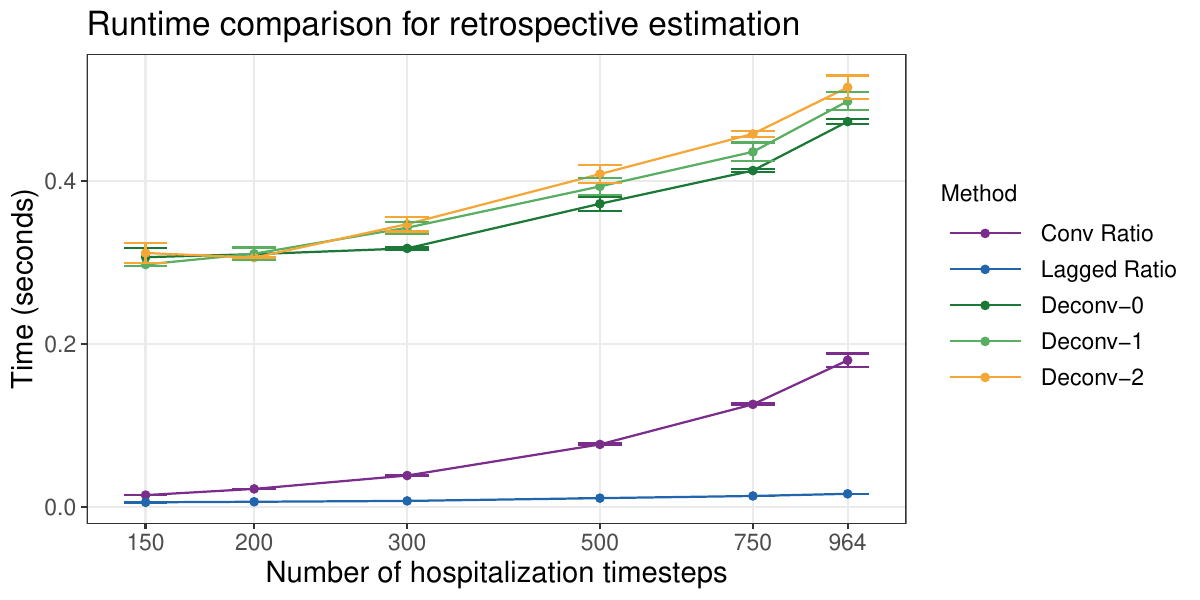}
    \caption[Wall-clock runtime of retrospective severity estimators.]{Wall-clock runtime of retrospective severity estimators. Error bars reflect $\pm 1$ standard error across 20 iterations on 4 regions.}
    \label{fig:runtime}
\end{figure}

We conducted a runtime experiment to assess the computational cost of these methods. All five estimators were timed across six values of the number of timesteps $n \in \{150,200,300,500,750,964\}$, where 
$n=964$ corresponds to the full semi-synthetic dataset described in \cref{sec:setup}. 
Each retrospective estimator was run 20 times with a single, fixed value of its hyperparameter, over 4 geographic regions (PA, UT, KS, and the \US). 
\cref{fig:runtime} displays the results; error bars reflect $\pm 1$ standard error computed across all regions for each $(n, \text{method})$ pair.

The ratio benchmarks are quite fast to run. The lagged ratio is particularly fast: its runtime remains close to zero even as $n$ scales. The convolutional ratio's runtime increases modestly with $n$, but always stays below 0.2 seconds. The deconvolution estimators exhibit nearly identical runtimes at the three orders evaluated. They also scale somewhat with sequence length, though still well under one second per evaluation at the largest $n$ considered. 

To roughly adjust these timings to account for tuning, each of the above is  multiplied by the number of folds times the number of hyperparameter values in the tuning grid. 
Extrapolating from the figure, tuning the estimators over the full data with $K=5$-fold cross-validation would take the lagged ratio approximately 1 second, the convolutional ratio 5 seconds, and deconvolution a minute. 
In total, a complete tuning run for a single estimation date finishes in a matter of seconds or minutes, making daily retuning entirely feasible as new observations arrive. That said, if epidemiological conditions appear stable, practitioners may reasonably reuse previously selected parameters or search over a coarser grid to reduce overhead further.

\section{Backward-looking severity rates}\label[appendix]{apx:back}

We provide two connections between the real-time convolutional ratio and the
backward-looking severity rate. Recall the backward-looking severity rate is
defined as
\begin{equation}
\tilde{p}_t = \sum_{k=0}^d \Pprob(\text{secondary event occurs at $t$} \given 
\text{primary event at $t-k$}) = \sum_{k=0}^d \pi_k^{(t-k)} p_{t-k},
\end{equation}
and the original forward-looking severity rate is
\begin{equation}
p_t  = \sum_{k=0}^d \Pprob(\text{secondary event occurs at $t+k$} \given
\text{primary event at $t$}) = \sum_{k=0}^d \pi_k^{(t)} p_t.
\end{equation}
For the sake of simplicity, assume throughout that the delay distributions
before $t$, occurring in any of the summands above, are equal to a constant
distribution $\pi$. In the real-time case, using the oracle delay $\pi$, the 
convolutional ratio (using raw, not smoothed counts) has conditional
expectation:    
\begin{equation}
\E[\hat{p}_t \given x_{\leq t}] = \frac{\sum_{k=0}^d x_{t-k} \pi_k p_{t-k}} 
{\sum_{k=0}^d x_{t-k} \pi_k} = \sum_{k=0}^d \frac{x_{t-k}}{\sum_{j=0}^d x_{t-j} 
  \pi_j} \pi_k p_{t-k}.  
\end{equation}
First, assume that primary counts are equal to a constant $x_0$ in the $d$ time 
points preceding $t$. Note that the convolutional ratio reduces to $y_t/x_0$, 
which is unbiased for the backwards rate:  
\begin{equation}
\E[\hat{p}_t \given x_{\leq t}] = \sum_{k=0}^d \frac{x_0}{\sum_{j=0}^d x_0
  \pi_j} \pi_k p_{t-k} = \sum_{j=0}^d \pi_k p_{t-k} = \tilde{p}_t. 
\end{equation}
Instead, assume that severity rates are equal to a constant $p_0$ in the $d$
time points preceding $t$. As the delay distributions are constant, this implies   
\begin{multline}
\Pprob(\text{secondary event occurs at $t$} \given \text{primary event at $t-k$}) \\
= \Pprob(\text{secondary event occurs at $t+k$} \given \text{primary event at $t$}), 
\end{multline}
for each $k = 0,\dots,d$. This means that the backward- and forward-looking
rates are equivalent, \smash{$\tilde{p}_t = p_t = p_0$}. Furthermore, the
convolutional ratio is unbiased for this common value:  
\begin{equation}
\E[\hat{p}_t \given x_{\leq t}] = \frac{\sum_{k=0}^d x_{t-k} \pi_k p_{t-k}} 
{\sum_{k=0}^d x_{t-k} \pi_k} = p_0 \frac{\sum_{k=0}^d x_{t-k} \pi_k}
{\sum_{k=0}^d x_{t-k} \pi_k} = p_0.
\end{equation}

\chapter{\texorpdfstring{Supplementary material for \cref{ch:paper3}}{Supplementary material for Chapter 3}}
\label{app:C}

\section{Proof of mean generation time for SEIR}\label[appendix]{apx:mean-generation}

We prove the mean generation time stated in \cref{prop:compartmental-generation}.
\begin{proposition}[Restated]
The mean generation time for an SEIR model is
    \[
        \mathbb{E}[G] = \mu^\text{EI} + \frac{\mu^\text{IR}}{2} - \frac{1}{2} + \frac{(\sigma^\text{IR})^2}{2\, \mu^\text{IR}}.
    \]
\end{proposition}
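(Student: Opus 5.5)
We work directly from \cref{prop:compartmental-generation}, which gives $g_k = \zeta^\text{EI}_k/\mu^\text{IR}$ with $\zeta^\text{EI}_k = \sum_{j=1}^{k} \Pprob(W^\text{EI}=j)\,\Pprob(W^\text{IR} > k-j)$, the kernel already used in the proof of \cref{lem:infectious_kernel}. The mean is therefore
\[
\E[G] = \frac{1}{\mu^\text{IR}} \sum_{k\geq 1} k\, \zeta^\text{EI}_k .
\]
We evaluate this sum by the same change of summation order used in \cref{lem:infectious_kernel}. Swapping the sums over $k$ and $j$ and substituting $a = k-j \geq 0$ gives
\[
\sum_{k\geq1} k\,\zeta^\text{EI}_k = \sum_{j\geq 1}\Pprob(W^\text{EI}=j)\sum_{a\geq 0}(j+a)\,\Pprob(W^\text{IR}>a).
\]
This splits into two terms. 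In the first, the $j$ factor pairs with $\sum_{a\ge0}\Pprob(W^\text{IR}>a)=\mu^\text{IR}$, so it equals $\mu^\text{EI}\mu^\text{IR}$. In the second, $\sum_j \Pprob(W^\text{EI}=j)=1$, leaving $\sum_{a\geq0} a\,\Pprob(W^\text{IR}>a)$.

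Interpretively, $G = W^\text{EI} + U$. Here $U$ is the infection age within the infectious period at which transmission occurs, and it follows the discrete equilibrium (size-biased residual) distribution of $W^\text{IR}$. Independence of $W^\text{EI}$ and $W^\text{IR}$ is what makes the split legitimate, and we should state it explicitly.

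The remaining and main step is the discrete tail-sum identity for a positive integer-valued $W = W^\text{IR}$:
\[
\sum_{a\geq 0} a\,\Pprob(W>a) = \sum_{w}\Pprob(W=w)\sum_{a=0}^{w-1} a = \E\!\left[\frac{W(W-1)}{2}\right] = \frac{\E[W^2]-\E[W]}{2}.
\]
Substituting $\E[W^2] = (\sigma^\text{IR})^2 + (\mu^\text{IR})^2$ and dividing by $\mu^\text{IR}$ turns this term into $\mu^\text{IR}/2 - 1/2 + (\sigma^\text{IR})^2/(2\mu^\text{IR})$. Adding $\mu^\text{EI}$ from the first term gives the claimed formula.

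The main obstacle is bookkeeping of discrete-time conventions, and it is where the $-\tfrac12$ comes from. Infectiousness at age $k$ requires $W^\text{IR} > k - j$, so the inner index $a$ runs from $0$ to $W-1$, and that range produces $W(W-1)/2$ rather than $W^2/2$. We will check that this convention agrees with the one in \cref{lem:infectious_kernel}, since both rely on $\sum_{a\ge0}\Pprob(W>a)=\mu^\text{IR}$. We will also confirm the formula on a sanity case: with $W^\text{IR}\equiv1$ it gives $\E[G]=\mu^\text{EI}$. Finally, we will note that interchanging the sums requires $\E[(W^\text{IR})^2]<\infty$, which is justified by Tonelli since every term is nonnegative.
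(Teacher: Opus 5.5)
Your proposal is correct and follows the paper's proof essentially step for step: swap the sums over $k$ and $j$, reindex with $a=k-j$, split the bracket into $j\,\mu^\text{IR}+S$, and evaluate $S=\sum_{a\ge 0}a\,\Pprob(W^\text{IR}>a)$ via $\sum_{a=0}^{r-1}a=r(r-1)/2$. One small correction: Tonelli permits the interchange for nonnegative terms with no moment condition at all, so $\E[(W^\text{IR})^2]<\infty$ (together with $\mu^\text{EI}<\infty$) is needed only for the resulting expression to be finite, not for the swap itself.
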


\begin{proof}

We now derive the mean generation time $G$. By definition, 
\[
\mathbb{E}[G] = \sum_{k \geq 1} k\, g_k = \frac{1}{\mu^\text{IR}} \sum_{k \geq 1} k\, \zeta^\text{EI}_k.
\]
Substituting $\zeta^\text{EI}_k = \sum_{j=1}^{k} \pi^\text{EI}_j \cdot \Pprob(W^\text{IR} > k-j)$ and swapping the order of summation,
\[
\sum_{k \geq 1} k\, \zeta^\text{EI}_k
= \sum_{k \geq 1} \sum_{j=1}^{k} k\, \pi^\text{EI}_j\, \Pprob(W^\text{IR} > k-j)
= \sum_{j \geq 1} \pi^\text{EI}_j \sum_{k \geq j} k\, \Pprob(W^\text{IR} > k-j).
\]
Reindex the inner sum with $a = k - j$, so that $a = 0, 1, 2, \dots$ and $k = j + a$:
\[
\sum_{j \geq 1} \pi^\text{EI}_j \sum_{a \geq 0} (j + a)\, \Pprob(W^\text{IR} > a)
= \sum_{j \geq 1} \pi^\text{EI}_j \left[ j \sum_{a \geq 0} \Pprob(W^\text{IR} > a) + \sum_{a \geq 0} a\, \Pprob(W^\text{IR} > a) \right].
\]
The first inner sum is $\mu^\text{IR}$, since the survival function of a non-negative integer random variable sums to its mean. The bracketed expression is therefore $j\, \mu^\text{IR} + S$, where $S := \sum_{a \geq 0} a\, \Pprob(W^\text{IR} > a)$ does not depend on $j$. Pulling $S$ out of the outer sum and using $\sum_j j\, \pi^\text{EI}_j = \mu^\text{EI}$,
\[
\sum_{k \geq 1} k\, \zeta^\text{EI}_k = \mu^\text{EI}\, \mu^\text{IR} + S.
\]

It remains to evaluate $S$. Writing the survival function as a sum over $\pi^\text{IR}$ and again swapping the order of summation,
\[
S = \sum_{a \geq 0} a \sum_{r > a} \pi^\text{IR}_r
= \sum_{r \geq 1} \pi^\text{IR}_r \sum_{a = 0}^{r-1} a.
\]
The inner sum is the standard arithmetic series $\sum_{a=0}^{r-1} a = \tfrac{r(r-1)}{2} = \tfrac{r^2 - r}{2}$. Therefore
\begin{align*}
    S &= \sum_{r \geq 1} \pi^\text{IR}_r \cdot \frac{r^2 - r}{2}
= \frac{1}{2}\left( \sum_r r^2\, \pi^\text{IR}_r - \sum_r r\, \pi^\text{IR}_r \right)
= \frac{1}{2}\left( \mathbb{E}[(W^\text{IR})^2] - \mu^\text{IR} \right)\\
&=\frac{1}{2}\left((\mu^\text{IR})^2 + (\sigma^\text{IR})^2 - \mu^\text{IR}\right).
\end{align*}
Combining the two results and dividing by $\mu^\text{IR}$,
\[
\mathbb{E}[G] = \frac{\mu^\text{EI}\, \mu^\text{IR} + S}{\mu^\text{IR}} = \mu^\text{EI} + \frac{\mu^\text{IR}}{2}  + \frac{(\sigma^\text{IR})^2}{2\, \mu^\text{IR}} - \frac{1}{2}.
\]
\end{proof}

Notably, the latent-period SD $\sigma^\text{EI}$ does not enter the expression for $\mathbb{E}[G]$. The $-\tfrac{1}{2}$ reflects that $\zeta^\text{EI}$ assigns mass at the day of the E$\to$I transition itself ($\Pprob(W^\text{IR} > 0) = 1$), so an individual is permitted to transmit on the very day they enter the infectious compartment. It vanishes in the continuous-time analogue.

\section{Case reproduction numbers}\label[appendix]{apx:case_rt}
 
Similar arguments deepen our understanding of how case $R_t$ relates to the other definitions. 
\citet{fraser2007} relates $R_t^I$ and $R_t^c$ through the generation interval distribution by substituting $w(t,k)=R_t^Ig_k$ into the definition of case $R_t$:
\begin{equation}\label{eq:case-inst-fraser}
    R_t^C = \sum_{k> 0} R_{t+k}^I g_k.
\end{equation}
This further reinforces the idea that instantaneous $R_t$ is the average number of secondary transmissions from a primary infection at $t$ if conditions remain unchanged. 
We can derive this same expression under compartmental assumptions, using the tools introduced in \cref{prop:compartmental-generation}.

By definition, $R_t^C = \sum_{k > 0} w(t+k,k)$. Expanding $w(t+k,k)$ as in the proof of Proposition~\ref{prop:equivalence} and recalling the definition of $\zeta_k^\text{EI}$,
\begin{align*}
R_t^C
  &=  \sum_{k> 0} S_{t+k}\cdot {\mathbb{P}\!\left(\text{infectious at } t+k \mid \text{new infection at } t\right)} \\
    &\hspace{70pt}\cdot {\mathbb{P}(i \text{ contacts } j \text{ at } t+k)}\cdot\mathbb{P}
    (\text{transmission} \mid S\text{-}I \text{ contact})\\
&= \sum_{k > 0} S_{t+k} \cdot \frac{\beta_{t+k}}{N} \cdot \zeta_k^\text{EI} ,
\end{align*}
\cref{lem:homo_to_beta} established that instantaneous $R_t$ and mechanistic $R_t$ both equal $\beta_t S_t \mu^\text{IR}/N$.
Plugging that in,
$$R_t^C = \sum_{k > 0} R_{t+k}^I \cdot \frac{\zeta_k^\text{EI}}{\mu^{\mathrm{IR}}}.$$
\cref{prop:compartmental-generation} established that in compartmental equations, \mbox{$g_k = \frac{\zeta_k^\text{EI}}{\mu^{\mathrm{IR}}}$}.
This completes the proof of \eqref{eq:case-inst-fraser}.

The method by \citet{wallinga_teunis} is the standard approach to estimate case $R_t$ from aggregate data. 
Given individual-level data, one could compute $R_t^C$ by simply identifying the cohort of people infected at $t$ and tracking their secondary transmissions. 
\chapter{\texorpdfstring{Supplementary material for \cref{ch4}}{Supplementary material for Chapter 4}}
\label{app:D}

\section{Optimization}\label[appendix]{apx:optim}

The objective in \eqref{eq:glm-retro} is a penalized Poisson log-likelihood with identity link $\Lambda_t(\theta)=Z_t^\top \theta$ \eqref{eq:linear-mean-cases}. This is concave in $\theta$ wherever the fitted means satisfy $\Lambda_t(\theta) > 0$.                                 
(For now, we ignore the multiplicative terms $\omega$ and $\rho$.)

Iteratively reweighted least squares (IRLS) is the standard way to optimize a GLM: each Newton step on the log-likelihood reduces algebraically to a weighted least squares problem, hence the name.
IRLS involves the score and Fisher information, which we define next.

\subsection{Score, Hessian, and Fisher information.}                                                                                                                

Consider an unregularized, identity-link Poisson GLM. Its log-likelihood contribution at time $t$ is
\[
\ell_t(\theta) \;=\; y_t\,\log \Lambda_t(\theta) \,-\, \Lambda_t(\theta), \qquad \Lambda_t(\theta) \;=\; Z_t^\top\theta,
\]
with $Z_t$ the row of the convolutional design matrix in \eqref{eq:linear-mean-cases}. Differentiating yields the individual score function and its gradient,
\[
s_t(\theta) \;=\; \frac{\partial \ell_t}{\partial \theta} \;=\; \frac{y_t - \Lambda_t}{\Lambda_t}\,Z_t, \qquad
\frac{\partial s_t}{\partial \theta^\top} \;=\; -\,\frac{y_t}{\Lambda_t^2}\,Z_t Z_t^\top.
\]
We stack timesteps into a design matrix $Z \in \mathbb{R}^{T\times p}$ with rows $Z_t^\top$, and note that $\mathbb{E}[y_t] = \Lambda_t$ under Assumption 1. Summing the scores and using $\sum_t a_t Z_t = Z^\top a$ for any vector $a = (a_1,\dots,a_T)^\top$,
\[
s(\theta) \;=\; \sum_t \frac{y_t - \Lambda_t}{\Lambda_t}\,Z_t \;=\; Z^\top W(Y - \Lambda), \qquad W \;=\; \mathrm{diag}(1/\Lambda_t).
\]
Likewise, using $\sum_t w_t Z_t Z_t^\top = Z^\top \mathrm{diag}(w)\, Z$, we define the Fisher information as
\[
\mathcal{I}(\theta) \;=\; -\,\mathbb{E}\!\left[\sum_t \frac{\partial s_t(\theta)}{\partial \theta^\top}\right] \;=\; \sum_t \frac{Z_t Z_t^\top}{\Lambda_t} \;=\; Z^\top W Z.
\]

\subsection{IRLS}

Our method uses a ridge penalty $\lambda\,\theta^\top\Omega\,\theta$ to regularize for smoothness; in the real-time setting there is a second term $\gamma\,\theta^\top\Psi\,\theta$. Differentiating, the gradients are $2\lambda\Omega\theta$ and $2\gamma\Psi\theta$ and the Hessians are $2\lambda\Omega$ and $2\gamma\Psi$. Since $\lambda$ and $\gamma$ are selected by cross-validation, the factor of two is immaterial and we absorb it into the tuning parameters henceforth, writing the gradients as $\lambda\Omega\theta$, $\gamma\Psi\theta$ and the Hessians as $\lambda\Omega$, $\gamma\Psi$. These enter the gradient and Fisher information, respectively. Our notation adheres to the retrospective case in the following derivations (just $\lambda$), though $\gamma$ can simply be added for real-time.

The penalized objective \eqref{eq:glm-retro} adds the ridge term to the negative summed log-likelihood,
\[
f_\lambda(\theta) \;=\; -\sum_t \ell_t(\theta) \,+\, \tfrac{1}{2}\lambda\,\theta^\top\Omega\,\theta.
\]
This has penalized gradient $g_\lambda(\theta) = -Z^\top W(Y - \Lambda) + \lambda\Omega\,\theta$ and Fisher information $\mathcal{I}_\lambda(\theta) = Z^\top W Z + \lambda\Omega$, the penalty contributing its Hessian $\lambda\Omega$ to the latter. 
In general, Fisher scoring updates an iterate by subtracting the information-preconditioned gradient,
\[
\theta^{(k+1)} \;=\; \theta^{(k)} \,-\, \mathcal{I}_\lambda(\theta^{(k)})^{-1}\,g_\lambda(\theta^{(k)}),
\]
which replaces the Hessian in a Newton step with its expectation $\mathcal{I}_\lambda$, guaranteeing a positive-definite, descent-directed system. Substituting our gradient and information gives
\[
\theta^{(k+1)} \;=\; \theta^{(k)} \,-\, \bigl(Z^\top W^{(k)} Z + \lambda\Omega\bigr)^{-1}\!\bigl[-Z^\top W^{(k)}(Y - \Lambda^{(k)}) + \lambda\Omega\,\theta^{(k)}\bigr],
\]
with working weights $W^{(k)} = \mathrm{diag}\bigl(1/\Lambda_t(\theta^{(k)})\bigr)$ and $\Lambda^{(k)} = Z\theta^{(k)}$.
Because $f_\lambda$ is convex (with positive-definite information $\mathcal{I}_\lambda$ once $\lambda\Omega \succ 0$), any stationary point is the global minimizer, so Fisher scoring converges to the unique penalized MLE regardless of initialization.

To simplify, write $\theta^{(k)} = (Z^\top W^{(k)} Z + \lambda\Omega)^{-1}(Z^\top W^{(k)} Z + \lambda\Omega)\theta^{(k)}$ and fold it into the bracket. The $\lambda\Omega\,\theta^{(k)}$ terms cancel, leaving
\[
\theta^{(k+1)} \;=\; \bigl(Z^\top W^{(k)} Z + \lambda\Omega\bigr)^{-1} Z^\top W^{(k)}\bigl[Z\theta^{(k)} + (Y - \Lambda^{(k)})\bigr].
\]
The bracketed term is the working response; under the identity link $\Lambda^{(k)} = Z\theta^{(k)}$, so it collapses to $Y$ and we obtain the penalized WLS normal equations
\[
\bigl(Z^\top W^{(k)} Z + \lambda\Omega\bigr)\,\theta^{(k+1)} \;=\; Z^\top W^{(k)} Y,
\]
where $Y$ appears directly rather than through a canonical working response, because the link is identity. Each iterate is thus a ridge-penalized weighted least squares fit of $Y$ on $Z$ with Poisson variance weights, reweighted as $W^{(k)}$ updates. We iterate until the relative change in $\theta$ falls below a tolerance; convergence is fast in practice, typically within 10--20 iterations from a flat initialization.

One practical wrinkle is that unlike the log link, the identity link does not automatically keep $\Lambda_t$ positive. A full Newton step can overshoot into the infeasible region where the working weights blow up. We guard against this with step-halving: if the proposed update yields any non-positive $\Lambda_t$, we halve the step length and retry, up to a small number of backtracks.

The non-negativity constraint $S\theta \succeq 0$ on $R_t$ itself is not enforced inside the solver; as argued in \cref{apx:rt-uncertainty}, it is asymptotically inactive whenever the true $R_t$ is bounded away from zero, which holds in every setting we consider.

\paragraph{Ascertainment rates}

Of the two multiplicative terms deferred above, the ascertainment rate
$\rho_t \in (0,1]$ is the simpler: it is fixed and known (e.g.\ from external
calibration), not estimated. Entering the mean multiplicatively,
$\mu_t = \rho_t\,\Lambda_t(\theta)$, it acts as a known Poisson offset that
rescales each expected count without adding free parameters. The derivations
above therefore carry through verbatim with $\Lambda_t$ replaced by
$\rho_t \Lambda_t$: the working weights become
$W = \mathrm{diag}(1/(\rho_t\Lambda_t))$ and the design rows are scaled to
$\rho_t Z_t$, leaving the IRLS recursion and its convexity unchanged. The
day-of-week multipliers $\omega$, by contrast, are estimated, and we treat them
next.

\subsection{Day-of-week effects}

When day-of-week (DoW) multipliers $\omega_{(\text{$t$ mod 7})}$ enter the mean via $\mu_t = \omega_{(\text{$t$ mod 7})}\,\Lambda_t(\theta)$ as in \eqref{eq:quasi-poisson-model}, the model is invariant to the rescaling $(\omega, \theta) \leftrightarrow (c\omega, \theta/c)$ for any $c > 0$. 
Multiplying every $\omega_j$ by $c$ and dividing $\theta$ by $c$ leaves $\mu_t$ — and hence the likelihood — unchanged.      
This is problematic, as the MLE is non-unique along this one-dimensional ray. To ensure uniqueness,
we impose the geometric-mean constraint $\prod_j \omega_j = 1$, the natural normalization for a multiplicative factor,                                                                                                                                           
Estimation proceeds by block coordinate ascent on the log-likelihood. Holding $\theta$ fixed, partition the Poisson log-likelihood by day-of-week: with $T_j = \{t : \text{$t$ mod 7} = j\}$,                  
  \[                                                                                                                                                                                                             
  \ell(\theta, \omega) \;=\; \sum_j \sum_{t \in T_j}\bigl[y_t\log(\omega_j \Lambda_t) - \omega_j \Lambda_t\bigr].
  \]                                                                                                                                                     
Differentiating reveals that the score for $\omega_j$ is $\sum_{t \in T_j}[y_t/\omega_j - \Lambda_t]$. 
This is maximized in closed form by the day's ratio of observed to expected counts:
  \begin{equation}\label{eq:dow-update}
  \hat\omega_j \;=\; \frac{\sum_{t \in T_j} y_t}{\sum_{t \in T_j} \Lambda_t(\theta)},
  \end{equation}                                                                                                                                                                                                 
To enforce $\prod_j \hat\omega_j = 1$, we divide each $\hat\omega_j$ by $\bigl(\prod_j
  \hat\omega_j\bigr)^{1/7}$. We then take one IRLS step on $\theta$ using the DoW-scaled design $\widetilde Z_t = \hat\omega_{(\text{$t$ mod 7})} Z_t$, and the two blocks alternate until the
  log-likelihood converges. Because each block is convex given the other, the iterates converge to the joint MLE---equivalently, to the maximizer of the profile likelihood with $\omega$ profiled out.

\subsection{Tail constraint}\label{apx:optim-constraints}

Our main experiments parameterize $R_t$ as a natural cubic spline. Linearity past the last knot is structural to the basis: every coefficient vector $\theta$ already yields a linear tail, so no constraint is enforced. This is the default both retrospectively and in real-time.

The most useful alternative is a \emph{constant} tail. Higher-order tail behavior (quadratic, cubic) lets the spline keep wiggling past the last knot and tends to overfit, so we focus on the constant case.

  A constant-tail constraint requires a regression-spline basis---specifically,
  a clamped cubic B-spline---because, unlike the natural basis, it imposes no
  boundary behavior of its own. The constant tail is enforced as a linear equality
  $A\theta = 0$, where $A$ collects the first three derivatives of the spline basis
  functions evaluated at a collocation point $t_c$ in the tail segment. Writing
  $S_1, \ldots, S_p$ for the components of the basis row $S(t)$, the rows of $A$ are
  \[
  A \;:=\; \begin{bmatrix}
  S_1'(t_c)   & S_2'(t_c)   & \cdots & S_p'(t_c)   \\
  S_1''(t_c)  & S_2''(t_c)  & \cdots & S_p''(t_c)  \\
  S_1'''(t_c) & S_2'''(t_c) & \cdots & S_p'''(t_c)
  \end{bmatrix}.
  \]
  Because B-splines are local---each basis function is supported on only a few
  knot spans---most columns are zero; only the basis functions whose support
  contains $t_c$ contribute nonzero entries. Since $R_t$ is a cubic polynomial on
  the tail segment, the constraint $A\theta = 0$ sets $R_t' = R_t'' = R_t''' = 0$
  there, which forces every coefficient above the constant term to vanish and
  hence flattens the entire segment.

To enforce both the penalized objective and the constraint $A\theta = 0$ exactly, we solve the constrained optimization problem via its KKT conditions. We introduce Lagrange multipliers $\nu \in \mathbb{R}^3$ (one per derivative constraint) and form the Lagrangian:
\[
L(\theta, \nu) = f_\lambda(\theta) + \nu^\top(A\theta).
\]

At optimality, the gradients with respect to both $\theta$ and $\nu$ must vanish:
\[
\frac{\partial L}{\partial \theta} = \nabla_\theta f_\lambda(\theta) + A^\top\nu = 0, \qquad
\frac{\partial L}{\partial \nu} = A\theta = 0.
\]
The first condition says the gradient of the objective is exactly canceled by the constraint forces $A^\top\nu$. Geometrically, you cannot move in any feasible direction (staying on the constraint surface) without increasing the objective. The second condition simply enforces feasibility.

Recall the gradient of the penalized objective is $\nabla_\theta f_\lambda(\theta) = -Z^\top W(Y - Z\theta) + \lambda\Omega\theta$. Substituting into the first KKT condition and rearranging:
\[
-Z^\top W(Y - Z\theta) + \lambda\Omega\theta + A^\top\nu = 0
\]
gives
\[
(Z^\top W Z + \lambda\Omega)\theta + A^\top\nu = Z^\top W Y.
\]
Combined with the feasibility condition $A\theta = 0$, we obtain the augmented linear system:
\[
\begin{bmatrix} Z^\top W Z + \lambda \, \Omega & A^\top \\ A & 0 \end{bmatrix}
\begin{bmatrix} \theta \\ \nu \end{bmatrix}
=
\begin{bmatrix} Z^\top W Y \\ 0 \end{bmatrix}.
\]
This system holds the constraint at machine precision throughout the IRLS iteration. The multipliers $\nu$ encode the strength and direction of constraint forces pulling $\theta$ into the feasible region; they are discarded after solving. Note that only basis functions near the collocation point $t_c$ feel significant constraint forces, since far-away basis functions have negligible derivatives there anyway.

This augmented system maintains exact feasibility ($A\theta = 0$ to machine precision) throughout the IRLS iteration. The multipliers $\nu$ are discarded after solving.

When a tapered penalty is also present, the augmented system can be ill-conditioned. Instead, we solve in the constraint's null space. Every feasible $\theta$ lies in $\mathrm{null}(A)$, so let the columns of $N \in \mathbb{R}^{p \times (p - \mathrm{rank}(A))}$ form an orthonormal basis for that null space:
\[
\theta = N\alpha, \quad \alpha \in \mathbb{R}^{p - \mathrm{rank}(A)}.
\]
The constraint is now self-enforcing: the problem becomes \emph{unconstrained} in $\alpha$, with effective design $ZN$ and effective penalty $N^\top \Omega N$. To obtain $N$, take the QR decomposition $A^\top = QR$ with $Q = [Q_1 \mid Q_2]$: the first $r = \mathrm{rank}(A)$ columns span the column space of $A^\top$, and the remaining $p - r$ columns span $\mathrm{null}(A)$. Set $N = Q_2$.

We do not recommend combining a constant-tail constraint with a natural-spline basis. The natural basis is already shape-constrained at the boundary, and stacking a hard equality on top was numerically fragile in our experiments. The B-spline route is cleaner.

\section{Inference in Poisson GLM}\label[appendix]{apx:rt-uncertainty}

Inference for $R_t$ rests on a Gaussian approximation to the sampling distribution of $\hat\theta$. Because $R_t(\theta) = S(t)^\top\theta$ is linear in $\theta$, once we have an estimate                    
$\widehat{\mathrm{Cov}}(\hat\theta)$, we can transport it to the $R_t$ scale by a simple quadratic form. We derive $\widehat{\mathrm{Cov}}(\hat\theta)$ from the penalized score equations for the             
retrospective estimator \eqref{eq:glm-retro} under three working assumptions: 
\begin{enumerate}
  \item The observed counts $y_t$ are conditionally independent and Poisson with mean $\Lambda_t = Z_t^\top\theta$; 
  \item The delay distributions are known; and
\item  The smoothing parameter $\lambda$ is held fixed.                                                    
\end{enumerate}

We perform inference with a standard approach: the sandwich variance for a penalized M-estimator (or quasi-likelihood). 
Without loss of generality, our notation assumes no day-of-week effects or under-reporting. 

\subsection{Confidence intervals for \texorpdfstring{$R_t$}{Rt}}
\subsubsection*{Sandwich variance of $\hat\theta$}
Stacking timesteps into a design matrix $Z \in \mathbb{R}^{T\times p}$ with rows $Z_t^\top$ and $\Lambda_t(\theta) = Z_t^\top\theta$, the total score $U(\theta) = \sum_t s_t(\theta)$ and Fisher information are
\[
U(\theta) \;=\; Z^\top W (Y - Z\theta), \qquad
\mathcal{I}(\theta) \;=\; Z^\top W Z, \qquad W \;=\; \mathrm{diag}\bigl(1/\Lambda_t(\theta)\bigr).
\]
For independent Poisson observations the information equality holds, so $\mathrm{Var}(U(\theta)) = Z^\top W Z = \mathcal{I}(\theta)$. Crucially, $U$ is the gradient of the \emph{likelihood} alone, so this variance carries no $\lambda$.

The estimator $\hat\theta$ solves the first-order condition of \eqref{eq:glm-retro},
\[
-U(\hat\theta) \,+\, \lambda\Omega\,\hat\theta \;=\; 0.
\]
Expanding to first order about the truth $\theta_0$, using $\nabla U(\theta_0) \approx -\mathcal{I}(\theta_0)$,
\[
0 \;\approx\; -U(\theta_0) \;+\; \lambda\Omega\theta_0 \;+\; \bigl(\mathcal{I}(\theta_0) + \lambda\Omega\bigr)(\hat\theta - \theta_0),
\]
where $\lambda\Omega\theta_0$ is a deterministic shift and $\lambda\Omega$ is the penalty's contribution to the curvature. Let
\[
H \;:=\; \mathcal{I}(\theta_0) + \lambda\Omega \;=\; Z^\top W Z + \lambda\Omega
\]
denote the Hessian of the penalized objective \eqref{eq:glm-retro} at $\theta_0$. Inverting,
\[
\hat\theta - \theta_0 \;\approx\; H^{-1}\bigl[U(\theta_0) \,-\, \lambda\Omega\theta_0\bigr].
\]
The term $-H^{-1}\lambda\Omega\theta_0$ is a deterministic ridge-type shrinkage bias and does not contribute to $\mathrm{Var}(\hat\theta)$. The variance of $\hat\theta$ is therefore driven entirely by the stochastic part $H^{-1}U(\theta_0)$. This gives the sandwich variance
\begin{equation}\label{eq:rt-cov-theta}
\mathrm{Cov}(\hat\theta) \;\approx\; H^{-1}\,\underbrace{(Z^\top W Z)}_{\mathrm{Var}(U(\theta_0))}\,H^{-1},
\end{equation}
where the approximation is due to the first-order Taylor expansion of the score about $\theta_0$.

\subsubsection*{Plug-in estimation}

Both $\mathcal{I}(\theta_0)$ and $W = \mathrm{diag}(1/\Lambda_t(\theta_0))$ depend on the unknown $\theta_0$. We use the standard plug-in: substitute $\hat\theta$ throughout, giving fitted means $\hat\Lambda_t = Z_t^\top\hat\theta$, working weights $\widehat W = \mathrm{diag}(1/\hat\Lambda_t)$, and
\begin{equation}\label{eq:sandwich-cov}
    \widehat H \;=\; Z^\top \widehat W Z + \lambda\Omega, \qquad \widehat{\mathrm{Cov}}(\hat\theta) \;=\; \widehat H^{-1}\,(Z^\top \widehat W Z)\,\widehat H^{-1}.
\end{equation}
Under regularity $\hat\theta$ is consistent for the penalized target it estimates, and by the continuous mapping theorem so is the plug-in covariance.

Equation \eqref{eq:rt-cov-theta} has the classical bread--meat form: bread $H^{-1}$, meat $Z^\top W Z$. The penalty sits in both slices of bread, through $H = Z^\top W Z + \lambda\Omega$, because it changes how responsive $\hat\theta$ is to a perturbation of the data. It is absent from the meat, since the meat is the variance of the unpenalized likelihood score. Setting $\lambda = 0$ collapses both breads onto the meat, recovering the usual unpenalized Poisson covariance $(Z^\top W Z)^{-1}$. Increasing $\lambda$ narrows the intervals, correctly reflecting sampling variability of $\hat\theta$ about the penalized target rather than the true $R_t$.

The real-time estimator \eqref{eq:glm-real-time} carries an additional penalty $\gamma\theta^\top\Psi\theta$, damping first differences in the sparse tail. It enters the bread identically to $\lambda\Omega$, giving
\[
\widehat H \;=\; Z^\top \widehat W Z \;+\; \lambda\Omega \;+\; \gamma\Psi,
\]
with the sandwich $\widehat{\mathrm{Cov}}(\hat\theta) = \widehat H^{-1}(Z^\top \widehat W Z)\widehat H^{-1}$ otherwise unchanged.

\subsubsection*{Wald intervals for $R_t$}

As $T \to \infty$, $U(\theta_0) = \sum_t s_t(\theta_0)$ is a sum of independent mean-zero terms and is asymptotically Gaussian by the CLT. The linearization $\hat\theta - \theta_0 \approx H^{-1}[U(\theta_0) - \lambda\Omega\theta_0]$ is an affine function of $U(\theta_0)$, with $H$ treated as a fixed constant. So $\hat\theta$ is asymptotically Gaussian with covariance \eqref{eq:rt-cov-theta}. The estimator $\hat R_t = S(t)^\top\hat\theta$ is linear in $\hat\theta$ and inherits normality with variance $S(t)^\top\mathrm{Cov}(\hat\theta)S(t)$, giving the Wald interval
\begin{equation}\label{eq:rt-wald}
\hat R_t \;\pm\; z_{1-\alpha/2}\,\sqrt{\,S(t)^\top\,\widehat{\mathrm{Cov}}(\hat\theta)\,S(t)\,}, \qquad z_{1-\alpha/2} \;=\; \Phi^{-1}(1-\alpha/2),
\end{equation}
where $\Phi$ is the standard normal CDF. We can rewrite this as \mbox{$[\hat R_t - z\cdot\mathrm{SE}_t,\, \hat R_t + z\cdot\mathrm{SE}_t]$}, where \mbox{$\mathrm{SE}_t \;=\; \sqrt{S(t)^\top\,\widehat{\mathrm{Cov}}(\hat\theta)\,S(t)}$} denotes the estimated standard error of $\hat R_t$.

\subsection{Day-of-week effects and ascertainment rates}

Having estimated day-of-week effects $\hat\omega$, we must incorporate them into the design in order to perform inference.
To do so, we apply the same sandwich machinery as the no-DoW case. Specifically, we scale $Z_t$ via $\widetilde Z_t =                    
\hat\omega_{(\text{$t$ mod 7})}\, Z_t$ and evaluate the working weights at the scaled fitted mean, $\widetilde W = \mathrm{diag}(1/\hat\mu_t)$ with $\hat\mu_t = \hat\omega_{(\text{$t$ mod 7})}\hat\Lambda_t$.
Formula \eqref{eq:rt-cov-theta} then becomes                 
  \[                                                              
  \widehat{\mathrm{Cov}}(\hat\theta) \;=\; \widetilde H^{-1}\,(\widetilde Z^\top \widetilde W \widetilde Z)\, \widetilde H^{-1}, \qquad \widetilde H \;=\; \widetilde Z^\top \widetilde W \widetilde Z + \lambda
  \Omega,          
  \]
and $R_t$ intervals are formed exactly as before via $\widehat{\mathrm{Var}}(\hat R_t) = S(t)^\top\, \widehat{\mathrm{Cov}}(\hat\theta)\, S(t)$. This is a \emph{conditional} variance: it treats $\hat\omega$ 
as fixed rather than estimated, and so omits the contribution from the joint $(\theta, \omega)$ Hessian that would reflect sampling variability in $\hat\omega$. Fortunately, this omission is typically small in      
practice, since each $\hat\omega_j$ pools across every observation from day $j$ and is far better identified than any single spline coefficient.

  
  The ascertainment rates $\rho_t$ are fixed and known. They fold into the
  sandwich just as $\hat\omega$ does: scale the design to
  $\widetilde Z_t = \rho_t Z_t$ and evaluate the weights at the fitted mean
  $\hat\mu_t = \rho_t \hat\Lambda_t$. (When both terms are present, the two
  scalings compose.) Unlike $\hat\omega$, however, $\rho_t$ carries no sampling
  variability, so no conditional-variance caveat applies: the resulting
  $\widehat{\mathrm{Cov}}(\hat\theta)$ is exact in this respect. We form $R_t$
  intervals as before, via
  $\widehat{\mathrm{Var}}(\hat R_t) = S(t)^\top \widehat{\mathrm{Cov}}(\hat\theta) S(t)$.
  Note that $\rho_t$ acts only on the observation scale; the target
  $R_t(\theta) = S(t)^\top\theta$ is unaffected.

\subsection{Quasi-Poisson adjustment}\label[appendix]{apx:quasi-poisson}

When Poisson dispersion does not hold --- as is typical for real
surveillance data --- we replace Assumption~1 with the quasi-likelihood
assumption $\mathrm{Var}(y_t) = \phi\,\Lambda_t$ for an unknown
dispersion $\phi \geq 1$. The sandwich form above is unchanged except
for the score variance, which becomes $\phi\,Z^\top W Z$ instead of
$Z^\top W Z$. Substituting back, the asymptotic covariance of $\hat\theta$
scales by $\phi$, and every standard error (and hence Wald-CI half-width)
scales by $\sqrt{\phi}$. 

We estimate $\phi$ from Pearson residuals at
the fitted values,
\[
  \hat\phi \;=\; \frac{1}{T - p_{\mathrm{eff}}}
                \sum_t \frac{(y_t - \hat\Lambda_t)^2}{\hat\Lambda_t},
\]
where $p_{\mathrm{eff}} = \mathrm{tr}\bigl((Z^\top W Z + \lambda\Omega)^{-1}
Z^\top W Z\bigr)$ is the effective degrees of freedom of the penalized
fit. All real-data confidence intervals reported in this paper use this
adjustment; most simulation results do not, where the data-generating
process is Poisson.

\subsection{Simultaneous Coverage}
The Wald intervals are \emph{pointwise}: the collection $\{[\hat R_t - z\cdot\mathrm{SE}_t,\, \hat R_t + z\cdot\mathrm{SE}_t]\}_{t=t_0}^{t_1}$ does
not jointly cover the entire $R$-curve at level $1-\alpha$, since family-wise error accumulates over $T$ timesteps. 
To get a band with \textit{simultaneous coverage}, we inflate the critical value from $z_{1-\alpha/2}$ to some larger $c_\alpha$, chosen so that the probability of \emph{any} $R_t$ escaping the band is at most $\alpha$:  \[                                                              
  \mathbb{P}\!\left(\,\max_t \frac{|\hat R_t - R_t|}{\mathrm{SE}_t} > c_\alpha\,\right) \;\leq\; \alpha,
  \]                                                                                                                                                                                                          
While no closed-form expression exists for $c_\alpha$, we can leverage the asymptotic distribution of $\hat \theta$ to estimate it empirically. 
For \mbox{$m = 1,\ldots,M$}, sample \mbox{$\delta^{(m)} \sim
\mathcal{N}(0,\,\widehat{\mathrm{Cov}}(\hat\theta))$}, as a plausible realization of the random error $\hat\theta - \theta_0$. 
Passing this through the spline basis samples deviations $\delta R_t^{(m)} = 
S(t)^\top \delta^{(m)}$, of which we record the standardized supremum \mbox{$Z_m = \max_t |\delta R_t^{(m)}|/\mathrm{SE}_t$}.
Because the $\{Z_m\}$ are i.i.d., we set $c_\alpha$ to the empirical $(1-\alpha)$ quantile to form the simultaneous band $\hat R_t \pm c_\alpha\cdot\mathrm{SE}_t$. 

A simpler alternative to attain simultaneous coverage is the Bonferroni correction.
This strategy simply adjusts the significance level by a factor of $T$, taking \mbox{$c_\alpha \leftarrow z_{1-\alpha/(2T)}$}.
While valid, the Bonferroni correction ignores the strong correlation between adjacent $R_t$ estimates, 
and thus pays a nontrivial width penalty.
In contrast, the empirical approach takes this correlation into account,
and is thus less conservative. 

\subsection{Constraints}
                                                                               
The estimator carries three constraints that restrict the feasible set. 
We ignore all of them for inference, decisions we justify here. 

First is a non-negativity constraint $S\theta \succeq 0$.
This is an inequality constraint on $R_t$, which we ignore it for inference. The justification is standard: when the true $R_t$ is bounded strictly above zero at every timestep --- as is the case in essentially   
every epidemic setting we study --- $\hat\theta$ lies in the interior of the feasible set with probability approaching one, so the constraint is asymptotically inactive and the sandwich derivation goes      
through unmodified.

Secondly, the DoW identifiability constraint $\prod_j \omega_j = 1$ is also not a concern for inference.                                   This is an \emph{identifiability} constraint: it picks a unique representative from the rescaling ray $(\omega, \theta) \leftrightarrow      
(c\omega, \theta/c)$ on which the likelihood is constant, rather than restricting the feasible set. Every statistical quantity we report --- $\hat\mu_t$, $R_t = S(t)^\top\theta$, the day-of-week ratios $\omega_j/\omega_k$ --- is invariant to that same rescaling, so none depends on which representative we pick.

The third constraint is the constant-tail restriction, which is not our default method.
Here the constraint shapes only the point estimate: $\hat\theta$ is fit subject to
$A\hat\theta = 0$ (enforced within the penalized IRLS solve via its KKT system, as
described in \cref{apx:optim-constraints}). The covariance, however, is formed in the
full coefficient space using the same sandwich as our default, unconstrained-tail fits \eqref{eq:sandwich-cov}. The working weights $\widehat W$ are evaluated at the constrained fit, but $A$ does not otherwise enter.

\subsection{Real-time inference}\label[appendix]{apx:rt-realtime-ci}

Our estimates $\hat R_t$ have error
\[
\lvert\hat R_t - R_t\rvert = \left\lvert \underbrace{\mathbb{E}[\hat R_t]-R_t}_{\text{Bias}_t}+\underbrace{\hat R_t - \mathbb{E}[\hat R_t]}_{\text{Noise}_t\ (\varepsilon_t)}\right\rvert, \qquad \varepsilon_t \overset{\cdot}{\sim} \mathcal{N}(0,\sigma_t^2).
\]
Previously, we demonstrated how to compute the sandwich variance for $\sigma_t^2$, and thus form Wald confidence intervals based on the noise $\varepsilon_t$. 
On retrospective analyses, these bands cover the true values, or nearly do.
This is because cubic splines are flexible enough to learn realistic $R_t$ curves, so bias is minimal.
In real-time, however, we impose various forms of tail regularization, which produces bias. 
Consequently, the Wald bands fail to cover the tail predictions, so we must take a different approach.

\paragraph{Split-conformal bands.}
Rather than model the edge error with a variance formula, we estimate it directly from errors ConvRt has made before.
For any past date, ConvRt's estimate eventually settles once many more weeks of data arrive.
We trust that settled value as a stand-in for the truth.
Back when the same date sat near the leading edge, the real-time estimate was worse.
The gap between that early estimate and the settled value is a concrete sample of the edge error we now face.
We collect many such gaps and read the band width off their empirical quantile.

Let $\hat R^{W}_{t}$ denote the ConvRt estimate of $R_t$ at date $t$ using data through vintage $W$, and let $b = W - t$ be its horizon behind the edge.
We index everything by $b$ because the error grows as $b \to 0$.
Write $\hat R^{W}(\tau)$ for the settled anchor of the previous paragraph, the estimate at a date $\tau \le W - \Delta$ with $\Delta = 14$ days.
Each earlier vintage $W' < W$ then contributes one calibration example per horizon: its edge-region estimate $\hat R^{W'}_{\tau}$ at $\tau = W' - b$, compared against the anchor $\hat R^{W}(\tau)$.

We score each example in one of two ways,
\begin{align}
\text{Residual:}\quad & s = \bigl\lvert \hat R^{W'}_{\tau} - \hat R^{W}(\tau)\bigr\rvert,\\
\text{Studentized:}\quad & s = \bigl\lvert \hat R^{W'}_{\tau} - \hat R^{W}(\tau)\bigr\rvert \big/ \sigma^{W'}_{\tau},
\end{align}
collect the scores separately for each horizon $b$, and take the split-conformal quantile $\hat q_b = s_{(k)}$ with $k = \lceil (1-\alpha)(n+1)\rceil$.
The band at the target date is then $\hat R^{W}_{t} \pm \hat q_b$ for the residual score, or $\hat R^{W}_{t} \pm \hat q_b\,\sigma^{W}_{t}$ for the studentized score.
Because the revisions are exchangeable across vintages at a fixed horizon, this construction carries the usual finite-sample split-conformal coverage guarantee.
Crucially, both scores absorb the tail bias directly, since the revision $\hat R^{W'}_{\tau} - \hat R^{W}(\tau)$ measures the total error $\lvert\text{Bias}_t+\text{Noise}_t\rvert$, rather than the
noise alone.

\begin{figure}
\centering
\includegraphics[width=\linewidth]{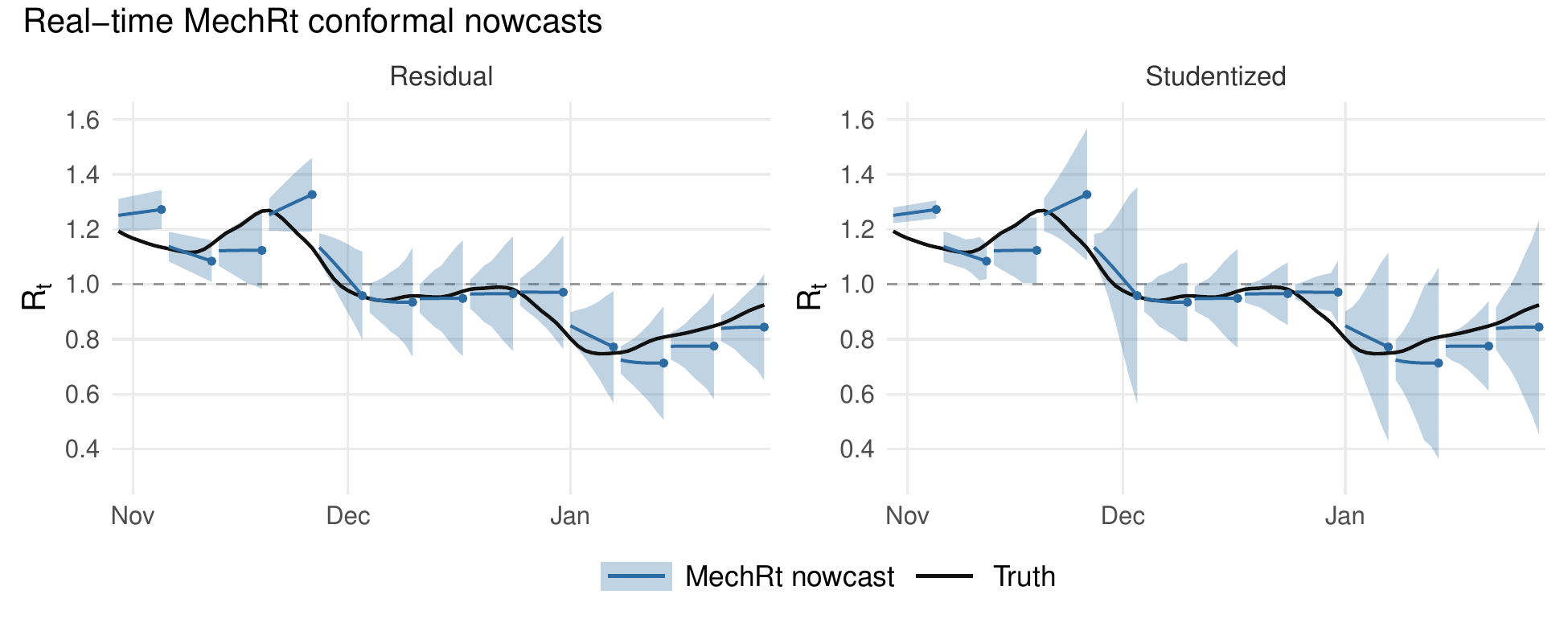}
\caption[Conformal nowcast bands on the wiggly flu simulation.]{Residual (left) and studentized (right) conformal nowcast bands for ConvRt on the wiggly flu simulation, $95\%$ level, last seven days per weekly vintage. Black is the simulation truth; blue is the
real-time mean with its conformal band. The studentized bands swing wide or tight as the edge $\sigma_t$ fluctuates; the residual bands are steadier.}
\label{fig:conf-sim}
\end{figure}

\paragraph{Residual vs. studentized.}
The two scores differ only in whether they rescale each revision by the model's reported precision $\sigma_\tau$.
The residual score adapts its band widths slowly, in response to sustained periods of (mis)coverage. 
The studentized score is designed to make more flexible bands across dates.
Dividing by $\sigma^{W'}_{\tau}$ turns each revision into a standardized residual, so the target band inherits the local scale
$\sigma^{W}_{t}$ and widens exactly where the fit reports itself to be uncertain.
This is appealing in principle, but it presumes that $\sigma$ is a trustworthy gauge of the real-time error.
At the tail, however, the dominant component of the error is the regularization bias, which $\sigma_\tau$ does not reflect. 
The sandwich covariance may also be a less reliable estimator there.

Figure~\ref{fig:conf-sim} shows the consequence on individual vintages.
The studentized bands swing wide or tight as the edge $\sigma_t$ fluctuates.
Meanwhile, the residual bands are more steady over time, yet still generally track the truth.

\begin{figure}[t]
\centering
\includegraphics[width=0.7\linewidth]{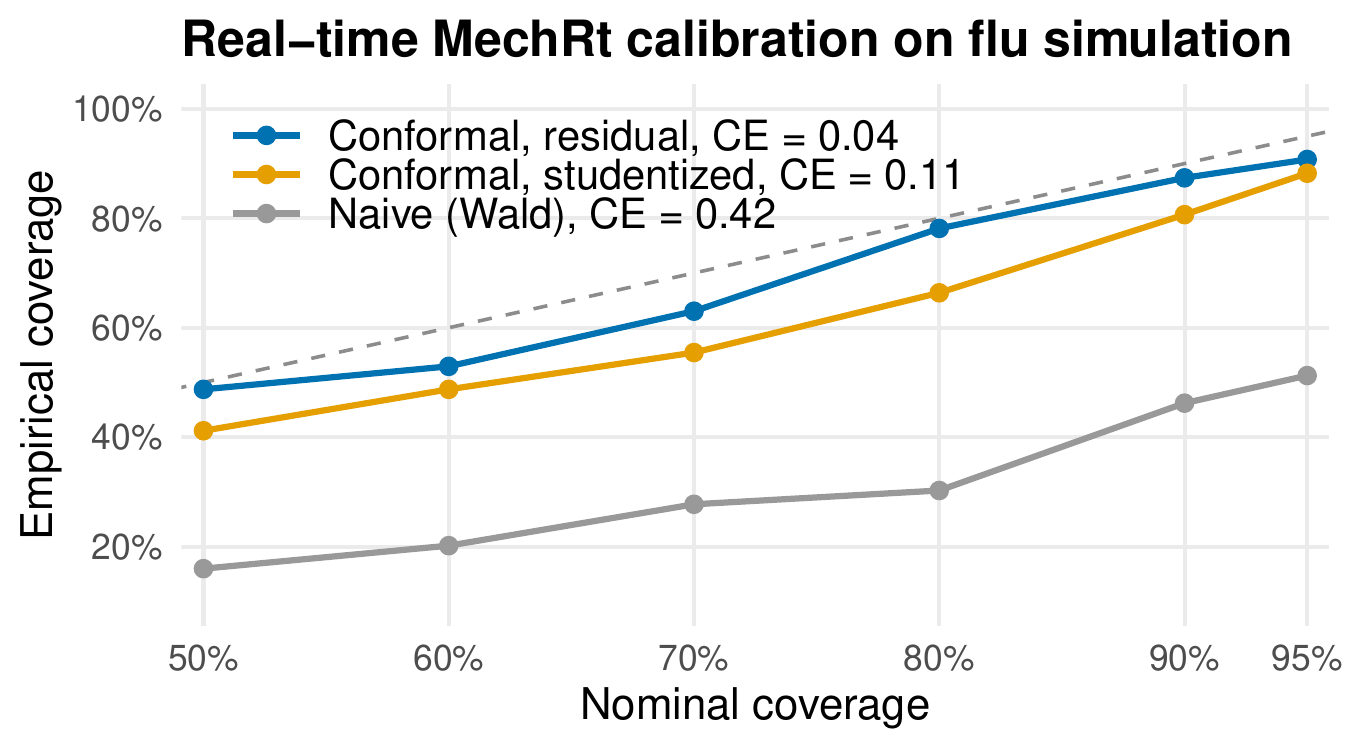}
\caption[Real-time conformal coverage calibration on the flu simulation.]{Real-time coverage calibration on the flu simulation, scored against the true simulated $R_t$.
Empirical versus nominal coverage for the residual and studentized conformal bands and the naive Wald band.
The dashed line marks perfect calibration.}
\label{fig:conf-cal}
\end{figure}

\paragraph{Empirical coverage.}
We use $\ell_1$ coverage error
\sloppy{$\text{CE} = \tfrac{1}{K}\sum_{k} \lvert \widehat{\text{cov}}(\alpha_k) - \alpha_k\rvert$}
to summarize calibration.
This is the mean gap between empirical and nominal coverage over the grid 
{$\alpha_k \in \{0.5, 0.6, 0.7, 0.8, 0.9, 0.95\}$}, 
measured on real-time $(W, t)$ rows.
On the flu simulation, where the true $R_t$ is known, the residual band is nearly perfect at $\text{CE} = 0.04$ ($n = 119$ rows; Figure~\ref{fig:conf-cal}).
The studentized band undercovers, at $\text{CE} = 0.11$.
The naive Wald band collapses to $\text{CE} = 0.42$, reaching only $51\%$ coverage at the nominal $95\%$ level.
On real flu data the ordering is the same, scored against the end-of-season retrospective: residual $\text{CE} = 0.04$, studentized $0.05$, and naive $0.27$ ($n = 175$ rows).

We therefore adopt the residual score as the default ConvRt real-time band.
It is worth noting that nothing in this construction is specific to ConvRt.
The residual band wraps any real-time point predictor, since it needs only past estimates and their revisions, not a variance model.

\section{Deconvolving latent infections}\label[appendix]{apx:deconvolution}

In this section, we share how to estimate latent infections $x$ given observations $y$.
The recovered infections $\hat x$ feed the $R_t$ estimators in the main text.

\paragraph{Spline optimization.}
This section recycles $S(t)$, $\theta$, $\Omega$, and $\lambda$ from \cref{sec:rt-retro}.
The setup is structurally the same---a spline parameterization fit by penalized Poisson regression---but here these symbols refer to the \emph{infection} curve $x_t$, not $R_t$.

Restating the observation model \eqref{eq:quasi-poisson-model}, 
\begin{equation}
y_t\given x_{<t} \sim \text{Quasi-Poisson}(\mu_t, \varphi), 
\qquad \mu_t = \omega_{(t \bmod 7)} \rho_t \Lambda_t, 
\qquad \Lambda_t = \sum_{s<t} x_{s} \pi_{t-s}.
\end{equation}

We recover daily infections $x_t$ with a natural cubic spline, $x_t = S(t)^\top \theta$.
The spline coefficients solve a penalized Poisson problem, 
\begin{equation}
\hat\theta = \argmin_\theta\;
\sum_t \bigl(\mu_t(\theta) - y_t \log \mu_t(\theta)\bigr)
+ \lambda\, \theta^\top \Omega\, \theta,
\qquad
\hat x_t = \max\bigl(S(t)^\top \hat\theta,\, 0\bigr),
\end{equation}
exactly the structure of \eqref{eq:glm-retro}.
The quasi-Poisson dispersion $\varphi$ is not necessary here, since it was only used for inference.
$\lambda$ may be selected via K-fold CV or GCV.
For computational convenience, our code uses GCV, though this could be marginally less accurate.

The smoothness penalty $\Omega$ is a magnitude-weighted integrated squared second derivative, computed by fine-grid quadrature:
\begin{equation}
\Omega = \int v(t)\, S''(t)\, S''(t)^\top\, dt,
\qquad \text{so} \qquad
\theta^\top \Omega\, \theta = \int v(t)\, \bigl(S''(t)^\top \theta\bigr)^2\, dt.
\end{equation}
The natural-spline basis pairs well with this penalty: linear tails are structural to the basis, which stabilizes the under-identified curve ends, and $\int (x{''})^2$ is the canonical smoothing-spline penalty here. (A second-difference $D^{(2)}$ penalty is the analog for B-/P-splines.)

\paragraph{Relative-curvature weights.}
$x$ ranges over several orders of magnitude---a few cases per day on the early limb, thousands at peak.
A uniform penalty ($v_t \equiv 1$) measures curvature in absolute terms, so the peak swamps the low limb in the integral, even when both are equally wiggly in proportional terms.
No single $\lambda$ works: controlling the lower limb oversmooths the peak, while tuning for the peak undersmooths the bottom.

The fix is to penalize \emph{relative} curvature, i.e.\ $\bigl((x)''/x\bigr)^2$.
Dividing by $x$ cancels the scale---$x \to c\, x$ leaves it unchanged---so a proportional wiggle counts the same on the low limb as at the peak.
This corresponds to a weighting $v_t \propto (\tilde x_t)^{-2}$.
While $x$ is unknown, a plug-in estimate $\tilde x$ works.
This makes the integrand $\bigl((x)''/\tilde x\bigr)^2$, matching the relative curvature when $\tilde x \approx x$.

The inverse blows up wherever $\tilde x$ is small, which would over-penalize the low-count tails and prevent the spline from tracking genuine early growth.
We therefore floor $\tilde x$ at a fraction $c$ of its maximum:
\begin{equation}
v_t = \left(\frac{\tilde x_{\max}}{\max(\tilde x_t,\, c\, \tilde x_{\max})}\right)^{\!2},
\qquad \tilde x_{\max} = \max_s \tilde x_s,
\quad c = 0.03.
\end{equation}
The numerator $\tilde x_{\max}$ normalizes the weights so $\lambda$ stays on a fixed, GCV-friendly scale across datasets.
The floor parameter $c$ caps how strongly low-count timesteps are penalized: any $t$ with $\tilde x_t \le c\, \tilde x_{\max}$ inherits the maximum weight $1/c^2$.
We chose $c = 0.03$ by inspection on simulated flu data; fits across the range $c \in [0.01, 0.1]$ are visually indistinguishable, so the estimator is not very sensitive to this choice.

The reference curve $\tilde x$ is deconvolution-free: a centered $7$-day average of the observations, shifted to infection time by the mean delay $\bar L = \mathrm{round}\bigl(\sum_k k\, \pi_k\bigr)$ and held constant past the data edge:
\begin{equation}
\bar y_t = \frac{1}{|W_t|} \sum_{s \in W_t} y_s^*,
\quad W_t = \{t-3, \dots, t+3\} \cap [1, n],
\qquad
\tilde x_t = \bar y_{\min(t + \bar L,\, n)}.
\end{equation}
This avoids any preliminary deconvolution and is valid in real time---the average shrinks at the edge, and the shift never reads beyond $t$.

Splines are not inherently non-negative, so this deconvolution approach risks being inaccurate when counts are low.
Therefore we also implement an alternative that uses a log-link function.
This uses $\log y_t = \mu_t(\theta)$, which is naturally non-negative at the cost of incorrect specification.

\section{Methods configuration}\label[appendix]{apx:methods-config}

\subsection{EpiNow2}\label[appendix]{apx:methods-epinow2}

\begin{figure}
  \centering
  \includegraphics[width=0.8\linewidth]{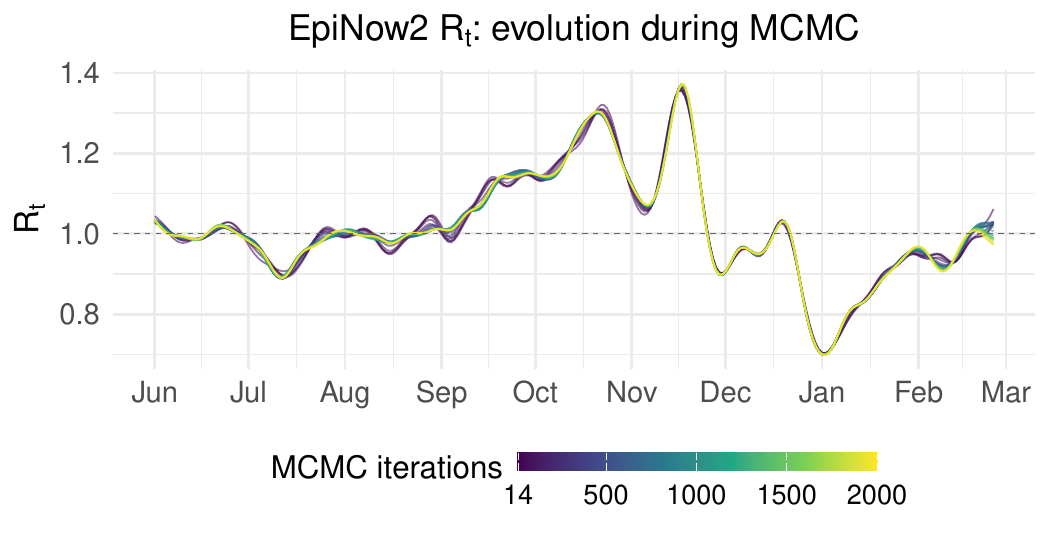}
  \caption[EpiNow2 posterior-mean $R_t$ convergence on the wiggly simulation.]{Posterior-mean $R_t$ from EpiNow2's Gaussian-process model at
  successive MCMC snapshots on the retrospective wiggly-simulation fit (4
  chains; 1500 warmup and 500 post-warmup iterations per chain). Each line is
  one snapshot; color encodes the cumulative number of post-warmup draws
  aggregated across chains, with the corresponding wallclock time in
  parentheses.}
  \label{fig:epinow2-snapshots}
\end{figure}

\begin{table}[ht]
\centering
\caption[EpiNow2 configuration for finalized runs.]{EpiNow2 configuration for the finalized runs. The random-walk (RW)
kernel uses a unit step size on $\log R_t$; the Gaussian-process (GP) kernel is a
Hilbert-space approximation with a Mat\'ern-3/2 lengthscale. `NB'' denotes negative binomial; ``--'' not applicable.}
\label{tab:epinow2-config}
\setlength{\tabcolsep}{5pt}
\begin{tabular}{l cccc cc}
\toprule
& \multicolumn{4}{c}{\textit{rtestim}} & \multicolumn{2}{c}{\textit{Influenza}} \\
\cmidrule(lr){2-5} \cmidrule(lr){6-7}
& \shortstack{Piecewise\\Constant}
& \shortstack{Piecewise\\Exponential}
& \shortstack{Piecewise\\Linear}
& Periodic
& \shortstack{Wiggly\\$R_t$}
& \shortstack{Smooth\\$R_t$} \\
\midrule
Rt prior                & \multicolumn{4}{c}{$\mathrm{LogN}(0, 1.0)$} & \multicolumn{2}{c}{$\mathrm{LogN}(0, 0.5)$} \\
Kernel                  & RW & GP & RW & GP & \multicolumn{2}{c}{GP} \\
\quad Lengthscale       & -- & $\mathrm{LogN}(10,3)$ & -- & $\mathrm{LogN}(10,3)$ & \multicolumn{2}{c}{$\mathcal{N}(21,7)$} \\
\quad Basis prop.       & -- & 0.2 & -- & 0.2 & \multicolumn{2}{c}{0.2} \\
Obs.\ model             & \multicolumn{4}{c}{Poisson} & \multicolumn{2}{c}{NB} \\
Chains                  & \multicolumn{4}{c}{4} & \multicolumn{2}{c}{2} \\
Warmup                  & 2500 & 1500 & 3000 & 1500 & \multicolumn{2}{c}{1000} \\
Samples                 & 2500 & 2000 & 2000 & 2000 & \multicolumn{2}{c}{500} \\
$\delta_{\text{adapt}}$ & 0.97 & 0.95 & 0.97 & 0.95 & \multicolumn{2}{c}{0.95} \\
Max.\ tree              & 13 & 12 & 13 & 12 & \multicolumn{2}{c}{12} \\
\bottomrule
\end{tabular}
\end{table}

We benchmark against EpiNow2 (version 1.8.0), summarizing its configuration for every finalized run in Table~\ref{tab:epinow2-config}. Our aim throughout was to run each fit as cheaply as possible, but this
was in constant tension with EpiNow2's sampler, which failed to converge far more often than not---especially on the rtestim benchmarks. In most experiments $R_t$ follows EpiNow2's default Gaussian-process
prior (Hilbert-space approximation, Mat\'ern-$3/2$ kernel). On the rtestim scenarios with jump discontinuities (piecewise constant and piecewise linear), this prior could not be salvaged at any sampling budget
we tried---fits diverged or returned $\widehat{R}$ in the thousands---so on those datasets we instead modeled $\log R_t$ as a Gaussian random walk. Even where the GP prior was ultimately retained, the
out-of-the-box settings were not enough to reach acceptable diagnostics. 

The observation model matches each dataset's noise---Poisson or negative binomial---and for the real surveillance data we additionally retain EpiNow2's day-of-week reporting effect. MCMC uses 4 chains (2 for
the real data); warmup, \texttt{adapt\_delta}, and \texttt{max\_treedepth} are all raised above EpiNow2's defaults, scenario by scenario, until each fit was free of divergent transitions and well mixed
($\widehat{R} < 1.05$). We fix a single setting per dataset rather than tuning per replicate, accepting some over-provisioning to keep the comparison fair. This makes EpiNow2 by a wide margin the slowest
method evaluated---tens of minutes to over two hours per single fit, against roughly a second for ConvRt---so the inflated sampling budgets compound an already steep runtime cost. The generation-time and
reporting-delay distributions match those used by ConvRt and the other benchmarks.



\cref{fig:epinow2-snapshots} traces EpiNow2's GP posterior mean on the wiggly
retrospective fit as the sampler accumulates draws, computed by polling
cmdstanr's per-chain CSV output every 30 seconds and re-averaging over the
post-warmup draws available at each tick. The oscillatory structure in the final
fit---the short-lived excursions from $R_t = 1$ during the fall and winter---is
present from the earliest snapshot onward and merely tightens as draws
accumulate, confirming it is genuine posterior structure rather than
under-sampled Monte Carlo noise.

The colorbar also exposes a feature of EpiNow2's runtime invisible in a single
end-of-run timing: the first snapshot sits at 1564\,s with only 14 post-warmup
draws, roughly 68\% of the total wallclock. Stan compilation is cached and costs
only seconds; the rest is the 1500 warmup iterations per chain. Because warmup is
the bulk of the iterations and cannot be skipped, EpiNow2's wallclock is bounded
below by warmup and cannot be meaningfully reduced by drawing fewer samples.

\subsection{EpiEstim}\label[appendix]{apx:methods-epiestim}

We call \texttt{EpiEstim::estimate\_R} with \texttt{method = "non\_parametric\_si"},
passing the same generation-interval pmf used by ConvRt and the other
benchmarks (prepended with $\Pr(\mathrm{SI}=0)=0$ to match EpiEstim's indexing
convention). $R_t$ is estimated on a 7-day sliding window
(\texttt{t\_start = 2:(n-6)}, \texttt{t\_end = t\_start + 6}); the gamma prior
on $R_t$ has mean 1 and SD 5, the package defaults. Per the caveat in the main
text, EpiEstim is fit directly to observed reports rather than deconvolved
infections, so its estimates are implicitly shifted by the case-reporting
delay; we do not adjust the time axis to compensate, since doing so would
require a deconvolution step that EpiEstim itself does not perform. 

Credible intervals are taken from EpiEstim's posterior quantiles at the 95\% level. For bands at other levels, we convert each interval into an implied (Gaussian) standard deviation, $\hat\sigma = (q_{0.975} - q_{0.025}) / (2\,z_{0.975})$, where $q_{0.025}$ and $q_{0.975}$ are
the lower and upper posterior quantiles and $z_{0.975} \approx 1.96$. We then form a $100(1-\alpha)\%$ band as $\hat R_t \pm z_{1-\alpha/2}\,\hat\sigma$.

\subsection{EstimateR}\label[appendix]{apx:methods-estimater}

We use \texttt{estimateR} with LOESS pre-smoothing and Richardson--Lucy
deconvolution against the same case-to-report delay distribution $\pi$
($x\!\to\!y$, discrete gamma) used elsewhere. This is followed by an
EpiEstim sliding-window $R_t$ step
(\texttt{estimation\_method = "EpiEstim sliding window"},
\texttt{estimation\_window = 3}, \texttt{mean\_Re\_prior = 1}), using the oracle generation interval distribution $g$. 

For point
estimates and timing comparisons we call
\texttt{estimate\_Re\_from\_noisy\_delayed\_incidence} (a single fit); for
uncertainty intervals we call \texttt{get\_block\_bootstrapped\_estimate} with
50 block-bootstrap replicates, the package's recommended approach for
uncertainty quantification. Because Richardson--Lucy deconvolution truncates
the tail of the incidence series (an unavoidable consequence of the
right-censored convolution), estimateR only estimates $R_t$ up to 5 days
before the vintage date, so in our real-time figures and tables we propagate
its last estimate forward through the remainder of the week.

\subsection{EpiLPS}\label[appendix]{apx:methods-epilps}

We use \texttt{EpiLPS::estimR} with the package defaults: a cubic
B-spline basis with second-order difference penalty, the default
hyperprior on the smoothing parameter, and the LPSMAP backend
(Laplace approximation around the posterior mode), rather than the
slower LPSMALA MCMC alternative. The generation interval $g$ is supplied
as the same pmf used by the other methods. Real-time fits are run on
truncated vintage data without modification.

\subsection{Rtestim}\label[appendix]{apx:methods-rtestim}

We use \texttt{rtestim::cv\_estimate\_rt} with cubic trend filtering
(\texttt{korder = 3}), 3-fold cross-validation, and an explicit
log-spaced $\lambda$ grid spanning $10^{-2}$ to $10^{5}$ with 30
values; we found the package's default auto-grid sometimes terminated
before the 1-SE optimum was located, so we supply the grid explicitly
and raise \texttt{maxiter} to $10^7$ for convergence at the
smaller-$\lambda$ end. $\lambda$ is selected by the 1-SE rule.
Confidence bands come from \texttt{confband()}, which inverts a
quadratic relaxation of the trend-filtering objective; the bands are
pointwise and do not have a frequentist coverage guarantee at the
level suggested by their nominal width. The generation interval is
supplied via \texttt{delay\_distn}, prepended with a 0 so that index
1 corresponds to $\Pr(\mathrm{lag}=0)=0$ (rtestim's convention treats
\texttt{delay\_distn[1]} as the same-day weight, differing from the
Cori convention used by EpiEstim). For series with a long all-zero
leading tail we trim to the first positive count before fitting,
since rtestim's CV objective is undefined when the weighted
past-counts denominator hits 0. Real-time fits are run on truncated
vintage data without modification.

\subsection{CovidEstim}\label[appendix]{apx:methods-covidestim}

We do not refit CovidEstim. The COVID-19 comparisons in
\cref{sec:rt-experiments-real} use $R_t$ trajectories and case
ascertainment rates from the project's public release
\citep{Chitwood2022,covidestim_summer2021}, which were produced by the
authors with their full Bayesian pipeline (random-walk log-$R_t$,
seroprevalence-informed ascertainment, HMC inference). We treat these as
fixed reference outputs; the delay distributions and generation interval
we feed to ConvRt for the same comparison are documented in
\cref{apx:covidestim}.

\section{Rtestim benchmarks}\label[appendix]{apx:rtestim}

Ideally, for both the rtestim and influenza benchmarks, we would resimulate each dataset many times.
We would run all methods on each replicate, and ultimately report average performance. 
However, we did not pursue this because EpiNow2 was slow to fit and frequently failed to converge, making repeated runs across all six datasets impractical.

  \subsection{Data-generating process}
  
   For each of the four ground-truth $R_t$ trajectories from \citet{rtestim}, we
  simulate $n = 300$ days of infections $x_t$ and case reports $y_t$ from a
  discrete-time stochastic SEIR model with explicit, discrete-gamma latent period,
  infectious period, and reporting delay. We set the latent (exposure-to-infectious)
  delay to mean $5.9$ and SD $3.3$ and the infectious period to mean $5.0$ and SD
  $2.2$, chosen so that the implied generation interval matches the mean of $8.4$
  days used by \citet{rtestim}; the implied SD is $3.95$. The exposure-to-report delay has mean $7$ and SD $3$. Each
  scenario is seeded with $10$ exposures per day over a $30$-day warm-up so that
  the infectious pool is near equilibrium at $t = 1$.


\begin{figure}
\centering
\includegraphics[width=0.9\linewidth]{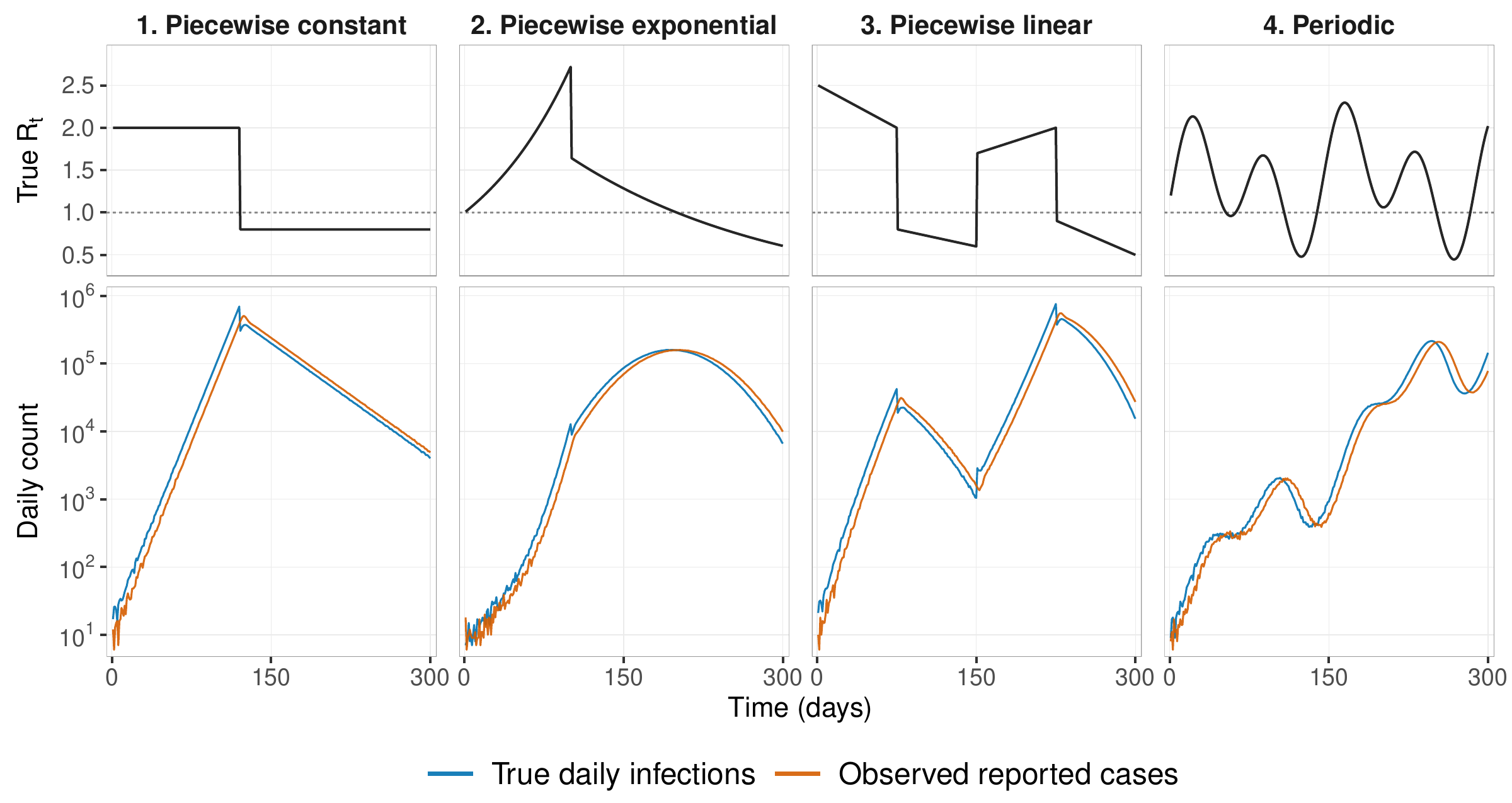}
\caption{rtestim benchmark datasets: $R_t$ curves and the cases they generate.}
\label{fig:rtestim_gt_and_cases}
\end{figure}

The four benchmark datasets from \citet{rtestim}, shown atop \cref{fig:rtestim_gt_and_cases}, are:
\begin{enumerate}
  \item \textbf{Piecewise constant.} A single sharp policy change,
  \[
    R_t =
    \begin{cases}
      2.0, & 1 \leq t \leq 120, \\
      0.8, & 120 < t \leq 300,
    \end{cases}
  \]
  representing four months of unchecked growth before an intervention drops
  transmission well below the critical threshold.

  \item \textbf{Piecewise exponential.} A fast exponential climb followed by a
  slower exponential decay, with a downward jump at the transition:
  \[
    R_t =
    \begin{cases}
      \exp(0.01\, t), & 1 \leq t \leq 100, \\
      \exp\!\bigl(0.5 - 0.005\,(t-100)\bigr), & 100 < t \leq 300.
    \end{cases}
  \]
  An exponential takeoff is partially curbed by a late-stage response that
  bends but does not break the trajectory.

  \item \textbf{Piecewise linear.} Four linear segments separated by jump
  discontinuities,
  \[
    R_t =
    \begin{cases}
      2.5 - \tfrac{0.5}{74}(t - 1),     & 1   \leq t < 76,  \\[2pt]
      0.8 - \tfrac{0.2}{74}(t - 76),    & 76  \leq t < 151, \\[2pt]
      1.7 + \tfrac{0.3}{74}(t - 151),   & 151 \leq t < 226, \\[2pt]
      0.9 - \tfrac{0.4}{74}(t - 226),   & 226 \leq t \leq 300,
    \end{cases}
  \]
  alternating between slowly drifting suppression and growth regimes, with
  abrupt regime switches at each boundary (e.g., the lifting and reimposition
  of restrictions).

  \item \textbf{Periodic.} A sum of three sinusoids on rescaled time
  $\tau = 10(t-1)/(n-1) \in [0,10]$:
  \[
    R_t = 0.2\!\left[\bigl(\sin(\pi \tau/12) + 1\bigr)
                       + \bigl(2\sin(5\pi \tau/12) + 2\bigr)
                       + \bigl(3\sin(5\pi \tau/6) + 3\bigr)\right].
  \]
  The curve oscillates smoothly between $R_t \approx 0.4$ and $\approx 2.4$
  via a slow seasonal envelope modulated by faster multi-week ripples. This is
  the only scenario without discontinuities and serves as a test of smooth,
  non-monotone tracking.
\end{enumerate}

These $R_t$ curves all generate a maximum of just below 1 million daily infections.
For context, COVID-19 crossed this threshold during the Omicron wave \citep{clarke2022seroprevalence}. 
Infections are declining by the end at all but the Periodic benchmark.

\subsection{Additional analysis}
  \begin{table}[ht]
  \centering
  \caption[ConvRt jump-discontinuity performance on the rtestim benchmarks.]{Evaluating ConvRt jump discontinuity on the rtestim benchmarks. Mean absolute error (MAE) and $\ell_1$ calibration error (CE) reported in units of $10^{-2}$.}
  \label{tab:rtestim-jumps}
  \begin{tabular}{l rr rr rr}
  \toprule
  & \multicolumn{2}{c}{S1: const} & \multicolumn{2}{c}{S2: exp}
  & \multicolumn{2}{c}{S3: linear} \\
  \cmidrule(lr){2-3} \cmidrule(lr){4-5} \cmidrule(lr){6-7}
  Method & MAE & CE & MAE & CE & MAE & CE \\
  \midrule
  ConvRt (with jump) & 0.20 &  3.27 & 0.64 & 10.74 & 0.30 &  9.29 \\
  ConvRt (no jump)   & 1.58 &  9.75 & 1.77 &  3.35 & 3.79 & 11.43 \\
  \end{tabular}
  \end{table}

\begin{figure}
    \centering
    \includegraphics[width=\linewidth]{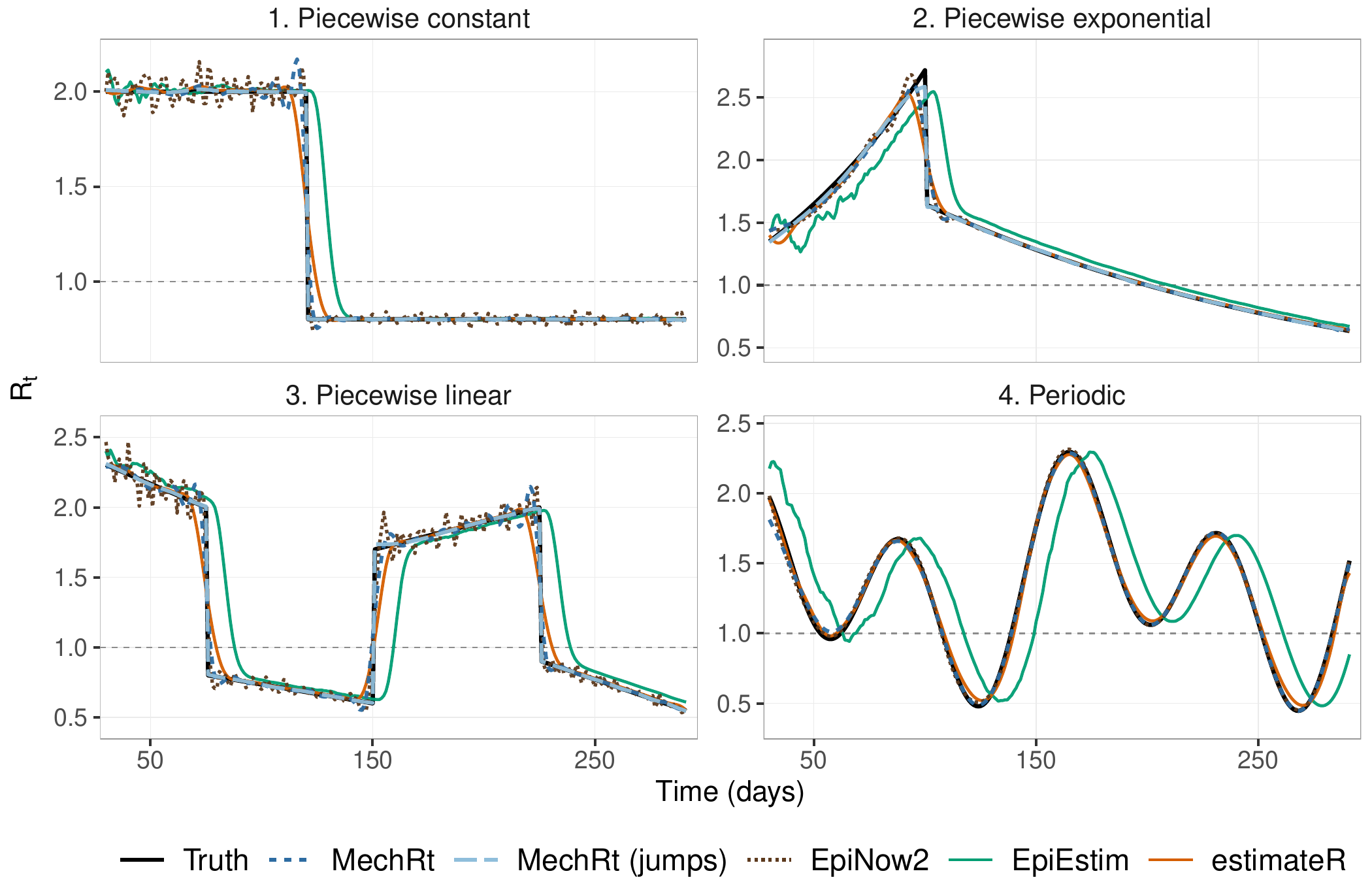}
    \caption[Retrospective $R_t$ estimates on the rtestim benchmarks.]{Retrospective $R_t$ estimates on the rtestim benchmarks, comparing ConvRt with and without the jump augmentation against EpiNow2, EpiEstim, and estimateR. The jump augmentation dramatically sharpens ConvRt's tracking at discontinuities while leaving smooth segments largely unchanged.}
    \label{fig:rtestim-all-results}
\end{figure}

\Cref{tab:rtestim-jumps} evaluates the jump augmentation on the three piecewise scenarios; \cref{fig:rtestim-all-results} provides a qualitative picture of all four benchmarks. The jump augmentation reduces MAE by factors of 8, 3, and 13 on Scenarios 1--3, respectively, with CE falling by 85\%, 76\%, and 72\%. The standard spline rounds off each discontinuity, producing a slow transition where the truth is abrupt; the step-function augmentation eliminates this artifact, yielding near-perfect recovery with tight confidence bands. The improvement on Scenario 2 is more modest in MAE because the jump occurs midway through an exponential decay, where the smooth spline partially compensates by bending aggressively. On the Periodic benchmark, the two ConvRt variants are essentially identical, as expected.

EpiNow2 tracks the broad shape of each curve but fails to recover jumps sharply. EpiEstim lags badly at every transition, substantially underestimating jump magnitudes; on the Periodic benchmark it also exhibits pronounced amplitude attenuation. estimateR oversmooths at the discontinuities but otherwise tracks reasonably well.

\section{Influenza benchmarks}\label[appendix]{apx:flu-sim} 

To prevent any method from having an unfair advantage, we chose to use the \citet{wallinga_teunis} estimator for case $R_t$.
Light smoothing was necessary on the earliest months, when low hospitalization counts led to unsteady estimates.
We analyzed the four months from October through January, when the bulk of hospitalizations occurred.

\subsection{Smooth ground truth, overdispersed observations}\label[appendix]{apx:flu-sim-smooth-od}

The ``smooth'' influenza simulation combines two modifications of its ``wiggly'' counterpart. The first replaces the wiggly Wallinga-Teunis ground truth with a heavily smoothed version, isolating method performance on a simple unimodal $R_t$ curve. The second swaps Poisson observations for a negative-binomial model calibrated to the day-to-day scatter in real flu hospitalizations.

We apply LOESS with span 0.40 to the Wallinga-Teunis $R_t$ from the main text. The resulting curve rises smoothly through October, peaks near 1.2 in late November, and decays back below 1 through January (\cref{fig:flu-sim-both}). The simulated hospitalization wave is correspondingly unimodal, peaking around 1500 daily admissions near the new year.

The observation model is negative-binomial, with $\mathrm{Var}(y_t) = \varphi\, \mu_t$ and $\varphi = 2.5$. We chose $\varphi$ to be calibrated with real flu hospitalizations. 
We fit a Poisson GAM (mgcv) to each season window, with a smooth time trend and a day-of-week factor. Using a knot count flexible enough to track the epidemic peak, the quasi-Poisson dispersion $\chi^2/(n - \text{edf})$ gives $\hat\varphi \approx 2.4$ for 2022/23 and $2.7$ for 2023/24. 
A Poisson observation model leaves 15--20\% of days outside its 95\% band around the fitted trend; $\varphi = 2.5$ leaves about 5\%, matching nominal.

\begin{figure}
    \centering
    \includegraphics[width=\linewidth]{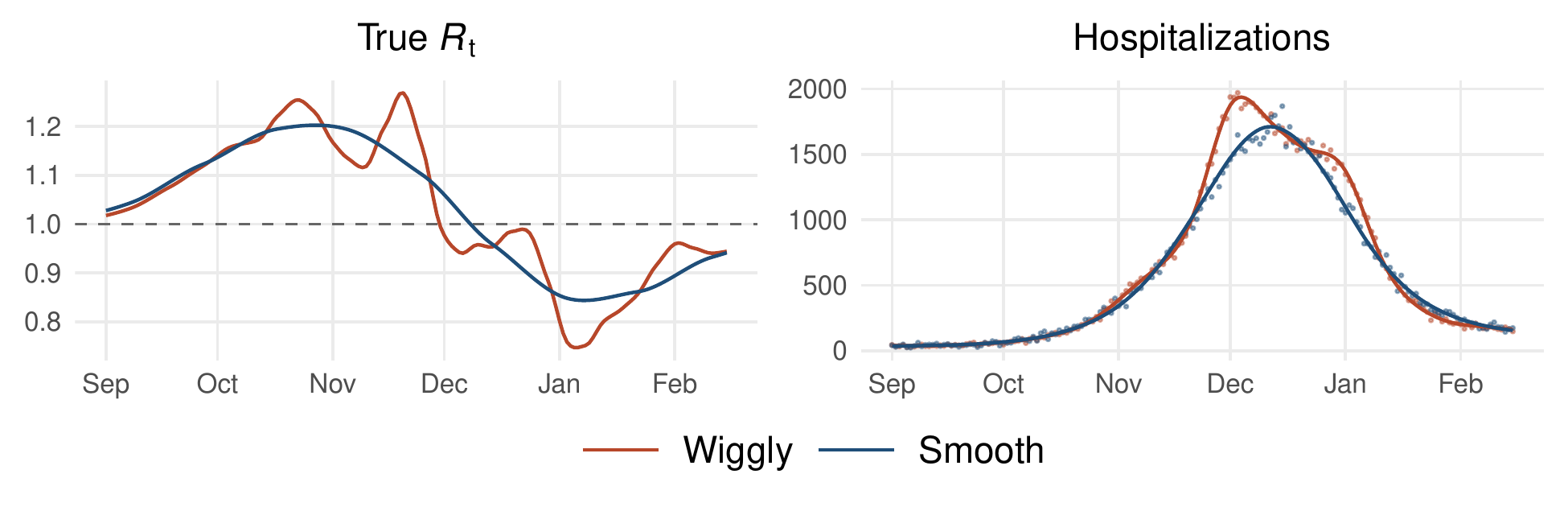}
    \caption{Ground-truth $R_t$ and daily hospitalizations for the flu simulations.}
    \label{fig:flu-sim-both}
\end{figure}

\subsection{Additional figures}

\begin{figure}
    \centering
    \includegraphics[width=\linewidth]{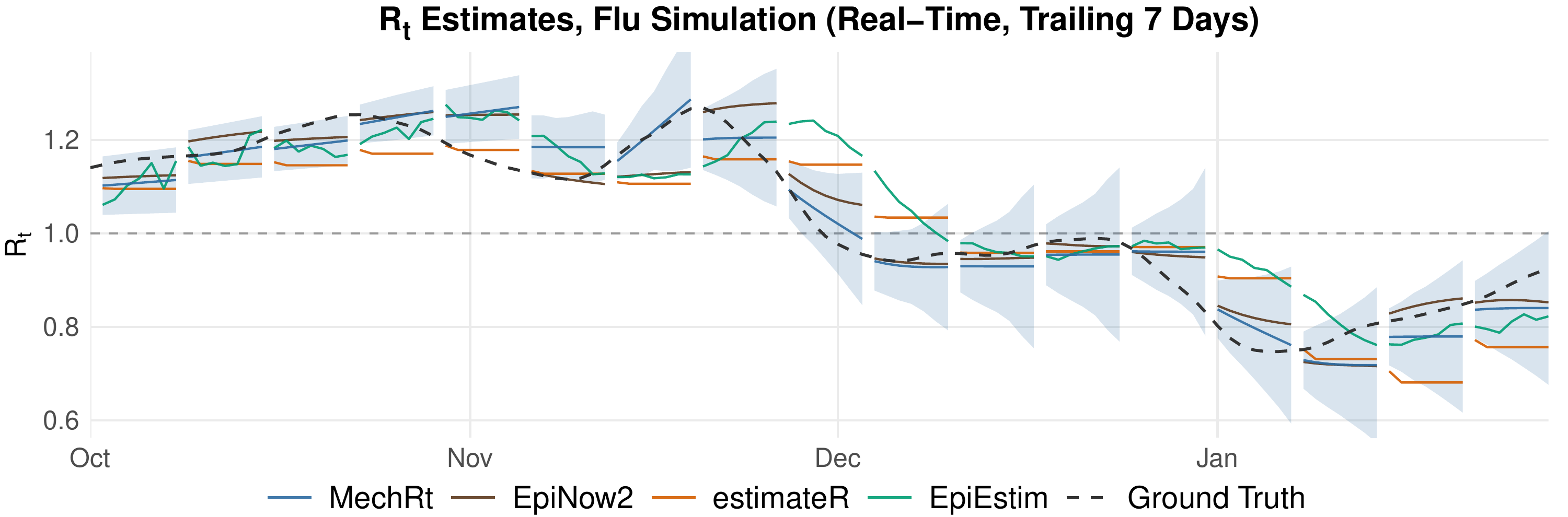}
    \caption[Real-time $R_t$ estimates on the wiggly influenza benchmark.]{Real-time $R_t$ estimates on wiggly influenza benchmark, with ConvRt's 95\% pointwise confidence bands.}
    \label{fig:realtime-rt-wiggly}
\end{figure}

We analyze two figures that compare $R_t$ estimates on the influenza benchmarks.
Matching \cref{fig:retro-comparison}, we omit EpiLPS and rtestim for clarity because they have similar bias as EpiEstim.

For the real-time setting, \cref{fig:realtime-rt-wiggly} shows last-7-day predictions on the ``wiggly'' dataset. 
ConvRt and EpiNow2 remain the most accurate methods.
As before, EpiEstim predicts roughly the true shape, but at a considerable lag. 
EstimateR is relatively unreliable in real-time, since it propagates $R_t$ estimates from a noisy deconvolution.

ConvRt gives the best predictions as $R_t$ peaks and falls in November.
It is the sole method that identifies the sharp rise in real-time. 
Similarly, it follows the descent most faithfully, dropping more quickly than EpiNow2.
Its confidence intervals are also very accurate, covering the truth across most of the season.
Intuitively, they fan out toward the end of each vintage, where the nowcast
extrapolates recent data.
They also widen over the course of the season, as $R_t$ changes direction more frequently. 
Indeed, \cref{tab:combined-realtime} reports that ConvRt has lower MAE than EpiNow2 on this dataset, and considerably lower CE.
The remaining four methods sit well behind.

\begin{figure}
    \centering
    \includegraphics[width=\linewidth]{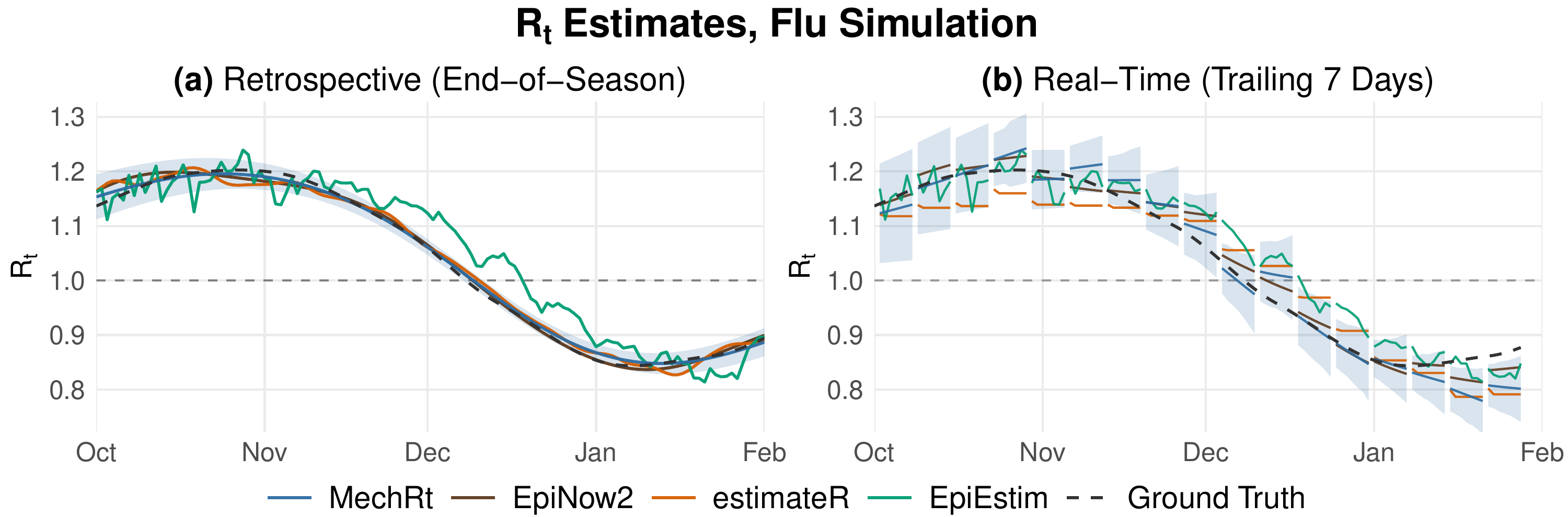}
    \caption[$R_t$ estimates on the smooth flu simulation.]{$R_t$ estimates on the smooth flu simulation. ConvRt's 95\% pointwise interval shaded.}
    \label{fig:flu4-results}
\end{figure}

\cref{fig:flu4-results} repeats both settings on the ``smooth'' benchmark, whose
true \(R_t\) rises gently to a November peak and bottoms out in January.
Here the problem is easier, and every method except EpiEstim predicts well.
In the retrospective panel (a), ConvRt, EpiNow2, and estimateR are nearly
indistinguishable from the truth. EpiEstim alone stays noisy and lags the
decline.

The real-time panel (b) tells the same story, with the three methods tracking
the truth closely and ConvRt usually getting the directional trend right.
ConvRt edges out EpiNow2 in retrospective MAE and trails it slightly in
real-time, but the gap is small, with both methods performing well (\cref{tab:combined-realtime}).
ConvRt's confidence bands remain well calibrated, with a real-time CE just below 5.
The intervals often cover even when the point estimate misreads direction. 
ConvRt incorrectly predicts a
rise in early November and a fall in early January, and the truth stays inside
the band both times.

\subsection{Results by smoothness parameter \texorpdfstring{$\lambda$}{lambda}}\label[appendix]{apx:flu-sim-lambda}

\Cref{fig:results-by-lambda}(a) shows retrospective ConvRt fits on the flu simulation across a grid of curvature penalties $\lambda$, with the min-rule fit highlighted. The min rule slightly oversmooths the peak, leaving $\hat R_t$ about $0.02$ below the truth in early January. A modestly smaller $\lambda$ closes most of this gap, lifting the peak with only minor added fluctuation elsewhere. Pushing $\lambda$ smaller still produces visibly noisy fits that chase short-run fluctuations in the cases, while larger $\lambda$ flattens the curve toward a single hump and erases the brief plateau in November. The min rule trades a small peak underestimate for stability across the rest of the season.

The spread across $\lambda$ is also far from uniform across the season. Near the start and end, where case counts are low, the fits fan out widely and the smaller $\lambda$'s oscillate. Through the high-count middle, the curves collapse onto each other regardless of $\lambda$. This reflects the likelihood: high counts pin $R_t$ down tightly, so the penalty has little room to move the fit, while at low counts the data is weak and $\lambda$ does most of the work.

Panel (b) plots the 5-fold CV Poisson deviance against $\log_{10}\lambda$. It is nearly flat for $\lambda$ below the minimizer near $\log_{10}\lambda \approx 4.5$ and rises sharply above it, so the min rule sits at the foot of a steep wall. Smaller $\lambda$'s incur almost no CV cost, consistent with the muted differences from the min-rule fit on the left.
The flat loss landscape supports the plausibility of $\lambda$ values below the min-rule, and their larger peak $R_t$.

\begin{figure}
    \centering
    \includegraphics[width=\linewidth]{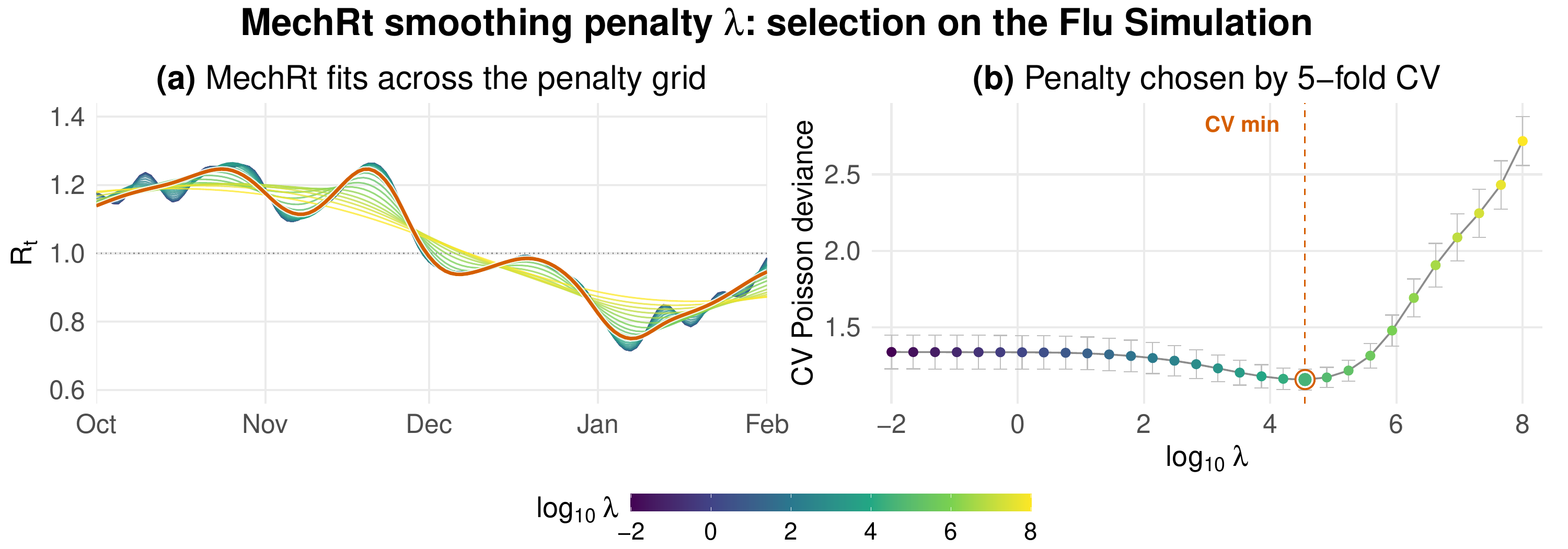}
    \caption[ConvRt predictions by smoothness parameter $\lambda$.]{ConvRt predictions by smoothness parameter $\lambda$. Most $\lambda$'s with lower cross-validation error produce reasonable $R_t$ curves, with minor qualitative differences.}
    \label{fig:results-by-lambda}
\end{figure}



\section{Influenza experiments}\label[appendix]{apx:real-flu}

To run ConvRt, we assumed a constant infection-hospitalization rate.
While a simplification, this assumption is reasonable within a season.
Moreover, estimating a time-varying IHR for a given flu season would likely require non-public data.

We also ran all experiments using finalized hospitalization counts.
These have almost certainly been backfilled, or adjusted as new counts are reported for a given date.
A properly real-time approach would have used appropriately versioned data, but we have not pursued this.

\begin{figure}
    \centering
    \includegraphics[width=\linewidth]{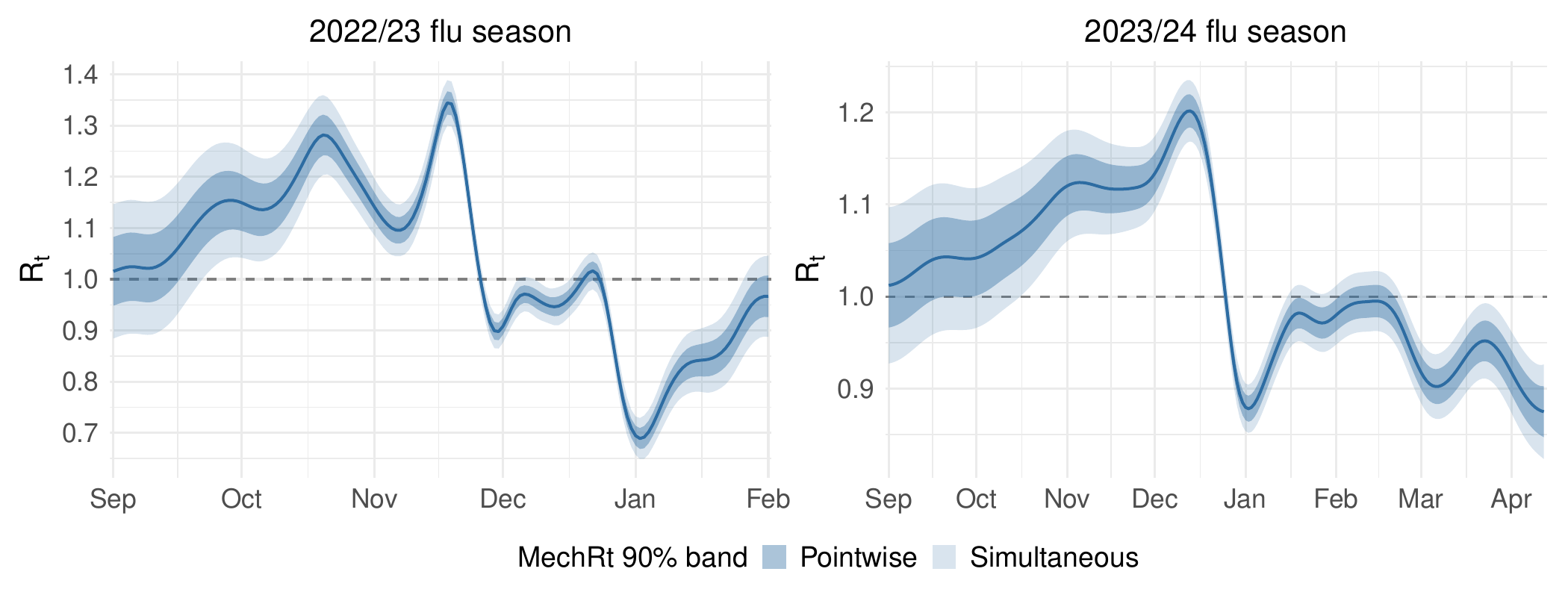}
    \caption[Retrospective ConvRt $R_t$ with pointwise and simultaneous confidence bands.]{Retrospective ConvRt $R_t$ for the 2022/23 and 2023/24 flu seasons, with 90\% pointwise (dark) and simultaneous (light) confidence bands. Pointwise intervals cover each timestep marginally at the nominal level; simultaneous bands cover the entire curve jointly.}
    \label{fig:retro-flu-simband}
\end{figure}
\cref{fig:retro-flu-simband} shows ConvRt's pointwise and simultaneous confidence bands for the two retrospective flu fits. The pointwise intervals are the ones plotted in \cref{fig:retro-flu-both} of the main text; the simultaneous bands are computed from the simulation-based critical value $c_\alpha$ described in \cref{apx:rt-uncertainty}. As expected, the simultaneous bands are uniformly wider --- by a factor of roughly $c_\alpha / z_{1-\alpha/2}$ --- since they must guard against \emph{any} of $\sim$150 timesteps escaping the band rather than a single one. The two bands answer different questions: a pointwise interval supports statements like ``$R_t$ exceeded 1 on Dec 15,'' while a simultaneous band supports statements like ``$R_t$ exceeded 1 throughout December.'' Which is appropriate depends on the inferential target.

\begin{figure}
    \centering
    \includegraphics[width=\linewidth]{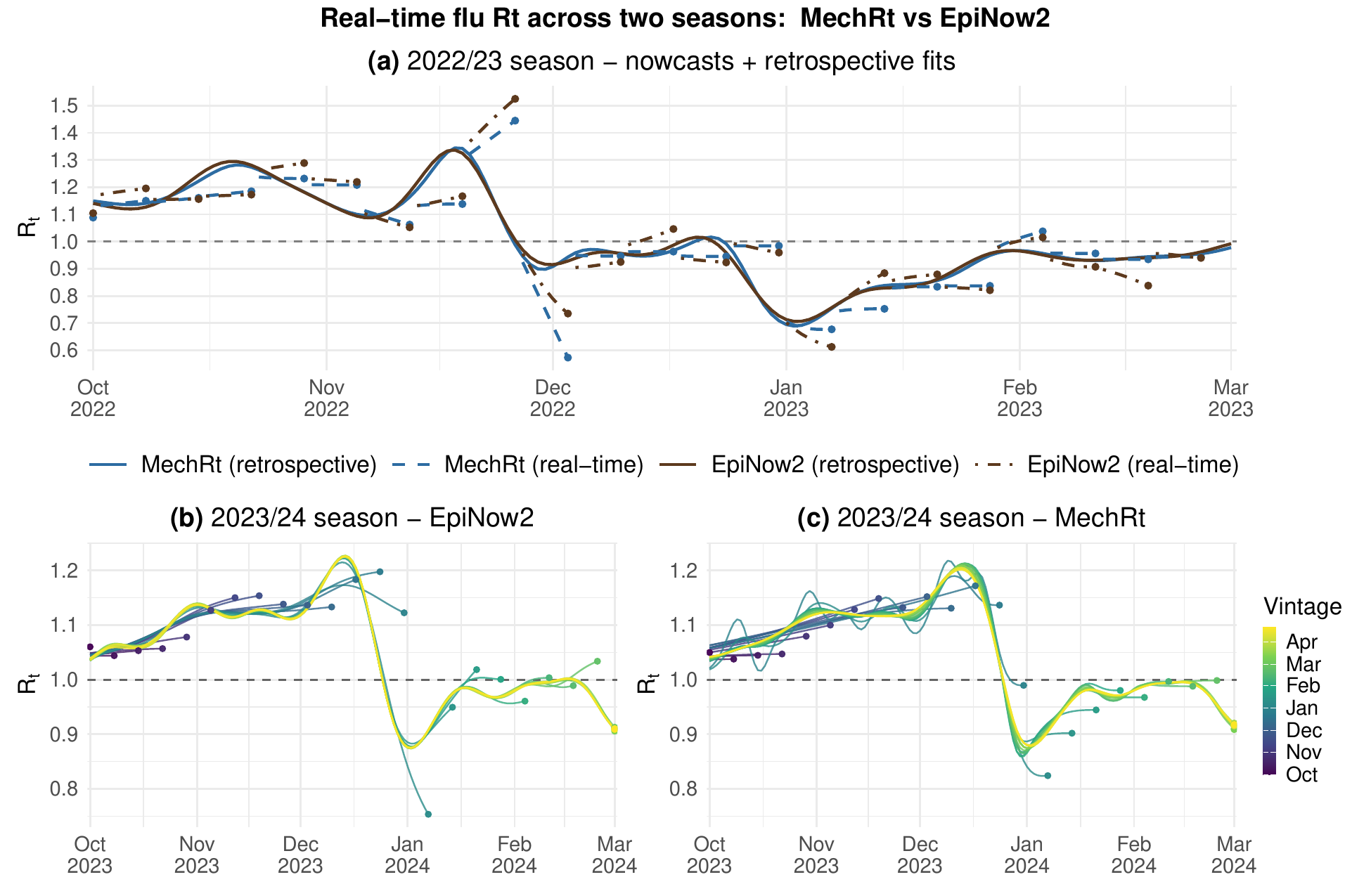}
    \caption[Real-time flu $R_t$, ConvRt vs.\ EpiNow2.]{Real-time flu $R_t$, ConvRt vs EpiNow2.
  (a)~2022/23: last-7-day nowcasts, contextualized by end-of-season fits.
  (b,c)~2023/24: weekly real-time vintages for EpiNow2 and ConvRt.}
    \label{fig:real-time-flu}
\end{figure}

\cref{fig:real-time-flu} shows ConvRt and EpiNow2's real-time predictions across both seasons, tuning $\gamma$ with the min rule.
As in the retrospective case, they are usually close to one another.
Both are quite strong when $R_t$ moves consistently.
In certain instances, ConvRt has superior performance.
For example, it correctly identifies that $R_t$ plunges in late December 2023, while EpiNow2 drops modestly.

ConvRt's nowcasts are roughly as stable as EpiNow2's.
They almost never change sharply, except when the data exhibits a longstanding trend.
This causes errors when $R_t$ changes directionality, which takes over a week to appear in the data.
For example, both methods overshoot the peak and trough in late 2022.
Those swings can be mitigated by tuning ConvRt with the 1se rule.
Alternatively, one can enforce stability by fitting ConvRt with a constant tail constraint.
Doing so produces strong predictions, visualized by \cref{fig:constant-tail} in \cref{apx:real-flu}.
However, the tapered linear tail can better estimate trends, such as rising $R_t$ in October 2023.

\begin{figure}
    \centering
    \includegraphics[width=0.9\linewidth]{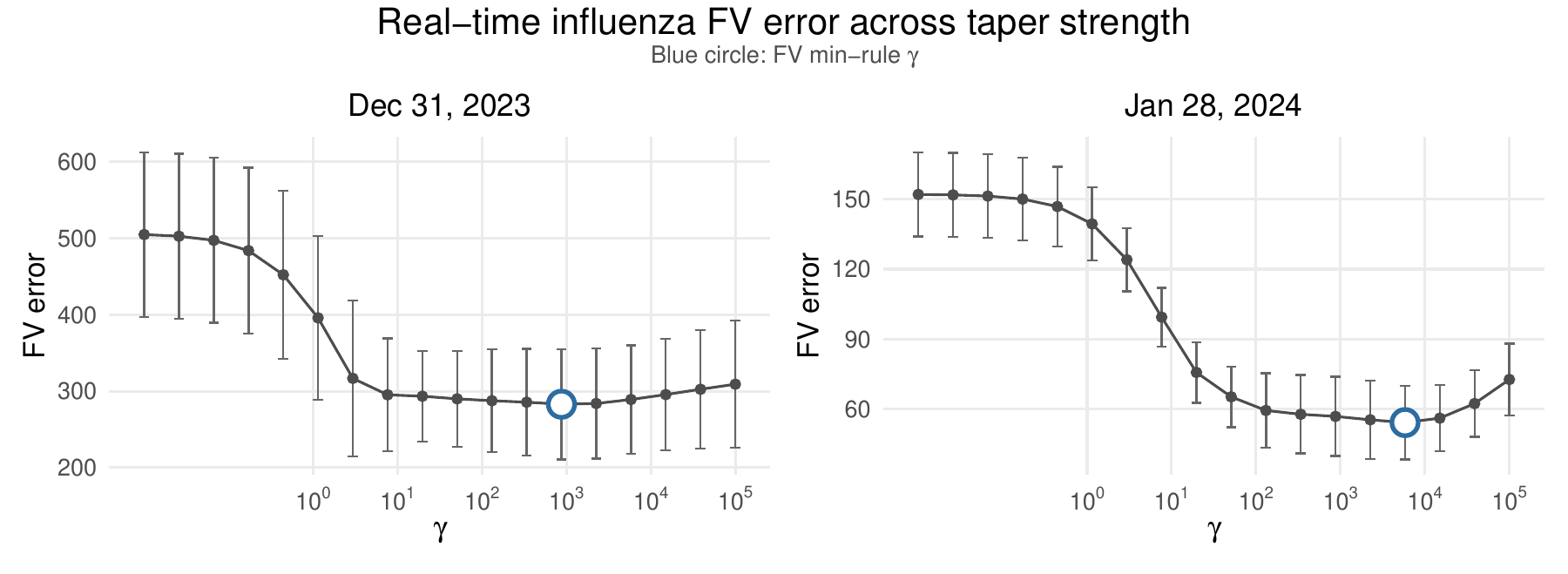}
    \caption{Forward-validation error curves (Poisson deviance) for taper hyperparameter $\gamma$.}
    \label{fig:fv-errors-flu}
\end{figure}

\cref{fig:fv-errors-flu} shows the forward-validation error curves that accompany the $R_t$ curves in \cref{fig:rt-by-gamma}. 
Notably, the error landscape is very flat. 
Barring dramatic undersmoothing (low $\gamma$),
different hyperparameter values have similar FV error.
In our experiments, we use a short 7-day tuning window to emphasize adaptivity to recent trends, 
which may contribute to the large standard errors. 
However, we find this trend persists when using a 14-day window (not shown).

This flat landscape has important consequences.
By these error metrics, hyperparameters producing very different qualitative results have comparable goodness-of-fit.
Moreover, systematic approaches like the min-rule may not be able to discern the most reasonable option.
The 1se-rule will almost always impose near-constant tail smoothing.
As a result of these findings, we find it is most prudent to use the min-rule as a best guess,
while considering predictions at all reasonable options.



\begin{figure}
    \centering
    \includegraphics[width=\linewidth]{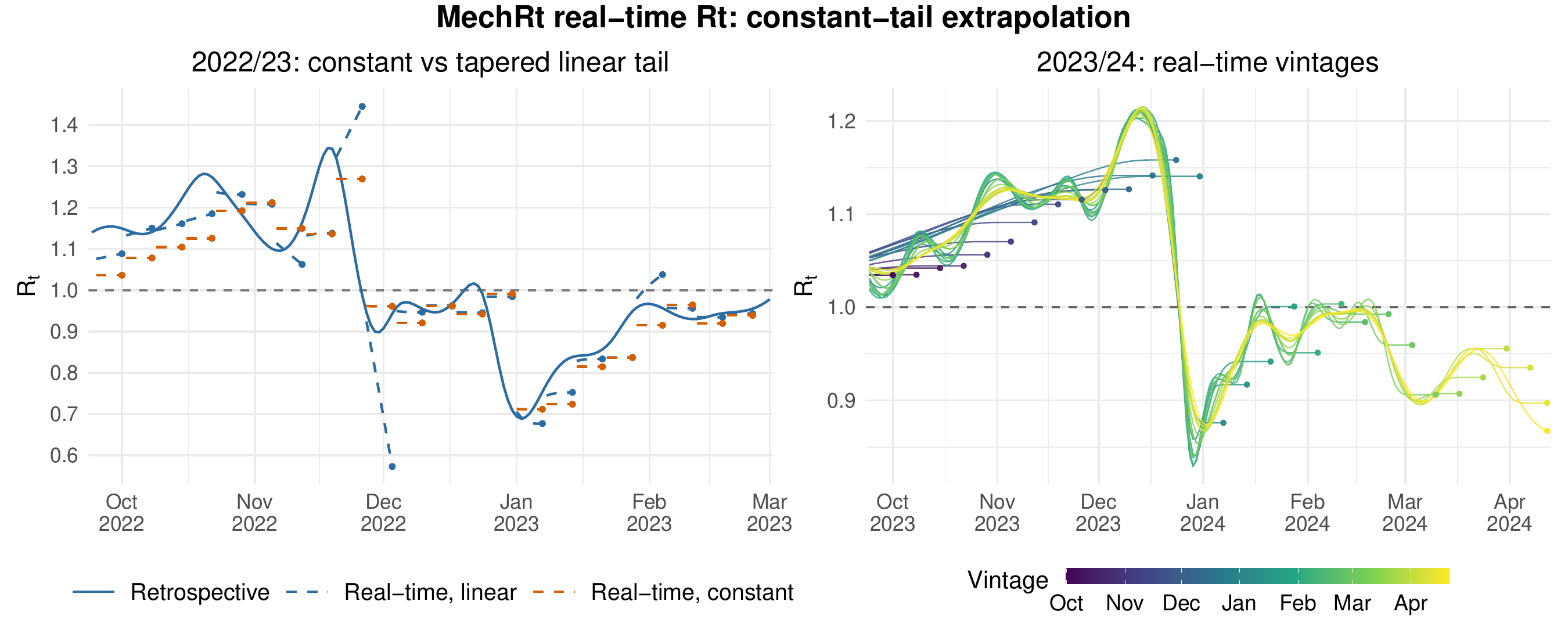}
    \caption[Real-time ConvRt nowcasts with a constant-tail constraint.]{Real-time ConvRt nowcasts with a constant-tail constraint (B-spline basis, $D^{(1)} S_\text{tail}\theta=0$; see \cref{apx:optim-constraints}). Left: 2022/23, comparing the constant-tail variant (orange) to the tapered-linear default (blue dashed) and the retrospective fit (solid blue). Right: full 2023/24 vintages under the constant-tail constraint, colored by vintage month, with dots marking each weekly nowcast.}
    \label{fig:constant-tail}
\end{figure}

The left panel of \cref{fig:constant-tail} highlights both the similarities and the qualitative differences between the two real-time variants. Across most of the season, the constant-tail and tapered-linear nowcasts agree closely. While they occasionally diverge, neither exclusively dominates the other. For example, as $R_t$ climbs in October 2022, the tapered-linear variant correctly extrapolates upward every week.
In contrast, the constant-tail variant stays flat, underestimating by a larger margin. 
But after the peak in mid-November, the true $R_t$ has begun a sharp decline that the trailing-window data does not yet reflect. The tapered-linear estimator extrapolates the still-rising trend and overshoots. The constant-tail estimator is also too high but, by refusing to extrapolate, remains closer. 

The right panel shows that the constant-tail constraint produces coherent results across vintages on the longer 2023/24 season. 
Successive nowcasts track the final retrospective trajectory smoothly.
Comparing to \cref{fig:real-time-flu}, the constant-tail fits are somewhat more stable on this season than their tapered linear counterparts, at the cost of reduced adaptivity. 

The two approaches are closely related. As the tapered penalty weight $\gamma$ grows large, the first-order differences in $R_t$ near the boundary are driven toward zero and the tapered-linear estimator approaches a constant tail. The two are not identical: the tapered penalty's weights extend a few days \emph{before} the last knot rather than acting only past it, so a high-$\gamma$ tapered-linear fit is roughly flat over a slightly wider window than a hard constant-tail constraint imposes. We default to the tapered-linear variant because it spans a richer family of tail behaviors --- approaching flat as $\gamma \to \infty$, reverting to the natural-spline linear tail as $\gamma \to 0$, and interpolating between them --- whereas a constant-tail constraint commits to a single shape. The qualitative differences above suggest this added expressivity is a real, if modest, advantage.

\section{COVID-19 experiments}\label[appendix]{apx:covidestim}

\begin{figure}[h]
    \centering
    \includegraphics[width=\linewidth]{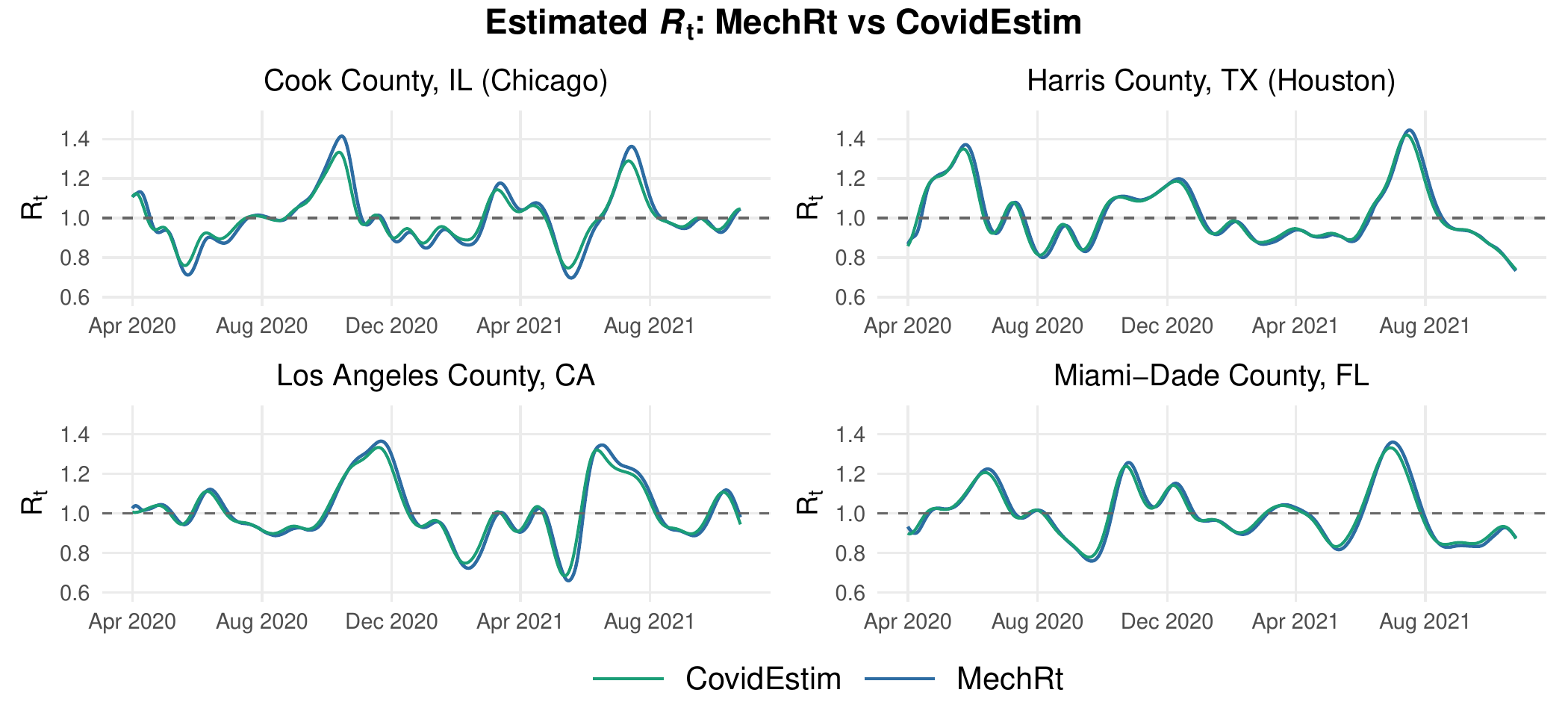}
    \caption{COVID-19 $R_t$ estimates from ConvRt and CovidEstim on four major \US\ counties.}
    \label{fig:covidestim-counties}
\end{figure}

In addition to seasonal influenza, we evaluated our $R_t$ methods on COVID-19.
Estimating $R_t$ for COVID-19 is more challenging, as it is sensitive to changing ascertainment rates over time.
The CovidEstim model addressed this by incorporating seroprevalence data, publishing case ascertainment rates along with $R_t$ \citep{Chitwood2022}.
Both quantities are modeled as splines, with prior distributions placed on their parameters.
This Bayesian method is notoriously slow to run, often taking 10 hours to fit a single state \citep{covidestim_summer2021}.
We plugged in their case ascertainment rates to conduct a retrospective analysis with our method.
The COVID-19 infections these ascertainment rates imply are shown in \cref{fig:covid-cases}, along with positive case reports.

\begin{figure}
    \centering
    \includegraphics[width=\linewidth]{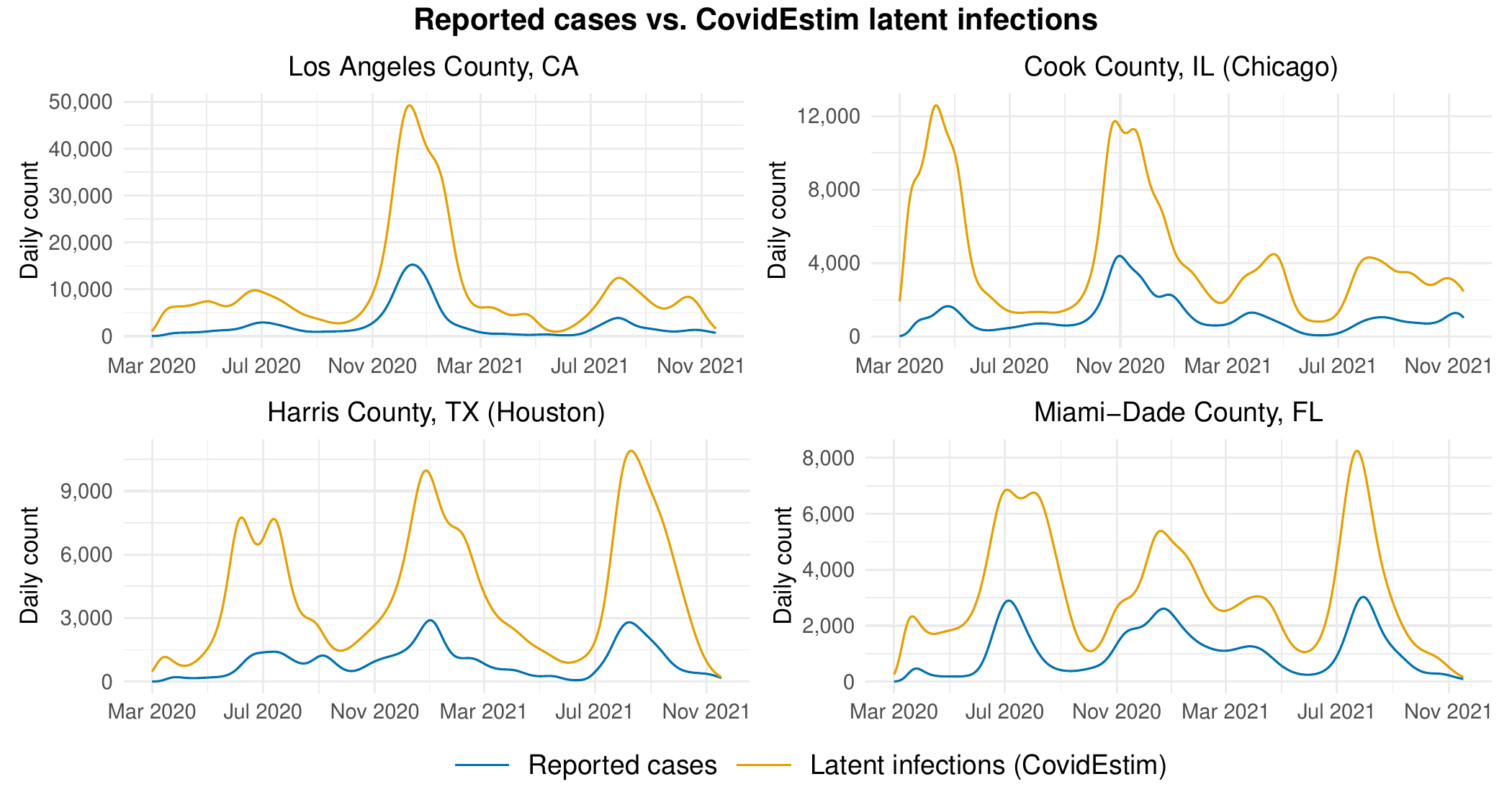}
    \caption[Cases and infection counts for pre-Omicron COVID-19.]{Cases and seroprevalence-based infection counts for pre-Omicron COVID-19 in four \US\ counties.}
    \label{fig:covid-cases}
\end{figure}


\cref{fig:covidestim-counties} compares ConvRt's $R_t$ estimates on a number of \US\ counties.
To avoid challenges with data access and model fitting, we used cached results rather than generate our own.
These results only had uncertainty bands at the state level, hence our point estimates.
Overall, predictions are almost identical, always within 0.1 of one another.
Large spikes in $R_t$ clearly correspond to surges in winter 2020, as well as the onset of the Delta variant in summer 2021.
Interestingly, surges happen at slightly different times and magnitudes, mirrored by the infection counts in \cref{fig:covid-cases}.
For example, Houston and Miami had substantial COVID-19 waves in summer 2020, driven by rises in $R_t$ well above 1.
Across the political divide, Chicago and Los Angeles generally did not, with LA's $R_t$ peaking at 1.1.

The values of $R_t$ in \cref{fig:covidestim-counties} are entirely plausible. 
A CDC analysis using EpiEstim found that most state-level peaks sit between 1.2 and 1.5, matching ours of 1.3-1.4 \citep{lopez2023covid}.
\citet{figgins2021sarscov2}, another Bayesian approach, suggests somewhat higher peaks, with a maximum of 1.66.
Gaps in $R_t$ can often be accounted for by the delay distributions,
with longer delays producing higher estimates.
It is thus easier to explore a range of possibilities with our method,
which takes seconds to run rather than hours.

\begin{figure}
    \centering
    \includegraphics[width=0.8\linewidth]{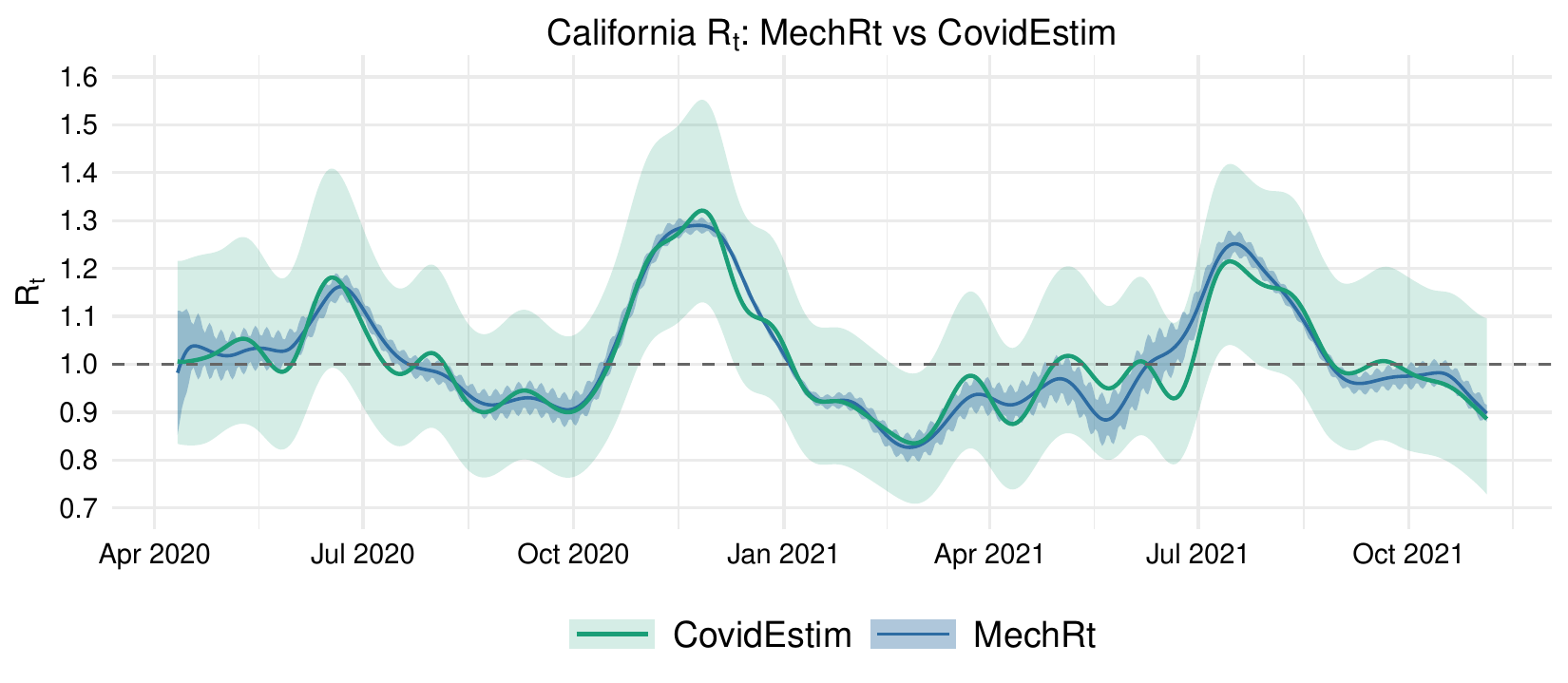}
    \caption{California $R_t$ curve, CovidEstim and ConvRt.}
    \label{fig:california-covidestim}
\end{figure}

We aggregated data to also compare $R_t$ in the state of California. 
\cref{fig:california-covidestim} shows the results, with 80\% pointwise confidence bands. 
Once again, the results are extremely similar throughout the nearly two years tracked.
At the peaks (\textasciitilde1.3 in winter 2020, \textasciitilde1.2 in Delta), the two estimators differ at most 0.05.
Moreover, CovidEstim's credible intervals are far wider than our confidence intervals.
The intervals shown are pointwise, meaning they are valid at any individual timestep but
are not expected to cover jointly. 

To produce Figures \ref{fig:covidestim-counties} and \ref{fig:california-covidestim}, we estimated generation intervals and infection-to-report distributions from the literature.
Parameterizing both as discrete gammas, we set the mean generation time to 4.5 days and an SD of 1.9 \citep{ganyani2020estimating, hart2022generation}.
For the infection-to-report delay, we set its mean to 10.2 days and its SD to 3.6 days \citep{li2020early, he2020temporal,arino2020simple, abbott2020estimating}.

\begin{figure}
    \centering
    \includegraphics[width=0.9\linewidth]{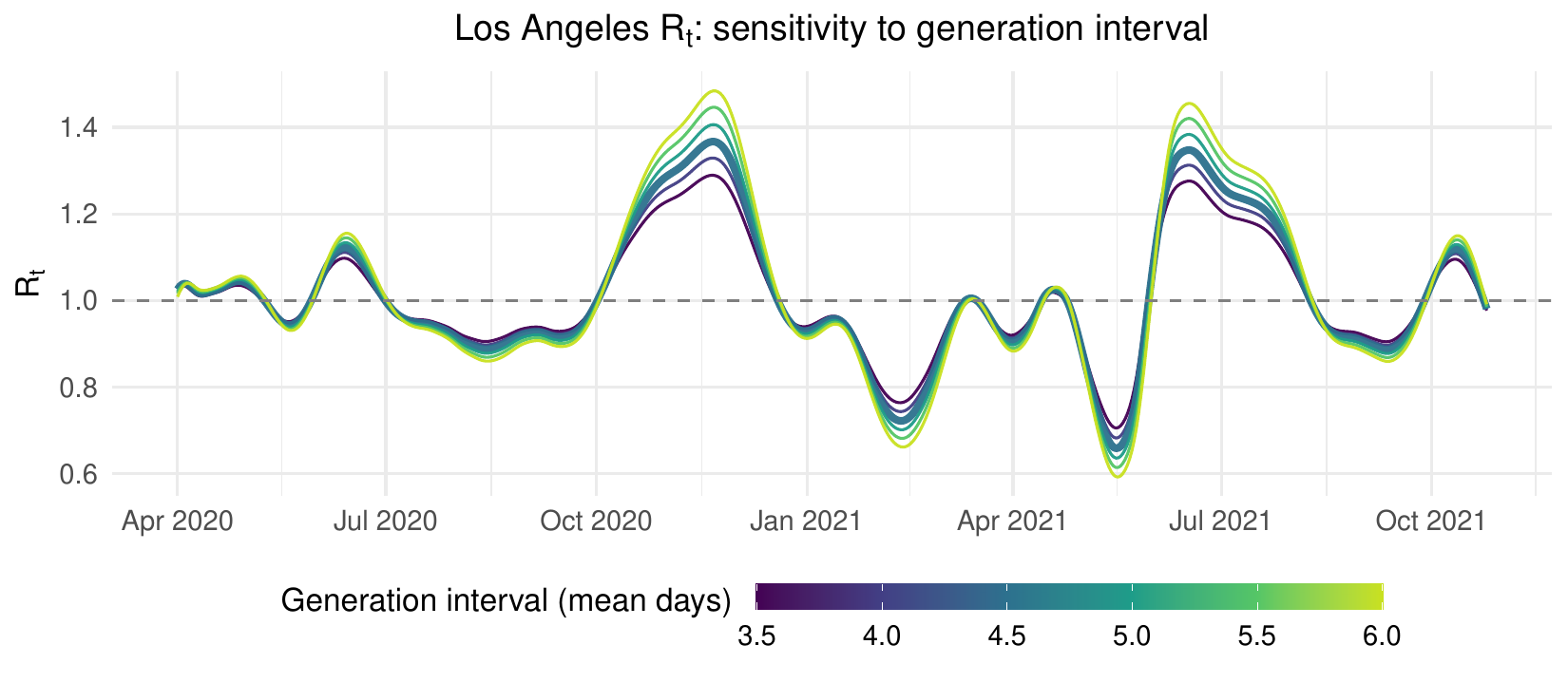}
    \caption[COVID-19 $R_t$ sensitivity to generation interval in Los Angeles.]{ConvRt's predictions of COVID-19 $R_t$ in Los Angeles under various delay distributions.}
    \label{fig:covidestim-delays}
\end{figure}



While the results matched CovidEstim nicely, neither distribution is known with high confidence. For that reason, \cref{fig:covidestim-delays} shows $R_t$ estimates on the Los Angeles CovidEstim data under varying generation intervals, sweeping mean generation times from 3.5 to 6.0 days while holding the SD and infection-to-report delay fixed. The six resulting curves share the same shape, differing only in magnitude: in the winter 2020 peak, estimates range from around 1.3 to 1.5, with longer generation intervals producing higher $R_t$ at peaks and lower at nadirs. This is expected---longer intervals compare current infections to levels further in the past, requiring a more extreme multiplier to explain the larger gap.

\section{Additional methodology}\label[appendix]{apx:extra-methods}

\subsection{Weekly data}\label[appendix]{apx:weekly}

\begin{figure}
    \centering
    \includegraphics[width=\linewidth]{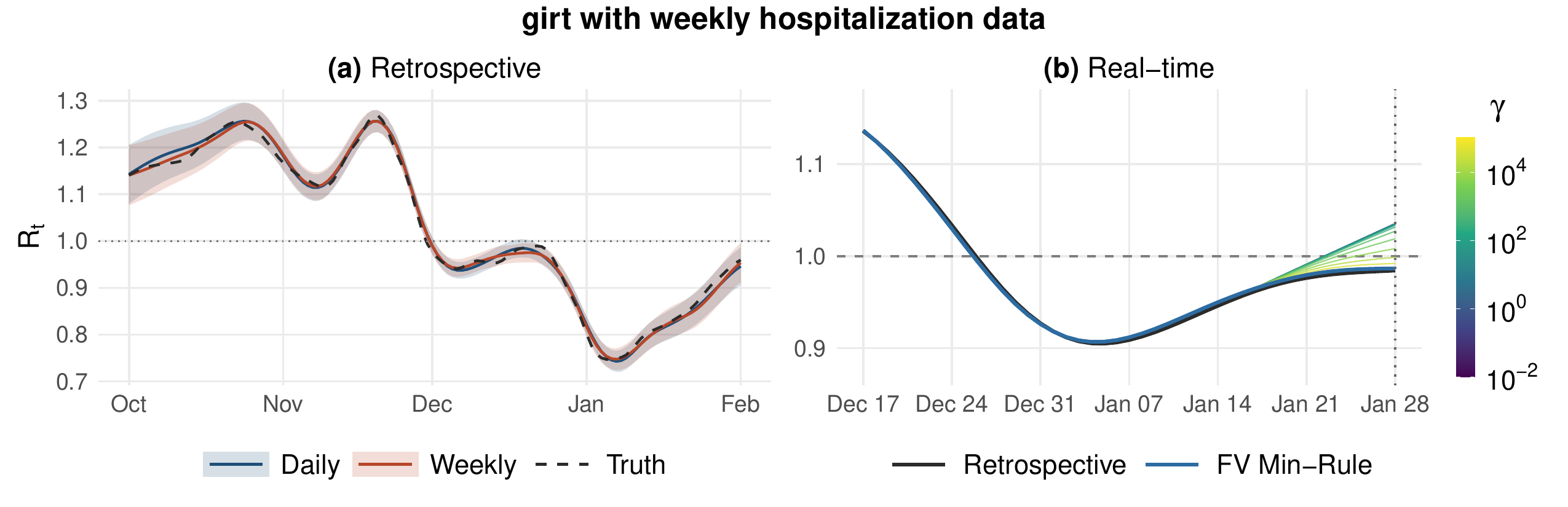}
    \caption{ConvRt predictions given data at a weekly resolution, in retrospect and real-time.}
    \label{fig:weekly}
\end{figure}

Often, data is released a weekly resolution, not daily. 
For example, the influenza experiments in this paper used hospitalization counts from NHSN.
This pandemic-era program ceased its daily reporting in May 2024.
Alternatively, Flu-Surv NET has reported weekly flu hospitalizations from 2005 to the present, based on a surveillance sample. 
In addition, COVID-NET and RSV-NET publish weekly hospitalizations for COVID-19 and RSV.

Our estimation approach must be adjusted to work with less frequent data.
One option would be to change the discrete-time transmission model to weekly resolution.
However, this risks being significantly too coarse, as most transmissions occur within the first week for many infectious diseases. 
Maintaining daily dynamics, we instead adjust the likelihood to model weekly observations $y_w$.
Equation \eqref{eq:quasi-poisson-model} related each daily total $y_t$ to its mean $\mu_t$, which had been scaled by the ascertainment rate $\rho_t$ and day-of-week effect $\omega$.
Dropping day-of-week effects for simplicity, we aggregate means over each week $w$:
\begin{equation}\label{eq:poisson-seir-weekly}
    y_w\sim \text{Quasi-Poisson}(\mu_w, \varphi); 
    \qquad \mu_w = \sum_{t\in w} \rho_t \Lambda_t;
    \qquad \Lambda_t = \sum_{s<t} x_{s} \pi_{t-s}.
\end{equation}

Once again, explicitly accounting for overdispersion $\varphi$ is optional. 
The statistical model is justified by the by the fact that the sum of independent Poissons is still Poisson.
(We assume throughout this work that counts on successive dates are independent.) 

To estimate $R_t$, everything beyond the likelihood is identical to the daily case.
Using the same spline parametrization, ConvRt performs penalized MLE in the retrospective and real-time settings.
It first deconvolves latent infections $\hat x$, as described in \cref{apx:deconvolution}.
Then, it plugs them into $\Lambda_t$ to deconvolve $R_t$ \eqref{eq:exp-cases-rt}, again replacing $x_t$ with the renewal equation \eqref{eq:renewal}.
Hyperparameters are tuned via cross-validation.
A higher degree of smoothing may be necessary, since the number of data points is reduced by a factor of 7.

We ran ConvRt on the simulated flu dataset, fitting on weekly hospitalization totals.
Hyperparameters were tuned via the min-rule.
The left panel of \cref{fig:weekly} shows the retrospective fit’s point estimates and 95\% confidence bands.
Both are almost exactly the same as those from the daily case. 

The right panel displays ConvRt’s real-time predictions, given weekly data through January 28.
The data supports a moderate but clear rise in $R_t$ from Jan 7--21. 
However, the infection-to-report delay is such that $R_t$ in the final week is barely observed.
Contrasting predictions across $\gamma$ reflect the range of possible outcomes, and the strength of the evidence supporting them.
With little tail regularization, extrapolation continues linearly and overshoots the truth by \textasciitilde0.05.
Heavier regularization, which forward-validation selects, tamps down the tail, and correctly predicts $R_t$ just under 1.

Together, these results are highly encouraging.
They suggest our method can function perfectly well given only weekly data.
While comprehensive experiments are beyond the scope of this paper, further work should evaluate its quantitative performance, and consider alternate approaches.

\subsection{Trend Filtering}

\begin{figure}
    \centering
    \includegraphics[width=.8\linewidth]{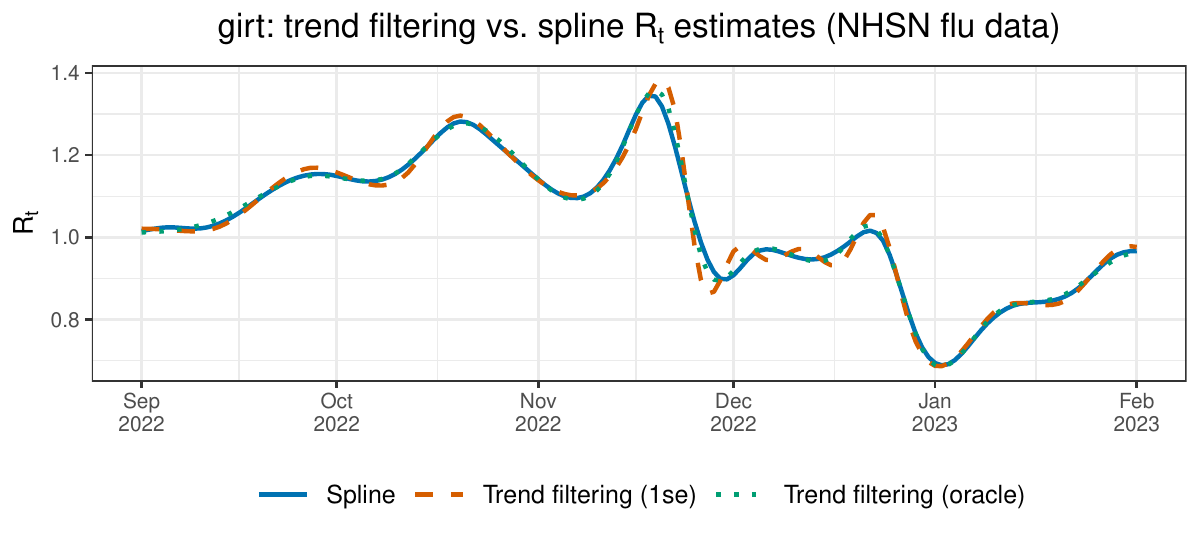}
    \caption{ConvRt predictions on NHSN flu data, modeled as a cubic spline and trend filter.}
    \label{fig:tf-rt}
\end{figure}

We ran ConvRt with trend filtering regularization, instead of as a spline. 
This nonparameteric regression technique, introduced in \cref{sec:rt-retro}, adaptively selects knot locations.
We chose a fourth-difference penalty to produce piecewise cubic functions, matching the splines in our experiments. 
We define  $\mathcal{R}$ as the vector of $R_t$ values, and ignore DoW effects for convenience.
The retrospective fit is
\begin{align}\label{eq:tf-retro}
\hat{\mathcal{R}} &= 
\argmin_{\mathcal{R} \succeq 0}
             \sum_{t=t_0}^{t_1}\!\bigl[ \mu_t(\mathcal{R}) - y_t \log \mu_t(\mathcal{R}) \bigr]
             \;+\; \lambda\lVert D^{(4)} \mathcal{R} \rVert_1 \\
             &\qquad\text{where}\quad \mu_t(\mathcal{R}) = \rho_t\, \sum_{s<t} R_s \left(\sum_{u<s} \hat x_u\, g_{s-u}\right)\pi_{t-s}.
\end{align}
The real-time fit was defined analogously with tail constraints and regularization.
CVXR optimized the objective with the ECOS solver.
For a retrospective fit, this took 30 seconds to tune over a grid of 25 lambdas.
While relatively fast, the spline took only 2 seconds to tune, since it converged in only a few iterations of IRLS.

Other design decisions also matched the spline experiments. 
The likelihood model for the data stayed the same, as did hyperparameter tuning. 
Because the objective \eqref{eq:tf-retro} is not twice-differentiable in $R_t$, we cannot derive asymptomatic confidence intervals for $R_t$.
\citet{rtestim} used a quadratic relaxation for their similar trend filtering problem, which we have not pursued here.

As with weekly data (\cref{apx:weekly}), we did not conduct thorough experiments to evaluate trend filtering. 
However, we established a meaningful proof-of-concept, this time on real data for contrast.
\cref{fig:tf-rt} displays trend filtering’s predictions on the 2022/23 flu season. 
By and large, they are very similar to the spline’s $R_t$ estimates.
Tuning $\lambda$ with the 1se rule, trend filtering is slightly wigglier, but this can be attributed to the particular value of $\lambda$ that was tuned here; selecting a larger lambda, as shown in the dotted line, yields an $R_t$ curve that is nearly identical to the spline.

In general, differences in prediction are subtle, not systematic.
This supports our choice to focus on splines due to their natural uncertainty quantification. 
Nevertheless, trend filtering remains a reasonable alternative if the true signal changes dynamically.

\end{document}